  \newcommand*\linenomathpatch[1]{%
    \cspreto{#1}{\linenomath}%
    \cspreto{#1*}{\linenomath}%
    \cspreto{end#1}{\endlinenomath}%
    \cspreto{end#1*}{\endlinenomath}%
  }
  \newcommand*\linenomathpatchAMS[1]{%
    \cspreto{#1}{\linenomathAMS}%
    \cspreto{#1*}{\linenomathAMS}%
    \csappto{end#1}{\endlinenomath}%
    \csappto{end#1*}{\endlinenomath}%
  }
    \let\linenomathAMS\linenomathWithnumbers
    \patchcmd\linenomathAMS{\advance\postdisplaypenalty\linenopenalty}{}{}{}
    \let\linenomathAMS\linenomathNonumbers
\ifdef{\mathscr}{}{\DeclareMathAlphabet{\mathscr}{U}{rsfs}{m}{n}}
\DeclareMathAlphabet{\mathpzc}{T1}{pzc}{m}{it}
\DeclareMathAlphabet{\mathpzcx}{U}{eus}{m}{n}
\DeclareMathAlphabet{\mathbbo}{U}{bbold}{m}{n}
\newcommand{\argemp}[2]
  {\if&#1&\else#2\fi}
\newcommand{\argdef}[2]
  {\if&#1&#2\else#1\fi}
\newcommand{\arglef}[2]
  {\argemp{#2}{#1\allowbreak#2}}
\newcommand{\argrig}[2]
  {\argemp{#1}{#1\allowbreak#2}}
\newcommand{\argmid}[3]
  {\argemp{#2}{#1\allowbreak#2\allowbreak#3}}
\newcommand{\argsep}[3]
  {\if&#1&#3\else#1\allowbreak\arglef{#2}{#3}\fi}
\newcommand{\argvarcmd}[6]
  {\expandafter\newcommand\csname gobble#1arg\endcsname[2]
    {\csname check#1arg\endcsname{\argsep{##1}{#4}{##2}}}%
  \expandafter\newcommand\csname check#1arg\endcsname[1]
    {\csname @ifnextchar\endcsname%
      \bgroup{\csname gobble#1arg\endcsname{##1}}{#2{##1#5}#6}}%
  \expandafter\newcommand\csname#1\endcsname[1]
    {\csname check#1arg\endcsname{#3##1}}}
\def\forallcmd#1%
\newcommandx{\defenv}[6][2=, 3=, 4=, 5=]
  {\def\thestring{}%
  \if&#4&%
    \begin{#1}[#2]%
      \argemp{#5}{\caselower[q]{\argemp{#3}{#3:}#5}\label{\thestring}}%
      \if&\thestring&%
        \def\labelx##1{\label{##1}}%
      \else%
        \def\labelx##1{\label{\thestring(##1)}}%
      \fi%
      #6%
    \end{#1}%
  \else%
    \csedef{#3#4Environment}{#1}%
    \csedef{#3#4Title}{#2}%
    \caselower[q]{\argemp{#3}{#3:}\argdef{#5}{#4}}
    \csedef{#3#4Label}{\thestring}%
    \csdef{#3#4Content}{#6}%
    \begin{#1}[#2]%
      \label{\thestring}
      \def\labelx##1{\label{\thestring(##1)}}%
      \csname#3#4Content\endcsname%
    \end{#1}%
  \fi\def\labelx##1{}}
\newcommand{\renewcounter}[2]
  {\setcounterref{#1}{#2}\addtocounter{#1}{-1}}
\newcommandx{\recenv}[1]
  {\ifcsname#1Environment\endcsname%
    \renewcounter{\csname#1Environment\endcsname}{\csname#1Label\endcsname}%
    \if&\csname#1Title\endcsname&%
      \begin{\csname#1Environment\endcsname}%
        \csname#1Content\endcsname%
      \end{\csname#1Environment\endcsname}
    \else%
      \begin{\csname#1Environment\endcsname}[\csname#1Title\endcsname]%
        \csname#1Content\endcsname%
      \end{\csname#1Environment\endcsname}
    \fi%
  \else%
    \textbf{[Undefined environment: #1!]}%
  \fi}
\newcommandx{\sidenote}[2][1=]
  {\marginpar
    [\textbf{\argrig{#1}{:}\raggedleft\small\sffamily #2\\}]
    {\textbf{\argrig{#1}{:}\raggedright\small\sffamily #2\\}}}
\newcommand{\txtsubsup}[3][]
  {\ensuremath{\argemp{#2}{_{\text{#1#2}}}\argemp{#3}{^{\text{#1#3}}}}}
\newcommandx{\txtnew}[5][1=, 3=, 4=, 5=]
  {\text{#1#2\txtsubsup[#1]{#3}{#4}#5}\xspace}
\newcommandx{\txtargnew}[7][1=, 3=, 4=, 5=, 7=]
  {\txtnew[#1]{#2}[#3][#4][#5\argmid{(}{#6}{)}#7]}
\newcommandx{\txtparnew}[7][1=, 3=, 4=, 5=, 7=]
  {\txtnew[#1]{#2}[#3][#4][#5\argmid{[}{#6}{]}#7]}
\newcommandx{\txtnewsty}[2][2=]
  {\txtnew[\argdef{#2}{#1}]}
\newcommandx{\txtargnewsty}[2][2=]
  {\txtargnew[\argdef{#2}{#1}]}
\newcommandx{\txtparnewsty}[2][2=]
  {\txtparnew[\argdef{#2}{#1}]}
\newcommand{\deftxt}[1]
  {\csdef{txt#1}{\txtnewsty{\csname txtsty#1\endcsname}}%
  \csdef{txtarg#1}{\txtargnewsty{\csname txtsty#1\endcsname}}%
  \csdef{txtpar#1}{\txtparnewsty{\csname txtsty#1\endcsname}}}
\newcommandx{\deftxtusr}[4][4=]
  {\csdef{#1#2}{\csname txt#3\endcsname{\argdef{#4}{#1}}}}
  \newcommandx{\deftxtdef}[2][2=]
    {\deftxtusr{#1}{}{def}[#2]}
  \newcommandx{\deftxtabr}[2][2=]
    {\deftxtusr{#1}{}{abr}[#2]}
  \newcommandx{\deftxtname}[2][2=]
    {\deftxtusr{#1}{}{name}[#2]}
  \newcommandx{\deftxtcom}[2][2=]
    {\deftxtusr{#1}{}{com}[#2]}
\newcommand{\mthsubsup}[2]
  {\argemp{#1}{_{#1}}\argemp{#2}{^{#2}}}
\newcommandx{\mthnew}[5][1=, 3=, 4=, 5=]
  {{\ensuremath{\csname#1\endcsname{#2}\mthsubsup{#3}{#4}#5}}}
\newcommandx{\mthargnew}[7][1=, 3=, 4=, 5=, 7=]
  {\mthnew[#1]{#2}[#3][#4][#5\argmid{\!\left(}{#6}{\right)}#7]}
\newcommandx{\mthparnew}[7][1=, 3=, 4=, 5=, 7=]
  {\mthnew[#1]{#2}[#3][#4][#5\argmid{\!\left[}{#6}{\right]}#7]}
\newcommandx{\mthnewsty}[2][2=]
  {\mthnew[\argdef{#2}{#1}]}
\newcommandx{\mthargnewsty}[2][2=]
  {\mthargnew[\argdef{#2}{#1}]}
\newcommandx{\mthparnewsty}[2][2=]
  {\mthparnew[\argdef{#2}{#1}]}
\newcommand{\mthsty}{}
\newcommand{\mth}
  {\mthnewsty{\mthsty}}
\newcommand{\defmth}[1]
  {\csdef{mth#1}{\mthnewsty{mthsty#1}}%
  \csdef{mtharg#1}{\mthargnewsty{mthsty#1}}%
  \csdef{mthpar#1}{\mthparnewsty{mthsty#1}}}
\newcommand{\defmthseq}[3]
  {\def\defcmd##1{\csdef{##1#1}{\csname mth#2\endcsname{##1}}}%
  \forallcmd#3\forallcmd}
\newcommand{\defmthupp}[2]
  {\defmthseq{#1}{#2}{ABCDEFGHIJKLMNOPQRSTUVWXYZ}}
\newcommand{\defmthlow}[2]
  {\defmthseq{#1}{#2}{abcdefghijklmnopqrstuvwxyz}}
\newcommand{\defmthall}[2]
  {\defmthupp{#1}{#2}\defmthlow{#1}{#2}}
\newcommandx{\defmthusr}[4][4=]
  {\csdef{#1#2}{\csname mth#3\endcsname{\argdef{#4}{#1}}}}
\newcommandx{\defmthusrupp}[4][4=]
  {\defmthusr{#1}{#2}{#3}[#4]%
  \defmthupp{#1#2}{#3}}
\newcommandx{\defmthusrlow}[4][4=]
  {\defmthusr{#1}{#2}{#3}[#4]%
  \defmthlow{#1#2}{#3}}
\newcommandx{\defmthusrall}[4][4=]
  {\defmthusr{#1}{#2}{#3}[#4]%
  \defmthall{#1#2}{#3}}
  \newcommandx{\defmthname}[2][2=]
    {\defmthusr{#1}{Name}{name}[#2]}
  \newcommandx{\defmthfam}[2][2=]
    {\defmthusr{#1}{Fam}{fam}[#2]}
  \newcommandx{\defmthcls}[2][2=]
    {\defmthusr{#1}{Cls}{cls}[#2]}
  \newcommandx{\defmthsig}[2][2=]
    {\defmthusr{#1}{Sig}{sig}[#2]}
  \newcommandx{\defmthstr}[2][2=]
    {\defmthusr{#1}{Str}{str}[#2]}
  \newcommandx{\defmthset}[2][2=]
    {\defmthusr{#1}{Set}{set}[#2]}
  \newcommandx{\defmthsetext}[3][2=, 3=]
    {\defmthset{#1}[#2]%
    \caselower[q]{#1}%
    \defmthusrall{\thestring}{Sym}{sym}[\argdef{#3}{\lowercase{#2}}]%
    \defmthusrall{\thestring}{Elm}{elm}[\argdef{#3}{\lowercase{#2}}]}
  \newcommandx{\defmthrel}[2][2=]
    {\defmthusr{#1}{Rel}{rel}[#2]}
  \newcommandx{\defmthfun}[2][2=]
    {\defmthusr{#1}{Fun}{fun}[#2]}
  \newcommandx{\defmthsym}[2][2=]
    {\defmthusr{#1}{Sym}{sym}[#2]}
  \newcommandx{\defmthelm}[2][2=]
    {\defmthusr{#1}{Elm}{elm}[#2]}
  \newcommandx{\defmthsymelm}[2][2=]
    {\defmthsym{#1}[#2]%
    \defmthelm{#1}[#2]}
  \newcommandx{\defmthmat}[2][2=]
    {\defmthusr{#1}{Mat}{mat}[#2]}
  \newcommandx{\defmthvec}[2][2=]
    {\defmthusr{#1}{Vec}{vec}[#2]}
  \newcommandx{\defmthsnt}[2][2=]
    {\defmthusr{#1}{Snt}{snt}[#2]}
  \newcommandx{\defmthfrm}[2][2=]
    {\defmthusr{#1}{Frm}{frm}[#2]}
\DeclareRobustCommand{\implies}
  {\ensuremath{\Rightarrow}}
\DeclareMathOperator{\defeq}
  {\ensuremath{\triangleq}}
\DeclareMathOperator{\seteq}
  {\ensuremath{:\!=\,}}
\DeclareRobustCommand{\cmodels}
  {\mth{\,\models}}
\DeclareRobustCommand{\notcmodels}
  {\mth{\,\not\models}}
\DeclareRobustCommand{\cequiv}
  {\mth{\,\equiv}}
\DeclareRobustCommand{\cimplies}
  {\mth{\,\implies}}
\newcommand{\dual}[1]
  {\mth{\overline{#1}}}
\newcommand{\der}[1]
  {\mth{\widehat{#1}}}
\newcommand{\trn}[1]
  {\mth{\widetilde{#1}}}
\let\oldvec\vec
\renewcommand{\vec}[1]
  {\mth{\oldvec{#1}}}
\newcommand{\set}[2]
  {\mth{\argmid{\{\,}{\argsep{#1}{\allowbreak\mid\allowbreak}{#2}}{\,\}}}}
\newcommand{\card}[1]
  {\mth{\argmid{\lvert}{#1}{\rvert}}}
\newcommand{\pow}[1]
  {\ensuremath{2^{\argdef{#1}{\cdot}}}}
\DeclareRobustCommand{\dom}
  {\mthargfun{dom}}
\DeclareRobustCommand{\img}
  {\mthargfun{img}}
\DeclareMathOperator{\rst}
  {\ensuremath{\upharpoonright}}
\DeclareRobustCommand{\emptyfun}
  {\mth{\varnothing}}
\DeclareMathOperator{\pto}
  {\ensuremath{\rightharpoonup}}
\DeclareMathOperator{\pmapsto}
  {\ensuremath{\mathrel{\raisebox{0.5ex}{\footnotesize${\llcorner}$}%
    \kern-1.5ex\rightharpoonup}}}
\DeclareRobustCommand{\AOmicron}
  {\mthargset{O}}
\newcommand{\SetB}
  {\mthset[mathbb]{B}}
\newcommand{\SetN}
  {\mthset[mathbb]{N}}
\newcommand{\numcc}[2]
  {\mth{[\argsep{#1}{,}{#2}]}}
\newcommand{\numco}[2]
  {\mth{[\argsep{#1}{,}{#2})}}
\DeclareRobustCommand{\fst}
  {\mthargfun{fst}}
\DeclareRobustCommand{\lst}
  {\mthargfun{lst}}
\newcommandx{\defcomcls}[2][2=]
  {\defcomclssem{#1}{\argdef{#2}{#1}}%
  \defcomclssem{#1}{\argdef{#2}{#1}}[Co]}
\newcommandx{\defcomclssem}[3][3=]
  {\defcomclsred{#3#1}{#2}[#3]%
  \defcomclsred{#3U#1}{#2}[#3U]%
  \defcomclsred{#3N#1}{#2}[#3N]%
  \defcomclsred{#3A#1}{#2}[#3A]}
\newcommandx{\defcomclsred}[3][3=]
  {\defcomclscmd{#1}{#2}[#3]%
  \defcomclscmd{#1E}{#2}[#3][-easy]%
  \defcomclscmd{#1H}{#2}[#3][-hard]%
  \defcomclscmd{#1C}{#2}[#3][-complete]}%
\newcommandx{\defcomclscmd}[4][3=, 4=]
  {\csdef{#1}{\txtcom{#3#2#4}}}
\newcommandx{\SATG}[3][1=, 2=, 3=]
  {\txtname{Sat}[#1][#2][\argmid{[}{#3}{]}]}
\newcommandx{\VALG}[3][1=, 2=, 3=]
  {\txtname{Val}[#1][#2][\argmid{[}{#3}{]}]}
\newcommandx{\EVLG}[3][1=, 2=, 3=]
  {\txtname{Evl}[#1][#2][\argmid{[}{#3}{]}]}
\newcommandx{\MCG}[3][1=, 2=, 3=]
  {\txtname{MC}[#1][#2][\argmid{[}{#3}{]}]}
\newcommandx{\EFG}[3][1=, 2=, 3=]
  {\txtname{EF}[#1][#2][\argmid{[}{#3}{]}]}
  \newcommand{\plrsym}{E}
  \newcommand{\oppsym}{A}
  \newcommand{\arenaname}{A}
  \newcommand{\possym}{v}
  \newcommand{\gamename}{G}
  \newcommand{\pthsym}{\pi}
  \newcommand{\hstsym}{\rho}
  \newcommand{\playsym}{\pi}
  \newcommand{\strsym}{\sigma}
  \newcommand{\prfsym}{\xi}
\newcommandx{\BG}[3][1=, 2=, 3=]
  {\txtname{BG}[#1][#2][\argmid{[}{#3}{]}]}
\newcommandx{\CG}[3][1=, 2=, 3=]
  {\txtname{CG}[#1][#2][\argmid{[}{#3}{]}]}
\newcommandx{\PG}[3][1=, 2=, 3=]
  {\txtname{PG}[#1][#2][\argmid{[}{#3}{]}]}
\newcommandx{\RG}[3][1=, 2=, 3=]
  {\txtname{RG}[#1][#2][\argmid{[}{#3}{]}]}
\newcommandx{\SG}[3][1=, 2=, 3=]
  {\txtname{SG}[#1][#2][\argmid{[}{#3}{]}]}
\newcommandx{\MG}[3][1=, 2=, 3=]
  {\txtname{MG}[#1][#2][\argmid{[}{#3}{]}]}
  \newcommand{\prtsym}{p}
\newcommandx{\EG}[3][1=, 2=, 3=]
  {\txtname{EG}[#1][#2][\argmid{[}{#3}{]}]}
\newcommandx{\MPG}[3][1=, 2=, 3=]
  {\txtname{MPG}[#1][#2][\argmid{[}{#3}{]}]}
\newcommandx{\DPG}[3][1=, 2=, 3=]
  {\txtname{DPG}[#1][#2][\argmid{[}{#3}{]}]}
  \newcommand{\wghsym}{w}
\newcommandx{\BF}[3][1=, 2=, 3=]
  {\txtname{BF}[#1][#2][\argmid{[}{#3}{]}]}
  \newcommand{\apsym}{p}
  \newcommand{\qntsym}{\wp}
  \newcommand{\valsym}{\xi}
  \newcommand{\asgsym}{\chi}
\newcommandx{\FOL}[3][1=, 2=, 3=]
  {\txtname{Fol}[#1][#2][\argmid{[}{#3}{]}]}
\newcommandx{\IF}[3][1=, 2=, 3=]
  {\txtname{IF}[#1][#2][\argmid{[}{#3}{]}]}
  \newcommand{\varsym}{x}
  \newcommand{\consym}{c}
  \newcommand{\funsym}{f}
  \newcommand{\tersym}{t}
  \newcommand{\relsym}{r}
\newcommandx{\SOL}[3][1=, 2=, 3=]
  {\txtname{Sol}[#1][#2][\argmid{[}{#3}{]}]}
  \newcommand{\fvarsym}{x}
  \newcommand{\svarsym}{X}
\newcommandx{\TL}[3][1=, 2=, 3=]
  {\txtname{TL}[#1][#2][\argmid{[}{#3}{]}]}
\newcommandx{\PL}[3][1=, 2=, 3=]
  {\txtname{PL}[#1][#2][\argmid{[}{#3}{]}]}
\newcommandx{\ML}[3][1=, 2=, 3=]
  {\txtname{ML}[#1][#2][\argmid{[}{#3}{]}]}
  \newcommand{\wrlsym}{w}
\newcommandx{\MC}[3][1=, 2=, 3=]
  {\ensuremath{\mu}-\txtname{Calculus}[#1][#2][\argmid{[}{#3}{]}]}
\newcommandx{\PTL}[3][1=, 2=, 3=]
  {\txtname{PTL}[#1][#2][\argmid{[}{#3}{]}]}
\newcommandx{\LTL}[3][1=, 2=, 3=]
  {\txtname{LTL}[#1][#2][\argmid{[}{#3}{]}]}
\newcommandx{\CTL}[3][1=, 2=, 3=]
  {\txtname{CTL}[#1][#2][\argmid{[}{#3}{]}]}
\newcommandx{\CTLP}[3][1=, 2=, 3=]
  {\txtname{CTL$^{+}$}[#1][#2][\argmid{[}{#3}{]}]}
\newcommandx{\CTLS}[3][1=, 2=, 3=]
  {\txtname{CTL*}[#1][#2][\argmid{[}{#3}{]}]}
\newcommandx{\ATL}[3][1=, 2=, 3=]
  {\txtname{ATL}[#1][#2][\argmid{[}{#3}{]}]}
\newcommandx{\ATLP}[3][1=, 2=, 3=]
  {\txtname{ATL$^{+}$}[#1][#2][\argmid{[}{#3}{]}]}
\newcommandx{\ATLS}[3][1=, 2=, 3=]
  {\txtname{ATL*}[#1][#2][\argmid{[}{#3}{]}]}
  \newcommand{\cgsstr}{G}
  \newcommand{\agnsym}{a}
  \newcommand{\actsym}{c}
  \newcommand{\decsym}{d}
\newcommandx{\SL}[3][1=, 2=, 3=]
  {\txtname{SL}[#1][#2][\argmid{[}{#3}{]}]}
\newcommandx{\OGSL}[3][1=, 2=, 3=]
  {\SL[#1][#2][1g\arglef{,}{#3}]}
\newcommandx{\CGSL}[3][1=, 2=, 3=]
  {\SL[#1][#2][cg\arglef{,}{#3}]}
\newcommandx{\DGSL}[3][1=, 2=, 3=]
  {\SL[#1][#2][dg\arglef{,}{#3}]}
\newcommandx{\AGSL}[3][1=, 2=, 3=]
  {\SL[#1][#2][ag\arglef{,}{#3}]}
\newcommandx{\EGSL}[3][1=, 2=, 3=]
  {\SL[#1][#2][eg\arglef{,}{#3}]}
\newcommandx{\BGSL}[3][1=, 2=, 3=]
  {\SL[#1][#2][bg\arglef{,}{#3}]}
\newcommandx{\NGSL}[3][1=, 2=, 3=]
  {\SL[#1][#2][ng\arglef{,}{#3}]}
\newcommandx{\XGSL}[3][1=, 2=, 3=]
  {\SL[#1][#2][xg\arglef{,}{#3}]}
  \newcommand{\bndsym}{\flat}
  \newcommand{\autname}{A}
  \newcommand{\sttsym}{q}
  \newcommand{\symsym}{\sigma}
  \newcommand{\wrdsym}{w}
  \newcommand{\dirsym}{d}
  \newcommand{\treesym}{T}
\newtheorem{definition}{Definition}
\newtheorem{construction}{Construction}
\newtheorem{claim}{Claim}
\newtheorem{theorem}{Theorem}
\newtheorem{remark}{Remark}
\newtheorem{example}{Example}
\newcommand{\ignore}[1]{}
\newcommand{\iffExpl}[1]
  {\mathrel{\stackrel{\text{\tiny #1}}{\Leftrightarrow}}}
\newcommand{\flipFun}[1]{\dual{#1}}
  \renewcommand{\dep}[2]
    {\genfrac{<}{>}{0pt}{}{\BSym : #1}{\SSym : #2}}
  \newcommand{\bdep}[1]
    {\left< \BSym : #1 \right>}
  \newcommand{\sdep}[1]
    {\left< \SSym : #1 \right>}
\newcommand{\Automaton}{\DName[\psi]}
\newcommand{\Astates}{Q}
\newcommand{\Astate}{q}
\newcommand{\Ainitialstate}{q_0}
\newcommand{\Aalphabet}{\Sigma}
\newcommand{\Atransition}{\delta}
\newcommand{\Aacceptingcondition}{\text{Acc}}
\newcommand{\LTLformula}{\psi}
\newcommand{\quantificationPrefix}{\qntElm}
\newcommand{\PathToPathmorph}{\mathtt{f}}
\newcommand{\PathToPath}{f}
\newcommand{\PlayToAsg}{g}
\newcommand{\QgameToAgame}{\PathToPath^{-1}}
\newcommand{\wordAsg}{w}
\newcommand{\wrdInvFun}{\wrdFun[][-1]}
\newcommand{\APosSet}{\PosSet(\Game[\varphi])}
\newcommand{\QPosSet}{\PosSet(\Game[\quantificationPrefix][\LTLformula])}
\newcommand{\Apos}[1][]{(\Astate_{#1},\valElm_{#1})}
\newcommand{\Aposq}[1]{(\Astate,\valElm_{#1})}
\newcommand{\AposA}[1]{(\Astate#1,\emptyfun)}
\newcommand{\Qpos}[1][]{\valElm_{#1}}
\newcommand{\iQPaths}{\PthSet_\text{init}(\Game[\quantificationPrefix][\LTLformula])}
\newcommand{\iAPaths}{\PthSet_\text{init}(\Game[\varphi])}
\newcommand{\Apath}{\pthElm}
\newcommand{\aQhistory}{{\hstElm}}
\newcommand{\APlays}{\PlaySet(\Game[\varphi])}
\newcommand{\Aplay}{\playElm}
\newcommand{\snd}{\text{snd}}
\newcommand{\last}{\lst}
\tikzstyle{every node} =
\tikzstyle{every edge} +=
\tikzstyle{noall} =
\tikzstyle{nodraw} =
\tikzstyle{nofill} =
\tikzstyle{nonode} =
\tikzstyle{cnode} =
\tikzstyle{snode} =
\tikzstyle{lnode} =
\tikzstyle{pnode} =
\tikzstyle{eloisenode} =
\tikzstyle{abelardnode} =
\tikzstyle{dots} =
\tikzstyle{level} = [sibling distance = 8em/(#1), level distance = 5em]
  \newcommand{\fighypasgord}
    {
    \begin{tikzpicture}[>=stealth']

      \draw[very thick] (0, 0) rectangle (5, 5.75);

      \draw (-0.25, 6) node [below right] {$\AsgFam[1]$};

      \draw[thick, color = red!50!black] (0.5, 4) rectangle (4.5, 5.25);
      \draw[thick, color = red!50!black] (0.5, 2.25) rectangle (4.5, 3.5);
      \draw[thick, color = red!50!black] (0.5, 0.5) rectangle (4.5, 1.75);

      \draw (0.25, 5.5) node [below right] {$\XOne{1}$};
      \draw (0.25, 3.75)  node [below right] {$\XOne{2}$};
      \draw (0.25, 2) node [below right] {$\XOne{3}$};

      \draw (1.75, 4.625) node {$\asgElm[1]\;\asgElm[3]\;\asgElm[5]$};
      \draw (1.75, 2.875) node {$\asgElm[6]\;\asgElm[7]\;\asgElm[8]$};
      \draw (1.75, 1.125) node {$\asgElm[7]\;\asgElm[8]\;\asgElm[10]$};

      \draw[color = blue!50!black, dashed] (3, 4.25) rectangle (4.25, 5);
      \draw[color = blue!50!black, dashed] (3, 2.5) rectangle (4.25, 3.25);
      \draw[color = blue!50!black, dashed] (3, 0.75) rectangle (4.25, 1.5);

      \draw (3.625, 4.625) node {\color{blue!50!black}$\asgElm[2]\;\asgElm[4]$};
      \draw (3.625, 2.875) node {\color{blue!50!black}$\asgElm[2]\;\asgElm[4]$};
      \draw (3.625, 1.125) node {\color{blue!50!black}$\asgElm[6]\;\asgElm[9]$};

      \draw (6, 2.875) node {\LARGE$\sqsubseteq$};

      \draw[very thick] (7, 0) rectangle (12, 5.75);

      \draw (6.75, 6) node [below right] {$\AsgFam[2]$};

      \draw[thick, color = blue!50!black] (7.5, 4) rectangle (11.5, 5.25);
      \draw[thick, color = blue!50!black] (7.5, 2.25) rectangle (11.5, 3.5);
      \draw[thick] (7.5, 0.5) rectangle (11.5, 1.75);

      \draw (7.25, 5.5) node [below right] {$\XTwo{1}$};
      \draw (7.25, 3.75) node [below right] {$\XTwo{2}$};
      \draw (7.25, 2) node [below right] {$\XSet[23]$};

      \draw (9.5, 4.625) node {\color{blue!50!black}$\asgElm[2]\;\asgElm[4]$};

      \draw (9.5, 2.875) node {\color{blue!50!black}$\asgElm[6]\;\asgElm[9]$};

      \draw (9.5, 1.125) node {$\asgElm[1]\;\asgElm[9]\;\asgElm[11]$};

      \draw[thick, ->] (4.30, 4.625) -- (7.45, 4.625);
      \draw[thick, ->] (4.30, 2.875) -- (7.45, 4.625);
      \draw[thick, ->] (4.30, 1.125) -- (7.45, 2.875);

    \end{tikzpicture}
    }
  \newcommand{\figbhvfun}
    {
    \newcommand{\ET}{{\color{blue!50!black}\Tt}}
    \newcommand{\EF}{{\color{blue!50!black}\Ff}}
    \newcommand{\DT}{{\color{red!50!black}\Tt}}
    \newcommand{\DF}{{\color{red!50!black}\Ff}}
    \newcommand{\XT}{{\color{green!35!black}\Tt}}
    \newcommand{\XF}{{\color{green!35!black}\Ff}}
    \(
      \begin{matrix}
                                        & 0   & 1   & 2   & 3   & 4   & 5   & \\
        \asgElm[1] = \{\; \apElm \colon & \ET & \EF & \EF & \ET & \DF & \DT &
          \cdots \;\} \\
        \asgElm[2] = \{\; \apElm \colon & \ET & \EF & \EF & \ET & \XT & \XF &
          \cdots \;\} \\[0.10em]
        \hline\\[-0.80em]
        \FFun[\ASym](\asgElm[1]) =      & \EF & \EF & \ET & \DF & \DT & \DF &
          \cdots \\
        \FFun[\ASym](\asgElm[2]) =      & \EF & \EF & \ET & \XT & \XF & \XT &
          \cdots \\[0.10em]
        \hline\\[-0.80em]
        \FFun[\BSym](\asgElm[1]) =      & \EF & \ET & \ET & \EF & \DT & \DF &
          \cdots \\
        \FFun[\BSym](\asgElm[2]) =      & \EF & \ET & \ET & \EF & \XF & \XT &
          \cdots \\[0.10em]
        \hline\\[-0.80em]
        \FFun[\SSym](\asgElm[1]) =      & \ET & \ET & \EF & \EF & \ET & \XF &
          \cdots \\
        \FFun[\SSym](\asgElm[2]) =      & \ET & \ET & \EF & \EF & \ET & \XT &
          \cdots
      \end{matrix}
    \)
    }
  \newcommand{\figeqvasg}
    {
    \newcommand{\ET}{{\color{blue!50!black}\Tt}}
    \newcommand{\EF}{{\color{blue!50!black}\Ff}}
    \newcommand{\DT}{{\color{red!50!black}\Tt}}
    \newcommand{\DF}{{\color{red!50!black}\Ff}}
    \newcommand{\XT}{{\color{green!35!black}\Tt}}
    \newcommand{\XF}{{\color{green!35!black}\Ff}}
    \(
      \begin{matrix}
                      &                 & 0   & 1   & 2   & 3   & 4   & 5   & \\
        \multirow{2}{*}{\asgElm[1] = \Big\{}
                      & \apElm \colon   & \ET & \EF & \EF & \ET & \DF & \DT &
                      \cdots  & \multirow{2}{*}{\Big\}} \\
                      & \qapElm \colon  & \EF & \EF & \ET & \DT & \DF & \DT &
                      \cdots \\
        \multirow{2}{*}{\asgElm[2] = \Big\{}
                      & \apElm \colon   & \ET & \EF & \EF & \ET & \XT & \XF &
                      \cdots  & \multirow{2}{*}{\Big\}} \\
                      & \qapElm \colon  & \EF & \EF & \ET & \DT & \DF & \DT &
                      \cdots \\
        \multirow{2}{*}{\asgElm[3] = \Big\{}
                      & \apElm \colon   & \ET & \EF & \EF & \ET & \XT & \XF &
                      \cdots  & \multirow{2}{*}{\Big\}} \\
                      & \qapElm \colon  & \EF & \EF & \ET & \XF & \XT & \XT &
                      \cdots
      \end{matrix}
    \)
    }
  \newcommand{\figqntgam}
    {
    \begin{tikzpicture}[->, >=stealth', every node/.style={scale=0.85}, grow =
      right]

      \node [cnode] (E) {$\emptyfun$}
        child{
          node [snode] {$\stackrel{\Ff}{\apElm[1]}$}
          child{
            node [cnode] {\footnotesize
              $\stackrel{\Ff}{\apElm[1]}\stackrel{\Ff}{\apElm[2]}$}
            child{
              node [dots] {$\cdots$}
              child{
                node [snode] (P1) {$\valElm[1]$}
              }
            }
            child{
              node [dots] {$\cdots$}
              node [xshift=5.875em, yshift=1.125em] {$\vdots$}
            }
            edge from parent node [below] {$\Ff$}
          }
          child{
            node [cnode] {\footnotesize
              $\stackrel{\Ff}{\apElm[1]}\stackrel{\Tt}{\apElm[2]}$}
            child{
              node [dots] {$\cdots$}
            }
            child{
              node [dots] {$\cdots$}
              node [snode, xshift=5.875em, yshift=0.8175em] (Pj) {$\valElm[j]$}
            }
            edge from parent node [above] {$\Tt$}
          }
          edge from parent node [below] {$\Ff$}
        }
        child{
          node [snode] {$\stackrel{\Tt}{\apElm[1]}$}
          child{
            node [cnode] {\footnotesize
              $\stackrel{\Tt}{\apElm[1]}\stackrel{\Ff}{\apElm[2]}$}
            child{
              node [dots] {$\cdots$}
            }
            child{
              node [dots] {$\cdots$}
              node [xshift=5.875em, yshift=1.125em] {$\vdots$}
            }
            edge from parent node [below] {$\Ff$}
          }
          child{
            node [cnode] {\footnotesize
              $\stackrel{\Tt}{\apElm[1]}\stackrel{\Tt}{\apElm[2]}$}
            child{
              node [dots] {$\cdots$}
            }
            child{
              node [dots] {$\cdots$}
              child{
                node [snode] (Pn) {$\valElm[n]$}
              }
            }
            edge from parent node [above] {$\Tt$}
          }
          edge from parent node [above] {$\Tt$}
        }
      ;

      \node [above right of = Pn, node distance = 3em] (Px) {};

      \draw[-] (P1) -| (Px.center);
      \draw[-] (Pj) -| (Px.center);
      \draw[-] (Pn) -| (Px.center);
      \draw[->] (Px.center) -| (E);

    \end{tikzpicture}
    }
\begin{document}

  \title{\LARGE Good-for-Game QPTL: An Alternating Hodges Semantics}

  \author{
    \IEEEauthorblockN{Dylan Bellier}
    \IEEEauthorblockA{Universit\'e Rennes 1}
    \and
    \IEEEauthorblockN{Massimo Benerecetti}
    \IEEEauthorblockA{Universit\`a di Napoli Federico II}
    \and
    \IEEEauthorblockN{Dario Della Monica}
    \IEEEauthorblockA{Universit\`a di Udine}
    \and
    \IEEEauthorblockN{Fabio Mogavero}
    \IEEEauthorblockA{Universit\`a di Napoli Federico II}}

  \maketitle



\begin{abstract}
  An extension of \QPTL is considered where \emph{functional dependencies} among
  the quantified variables can be restricted in such a way that their current
  values are \emph{independent of the future} values of the other variables.
  This restriction is tightly connected to the notion of \emph{behavioral
  strategies} in game-theory and allows the resulting logic to naturally express
  game-theoretic concepts.
  The fragment where only restricted quantifications are considered, called
  \emph{behavioral quantifications}, can be decided, for both \emph{model
  checking} and \emph{satisfiability}, in 2\ExpTime and is \emph{expressively
  equivalent} to \QPTL, though significantly \emph{less succinct}.
\end{abstract}



  \vspace{-0.5em}
  


\begin{section}{Introduction}


  The tight connection between \emph{logic} and \emph{games} has been
  acknowledged since the sixties, when first Lorenzen~\cite{Lor61} and later
  Lorenz~\cite{Lor68} and Hintikka~\cite{Hin73} proposed \emph{game-theoretic
  semantics} for first-order logic~\cite{HS97,Hin97}.
  In this approach, the meaning of a sentence is given in terms of a zero-sum
  game played by two agents: the \emph{verifier}, whose objective is to show the
  sentence true, and the \emph{falsifier}, with the dual objective of showing
  the sentence false.
  Satisfiability of a sentence, then, becomes a game between these two players
  and the sentence is satisfiable (\resp, unsatisfiable) \iff verifier (\resp,
  falsifier) has a strategy to win it.
  This tight connection can clearly be viewed in the other direction as well:
  logic can be used to reason about games, \ie, we can encode the problem of
  solving a game into a \emph{decision problem}, such as \emph{satisfiability}
  or \emph{model-checking}, of some logic.
  The idea is to describe the game and the \emph{winning condition} with a
  formula of the logic and exploit the \emph{game-theoretic interpretation} to
  reduce the solution of the game to a specific decision problem for that
  logic.
  Essentially, the winning strategy for the game can be extracted from the
  winning strategy for the \emph{decision game}.
  \\ \indent
  Suppose we have a formula $\psi(\xElm, \yElm)$, expressing a required relation
  between the choice $\yElm$ made by a player, from now on called \emph{Eloise},
  and a choice $\xElm$ made by the adversary, namely \emph{Abelard}, \ie,
  $\psi(\xElm, \yElm)$ encodes the \emph{objective} of a two-player game.
  We say that the game is won by Eloise if there exists a \emph{strategy} for
  her such that, for each choice $\xElm$ made by Abelard, the corresponding
  response $\yElm$ of Eloise using that strategy guarantees that the resulting
  play satisfies the requirement $\psi(\xElm, \yElm)$.
  This condition can clearly be expressed by a sentence of the form $\forall
  \xElm \ldotp \exists \yElm \ldotp \psi(\xElm, \yElm)$.
  We could then solve the game by solving the satisfiability problem for
  this sentence.
  In other words, solving the game reduces to checking whether there exists a
  \emph{Skolem function} $\fFun$ such that $\forall \xElm \ldotp \psi(\xElm,
  \fFun(\xElm))$ is satisfied.
  This function basically dictates the response of Eloise to the choice of
  Abelard, thereby encoding her strategy.
  \\ \indent
  The above approach works pretty well when we consider \emph{single-round
  games}, \aka, \emph{normal-form games}~\cite{NM44}, and can easily be extended
  to \emph{finite-rounds games}, \aka, \emph{extensive-form
  games}~\cite{Neu28,Kuh50,Kuh53}, by extending the quantification prefix to a
  sequence of alternations of quantifiers, one for each round.
  Things, however, get much more complicated when \emph{infinite-rounds games}
  come into play~\cite{GS53,Wol55}.
  For such a class of extensive-form games, indeed, plays are induced by
  infinite sequences of choices made by the players over time and a strategy
  dictates how a player at a given stage of a play responds to the choices
  made by the adversary up to that stage.
  %
  %
  Extending the quantification prefix to match the rounds would immediately lead
  to \emph{infinitary logics}, such as the one proposed in~\cite{Kol85} and
  further studied in~\cite{HV94} (see also~\cite{HR76}).
  This technique has some interesting applications in logic~\cite{Hel89},
  computer science~\cite{Kai11}, and even philosophy~\cite{FG17}.
  Besides its infinitary nature, however, this approach has also the drawback
  of heavily departing from the standard Tarskian viewpoint, as only
  non-compositional game-theoretic semantics has been provided.
  \\ \indent
  A more viable route, instead, is to make the quantified variables $\xElm$ and
  $\yElm$ range over sequences of choices.
  For example, if the choices are simply \emph{Boolean values}, \ie, when
  \emph{iterated Boolean games} are considered~\cite{GHW13,GHW15}, first-order
  extension of \emph{temporal logics}, such as \emph{Quantified Propositional
  Temporal Logic} (\QPTL)~\cite{Sis83}, seem like a good place to start, as they
  predicate over infinite sequences of temporal points, the stages of the game.
  In this setting, however, the Skolem function $\fFun$ cannot be interpreted as
  a strategy in the game-theoretic sense, as its value at a given stage depends
  on the entire evaluation of its argument $\xElm$, namely the entire sequence
  of choices made by the adversary, including all the future ones.
  By contrast, a strategy for a player can only dictate, step by step, what its
  responses should be, depending on the choices made so far by its opponent.
  What that means is that, in principle, the satisfiability and the game
  solution problems do not coincide anymore.
  A classic example of this problem already appears in~\cite{PR89}.
  Assume $\psi(\xElm, \yElm)$ is the \LTL~\cite{Pnu77,Pnu81} formula $\G (\yElm
  \leftrightarrow \X \xElm)$ (or just $\yElm \leftrightarrow \X \xElm$).
  Clearly, the sentence $\forall \xElm \ldotp \exists \yElm \ldotp \psi(\xElm,
  \yElm)$ is satisfiable.
  However, there is no \emph{``feasible''} (\ie, \emph{implementable}) strategy
  that can enforce $\psi(\xElm, \yElm)$, without Eloise knowing in advance the
  future values that Abelard is going to choose for $\xElm$ in the rest of the
  play.
  The problem is that the standard interpretation of quantification treats the
  quantified objects as \emph{atomic entities}, regardless of their \emph{inner
  structure}, like their being sequences in the above example.
  This is by no means the only exemplification of the problem, which was already
  recognized in the theory of extensive-form games since its dawn~\cite{Kuh50},
  where the notion of feasible strategy, called \emph{behavioral}, has been
  introduced (see also~\cite{Kuh53,Sel75,KMRW82,Mye91}).
  Another important source of unfeasible strategic behaviors is hidden in the
  semantics of \emph{Strategy Logic} (\SL)~\cite{CHP07,MMV10a,CHP10,MMPV14}, an
  extension of \LTL that allows for explicit \emph{quantifications} over
  strategies and \emph{binding} of strategies with players.
  In this logic, formulae can be written that can be satisfied only by allowing
  players to look at what other strategies dictate in the future or
  counterfactual situations~\cite{MMS13}, admitting infeasible behaviors.
  Once again, the problem lies in the intrinsic dependence among the variables
  quantified in the formula.
  \\ \indent
  One way to reconcile quantifications and strategies in a temporal setting
  would be to extend the game-theoretic interpretation of the quantifiers, and
  of the logic in general, to account for the underlying temporal dynamics.
  This would imply allowing the players in the satisfiability game to play with
  partial information on the choices of the adversary, namely the players have
  no information about the future and can only choose based on the moves played
  so far in the game.
  Previous attempts to address the issue typically involve resorting to \adhoc
  \emph{Skolem semantics}~\cite{Hod97} for the specific logic.
  In the case of \SL, for instance, the notion of \emph{behavioral semantics}
  has been introduced~\cite{MMPV14}, which prevents the players from looking at
  future choices when selecting their strategy, effectively limiting the player
  observation ability to the current history in the game.
  A more liberal semantics based on \emph{timeline dependencies} has been also
  proposed~\cite{GBM18,GBM20}.
  While these approaches do solve the problem in the specific case, they lead to
  non-compositional semantics~\cite{SH01}, in that the interpretation of a
  formula is not defined in terms of the interpretation of its component
  subformulae.
  To obtain a compositional version of the game-theoretic semantics, a finer
  grained technical setting is required, compared to the classic Tarskian
  semantics, specifically, one that can accommodate some form of \emph{partial
  independence} among the quantified variables.
  \\ \indent
  Following Tarski approach, each choice for a quantified variable in a sentence
  is made with complete information about, hence it is (potentially) completely
  dependent on, the values of variables quantified before it in the
  sentence.
  This idiosyncrasy of the classic interpretation of quantifiers is well known
  and attempts have been made to overcome the \emph{linear dependence} of
  quantifiers dictated by their relative  position in a
  sentence~\cite{Hen61,Hin73a,BG86,SV92}.
  Most notably, Hintikka and Sandu~\cite{HS89} proposed
  \emph{Independence-Friendly Logic} (\IF), as a first-order logic where
  independence between quantified variables can be explicitly asserted in the
  formulae together with a game-theoretic, non-compositional,
  semantics~\cite{San93} for the logic.
  A compositional semantics for \IF was later proposed by
  Hodges~\cite{Hod97a,Hod97b}, whose idea was to replace the standard notion of
  \emph{assignment} of the Tarskian semantics with that of a set of assignments
  (called \emph{trump}~\cite{Hod97a} or \emph{team}~\cite{Vaa07}), as the basic
  semantic element with respect to which the truth of a formula is evaluated.
  This multiplicity of assignments effectively allows one to express the notion of
  \emph{dependence/independence} among variables, a distinction that makes very
  little sense, in particular from a formal point of view, when only a single
  assignment is considered.
  \\ \indent
  Taking inspiration from Hodges' work, the goal of this work is to devise a
  \emph{compositional semantic framework} that can account for a game-theoretic
  interpretation of quantification over (possibly infinite) sequences of
  choices.
  The framework is specifically tailored to deal with quantifications in a
  linear-time settings and applied to the logic \QPTL, which was introduced
  in~\cite{Sis83} as a unifying \emph{$\omega$-regular language} allowing for
  both temporal operators and propositional quantifiers.
  Despite its expressiveness and theoretical interest, \QPTL has not gained much
  traction in practical contexts, mainly due to the high complexity of its
  decision problems.
  Indeed, both the satisfiability and the model checking problems are
  \emph{non-elementary} in the number of \emph{alternations of the
  quantifiers}~\cite{SVW87}.
  \\ \indent
  In this article we propose a novel semantics for \QPTL, inspired by the body
  of work on (in)dependence logics~\cite{Vaa07,MSS11,AKVV16}.
  Similarly to those works, the semantics provides a compositional
  formulation~\cite{SH01} for a game-theoretic interpretation of the
  quantifiers.
  In contrast to them, however, we require a \emph{symmetric} treatment of the
  two quantifiers in order to preserve  \emph{closure under negation} and
  avoid \emph{undetermined formulae}~\cite{Hod97a,Hod97b}.
  The most significant feature of the new approach is the ability to encode
  various forms of \emph{independence constraints} among the quantified
  variables and provide a powerful tool to fine-tune the semantics of the
  propositional quantifiers.  In particular, we discuss a specific instantiation
  of the semantics that allows one to recover a game-theoretic interpretation of
  the quantifiers and reconcile the satisfiability and the game solution
  problems.
  This result is achieved by first generalizing classic \emph{temporal
  assignments}, which give values to propositional variables at each time
  instant, to sets of sets of assignments, called \emph{hyperassignments}.
  This also generalizes teams, defined as sets of assignments, used by Hodges.
  The second step is to introduce new \emph{classes of functors} that maps
  temporal assignments to valuations of a given variable over time and,
  intuitively, correspond to the semantic counterparts of the Skolem functions.
  The dependence of functors on assignments allows us to impose various forms of
  independence constraints among the variables.
  In particular, we investigate two specific forms, called \emph{behavioral} and
  \emph{strongly-behavioral}, that require functors to choose the value of the
  variable at any given time instant based only upon the values dictated by the
  input assignment to the other variables up to that instant (possibly
  excluded).
  These are forms of independence constraints that make the choice of the value
  of a variable at a given time totally independent of the values that other
  variables assume in the future.
  The behavioral restrictions are precisely what allows us to recover the
  correspondence between Skolem functions and strategies and to reconcile the
  satisfiability and game solution problems, thus making the resulting version
  of \QPTL, called \emph{Good-for-Games \QPTL} (\GFG-\QPTL), well suited to
  express game-theoretic concepts and a logical analogue of \emph{Good-for-Games
  Automata}~\cite{HP06,BK19}.
  \\ \indent
  On the technical side, the novel semantics under the behavioral interpretation
  of the quantifiers leads to 2\ExpTime decision procedures for both the
  satisfiability and model-checking problems.
  On the other hand, it does not give up expressiveness, as we show that the
  vanilla and behavioral semantics turn out to be expressively equivalent.
  These results also show that the high complexity of the decision problems for
  vanilla \QPTL stems from the fact that unrestricted dependencies among the
  quantified variables are allowed.
  The properties expressible by exploiting such unrestricted dependencies can,
  however, still be expressed under the behavioral semantics via encoding of
   $\omega$-regular automata, though with a non-elementary blowup.

\end{section}






\section{Alternating Hodges Semantics}
\label{sec:althodsem}

\QPTL~\cite{Sis83} extends \LTL~\cite{Pnu77,Pnu81} with quantifications over
\emph{atomic propositions} from a given set $\APSet$, with the intuition that
the Boolean values of the same proposition in different time instants are
independent of each other.

\subsection{Quantified Propositional Temporal Logic}

For convenience, we provide a syntax for \QPTL where quantifications do not
occur within temporal operators.
This is equivalent to the original logic, thanks to the \emph{prenex normal
form} (\pnf, for short) property enjoyed by \QPTL~\cite{Sis83}, which allows to
move quantifiers outside temporal operators.

\begin{definition}[\QPTL Syntax]
  \label{def:syn}
  The \emph{Quantified Propositional Temporal Logic} is the set of formulae
  built accordingly to the following context-free grammar, where $\psi \in \LTL$
  and $\apElm \in \APSet$:
  \(
    \varphi \seteq \psi \mid \neg \varphi \mid \varphi \wedge \varphi \mid
    \varphi \vee \varphi \mid \exists \apElm \ldotp \varphi \mid \forall \apElm
    \ldotp \varphi.
  \)
\end{definition}

The classic semantics is given in terms of \emph{temporal assignments} (simply
\emph{assignments}, from now on), which are functions associating each
proposition with a \emph{temporal valuation} mapping each time instant to a
Boolean value, \ie, infinite sequences of truth assignments.
Let $\AsgSet \defeq \allowbreak \APSet \pto (\SetN \to \SetB)$ denote the set of
assignments over arbitrary subsets of $\APSet$.
For convenience, we also introduce the set of assignments defined exactly over
the propositions in $\PSet \!\subseteq\! \APSet$, \ie, $\AsgSet(\PSet)
\!\defeq\! \set{\! \asgElm \!\in\! \AsgSet }{ \dom{\asgElm} \!=\! \PSet \!}$ and
the set $\AsgSet[\subseteq](\PSet) \!\defeq\! \set{\! \asgElm \!\in\! \AsgSet }{
\PSet \!\subseteq\! \dom{\asgElm} \!}$ of assignments defined at least over
$\PSet$.
The \emph{satisfaction relation} $\models$ between an assignment $\asgElm$ and a
\QPTL formula $\varphi$ is defined below, where $\cmodels[\LTL]$ is the standard
\LTL satisfiability and ${\asgElm}[\apElm \mapsto \fFun]$ denotes the assignment
that extends $\asgElm$ and maps proposition $\apElm$ to temporal valuation
$\fFun$.
As usual, by $\free{\varphi}$ we denote the set of propositions free in
$\varphi$.

\begin{definition}[Tarski Semantics]
  \label{def:tarsem}
  The \emph{Tarski-semantics relation} $\asgElm \models \varphi$ is inductively
  defined as follows, for all \QPTL formulae $\varphi$ and assignments $\asgElm
  \in \AsgSet[\subseteq](\free{\varphi})$.
  \begin{enumerate}
    \item\label{def:tarsem(ltl)}
      $\asgElm \models \psi$, if $\asgElm \cmodels[\LTL] \psi$, whenever $\psi$
      is an \LTL formula;
    \item\label{def:tarsem(bln)}
      the semantics of Boolean connectives is defined as usual;
    \item\label{def:tarsem(qnt)}
      for all atomic propositions $\apElm \in \APSet$:
      \begin{enumerate}
        \item\label{def:tarsem(qnt:exs)}
          $\asgElm \models \exists \apElm \ldotp \phi$ if ${\asgElm}[\apElm
          \mapsto \fFun] \models \phi$, for some $\fFun \in \SetN \to \SetB$;
        \item\label{def:tarsem(qnt:all)}
          $\asgElm \models \forall \apElm \ldotp \phi$ if ${\asgElm}[\apElm
          \mapsto \fFun] \models \phi$, for all $\fFun \in \SetN \to \SetB$.
      \end{enumerate}
  \end{enumerate}
\end{definition}

\subsection{A New Semantics for \QPTL}

We now introduce a novel compositional semantics for \QPTL that, unlike Tarski's
one, will allow us to specify, later on, independence constraints among
quantified propositions.
The new semantics follows an approach similar to~\cite{Hod97a}, where a
compositional semantics for \IF was first proposed.
Hodges' idea was to expand an assignment for the free variables to a set of
assignments, a trump in his terminology, with the intuition of capturing all
possible choices made by one of the two players for its own variables in the
satisfiability game underlying the game-theoretic semantics of the
logic~\cite{HS89}.
Hodges' semantics, though able to correctly capture \IF, is, however, not
adequate for our purposes.
Indeed, by design, it is intrinsically asymmetric, treating the two players
differently.
More specifically, a single set of assignments only provides complete
information about the choices of one of the two players and only allows to
restrict the choices of the adversary.
This, in turn, limits the class of games expressible in the logic to asymmetric
games, where only the observation power of one player can be restricted.
To also capture symmetric games, we need to get rid of this asymmetry, which
requires a non-trivial generalization of Hodges' approach.

To give semantics to a \QPTL formula $\varphi$, we proceed as follows.
Similarly to Hodges, the idea is that the interpretations of the free atomic
propositions correspond to the choices that the two players could make prior to
the current stage of the game, \ie, the stage where the formula $\varphi$ has
still to be evaluated.
These possible choices can be organized on a two-level structure, \ie, a set of
sets of assignments, each level summarizing the information about the choices a
player can make in its turns.
In order to evaluate the formula $\varphi$, then, a player chooses a set of
assignment, while its opponent chooses one assignment in that set where
$\varphi$ must hold.
We shall use a flag $\alpha \in \{ \QEA, \QAE \}$, called \emph{alternation
flag}, to keep track of which player is assigned to which level of choice.
If $\alpha = \QEA$, Eloise chooses the set of assignments, while Abelard chooses
one of those assignments; if $\alpha = \QAE$, the dual reasoning applies.
Given a flag $\alpha \in \{ \QEA, \QAE \}$, we denote by $\dual{\alpha}$ the
dual flag, \ie, $\dual{\alpha} \in \{ \QEA, \QAE \}$ with $\dual{\alpha} \neq
\alpha$.
The idea above is captured by the following notion of \emph{hyperassignment},
namely a non-empty set of non-trivial, \ie, non-empty, sets of assignments
defined over an arbitrary set $\PSet \subseteq \APSet$:
\[
  \HAsgSet \defeq \set{ \AsgFam \subseteq \pow{\AsgSet(\PSet)} }{ \emptyset
  \!\not\in\! \AsgFam \!\neq\! \emptyset \land \PSet \!\subseteq\! \APSet }.
\]
Note that we require all the assignments contained in a hyperassignment to be
defined on the same atomic propositions, though the domains of assignments in
different hyperassignment may differ.
By $\ap{\AsgFam} \subseteq \APSet$ we denote the set of atomic propositions over
which the hyperassignment $\AsgFam$ is defined.
$\HAsgSet(\PSet) \defeq \allowbreak \set{\! \AsgFam \in \HAsgSet }{ \AsgFam
\subseteq \pow{\AsgSet(\PSet)} \!}$ is the set of hyperassignments over the same
set of atomic propositions $\PSet$, while the set whose hyperassignments have
domains that include $\PSet$ is $\HAsgSet[\subseteq](\PSet) \defeq \allowbreak
\set{\! \AsgFam \in \HAsgSet }{ \AsgFam \subseteq
\pow{\AsgSet[\subseteq](\PSet)} \!}$.

\begin{wrapfigure}{r}{0.45\linewidth}
  \vspace{-1.5em}
  \newcommand{\XOne}[1]{{\color{red!50!black}\XSet[1#1]}}
  \newcommand{\XTwo}[1]{{\color{blue!50!black}\XSet[2#1]}}
  \begin{center}
    \scalebox{0.60}[0.50]{\fighypasgord}
  \end{center}
  \vspace{-1.0em}
  \caption{\label{fig:hypasgord} The preorder between hyperassignments: for
  every $\XOne{i} \!\in\! \AsgFam[1]$, there is a $\XTwo{j} \!\in\! \AsgFam[2]$
  with $\XTwo{j} \!\subseteq\! \XOne{i}$. More than one set in $\AsgFam[1]$
  could choose the same set in $\AsgFam[2]$; there may be sets in $\AsgFam[2]$
  not used by any in $\AsgFam[1]$.}
  \vspace{-1.5em}
\end{wrapfigure}

For any pair of hyperassignments $\AsgFam[1], \AsgFam[2] \in \HAsgSet$, we write
$\AsgFam[1] \sqsubseteq \AsgFam[2]$ to express the fact that, for all sets of
assignments $\XSet[1] \in \AsgFam[1]$, there is a set of assignments $\XSet[2]
\in \AsgFam[2]$ with $\XSet[2] \subseteq \XSet[1]$.
Obviously, $\AsgFam[1] \subseteq \AsgFam[2]$ implies $\AsgFam[1] \sqsubseteq
\AsgFam[2]$, which in its turn implies $\ap{\AsgFam[1]} = \ap{\AsgFam[2]}$.
Figure~\ref{fig:hypasgord} reports a graphical representation of the relation
$\sqsubseteq$.
As usual, we write $\AsgFam[1] \equiv \AsgFam[2]$ if both $\AsgFam[1]
\sqsubseteq \AsgFam[2]$ and $\AsgFam[2] \sqsubseteq \AsgFam[1]$ hold true.
It is clear that the relation $\sqsubseteq$ is both reflexive and transitive,
hence it is a preorder.
Consequently, $\equiv$ is an equivalence relation.
In particular, we shall show (see Corollary~\ref{cor:eqv}) that $\equiv$
captures the intuitive notion of equivalence between hyperassignments, in the
sense that two equivalent hyperassignments \wrt $\equiv$ do satisfy the same
formulae.


Our goal is to define a semantics for \QPTL by providing a satisfaction relation
between a hyperassignment $\AsgFam$ and a \QPTL formula $\varphi$, \wrt a given
interpretation of the players of $\AsgFam$, \ie, \wrt an alternation flag
$\alpha \in \{ \QEA, \QAE \}$.
Therefore, we shall have two satisfaction relations, namely $\cmodels[][\QEA]$
and $\cmodels[][\QAE]$, depending on how we interpret the levels of the
hyperassignment.
The idea is to capture the following intuition that relates, in a natural way,
to the classic Tarskian semantics.
When the alternation flag $\alpha$ is $\QEA$, then a set of assignments is
chosen existentially by Eloise and all its assignments, chosen universally by
Abelard, must satisfy $\varphi$.
Conversely, when $\alpha$ is $\QAE$, then a set of assignments is chosen
universally by Abelard and at least one assignment, chosen existentially by
Eloise, must satisfy $\varphi$.
Semantically, hyperassignments are a notion similar to
\emph{quasi-strategies}~\cite{Kol85}.

We break down the presentation of the semantics by introducing three operations:
the \emph{dualization} swaps the role of the two players in a hyperassignment,
allowing for connecting the two satisfaction relations and a symmetric treatment
of quantifiers later on; the \emph{partitioning} deals with disjunction and
conjunction; finally, the \emph{extension} directly handles quantifications.

Let us consider the dualization operator first.
The idea is that, given a hyperassignment $\AsgFam$, the dual hyperassignment
$\dual{\AsgFam}$ exchanges the role of the two players \wrt $\AsgFam$.
This means that, if Eloise is the first to choose in $\AsgFam$, then her choice
will be postponed in $\dual{\AsgFam}$ after that of Abelard.
To ensure that, in exchanging the order of choice for the two players, we do not
alter the semantics of the underlying game, we need to reshuffle the assignments
in $\AsgFam$ so as to simulate the original dependencies between the choices of
the players.
To this end, we introduce the set of choice functions for $\AsgFam$ as follows,
whose definition implicitly assumes the axiom of choice:
\[
  \ChcSet{\AsgFam} \defeq \set{ \chcFun \colon \AsgFam \to \AsgSet }{ \forall
  \XSet \in \AsgFam \ldotp \chcFun(\XSet) \in \XSet }.
 \]
$\ChcSet{\AsgFam}$ contains all the functions $\chcFun$ that, for every set of
assignments $\XSet$ in $\AsgFam$, pick a specific assignment $\chcFun(\XSet)$ in
that set.
Each such function simulates a possible choice of the second player of $\AsgFam$
depending on the choice of (the set of assignments chosen by) its first player.
The dual hyperassignment $\dual{\AsgFam}$, then, collects the images of
the choice functions in $\ChcSet{\AsgFam}$.
We, thus, obtain a hyperassignment in which the 
choice order of the two players is inverted:
\[
  \dual{\AsgFam} \defeq \set{ \img{\chcFun} }{ \chcFun \in \ChcSet{\AsgFam} }.
\]

\newcommand{\XOne}{{\color{red!50!black}\XSet[1]}}
\newcommand{\XTwo}{{\color{blue!50!black}\XSet[2]}}
\newcommand{\XThree}{{\color{green!25!black}\XSet[3]}}
\newcommand{\asgone}[1]{{\color{red!50!black}\asgElm[#1]}}
\newcommand{\asgtwo}[1]{{\color{blue!50!black}\asgElm[#1]}}
\newcommand{\asgthree}[1]{{\color{green!25!black}\asgElm[#1]}}
\begin{wrapfigure}[4]{r}{0.5\textwidth}
\vspace{-2em}
{\small \[
  \AsgFam \!=\!
  \begin{Bmatrix}
    \XOne = \{ \asgone{11}, \asgone{12} \}, \\
    \XTwo = \{ \asgtwo{21}, \asgtwo{22} \}, \\
    \XThree = \{ \asgthree{3} \}
  \end{Bmatrix}
  \dual{\AsgFam} \!=\!
  \begin{Bmatrix}
    \img{\chcFun[1]} = \{ \asgone{11}, \asgtwo{21}, \asgthree{3} \}, \\
    \img{\chcFun[2]} = \{ \asgone{11}, \asgtwo{22}, \asgthree{3} \}, \\
    \img{\chcFun[3]} = \{ \asgone{12}, \asgtwo{21}, \asgthree{3} \}, \\
    \img{\chcFun[4]} = \{ \asgone{12}, \asgtwo{22}, \asgthree{3} \}\phantom{,}
  \end{Bmatrix}
\]}
\end{wrapfigure}
\noindent
\textbf{Example 1.}\setcounter{example}{1}
\emph{
Consider the hyperassignment $\AsgFam$ on the right, where $\XOne \!=\! \{
\asgone{11}, \asgone{12} \}$, $\XTwo \!=\! \{ \asgtwo{21}, \asgtwo{22} \}$,
and $\XThree \!=\! \{ \asgthree{3} \}$.
Every set of assignments in $\dual{\AsgFam}$ is obtained as the image of one of
the four choice functions $\chcFun[i] \!\in\! \ChcSet{\AsgFam}$, each choosing
exactly one assignment from $\XOne$, one from $\XTwo$, and one from $\XThree$.
Intuitively, in $\AsgFam$ the strategy of the first player, say Eloise, can only
choose the color of the final assignments (either red for \XOne, blue for \XTwo,
or green for \XThree), while the one for Abelard decides which assignment of
each color will be picked.
After dualization, the two players exchange the order in which they choose.
Therefore, Abelard, starting first in $\dual{\AsgFam}$, will select one of the
four choice functions, which picks an assignment for each color.
Eloise, choosing second, by using her strategy that selects the color will give
the final assignment.
In other words, the original strategies of the players encoded in the
hyperassignment, as well as their dependencies, are preserved, regardless of the
swap of their role in the dual hyperassignment.
}

The following proposition ensures that the dualization operator enjoys an
\emph{involution property}, similarly to the Boolean negation: by applying the
dualization twice, we obtain a hyperassignment equivalent to the original one.

\defenv{proposition}[][Prp][DltInv]
  {
  $\AsgFam \subseteq \dual{\dual{\AsgFam}}$ and $\AsgFam \equiv
  \dual{\dual{\AsgFam}}$, for all $\AsgFam \in \HAsgSet$.
  }

Note that there is a clear analogy between the structure of hyperassignments
with alternation flag $\QEA$ (\resp, $\QAE$) and the structure of DNF (\resp,
CNF) formulae, where the dualization swaps between two equivalent forms.
The following lemma formally states that the dualization swaps the role of the
two players while still preserving the original dependencies among their
choices.

\defenv{lemma}[Dualization][Lmm][Flp]
  {
  The following hold true, for all \QPTL formulae $\varphi$ and hyperassignments
  $\AsgFam \in \HAsgSet[\subseteq](\free{\varphi})$.
  \begin{minipage}{0.5\textwidth}
  \noindent
  \begin{enumerate}
    \item\labelx{ea}
      Statements~\ref{lmm:flp(ea:org)} and~\ref{lmm:flp(ea:dlt)} are equivalent:
      \begin{enumerate}
        \item\labelx{ea:org}
          there exists $\XSet \in \AsgFam$ such that $\asgElm \models \varphi$,
          for all $\asgElm \in \XSet$;
        \item\labelx{ea:dlt}
          for all $\XSet \in \dual{\AsgFam}$, $\asgElm \models \varphi$, for
          some $\asgElm \in \XSet$.
      \end{enumerate}
  \end{enumerate}
  \end{minipage}
  \hspace{-1em}
  \begin{minipage}{0.5\textwidth}
  \begin{enumerate}
    \setcounter{enumi}{1}
    \item\labelx{ae}
      Statements~\ref{lmm:flp(ae:org)} and~\ref{lmm:flp(ae:dlt)} are equivalent:
      \begin{enumerate}
        \item\labelx{ae:org}
          for all $\XSet \in \AsgFam$, $\asgElm \models \varphi$, for some
          $\asgElm \in \XSet$;
        \item\labelx{ae:dlt}
          there exists $\XSet \in \dual{\AsgFam}$ such that $\asgElm \models
          \varphi$, for all $\asgElm \in \XSet$.
      \end{enumerate}
    \end{enumerate}
  \end{minipage}
  }

The \emph{partition operator} decomposes hyperassignments and is instrumental in
capturing the semantics of Boolean connectives.
Given a hyperassignment $\AsgFam$, the following set
\[
  \parFun[]{\AsgFam} \defeq \set{ (\AsgFam[1], \AsgFam[2]) \in \pow{\AsgFam}
  \times \pow{\AsgFam} }{ \AsgFam[1] \uplus \AsgFam[2] = \AsgFam }
\]
collects all the possible partitions of $\AsgFam$ into two parts.
Assume that the two players of $\AsgFam$ are interpreted according to the
alternation flag $\QAE$: Abelard chooses first and Eloise chooses second.
The game-theoretic interpretation of the disjunction requires Eloise to choose
one of two disjuncts to be proven true.
In our setting, then, in order to satisfy $\varphi_{1} \vee \varphi_{2}$, Eloise
has to show that, for each set of assignments chosen by Abelard, she has a way
to
select one of the disjuncts $\varphi_{i}$ in such a way that $\varphi_{i}$ is
satisfied by some assignment in that set.
This selection is summarized by one of the pairs $(\AsgFam[1], \AsgFam[2])$ in
$\parFun[]{\AsgFam}$, where $\AsgFam[i]$ collects the sets of assignments for
which the $i$-th disjunct is selected, with $i \in \{ 1, 2 \}$.
A similar argument, with the role of the two players reversed and switching the
quantifications throughout, leads to a dual interpretation for conjunction,
where it is Abelard who chooses one of the two conjuncts to be proven false.
This intuition is made precise by the following lemma.

\defenv{lemma}[Boolean Connectives][Lmm][BlnCon]
  {
  The following hold true, for all \QPTL formulae $\varphi_{1}$ and
  $\varphi_{2}$ and hyperassignments $\AsgFam \in \HAsgSet[\subseteq](\PSet)$,
  with $\PSet \defeq \allowbreak \free{\varphi_{1}} \cup \free{\varphi_{2}}$.\\
  \begin{minipage}{0.5\textwidth}
  \noindent
  \begin{enumerate}
    \item\labelx{ea}
      Statements~\ref{lmm:blncon(ea:org)} and~\ref{lmm:blncon(ea:par)} are
      equivalent:
      \begin{enumerate}
        \item\labelx{ea:org}
          there is $\XSet \!\in\! \AsgFam$ such that $\asgElm \!\models\!
          \varphi_{1} \!\wedge\! \varphi_{2}$, for all $\asgElm \!\in\! \XSet$;
        \item\labelx{ea:par}
          for each $(\AsgFam[1], \AsgFam[2]) \in \parFun[]{\AsgFam}$, there are
          $i \in \{ 1, 2 \}$ and $\XSet \in \AsgFam[i]$ such that $\asgElm
          \models \varphi_{i}$, for all $\asgElm \in \XSet$.
      \end{enumerate}
  \end{enumerate}
  \end{minipage}
  \hspace{-1em}
  \begin{minipage}{0.5\textwidth}
  \begin{enumerate}
    \setcounter{enumi}{1}
    \item\labelx{ae}
      Statements~\ref{lmm:blncon(ae:org)} and~\ref{lmm:blncon(ae:par)} are
      equivalent:
      \begin{enumerate}
        \item\labelx{ae:org}
          for all $\XSet \!\in\! \AsgFam$, $\asgElm \!\models\! \varphi_{1}
          \!\vee\! \varphi_{2}$, for some $\asgElm \!\in\! \XSet$;\!
        \item\labelx{ae:par}
          there is $(\AsgFam[1], \AsgFam[2]) \!\in\! \parFun[]{\AsgFam}$ such
          that, for all $i \in \{ 1, 2 \}$ and $\XSet \!\in\! \AsgFam[i]$,
          $\asgElm \!\models\! \varphi_{i}$, for some $\asgElm \!\in\! \XSet$.
      \end{enumerate}
    \end{enumerate}
  \end{minipage}
  }

Quantifications are taken care of by the \emph{extension operator}.
Let $\FncSet(\PSet) \defeq \AsgSet(\PSet) \to (\SetN \to \SetB)$ be the set of
\emph{functors} that maps every assignment over $\PSet$ to a temporal valuation.
Essentially, these objects play the role of Skolem functions in the
non-compositional semantics.
The \emph{extension of an assignment} $\asgElm \in \AsgSet(\PSet)$ \wrt a
functor $\FFun \in \FncSet(\PSet)$ for an atomic proposition $\apElm \in \APSet$
is defined as $\extFun{\asgElm, \FFun, \apElm} \defeq {\asgElm}[\apElm \mapsto
\FFun(\asgElm)]$.
Intuitively, it extends $\asgElm$ with $\apElm$, by assigning to it the value
$\FFun(\asgElm)$ prescribed by the functor $\FFun$.
The \emph{extension operation} can then be lifted to sets of assignments $\XSet
\!\subseteq\! \AsgSet(\PSet)$ in the obvious way, \ie, we set $\extFun{\XSet,
\FFun, \apElm} \defeq \set{ \extFun{\asgElm, \FFun, \apElm} }{ \asgElm \!\in\!
\XSet }$.
This operation embeds into $\XSet$ the entire player strategy encoded by
$\FFun$.
Finally, the \emph{extension of a hyperassignment} $\AsgFam \in \HAsgSet(\PSet)$
with $\apElm$ is simply the set of extensions with $\apElm$ of all its sets of
assignments \wrt all possible functors over the atomic propositions of
$\AsgFam$:
\[
  \extFun{\AsgFam, \apElm} \defeq \set{ \extFun{\XSet, \FFun, \apElm} }{ \XSet
  \in \AsgFam, \FFun \in \FncSet(\ap{\AsgFam}) }.
\]
Intuitively, this operation embeds into $\AsgFam$ all possible strategies,
encoded by the functors $\FFun$, for choosing the value of $\apElm$.
The following lemma states that the extension operator provides an adequate
semantics for quantifications, where statement~\ref{lmm:hypext(ea)} considers
Eloise's choices, when the player interpretation of the hyperassignment is
$\QEA$, and statement~\ref{lmm:hypext(ae)} takes care of Abelard's choices, when
the player interpretation is $\QAE$.

\defenv{lemma}[Hyperassignment Extensions][Lmm][HypExt]
  {
  The following hold true, for all \QPTL formulae $\varphi$, atomic propositions
  $\apElm \in \APSet$, and hyperassignments $\AsgFam \in
  \HAsgSet[\subseteq](\free{\varphi} \setminus \{ \apElm \})$.\\
  \begin{minipage}{0.525\textwidth}
  \noindent
  \begin{enumerate}
    \item\labelx{ea}
      Statements~\ref{lmm:hypext(ea:org)} and~\ref{lmm:hypext(ea:ext)} are
      equivalent:
      \begin{enumerate}
        \item\labelx{ea:org}
          there is $\XSet \in \AsgFam$ such that $\asgElm \models \exists \apElm
          \ldotp \varphi$, for all $\asgElm \in \XSet$;
        \item\labelx{ea:ext}
          there is $\XSet \!\in\! \extFun[]{\AsgFam, \apElm}$ such that $\asgElm
          \!\!\models\!\! \varphi$, for all $\asgElm \!\in\! \XSet$.
      \end{enumerate}
  \end{enumerate}
  \end{minipage}
  \hspace{-1em}
  \begin{minipage}{0.475\textwidth}
  \begin{enumerate}
    \setcounter{enumi}{1}
    \item\labelx{ae}
      Statements~\ref{lmm:hypext(ae:org)} and~\ref{lmm:hypext(ae:ext)} are
      equivalent:
      \begin{enumerate}
        \item\labelx{ae:org}
          for all $\XSet \in \AsgFam$, $\asgElm \models \forall \apElm \ldotp
          \varphi$, for some $\asgElm \in \XSet$;
        \item\labelx{ae:ext}
          for all $\XSet \in \extFun[]{\AsgFam, \apElm}$, $\asgElm \models
          \varphi$, for some $\asgElm \in \XSet$.
      \end{enumerate}
    \end{enumerate}
  \end{minipage}
  }

We can finally introduce the new semantics for \QPTL based on the novel notion
of hyperassignment.

\begin{definition}[Alternating Hodges Semantics]
  \label{def:althodsem}
  The \emph{alternating-Hodges-semantics relation} $\AsgFam \cmodels[][\alpha]\:
  \varphi$ is
  defined as follows, for all \QPTL formulae $\varphi$, hyperassignments
  $\AsgFam \in \HAsgSet[\subseteq](\free{\varphi})$, and alternation flags
  $\alpha \in \{ \QEA, \QAE \}$.
  \begin{enumerate}
    \item\label{def:althodsem(ltl)}
      whenever $\psi$ is an \LTL formula:
      \begin{enumerate}
        \item\label{def:althodsem(ltl:ea)}
          $\AsgFam \cmodels[][\QEA] \psi$ if there is a set of assignments
          $\XSet \in \AsgFam$ such that, for each assignment $\asgElm \in
          \XSet$, it holds that $\asgElm \cmodels[\LTL]\: \psi$;
        \item\label{def:althodsem(ltl:ae)}
          $\AsgFam \cmodels[][\QAE] \psi$ if, for each set of assignments $\XSet
          \in \AsgFam$, there is an assignment $\asgElm \in \XSet$ such that
          $\asgElm \cmodels[\LTL]\: \psi$;
      \end{enumerate}
    \item\label{def:althodsem(neg)}
      $\AsgFam \cmodels[][\alpha] \neg \phi$ if $\AsgFam
      \notcmodels[][\dual{\alpha}]\: \phi$, \ie, it is not the case that
      $\AsgFam \cmodels[][\dual{\alpha}]\: \phi$;
  \end{enumerate}
  \begin{minipage}{0.5\textwidth}
  \begin{enumerate}
    \setcounter{enumi}{2}
    \item\label{def:althodsem(con)}
      \begin{enumerate}
        \item\label{def:althodsem(con:ea)}
          $\AsgFam \cmodels[][\QEA] \phi_{1} \wedge \phi_{2}$ if, for each
          $(\AsgFam[1], \AsgFam[2]) \in \parFun[]{\AsgFam}$, it holds that
          $\AsgFam[1] \neq \emptyset$ and $\AsgFam[1] \cmodels[][\QEA] \phi_{1}$
          or $\AsgFam[2] \neq \emptyset$ and $\AsgFam[2] \cmodels[][\QEA]
          \phi_{2}$;
        \item\label{def:althodsem(con:ae)}
          $\AsgFam \cmodels[][\QAE] \phi_{1} \wedge \phi_{2}$ if $\dual{\AsgFam}
          \cmodels[][\QEA] \phi_{1} \wedge \phi_{2}$;
      \end{enumerate}
    \setcounter{enumi}{4}
    \item\label{def:althodsem(exs)}
      for all atomic propositions $\apElm \in \APSet$:
      \begin{enumerate}
        \item\label{def:althodsem(exs:ea)}
          $\AsgFam \cmodels[][\QEA]\: \exists \apElm \ldotp \phi$ if
          $\extFun{\AsgFam, \apElm} \cmodels[][\QEA] \phi$;
        \item\label{def:althodsem(exs:ae)}
          $\AsgFam \cmodels[][\QAE]\: \exists \apElm \ldotp \phi$ if
          $\dual{\AsgFam} \cmodels[][\QEA]\: \exists \apElm \ldotp \phi$;
      \end{enumerate}
  \end{enumerate}
  \end{minipage}
  \begin{minipage}{0.5\textwidth}
  \begin{enumerate}
    \setcounter{enumi}{3}
    \item\label{def:althodsem(dis)}
      \begin{enumerate}
        \item\label{def:althodsem(dis:ae)}
          $\AsgFam \cmodels[][\QAE] \phi_{1} \vee \phi_{2}$ if there is
          $(\AsgFam[1], \AsgFam[2]) \in \parFun[]{\AsgFam}$ such that if
          $\AsgFam[1] \!\neq\! \emptyset$ then $\AsgFam[1] \cmodels[][\QAE]
          \phi_{1}$ and if $\AsgFam[2] \!\neq\! \emptyset$ then $\AsgFam[2]
          \cmodels[][\QAE] \phi_{2}$;
        \item\label{def:althodsem(dis:ea)}
          $\AsgFam \cmodels[][\QEA] \phi_{1} \vee \phi_{2}$ if $\dual{\AsgFam}
          \cmodels[][\QAE] \phi_{1} \vee \phi_{2}$;
      \end{enumerate}
    \setcounter{enumi}{5}
    \item\label{def:althodsem(all)}
      for all atomic propositions $\apElm \in \APSet$:
      \begin{enumerate}
        \item\label{def:althodsem(all:ea)}
          $\AsgFam \cmodels[][\QEA]\: \forall \apElm \ldotp \phi$ if
          $\dual{\AsgFam} \cmodels[][\QAE]\: \forall \apElm \ldotp \phi$;
        \item\label{def:althodsem(all:ae)}
          $\AsgFam \cmodels[][\QAE]\: \forall \apElm \ldotp \phi$ if
          $\extFun{\AsgFam, \apElm} \cmodels[][\QAE] \phi$.
      \end{enumerate}
  \end{enumerate}
  \end{minipage}

\end{definition}

The base case (Item~\ref{def:althodsem(ltl)}) for \LTL formulae $\psi$ simply
formalizes the intuition about satisfaction relative to the alternation flag: if
$\alpha \!=\! \QEA$, there exists a set of assignments whose elements satisfy
$\psi$ in the Tarski sense; the dual applies when $\alpha \!=\! \QAE$.
Negation, in accordance with the classic game-theoretic interpretation, is dealt
with by simply exchanging the player interpretation of the hyperassignment
(Item~\ref{def:althodsem(neg)}).
Observe that, from this semantic condition, it immediately follows that either
$\AsgFam \cmodels[][\alpha] \varphi$ or $\AsgFam \cmodels[][\dual{\alpha}] \neg
\varphi$.
In other words, the semantics does not allow formulae with an undetermined truth
value.
The semantics of the remaining Boolean connectives
(Items~\ref{def:althodsem(con:ea)} and~\ref{def:althodsem(dis:ae)}) and
quantifiers (Items~\ref{def:althodsem(exs:ea)} and~\ref{def:althodsem(all:ae)})
is a direct application of Lemmata~\ref{lmm:blncon} and~\ref{lmm:hypext}.
Observe that swapping between $\cmodels[][\QEA]$ and $\cmodels[][\QAE]$
(Items~\ref{def:althodsem(con:ae)}, \ref{def:althodsem(dis:ea)},
\ref{def:althodsem(exs:ae)} and~\ref{def:althodsem(all:ea)}) is done according
to Lemma~\ref{lmm:flp} and it represents the fundamental point where our
approach departs from Hodges' semantics~\cite{Hod97a,Hod97b}.
The above three lemmata also imply the following theorem, which formalizes an
\emph{adequacy principle} that reduces the two satisfiability relations of the
new semantics to the classic Tarskian satisfaction in a natural way.

\defenv{theorem}[Semantics Adequacy I][Thm][SemAdqI]
  {
  For all \QPTL formulae $\varphi$ and hyperassignments $\!\AsgFam \!\in\!
  \HAsgSet[\subseteq](\free{\varphi})$:
  \begin{enumerate}
    \item\labelx{ea}
      $\AsgFam \cmodels[][\QEA]\: \varphi$ \iff there exists a set of
      assignments $\XSet \in \AsgFam$ such that $\asgElm \models \varphi$, for
      all $\asgElm \in \XSet$;
    \item\labelx{ae}
      $\AsgFam \cmodels[][\QAE]\: \varphi$ \iff, for all sets of assignments
      $\XSet \in \AsgFam$, it holds that $\asgElm \models \varphi$, for some
      $\asgElm \in \XSet$.
  \end{enumerate}
  }

\newcommand{\asgrElm}[1][]{{\color{red!50!black}\asgElm[#1]}}
\newcommand{\asgbElm}[1][]{{\color{blue!50!black}\asgElm[#1]}}
\begin{example}
  \label{exm:sem}
  Let us consider the \QPTL sentence $\varphi \defeq \forall \apElm \ldotp
  (\psi_{\apElm} \rightarrow \exists \qapElm \ldotp (\psi_{\qapElm} \wedge
  (\qapElm \leftrightarrow \X \apElm)))$, with $\psi_{\apElm} \defeq \neg \apElm
  \wedge \X(\G \apElm \vee \G \neg \apElm)$ and $\psi_{\qapElm} \defeq \G
  \qapElm \vee \G \neg \qapElm$.
  It can be viewed as describing a very simple game with two players, Abelard
  and Eloise in this order.
  Abelard can only choose a truth value for $\apElm$ that will hold constant at
  any time instant except for time $0$, where it is false regardless of his
  choice, in accordance with $\psi_{\apElm}$.
  Eloise, instead, chooses a truth value for $\qapElm$ that will hold constant
  from time $0$ onward, as dictated by $\psi_{\qapElm}$.
  The \LTL formula $\qapElm \leftrightarrow \X \apElm$ encodes the game
  objective, requiring that the truth value of $\apElm$ at time $1$ matches that
  of $\qapElm$ at time $0$.
  Sentence $\varphi$, then, asks whether Eloise can respond with one of her
  legal moves to every legal move by Abelard so that the objective is always
  met.
  Let us now apply the semantic rules given in Definition~\ref{def:althodsem},
  from which we derive the following deduction steps.
  \begin{enumerate}
    \item\label{exm:sem(sat)}
      $\{\{ \emptyfun \}\} \cmodels[][\QAE]\: \varphi$;
    \item\label{exm:sem(all:ae)}
      $\{ \{ \asgElm[\apElm] \}, \{ \asgElm[\dual{\apElm}] \}, \ldots \}
      \cmodels[][\QAE]\: \psi_{\apElm} \rightarrow \exists \qapElm \ldotp
      (\psi_{\qapElm} \wedge (\qapElm \leftrightarrow \X \apElm))$;
    \item\label{exm:sem(dis:ae)}
      $\{ \ldots \} \cmodels[][\QAE]\: \neg \psi_{\apElm}$ and
      $\{ \{ \asgElm[\apElm] \}, \{ \asgElm[\dual{\apElm}] \} \}
      \cmodels[][\QAE]\: \exists \qapElm \ldotp (\psi_{\qapElm} \wedge (\qapElm
      \leftrightarrow \X \apElm))$;
    \item\label{exm:sem(exs:ae)}
      $\{ \{ \asgElm[\apElm], \asgElm[\dual{\apElm}] \} \} \cmodels[][\QEA]\:
      \exists \qapElm \ldotp (\psi_{\qapElm} \wedge (\qapElm \leftrightarrow \X
      \apElm))$;
    \item\label{exm:sem(exs:ea)}
      $\{ \{ \asgElm[\apElm\qapElm], \asgElm[\dual{\apElm}\qapElm] \}, \{
      \asgElm[\apElm\qapElm], \asgElm[\dual{\apElm}\dual{\qapElm}] \}, \{
      \asgElm[\apElm\dual{\qapElm}], \asgElm[\dual{\apElm}\qapElm] \}, \{
      \asgElm[\apElm\dual{\qapElm}], \asgElm[\dual{\apElm}\dual{\qapElm}] \},
      \ldots \} \cmodels[][\QEA]\: \psi_{\qapElm} \wedge (\qapElm
      \leftrightarrow \X \apElm)$.
  \end{enumerate}
  Being $\varphi$ a sentence, it is satisfiable \iff Step~\ref{exm:sem(sat)}
  holds true.
  By Rule~\ref{def:althodsem(all:ae)} of Definition~\ref{def:althodsem} on
  universal quantifications, we derive Step~\ref{exm:sem(all:ae)}, where
  $\asgElm[\apElm] \defeq \{ \apElm \mapsto \bot\top^{\omega} \}$ and
  $\asgElm[\dual{\apElm}] \defeq \{ \apElm \mapsto \bot^{\omega} \}$ are the
  only two assignments satisfying the precondition $\psi_{\apElm}$.
  The first assignment is obtained by extending $\emptyfun$ by means of the
  constant functor $\FFun[\!\bot\!\top]$ which returns false at time $0$ and true at
  every future instant, \ie, $\asgElm[\apElm] = \extFun{\emptyfun,
  \FFun[\!\bot\!\top], \apElm}$.
  Similarly, the second one is obtained by the constant functor
  $\FFun[\bot]$ returning false at any time.
  The assignments obtained by the uncountably many remaining functors are
  summarized by the ellipsis.
  Now, due to the semantics of disjunction, see
  Rule~\ref{def:althodsem(dis:ae)}, we can split the hyperassignment into two
  parts: $\{ \ldots \}$ containing all the singleton sets of those assignments
  violating $\psi_{\apElm}$ and its complement $\{ \{ \asgElm[\apElm] \}, \{
  \asgElm[\dual{\apElm}] \} \}$.
  On the first hyperassignment we need to check $\neg \psi_{\apElm}$, while on
  the second one the remaining part of the formula, as stated in
  Step~\ref{exm:sem(dis:ae)}.
  Since $\{ \ldots \} \cmodels[][\QAE]\: \neg \psi_{\apElm}$ holds by
  construction, Rule~\ref{def:althodsem(exs:ae)} applied to the second part
  leads to Step~\ref{exm:sem(exs:ae)}, where we use the equality $\{ \{
  \asgElm[\apElm], \asgElm[\dual{\apElm}] \} \} = \dual{\{ \{ \asgElm[\apElm]
  \}, \{ \asgElm[\dual{\apElm}] \} \}}$.
  Rule~\ref{def:althodsem(exs:ea)} on existential quantifications allows, then,
  to derive Step~\ref{exm:sem(exs:ea)}, where $\asgElm[\flat\qapElm] \defeq
  {\asgElm[\flat]}[\qapElm \mapsto \top^{\omega}]$ and
  $\asgElm[\flat\dual{\qapElm}] \defeq {\asgElm[\flat]}[\qapElm \mapsto
  \bot^{\omega}]$, with $\flat \in \{ \apElm, \dual{\apElm}\}$.
  The relevant sets of assignments in the hyperassignment at
  Step~\ref{exm:sem(exs:ea)} are obtained as follows:
  \begin{enumerate}[(a)]
    \item
      $\{ \asgElm[\apElm\qapElm], \asgElm[\dual{\apElm}\qapElm] \} =
      \extFun{\{ \asgElm[\apElm], \asgElm[\dual{\apElm}] \}, \FFun[\top],
      \qapElm}$, where $\FFun[\top]$ is the constant functor returning true at
      every time;
    \item
      $\{ \asgElm[\apElm\qapElm], \asgElm[\dual{\apElm}\dual{\qapElm}] \} =
      \extFun{\{ \asgElm[\apElm], \asgElm[\dual{\apElm}] \}, \FFun[\apElm],
      \qapElm}$, where $\FFun[\apElm](\asgElm)$ returns at time $i$ the value of
      $\apElm$ in $\asgElm$ at $i + 1$;
    \item
      $\{ \asgElm[\apElm\dual{\qapElm}], \asgElm[\dual{\apElm}\qapElm] \} =
      \extFun{\{ \asgElm[\apElm], \asgElm[\dual{\apElm}] \},
      \FFun[\dual{\apElm}], \qapElm}$, where $\FFun[\dual{\apElm}](\asgElm)$
      returns at time $i$ the dual value of $\apElm$ in $\asgElm$ at $i + 1$;
    \item
      $\{ \asgElm[\apElm\dual{\qapElm}], \asgElm[\dual{\apElm}\dual{\qapElm}] \}
      = \extFun{\{ \asgElm[\apElm], \asgElm[\dual{\apElm}] \}, \FFun[\bot],
      \qapElm}$, where $\FFun[\bot]$ is, as above, the constant functor returning false at
      every time.
  \end{enumerate}
  %
%
  At this point, since $\psi_{\qapElm} \wedge (\qapElm \leftrightarrow \X
  \apElm)$ is an \LTL formula, Rule~\ref{def:althodsem(ltl)} of
  Definition~\ref{def:althodsem} can be applied, thus asking for a set of
  assignments containing only assignments that make $\psi_{\qapElm} \wedge
  (\qapElm \leftrightarrow \X \apElm)$ true.
  Both assignments in the doubleton $\{ \asgElm[\apElm\qapElm],
  \asgElm[\dual{\apElm}\dual{\qapElm}] \}$ satisfy the \LTL formula
  $\psi_{\qapElm} \wedge (\qapElm \leftrightarrow \X \apElm)$, which implies
  that $\varphi$ is satisfiable, witnessing Eloise's win.
%
\end{example}





\section{Good-for-Game \QPTL}
\label{sec:behdep}

The semantic framework introduced in the previous section allows us to encode
behavioral independence constraints among the quantified variables of \QPTL.
We thus obtain the logic \GFG-\QPTL, an extension of \QPTL able to express the
behavioralness of quantifications over temporal valuations.

\subsection{Adding Behavioral Dependencies to \QPTL}

Given a set of assignments $\AsgSet(\PSet)$ over some $\PSet \subseteq \APSet$,
a \emph{behavioral quantification} \wrt a proposition $\apElm \in \PSet$ should
choose, for each assignment $\asgElm \in \AsgSet(\PSet)$, a temporal valuation
$\fFun \colon \SetN \to \SetB$ in such a way that, intuitively, at each instant
of time $k \in \SetN$, the value $\fFun(k)$ of $\fFun$ at $k$ only depends on
the values $\asgElm(\apElm)(t)$ of the temporal valuation $\asgElm(\apElm)$ at
the instants of time $t \leq k$; this means that $\fFun(k)$ is independent of
the values $\asgElm(\apElm)(t)$ at any future instant $t > k$.
To be more precise, consider two assignments $\asgElm[1], \asgElm[2] \in
\AsgSet(\PSet)$ that may differ only on $\apElm$ strictly after $k$.
Then, the functor $\FFun \in \FncSet(\PSet)$ interpreting a quantification
behavioral \wrt $\apElm$ must return the same value at $k$ as a reply to both
$\asgElm[1]$ and $\asgElm[2]$, \ie, $\FFun(\asgElm[1])(k) =
\FFun(\asgElm[2])(k)$; in other words, $\FFun(\asgElm)(k)$ cannot exploit the
knowledge of the values $\asgElm(\apElm)(t)$, with $t > k$.
An analogous concept has been introduced in \SL~\cite{MMPV14}.
A stronger notion of behavioralness, similar to one reported in~\cite{GBM18},
requires the functor $\FFun$ to satisfy the above equality when $\asgElm[1]$ and
$\asgElm[2]$ only (possibly) differ on $\apElm$ for $t \!\geq\! k$ and leads to the
concept of \emph{strongly behavioral quantification}.
In game-theoretic terms, the interpretation of a \emph{behavioral quantifier}
\wrt $\apElm$ requires the corresponding player to
choose the value of a proposition at each round only based on the choices for
$\apElm$ made by the adversary up to that round.
For a \emph{strongly behavioral quantifier}, instead, the adversary keeps its
choice for $\apElm$ at the current round hidden and the player can only access
the choices made for $\apElm$ at previous rounds.
Definitions~\ref{def:asgind} and~\ref{def:behfun} formalize these fundamental
concepts.

\begin{definition}[Assignment Distinguishability]
  \label{def:asgind}
  Let $\asgElm[1], \asgElm[2] \in \AsgSet(\PSet)$ be two assignments over some
  set $\PSet \subseteq \APSet$ of propositions, $\apElm \in \PSet$ one of these
  propositions, and $k \in \SetN$ a number.
  Then, $\asgElm[1]$ and $\asgElm[2]$ are \emph{$(\apElm, k)$-strict
  distinguishable} (\resp, \emph{$(\apElm, k)$-distinguishable}), in symbols
  $\asgElm[1] \!\approx_{\apElm}^{> k}\! \asgElm[2]$ (\resp, $\asgElm[1]
  \!\approx_{\apElm}^{\geq k}\! \asgElm[2]$), if the following holds: \\
  \begin{minipage}{0.5\textwidth}
  \begin{enumerate}
    \item\label{def:asgind(var)}
      $\asgElm[1](\qapElm) = \asgElm[2](\qapElm)$, for all $\qapElm \in \PSet$
      with $\qapElm \neq \apElm$;
  \end{enumerate}
  \end{minipage}
  \begin{minipage}{0.5\textwidth}
  \begin{enumerate}
    \setcounter{enumi}{1}
    \item\label{def:asgind(piv)}
      $\asgElm[1](\apElm)(t) = \asgElm[2](\apElm)(t)$, for all $t \leq k$
      (\resp, $t < k$).
  \end{enumerate}
  \end{minipage}
\end{definition}

The notion of $(\apElm, k)$-strict distinguishability (\resp, $(\apElm,
k)$-distinguishability) allows us to identify all the assignments that can only
differ on the proposition $\apElm$ at some instant $t \!>\! k$ (\resp, $t
\!\geq\! k$).
Indeed, $\approx_{\apElm}^{> k}$ (\resp, $\approx_{\apElm}^{\geq k}$) is an
equivalence relation on $\AsgSet(\PSet)$, whose equivalence classes identify
those assignments precisely.
A \emph{behavioral} (\resp, \emph{strongly-behavioral}) functor must reply at
time $k$ uniformly to all $\approx_{\apElm}^{> k}$-equivalent (\resp,
$\approx_{\apElm}^{\geq k}$-equivalent) assignments.


\begin{wrapfigure}{r}{0.38\linewidth}
  \vspace{-2em}
  \begin{center}
   \hspace{-3em} \scalebox{1.00}[0.80]{\figbhvfun}
  \end{center}
  \vspace{-0.5em}
  \caption{\label{fig:bhvfun} Two $\approx_{\apElm}^{> 3}$ (\resp,
  $\approx_{\apElm}^{\geq 4}$)-equivalent assignments and some functors.}
  \vspace{-0.5em}
\end{wrapfigure}

\begin{definition}[Behavioral Functor]
  \label{def:behfun}
  Let $\FFun \!\in\! \FncSet(\PSet)$ be a functor over some set $\PSet \subseteq
  \APSet$ of propositions and $\apElm \in \PSet$ one of these propositions.
  Then, $\FFun$ is \emph{behavioral} (\resp, \emph{strongly behavioral}) \wrt
  $\apElm$ if $\FFun(\asgElm[1])(k) = \FFun(\asgElm[2])(k)$, for all numbers $k
  \in \SetN$ and pairs of $\approx_{\apElm}^{> k}$-equivalent (\resp,
  $\approx_{\apElm}^{\geq k}$-equivalent) assignments $\asgElm[1], \asgElm[2]
  \in \AsgSet(\PSet)$.
\end{definition}

\begin{example}
  \label{exm:bhvfun}
  Let $\asgElm[1]$ and $\asgElm[2]$ be two assignments over the singleton $\{
  \apElm \}$ defined as in Figure~\ref{fig:bhvfun}.
  It is clear that $\asgElm[1] \!\approx_{\apElm}^{> 3}\! \asgElm[2]$, but
  $\asgElm[1] \!\not\approx_{\apElm}^{> 4}\! \asgElm[2]$, and so $\asgElm[1]
  \!\approx_{\apElm}^{\geq 4}\! \asgElm[2]$, but $\asgElm[1]
  \!\not\approx_{\apElm}^{\geq 5}\! \asgElm[2]$.
  Also, consider the three functors $\FFun[\ASym], \FFun[\BSym], \FFun[\SSym]
  \in \FncSet(\{ \apElm \})$ defined as follows, for all $\AsgFam \in
  \AsgSet(\{ \apElm \})$ and $t \!\in\! \SetN$:
  $\FFun[\ASym](\asgElm)(t) \defeq \asgElm(\apElm)(t + 1)$;
  $\FFun[\BSym](\asgElm)(t) \defeq \dual{\asgElm(\apElm)(t)}$;
  $\FFun[\SSym](\asgElm)(t) \defeq \Tt$, if $t = 0$, and
  $\FFun[\SSym](\asgElm)(t) \defeq \asgElm(\apElm)(t - 1)$, otherwise.
  It is immediate to see that $\FFun[\BSym]$ is behavioral, while $\FFun[\SSym]$
  is strongly behavioral.
  However, $\FFun[\ASym]$ does not enjoy any behavioral property, being defined
  as a future-dependent functor.
  Indeed, $\FFun[\ASym](\asgElm[1])(3) \neq \FFun[\ASym](\asgElm[2])(3)$, even
  though $\asgElm[1] \!\approx_{\apElm}^{> 3}\! \asgElm[2]$.
\end{example}

To capture in the logic the behavioral constraints on the functors, we extend
\QPTL with additional decorations for the quantifiers that express behavioral
dependencies among the propositions involved.
The result is a new logic, called \emph{Good-for-Games \QPTL}, able to express
in a natural way game-theoretic concepts of Boolean games.

\begin{definition}[\GFG-\QPTL Syntax]
  \label{def:extsyn}
  \emph{Good-for-Games \QPTL} (\emph{\GFG-\QPTL}) is the set of formulae built
  accordingly to the
  following context-free grammar, where $\psi \in \LTL$, $\apElm \in \APSet$,
  and $\PSet[\BSym], \PSet[\SSym] \subseteq \APSet$:
  \begin{align*}
    \varphi
       \seteq \psi \mid \neg \varphi \mid \varphi \wedge \varphi \mid \varphi
        \vee \varphi \mid \exists \apElm \colon\! \spcElm \ldotp \varphi \mid
        \forall \apElm \colon\! \spcElm \ldotp \varphi; &
    \hspace{1em}\spcElm
       \seteq \dep{ \PSet[\BSym] }{ \PSet[\SSym] }.
  \end{align*}
\end{definition}

A propositional quantifier of the form $\Qnt \apElm \colon\! \dep{ \PSet[\BSym]
}{ \PSet[\SSym] }$ explicitly expresses a $\Qnt$-quantification over $\apElm$,
\ie, a choice of a functor to interpret $\apElm$, that is behavioral \wrt all
the propositions in $\PSet[\BSym]$ and strongly-behavioral \wrt those in
$\PSet[\SSym]$.

To ease the notation, we may write $\Qnt[][\spcElm] \apElm \ldotp \varphi$
instead of $\Qnt \apElm \colon\! \spcElm \ldotp \varphi$, write
$\bdep{\PSet[\BSym]}$ and $\sdep{\PSet[\SSym]}$ for $\dep{ \PSet[\BSym] }{
\emptyset }$ and $\dep{ \emptyset }{ \PSet[\SSym] }$, respectively, and $\BSym$
and $\SSym$ instead of $\bdep{\APSet}$ and $\sdep{\APSet}$.
We also omit the quantifier specification $\dep{ \emptyset } { \emptyset }$,
using $\Qnt \apElm \ldotp \varphi$ to denote $\Qnt \apElm \colon\! \dep{
\emptyset }{ \emptyset } \ldotp \varphi$.
Finally, we may drop the curly bracket for the sets $\PSet[\BSym]$ and
$\PSet[\SSym]$ and write $\bdep{ \apElm, \qapElm }$ instead of $\bdep{\{ \apElm,
\qapElm \}}$.

We say that a \GFG-\QPTL formula is \emph{behavioral} (\resp,
\emph{strongly-behavioral}) if it is in prenex form and all its quantifier
specifications are equal to $\BSym$ (\resp, $\SSym$).
We denote by $\QntSet$ (\resp, $\QntSet[\BSym]$) the set of (\resp,
behavioral) quantifier prefixes and by $\SpcSet$ the set of quantifier
specifications.

Given assignments $\asgElm[1], \asgElm[2] \!\in\! \AsgSet(\PSet)$, we write
$\asgElm[1] \!\sim_{\spcElm}^{k}\! \asgElm[2]$, for some $\spcElm = \dep{
\PSet[\BSym] }{ \PSet[\SSym] } \in \SpcSet$ and $k \in \SetN$, if one of the
following conditions holds:
\begin{inparaenum}
  \item
    $\asgElm[1] = \asgElm[2]$;
  \item
    $\asgElm[1] \approx_{\apElm}^{> k} \asgElm[2]$, for some $\apElm \in
    \PSet[\BSym]$;
  \item
    $\asgElm[1] \approx_{\apElm}^{\geq k} \asgElm[2]$, for some $\apElm \in
    \PSet[\SSym]$.
\end{inparaenum}
We denote by $\approx_{\spcElm}^{k}$ the transitive closure of the reflexive
and symmetric relation $\sim_{\spcElm}^{k}$.

\defenv{proposition}[][Prp][AsgInd]
  {
  Let $\PSet \subseteq \APSet$, $\asgElm[1], \asgElm[2] \in \AsgSet(\PSet)$,
  $\spcElm \in \SpcSet$, and $k \in \mathbb{N}$.
  Then, $\asgElm[1] \approx_{\spcElm}^{k} \asgElm[2]$ \iff the following hold
  true:
  \begin{enumerate}
    \item\labelx{var}
      $\asgElm[1](\qapElm) = \asgElm[2](\qapElm)$, for all $\qapElm \in \PSet
      \setminus (\PSet[\BSym] \cup \PSet[\SSym])$;
    \item\labelx{pivi}
      $\asgElm[1](\apElm)(t) = \asgElm[2](\apElm)(t)$, for all $t \leq k$ and
      $\apElm \in (\PSet[\BSym] \cap \PSet) \setminus \PSet[\SSym]$;
    \item\labelx{pivii}
      $\asgElm[1](\apElm)(t) = \asgElm[2](\apElm)(t)$, for all $t < k$ and
      $\apElm \in \PSet[\SSym] \cap \PSet$.
  \end{enumerate}
  }


\begin{example}
  \label{exm:eqvasg}
  Consider the three assignments $\asgElm[1]$, $\asgElm[2]$, and $\asgElm[3]$
  over the doubleton $\{ \apElm, \qapElm \}$ depicted in
  Figure~\ref{fig:eqvasg}.
  It is easy to see that $\asgElm[1] \!\approx_{\apElm}^{> 3}\! \asgElm[2]$ and
  $\asgElm[2] \!\approx_{\qapElm}^{\geq 3}\! \asgElm[3]$.
  Therefore, $\asgElm[1] \!\sim_{\spcElm}^{3}\! \asgElm[2]
  \!\sim_{\spcElm}^{3}\! \asgElm[3]$, where $\spcElm \defeq \dep{ \apElm }{
  \qapElm }$, which implies $\asgElm[1] \!\approx_{\spcElm}^{3}\! \asgElm[3]$.
\end{example}

\begin{wrapfigure}[7]{r}{0.43\linewidth}
  \vspace{-2em}
  \begin{center}
    \hspace{-3em}\scalebox{1.00}[0.80]{\figeqvasg}
  \end{center}
  \vspace{-0.5em}
  \caption{\label{fig:eqvasg} Three $\approx_{\spcElm}^{3}$-equivalent
  assignments, with $\spcElm \defeq \dep{ \apElm }{ \qapElm }$.}
  \vspace{-0.5em}
\end{wrapfigure}

Given a set of propositions $\PSet \subseteq \APSet$ and a quantifier
specification $\spcElm \defeq \dep{ \PSet[\BSym] }{ \PSet[\SSym] } \in \SpcSet$,
we introduce the set of \emph{$\spcElm$-functors} $\FncSet[\spcElm](\PSet)
\subseteq \FncSet(\PSet)$ containing exactly those $\FFun \in \FncSet(\PSet)$
that are behavioral \wrt all the propositions in $\PSet[\BSym] \cap \PSet$ and
strongly behavioral \wrt those in $\PSet[\SSym] \cap \PSet$.

\begin{example}
  \label{exm:eqvfun}
  Any $\dep{ \apElm }{ \qapElm }$-functor $\FFun$ replies to all assignments of
  Figure~\ref{fig:eqvasg} uniformly, for all time instants between $0$ and $3$
  included.
  Indeed, $\FFun(\asgElm[1])(3) = \FFun(\asgElm[2])(3)$, since $\asgElm[1]
  \!\approx_{\apElm}^{> 3}\! \asgElm[2]$, being $\FFun$ behavioral \wrt
  $\apElm$.
  Similarly, $\FFun(\asgElm[2])(3) = \FFun(\asgElm[3])(3)$, since $\asgElm[2]
  \!\approx_{\qapElm}^{\geq 3}\! \asgElm[3]$, being $\FFun$ strongly-behavioral
  \wrt $\qapElm$.
  Hence, $\FFun(\asgElm[1])(3) = \FFun(\asgElm[3])(3)$.
\end{example}

The following proposition ensures that the above observation highlights a
general phenomenon.

\defenv{proposition}[][Prp][BhvFnc]
  {
  If $\asgElm[1] \!\approx_{\spcElm}^{k}\! \asgElm[2]$ then
  $\FFun(\asgElm[1])(k) \!=\! \FFun(\asgElm[2])(k)$, for all $\asgElm[1],
  \asgElm[2] \!\in\! \AsgSet(\PSet)$, $\spcElm \in \SpcSet$, $k \!\in\! \SetN$,
  and $\FFun \!\in\! \FncSet[\spcElm](\PSet)$.
  }

We can now extend to \GFG-\QPTL the alternating Hodges semantics of \QPTL
reported in Definition~\ref{def:althodsem}.
We simply need to parameterize the extension operation for hyperassignments with
the corresponding specification of the behavioral dependencies:
\[
  \extFun[\spcElm]{\AsgFam, \apElm} \defeq \set{ \extFun{\XSet, \FFun, \apElm}
  }{ \XSet \in \AsgFam, \FFun \in \FncSet[\spcElm](\ap{\AsgFam}) }.
\]

\begin{definition}[Alternating Hodges Semantics Revisited]
  \label{def:extalthodsem}
  The alternating-Hodges-semantics relation $\AsgFam \cmodels[][\alpha] \varphi$
  is inductively defined as in Definition~\ref{def:althodsem}, for all but
  Items~\ref{def:althodsem(exs:ea)} and~\ref{def:althodsem(all:ae)} that are
  modified, respectively, as follows, for all propositions $\apElm \in \APSet$
  and quantifier specifications $\spcElm \in \SpcSet$: \\
  \begin{minipage}{0.5\textwidth}
  \begin{enumerate}
    \item[\ref*{def:althodsem(exs:ea)}')]\label{def:extalthodsem(exs:ea)}
      $\AsgFam \cmodels[][\QEA]\: \exists \apElm \colon\! \spcElm \ldotp \phi$
      if $\extFun[\spcElm]{\AsgFam, \apElm} \cmodels[][\QEA] \phi$;
  \end{enumerate}
  \end{minipage}
  \begin{minipage}{0.5\textwidth}
  \begin{enumerate}
    \item[\ref*{def:althodsem(all:ae)}')]\label{def:extalthodsem(all:ae)}
      $\AsgFam \cmodels[][\QAE]\: \forall \apElm \colon\! \spcElm \ldotp \phi$
      if $\extFun[\spcElm]{\AsgFam, \apElm} \cmodels[][\QAE] \phi$.
  \end{enumerate}
  \end{minipage}
\end{definition}

Notice that one could easily extend both the syntax and semantics of the
quantifier specification $\dep{ \PSet[\BSym] }{ \PSet[\SSym] }$ of \GFG-\QPTL in
order to accommodate other types of (in)dependence constraints, like the ones
already studied in first-order logic of incomplete
information~\cite{HS89,Hod97a,Vaa07,MSS11,GV13a}.
It would suffice to introduce suitable class of functors and corresponding
construct, such as the \emph{dependence atoms} of dependence logic, whose
semantics can be easily defined via hyperassignments.

For every \GFG-\QPTL formula $\varphi$ and alternation flag $\alpha \in \{ \QEA,
\QAE \}$, we say that $\varphi$ is \emph{$\alpha$-satisfiable} if there exists a
hyperassignment $\AsgFam \!\in\! \HAsgSet(\free{\varphi})$ such that $\AsgFam
\cmodels[][\alpha] \varphi$.
Also, $\varphi$ \emph{$\alpha$-implies} (\resp, is \emph{$\alpha$-equivalent}
to) a \GFG-\QPTL formula $\phi$, in symbols $\!\varphi \cimplies[][\alpha]
\phi\!$ (\resp, $\!\varphi \cequiv[][\alpha] \phi$), whenever $\free{\varphi}
\!=\! \free{\phi}$ and if $\AsgFam \cmodels[][\alpha] \varphi$ then $\AsgFam
\cmodels[][\alpha] \phi$ (\resp, $\AsgFam \cmodels[][\alpha] \varphi$ \iff
$\AsgFam \cmodels[][\alpha] \phi$), for all $\AsgFam \in
\HAsgSet[\subseteq](\free{\varphi})$.
Finally, we say that $\varphi$ is \emph{satisfiable} if it is both $\QEA$- and
$\QAE$-satisfiable, and write $\varphi \implies \phi$ (\resp, $\varphi \equiv
\phi$) if both $\varphi \cimplies[][\QEA] \phi$ and $\varphi \cimplies[][\QAE]
\phi$ (\resp, $\varphi \cequiv[][\QEA] \phi$ and $\varphi \cequiv[][\QAE] \phi$)
hold.

At this point, let us consider some examples to provide some insight on the
expressive power of the new logic.

\begin{example}
  \label{exm:unrprp}
  Let us consider again the \QPTL formula $\varphi$ of Example~\ref{exm:sem}.
  Obviously, $\varphi$ is not realizable, as the functor $\FFun[\apElm]$
  required to obtain the two satisfying assignments $\asgElm[\apElm\qapElm]$
  and $\asgElm[\dual{\apElm}\dual{\qapElm}]$ is non-behavioral, thus, not
  implementable by any real transducer.
  By replacing the two quantifiers with their behavioral counterparts, the
  resulting \GFG-\QPTL formula $\varphi_{\BSym} \defeq \forall^{\BSym} \apElm
  \ldotp (\psi_{\apElm} \rightarrow \exists^{\BSym} \qapElm \ldotp
  (\psi_{\qapElm} \wedge (\qapElm \leftrightarrow \X \apElm))))$, is not
  satisfiable anymore.
  Indeed, the only behavioral functors that allow to satisfy $\psi_{\qapElm}$
  are $\FFun[\top]$ and $\FFun[\bot]$ and, therefore, deduction steps analogous
  to the ones of Example~\ref{exm:sem} applied to $\varphi_{\BSym}$ would lead
  to $\{ \{ \asgElm[\apElm\qapElm], \asgElm[\dual{\apElm}\qapElm] \}, \{
  \asgElm[\apElm\dual{\qapElm}], \asgElm[\dual{\apElm}\dual{\qapElm}] \} \}
  \cmodels[][\QEA]\: \psi_{\qapElm} \wedge (\qapElm \leftrightarrow \X \apElm)$.
  However, each of the two sets of assignments contains a non satisfying
  assignment for the \LTL formula $\qapElm \leftrightarrow \X \apElm$.
\end{example}

The previous example shows a satisfiable \QPTL sentence whose behavioral
counterpart becomes unsatisfiable. The opposite may also occur, as the following
example illustrates.

\begin{example}
  \label{exm:bhcsat}
  Consider the \QPTL sentence $\exists \qapElm \ldotp \forall \apElm \ldotp
  \psi$, with $\psi \defeq \apElm \leftrightarrow \X \qapElm$.
  The sentence is unsatisfiable: Abelard can falsify $\psi$ by looking at the
  value of $\qapElm$ one instant in the future and choosing the opposite value
  as the present value for $\apElm$.
  However, the two \GFG-\QPTL sentences $\forall^{\BSym} \apElm \ldotp
  \exists^{\SSym} \qapElm \ldotp \psi$ and $\exists^{\BSym} \qapElm \ldotp
  \forall^{\BSym} \apElm \ldotp \psi$ are both satisfiable.
  For the first one, it is enough to observe that the strongly-behavioral
  functor $\FFun[\SSym]$ of Example~\ref{exm:bhvfun} allows to mimic any
  temporal valuation assigned to the proposition $\apElm$ one-instant in the
  past, as required by the \LTL property $\psi$.
  For the second one, we need to show that, $\extFun[\BSym]{\{ \YSet \}, \apElm}
  \cmodels[][\QAE] \psi$, with $\YSet = \AsgSet(\{ \qapElm \})$.
  Now, let $\XSet \in \extFun[\BSym]{\{ \YSet \}, \apElm}$ be an arbitrary set
  of assignments obtained by extending those in $\YSet$ as prescribed by the
  specification $\BSym$.
  Also, consider $\asgElm[1], \asgElm[2] \in \XSet$ as two of those assignments
  that differs on $\qapElm$ at time $1$, but are equal at time $0$, \ie,
  $\asgElm[1](\qapElm)(0) = \asgElm[2](\qapElm)(0)$, but $\asgElm[1](\qapElm)(1)
  \neq \asgElm[2](\qapElm)(1)$.
  Due to the required behavioralness \wrt $\qapElm$ of the functors used in the
  extension of $\YSet$, we necessarily have that $\asgElm[1](\apElm)(0) =
  \asgElm[2](\apElm)(0)$.
  As a consequence, either one between $\asgElm[1]$ and $\asgElm[2]$ satisfies
  $\psi$, as required by Item~\ref{def:althodsem(ltl:ae)} of the semantics.
  In other words, Abelard is no longer allowed to look at the value of $\qapElm$
  in the future.
  Note that $\exists^{\BSym} \qapElm \ldotp \forall^{\BSym} \apElm \ldotp \psi$
  could not be expressed with an asymmetric Hodges-like semantics, as it cannot
  restrict the universal quantifiers.
\end{example}

\begin{example}
  \label{exm:infleak}
  Information leaks via quantification of unused variables is a well-known
  phenomenon in \IF~\cite{Vaa07}.
  The same occurs in \GFG-\QPTL, as the (in)equivalences below show:
  \[
    \forall \apElm \ldotp \exists \uapElm \ldotp \exists^{\BSym} \qapElm \ldotp
    \phi \equiv \forall \apElm \ldotp \exists \qapElm \ldotp \phi \not\equiv
    \forall \apElm \ldotp \exists^{\BSym} \qapElm \ldotp \phi \equiv \forall
    \apElm \ldotp \exists^{\BSym} \uapElm \ldotp \exists^{\BSym} \qapElm
    \ldotp \phi;
  \]
  where $\apElm, \qapElm \!\in\! \free{\phi}$, but $\uapElm \!\not\in\!
  \free{\phi}$.
  Indeed, an arbitrary functor $\GFun[\qapElm]$ for $\qapElm$ in $\forall \apElm
  \ldotp \exists \qapElm \ldotp \phi$ can be simulated in $\forall \apElm \ldotp
  \exists \uapElm \ldotp \exists^{\BSym} \qapElm \ldotp \phi$ by the functors
  $\FFun[\uapElm] = \GFun[\qapElm]$, for $\uapElm$, and $\FFun[\qapElm](\asgElm)
  = \asgElm(\uapElm)$, for $\qapElm$.
  Clearly, $\FFun[\qapElm]$, being the identity on $\uapElm$, is behavioral.
  Intuitively, the unused non-behaviorally-quantified proposition
  $\uapElm$ leaks information about the future of $\apElm$ to $\qapElm$ even if
  the latter is behaviorally quantified, as it can see the future of
  $\apElm$ through the present of $\uapElm$.
  This does not happen, however, if $\uapElm$ is forced to be behavioral (\resp,
  strongly-behavioral).
  Indeed, the behavioral (\resp, strongly-behavioral) fragments of \GFG-\QPTL
  enjoys the classic property of elimination of unused propositions.
\end{example}

The following example expands on the connection between \GFG-\QPTL and
\GFG-Automata briefly mentioned in the introduction and shows that \GFG-\QPTL
can express the property of being
good-for-game for an automaton.

\begin{example}
  \label{exm:gfgaut}
  It is well known that \QPTL is able to express any $\omega$-regular
  language~\cite{Sis83}.
  This can be proved by encoding the existence of an accepting run of an
  arbitrary nondeterministic B\"uchi word automaton $\NAutName$ via a formula
  $\varphi \defeq \exists \sapElm[1] \ldots \exists \sapElm[k] \ldotp \psi$,
  where $\free{\varphi} = \{ \apElm[1], \ldots, \apElm[n] \}$ is the set of
  propositions of the recognized language, $\sapElm[1], \ldots, \sapElm[k]$ are
  mutually exclusive fresh propositions representing the $k$ states of
  $\NAutName$, and $\psi$ is the \LTL formula encoding the transition function
  and the acceptance condition.
  Via the behavioral \GFG-\QPTL formula $\varphi_{\BSym} \defeq \exists^{\BSym}
  \sapElm[1] \ldots \exists^{\BSym} \sapElm[k] \ldotp \psi$ we can identify
  precisely the sublanguages recognized by $\NAutName$ when the nondeterminism
  is resolved in a \emph{good-for-game} manner~\cite{HP06}, \ie, when the choice
  of a successor state is based on the prefix of the input words read up to that
  moment.
  Thus, the \GFG-\QPTL sentence $\forall \apElm[1] \ldots \forall \apElm[n]
  \ldotp \allowbreak (\varphi \leftrightarrow \varphi_{\BSym})$ is satisfiable
  \iff $\NAutName$ is a good-for-game automaton.
\end{example}


\subsection{Model-Theoretic Analysis}

Let us proceed with an elementary \emph{model-theoretic analysis} of \GFG-\QPTL,
showing that it enjoys several basic properties, like \emph{De Morgan laws}, one
would expect from a classic logic.

We start by observing the \emph{monotonicity} of both the dualization and
extension operators \wrt the preorder $\sqsubseteq$, a simple property that is a
key tool in all subsequent statements.

\defenv{proposition}[][Prp][OprMon]
  {
  Let $\apElm \in \APSet$, $\spcElm \in \SpcSet$, and $\AsgFam[1], \AsgFam[2]
  \in \HAsgSet$ with $\AsgFam[1] \sqsubseteq \AsgFam[2]$:
  \begin{inparaenum}
    \item
      $\dual{\AsgFam[2]} \sqsubseteq \dual{\AsgFam[1]}$;
    \item
      $\extFun[\spcElm]{\AsgFam[1], \apElm} \sqsubseteq
      \extFun[\spcElm]{\AsgFam[2], \apElm}$.
  \end{inparaenum}
  }

%

The preorder $\sqsubseteq$ between hyperassignments captures the intuitive
notion of satisfaction strength \wrt \GFG-\QPTL formulae.
Indeed, thanks to Item~\ref{def:althodsem(ltl)} of
Definition~\ref{def:althodsem}, it holds that, if $\AsgFam[1] \sqsubseteq
\AsgFam[2]$, the hyperassignment $\AsgFam[1]$ satisfies \wrt the $\QEA$ (\resp,
$\QAE$) semantics less (\resp, more) \LTL formulae than the hyperassignment
$\AsgFam[2]$, \ie, if $\AsgFam[1]$ (\resp, $\AsgFam[2]$) satisfies $\psi$, then
$\AsgFam[2]$ (\resp, $\AsgFam[1]$) does as well.
This property can easily be lifted to arbitrary \GFG-\QPTL formulae, by a
standard structural induction using the monotonicity of the dualization and
extension operators.

\defenv{theorem}[Hyperassignment Refinement][Thm][HypRef]
  {
  Let $\varphi$ be a \GFG-\QPTL formula and $\AsgFam[1], \AsgFam[2] \in
  \HAsgSet[\subseteq](\free{\varphi})$ with $\AsgFam[1] \sqsubseteq \AsgFam[2]$.
  Then, $\AsgFam[1] \!\cmodels[][\QEA]\! \varphi$ implies $\AsgFam[2]
  \!\cmodels[][\QEA]\! \varphi$ and $\AsgFam[2] \!\cmodels[][\QAE]\! \varphi$
  implies $\AsgFam[1] \!\cmodels[][\QAE]\! \varphi$.
  }

As an immediate consequence, we obtain the following result.\!

\defenv{corollary}[Hyperassignment Equivalence][Cor][Eqv]
  {
  Let $\varphi$ be a \GFG-\QPTL formula and $\AsgFam[1], \AsgFam[2] \in
  \HAsgSet[\subseteq](\free{\varphi})$ with $\AsgFam[1] \equiv \AsgFam[2]$.
  Then, $\AsgFam[1] \cmodels[][\alpha] \varphi$ \iff $\AsgFam[2]
  \cmodels[][\alpha] \varphi$.
  }

A fundamental feature of the proposed alternating semantics is the
\emph{duality} between swapping the players of a hyperassignment $\AsgFam$, \ie,
swapping the alternation flag, and swapping the choices of the players, \ie,
dualizing $\AsgFam$.
Indeed, the following results states that dualizing both the alternation flag
$\alpha$ and the hyperassignment preserves the truth of any formula.
This also implies, as one might expect, that double dualization has no effect
either.
The latter fact is also a consequence of the previous corollary, since $\AsgFam
\equiv \dual{\dual{\AsgFam}}$, due to Proposition~\ref{prp:dltinv}.

\defenv{theorem}[Double Dualization][Thm][Dlt]
  {
  Let $\varphi$ be a \GFG-\QPTL formula and $\AsgFam \in
  \HAsgSet[\subseteq](\free{\varphi})$.
  Then, $\dual{\AsgFam} \cmodels[][\dual{\alpha}] \varphi$ \iff
  $\dual{\dual{\AsgFam}} \cmodels[][\alpha] \varphi$ \iff $\AsgFam
  \cmodels[][\alpha] \varphi$.
  }

%

The duality property also grants that formulae satisfaction and equivalence do
not depend on the specific interpretation $\alpha$ of hyperassignments: a
positive answer for $\alpha$ implies the same for $\dual{\alpha}$.
This \emph{invariance} corresponds to the intuition that Eloise and Abelard both
agree on the true and false formulae.
Similarly, if $\varphi$ is considered to be equivalent to, or to imply, some other
property $\phi$ by Eloise, the same equivalence, or implication, holds for
Abelard as well, and \viceversa.

\defenv{corollary}[Interpretation Invariance][Cor][IntInv]
  {
  Let $\varphi$ and $\phi$ be \GFG-\QPTL formulae.
  $\varphi$ is $\QEA$-satisfiable \iff $\varphi$ is $\QAE$-satisfiable.
  Also, $\varphi \cimplies[][\QEA] \phi$ \iff $\varphi \cimplies[][\QAE] \phi$
  and $\varphi \cequiv[][\QEA] \phi$ \iff $\varphi \cequiv[][\QAE] \phi$.
  }

Thanks to this invariance, the following Boolean laws hold.

\defenv{lemma}[Boolean Laws][Lmm][DML]
  {
  Let $\varphi$, $\varphi_{1}$, $\varphi_{2}$ be \GFG-\QPTL formulae:
  \begin{inparaenum}
    \item\labelx{neg}
      $\varphi \equiv \neg \neg \varphi$;
    \item\labelx{crem}
      $\varphi_{1} \wedge \varphi_{2} \implies \varphi_{1}$;
    \item\labelx{dadd}
      $\varphi_{1} \implies \varphi_{1} \vee \varphi_{2}$;
    \item\labelx{cass}
      $\varphi_{1} \!\wedge\! (\varphi \!\wedge\! \varphi_{2}) \equiv
      (\varphi_{1} \!\wedge\! \varphi) \!\wedge\! \varphi_{2}$;
    \item\labelx{dass}
      $\varphi_{1} \!\vee\! (\varphi \!\vee\! \varphi_{2}) \equiv (\varphi_{1}
      \!\vee\! \varphi) \!\vee\! \varphi_{2}$;
    \item\labelx{con}
      $\varphi_{1} \wedge \varphi_{2} \equiv \neg (\neg \varphi_{1} \vee \neg
      \varphi_{2})$;
    \item\labelx{dis}
      $\varphi_{1} \vee \varphi_{2} \equiv \neg (\neg \varphi_{1} \wedge \neg
      \varphi_{2})$;
    \item\labelx{exs}
      $\exists^{\spcElm} \apElm \ldotp \varphi \equiv \neg (\forall^{\spcElm}
      \apElm \ldotp \neg \varphi)$;
    \item\labelx{all}
      $\forall^{\spcElm} \apElm \ldotp \varphi \equiv \neg (\exists^{\spcElm}
      \apElm \ldotp \neg \varphi)$.
  \end{inparaenum}
  }

Unlike \QPTL, \GFG-\QPTL in its full generality does not enjoy the \pnf
property.
This is a consequence of the information-leak phenomenon reported in
Example~\ref{exm:infleak}.
Indeed, $(\exists \apElm \ldotp \phi) \wedge \varphi \!\not\equiv\! \exists
\apElm \ldotp (\phi \wedge \varphi)$ and $(\forall \apElm \ldotp \phi) \vee
\varphi \!\not\equiv\! \forall \apElm \ldotp (\phi \vee \varphi)$, when $\apElm
\!\not\in\! \free{\varphi}$.
A similar problem arises in \IF due to \emph{signaling}, if one let
quantifications depend on non-free variables~\cite{MSS11}.
Fortunately, for the purposes of this work, we can focus on \pnf
formulae, since, as we shall show, behavioral \GFG-\QPTL is powerful enough to
express all $\omega$-regular languages.

We now introduce an operator on quantifier prefixes, called \emph{evolution},
that, given an arbitrary hyperassignment $\AsgFam$ and one of its two
interpretations $\alpha$, computes the result $\evlFun[\alpha](\AsgFam,
\qntElm)$ of the application to $\AsgFam$ of all quantifiers $\Qnt^{\spcElm}
\apElm$ occurring in a prefix $\qntElm$ in that specific order.
To this aim, we need to introduce the notion of \emph{coherence} of a quantifier
symbol $\Qnt \!\in\! \{ \exists, \forall \}$ \wrt an alternation flag $\alpha
\!\in\! \{ \QEA, \allowbreak \QAE \}$ as follows: $\Qnt$ is
\emph{$\alpha$-coherent} if either $\alpha = \QEA$ and $\Qnt = \exists$ or
$\alpha = \QAE$ and $\Qnt = \forall$.
Essentially, the evolution operator iteratively applies the semantics of
quantifiers, as defined by Items~\ref{def:althodsem(exs:ea)}'
and~\ref{def:althodsem(all:ae)}' of Definition~\ref{def:extalthodsem} and
Items~\ref{def:althodsem(exs:ae)} and~\ref{def:althodsem(all:ea)}
of Definition~\ref{def:althodsem}, for all the quantifiers $\Qnt^{\spcElm}
\apElm$ in the input prefix $\qntElm$.
For a single quantifier, $\evlFun[\alpha](\AsgFam, \Qnt^{\spcElm} \apElm)$ just
corresponds to the $\spcElm$-extension of $\AsgFam$ with $\apElm$, when $\Qnt$
is $\alpha$-coherent.
On the contrary, when $\Qnt$ is $\dual{\alpha}$-coherent, we need to dualize the
$\spcElm$-extension with $\apElm$ of the dual of $\AsgFam$.
\[
  \evlFun[\alpha](\AsgFam, \Qnt^{\spcElm} \apElm) \defeq
  \begin{cases}
    \extFun[\spcElm]{\AsgFam, \apElm},
      & \text{ if } \Qnt \text{ is }  \alpha\text{-coherent}; \\
    \dual{\extFun[\spcElm]{\dual{\AsgFam}, \apElm}},
      & \text{ otherwise}.
  \end{cases}
\]
The operator lifts naturally to an arbitrary quantification prefix $\qntElm
\!\in\!
\QntSet$ as follows:
\begin{inparaenum}
  \item
    $\evlFun[\alpha](\AsgFam, \epsilon) \defeq \AsgFam$;
  \item
    $\evlFun[\alpha](\AsgFam, \Qnt^{\spcElm} \apElm \ldotp \qntElm) \defeq
    \allowbreak \evlFun[\alpha](\evlFun[\alpha](\AsgFam, \Qnt^{\spcElm} \apElm),
    \qntElm)$.
\end{inparaenum}
We also set $\evlFun[\alpha](\qntElm) \defeq \evlFun[\alpha](\{ \{ \emptyfun \}
\}, \qntElm)$.

It is easy to show that the evolution operator is monotone \wrt $\sqsubseteq$,
by simply exploiting the monotonicity of the dualization and extension operators
given in Proposition~\ref{prp:oprmon}.

\defenv{proposition}[][Prp][EvlMon]
  {
  Let $\AsgFam[1], \AsgFam[2] \!\in\! \HAsgSet$ with $\AsgFam[1] \!\sqsubseteq\!
  \AsgFam[2]$ and $\qntElm \!\in\! \QntSet$: $\evlFun[\alpha](\AsgFam[1],
  \qntElm) \sqsubseteq \evlFun[\alpha](\AsgFam[2], \qntElm)$.
  }

By simple structural induction on a quantifier prefix $\qntElm \in \QntSet$, we
can show that a hyperassignment $\AsgFam$ $\alpha$-satisfies a formula $\qntElm
\phi$ \iff its $\alpha$-evolution \wrt $\qntElm$ $\alpha$-satisfies $\phi$.

\defenv{lemma}[Prefix Evolution][Lmm][PrfEvl]
  {
  Let $\qntElm \phi$ be a \GFG-\QPTL formula with $\qntElm \!\in\! \QntSet$.
  Then, $\AsgFam \cmodels[][\alpha] \qntElm \phi$ \iff $\evlFun[\alpha](\AsgFam,
  \qntElm) \cmodels[][\alpha] \phi$, for all $\AsgFam \in \HAsgSet(\free{\qntElm
  \phi})$.
  }

\section{Quantification Games}
\label{sec:canqnt}

The solution of the satisfiability problem for the behavioral fragment of
\GFG-\QPTL relies on the existence of a game, played by Eloise and Abelard, with
the property that Eloise wins the game \iff the corresponding formula is indeed
satisfiable.
We provide here a general result, showing that, for any quantifier prefix
$\qntElm$ and Borelian property $\Psi$, there exists a game, called
\emph{quantification game}, such that Eloise wins the game \iff the
hyperassignment obtained by evaluating the prefix, namely
$\evlFun[\exists\forall](\qntElm)$, contains a set of assignments completely
included in $\Psi$.
The correctness of this result depends, in turn, on the existence of canonical
forms for the quantifier prefixes that allow one to reduce the alternations to at most one.

\subsection{Quantification Game for Sentences}

To define the quantification game, we first need a few notions.
A two-player turn-based \emph{arena} $\Arena = \tuple {\PPosSet} {\OPosSet}
{\iposElm} {\MovRel}$ is a tuple where
\begin{inparaenum}
  \item
    $\PPosSet$ and $\OPosSet$ are the sets of \emph{positions} of \emph{Eloise}
    and \emph{Abelard}, \aka \emph{Player} and \emph{Opponent}, respectively,
    with $\PPosSet \cap \OPosSet = \emptyset$,
  \item
    $\iposElm \in \PosSet \defeq \PPosSet \cup \OPosSet$ is the \emph{initial
    position}, and
  \item
    $\MovRel \subseteq \PosSet \times \PosSet$ is the binary relation describing
    all possible \emph{moves} such that $\tuple{\PosSet}{\MovRel}$ is a sinkless
    directed graph.
\end{inparaenum}
A \emph{game} $\Game = \tuple {\Arena} {\ObsSet} {\WinSet}$ is a tuple where
$\Arena$ is an arena, $\ObsSet \subseteq \PosSet$ is the set of \emph{observable
positions}, and $\WinSet \subseteq \ObsSet^{\omega}$ is the set of
\emph{observable sequences} that are \emph{winning} for Eloise; the complement
$\dual{\WinSet} \defeq \ObsSet^{\omega} \setminus \WinSet$ is \emph{winning} for
Abelard.
Eloise (\resp, Abelard) \emph{wins} the game if she (\resp, he) has a strategy
such that, for all adversary strategies, the corresponding play induces an
observation sequence belonging (\resp, not belonging) to $\WinSet$.
All notions necessary to formalize this intuition, as \emph{history},
\emph{strategy}, and \emph{play}, are given in Appendix~\ref{app:canqnt}.\!

\begin{wrapfigure}{r}{0.45\linewidth}
  \vspace{-1.5em}
  \begin{center}
    \scalebox{0.85}[0.65]{\figqntgam}
  \end{center}
  \vspace{-1em}
  \caption{\label{fig:gntgam} Quantification game for $\qntElm = \exists^{\BSym}
  \apElm[1] \ldotp \forall^{\BSym} \apElm[2]\ldotp \exists^{\BSym} \apElm[3]
  \cdots$. Eloise owns the circled positions, while Abelard the squared ones.
  From the total-valuation positions $\valElm[1], \ldots, \valElm[n]$, with $n =
  \pow{\card{\ap{\qntElm}}}$, Abelard moves to the initial position.}
  \vspace{-1em}
\end{wrapfigure}

Martin's determinacy theorem~\cite{Mar75,Mar85} states that all games whose
winning condition is a Borel set in the Cantor topological space of infinite
words~\cite{PP04} are determined, \ie, either one of the two players necessarily
wins the game.
To ensures that the quantification game we are about to define is indeed
determined, we require a form of Borelian condition that can be applied to sets
of assignments.
This determinacy requirement is crucial here, since it is tightly connected with
the fact that \GFG-\QPTL does not allow for undetermined formulae.
To this end, let $\ValSet \defeq \APSet \pto \SetB$ denote the set of Boolean
valuations for sets of propositions and $\ValSet(\PSet) \!\defeq\! \set{\!
\valElm \in \ValSet \!}{\! \dom{\valElm} = \PSet \!}$ the set of valuations for
propositions in $\PSet \!\subseteq\! \APSet$.
Also, $\#(\valElm) \defeq \card{\dom{\valElm}}$.
We can now define a bijection between sets of assignments over $\PSet$ and languages of
infinite words over the alphabet $\ValSet(\PSet)$.
Let $\wrdFun \colon \AsgSet(\PSet) \to \ValSet(\PSet)^{\omega}$ be the
\emph{word function} mapping each assignment $\asgElm \in \AsgSet(\PSet)$ to the
word $\wrdElm \defeq \wrdFun(\asgElm) \in \ValSet(\PSet)^{\omega}$ satisfying
the equality $\asgElm(\apElm)(t) = (\wrdElm)_{t}(\apElm)$, for all $\apElm \in
\PSet$ and $t \in \SetN$.
Clearly $\wrdFun$ is a bijection.
Now, every property $\Psi \subseteq \AsgSet(\PSet)$, \ie, every set of
assignments, uniquely induces the language of infinite words $\wrdFun(\Psi)
\defeq \set{ \wrdFun(\asgElm) }{ \asgElm \in \Psi } \subseteq
\ValSet(\PSet)^{\omega}$ over the alphabet $\ValSet(\PSet)$.
Thus, $\Psi$ is said to be \emph{Borelian} (\resp, regular) if
$\wrdFun(\Psi)$ is a Borel (\resp, regular) set.


Given a behavioral sentence $\qntElm \psi$, let $\LangFun(\psi) \subseteq \AsgSet(\ap{\qntElm})$ denote the set
of assignments satisfying the \LTL formula $\psi$.
The quantification game $\Game[\qntElm][\psi] \defeq
\Game[\qntElm][\LangFun(\psi)]$ is defined in Construction~\ref{cns:qntgami} and
exemplified in Figure~\ref{fig:gntgam}.
\Wlogx, we assume that the prefix $\qntElm$ does not contain duplicates.
The positions of the game are (partial) valuations of the propositions in
$\qntElm$ and each position belongs to the player corresponding to the first
quantifier in the prefix whose proposition is not defined at that position.
The initial position of the game contains the empty valuation and in the example
of Figure~\ref{fig:gntgam} belongs to Eloise, she being the first to play in
$\qntElm$.
Obviously, the game features an infinite number of rounds.
Each round begins with the empty valuation and ends in a total valuation, after
players have chosen (jointly) a value for all the propositions.
A move in the round corresponds to a player choosing a value for the next
proposition in the prefix.
Take, for instance, position $\valElm \defeq \{ \apElm[1] \mapsto \Ff \}$ in the
figure, where the first proposition $\apElm[1]$ has been already assigned value
$\Ff$ by Eloise.
From that position, Abelard first chooses a Boolean value, say $\Tt$, for the
next proposition $\apElm[2]$ in the prefix.
Then he moves to the position $\valElm' \defeq \{ \apElm[1] \mapsto \Ff,
\apElm[2] \mapsto \Tt \}$, corresponding to the valuation ${\valElm}[\apElm[2]
\mapsto \Tt]$, obtained by extending $\valElm$ with the value chosen for
$\apElm[2]$.
Position $\valElm'$ belongs to Eloise, since the next quantifier $\exists
\apElm[3]$ in the prefix is existential.
The last positions belong to Abelard and, from there, he can only move back to
the starting position for the next turn.
By sampling any infinite sequence of rounds of the games at the positions with
total valuations, namely the observable positions, we obtain an infinite word
$\wrdElm$ corresponding to some assignment $\asgElm \defeq
\wrdFun[][-1](\wrdElm)$.
Then, $\wrdElm$ is winning for Eloise \iff $\asgElm$ belongs to $\Psi \! \defeq
\! \LangFun(\psi)$ (\ie, $\asgElm \! \models \! \psi$), while it is winning for
Abelard otherwise.
This intuition is formalized by the following construction.

\begin{construction}[Quantification Game I]
  \label{cns:qntgami}
  For every quantifier prefix $\qntElm \in \QntSet[\BSym]$ and property $\Psi
  \subseteq \AsgSet(\ap{\qntElm})$, the game $\Game[\qntElm][\Psi] \defeq
  \allowbreak \tuple {\Arena[\qntElm]} {\ObsSet} {\WinSet}$ with arena
  $\Arena[\qntElm] \defeq \tuple {\PPosSet} {\OPosSet} {\iposElm} {\MovRel}$ is
  defined as prescribed in the following:
  \begin{itemize}
    \item
      the set of positions $\PosSet \subset \ValSet$ contains exactly those
      valuations $\valElm \in \ValSet$ of the propositions in $\ap{\qntElm}$
      that are quantified in the prefix $(\qntElm)_{< \#(\valElm)}$ of $\qntElm$
      having length $\#(\valElm)$ \ie, $\dom{\valElm} = \ap{(\qntElm)_{<
      \#(\valElm)}}$;
    \item
      the set of Eloise's positions $\PPosSet \subseteq \PosSet$ only contains
      the valuations $\valElm \in \PosSet$ for which the proposition quantified
      in $\qntElm$ at index $\#(\valElm)$ is existentially quantified, \ie,
      $(\qntElm)_{\#(\valElm)} = \exists^{\BSym} \apElm$, for some $\apElm \in
      \ap{\qntElm}$;
    \item
      the initial position $\iposElm \defeq \emptyfun$ is just the empty
      valuation;
    \item
      the move relation $\MovRel \subseteq \PosSet \times \PosSet$ contains
      exactly those pairs of valuations $(\valElm[1], \valElm[2]) \in \PosSet
      \times \PosSet$ such that:
      \begin{itemize}
        \item
          $\valElm[1] \subseteq \valElm[2]$ and $\#(\valElm[2]) = \#(\valElm[1])
          + 1$, or
        \item
          $\valElm[1] \in \ValSet(\ap{\qntElm})$ and $\valElm[2] = \emptyfun$;
      \end{itemize}
    \item
      the set of observable positions $\ObsSet \defeq \ValSet(\ap{\qntElm})$
      precisely contains the valuations of all the propositions in $\qntElm$;
    \item
      the winning condition induced by the property $\Psi$ is the language of
      infinite words $\WinSet \defeq \wrdFun(\Psi)$ over
      $\ValSet(\ap{\qntElm})$.
  \end{itemize}
\end{construction}

The game $\Game[\qntElm][\psi]$ above essentially provides a game-theoretic
version of the semantics of behavioral quantifications.
The correctness of the game is established by the following theorem.

\defenv{theorem}[Game-Theoretic Semantics I][Thm][GamThrSemI]
  {
    A behavioral \GFG-\QPTL sentence $\qntElm \psi$, with $\psi \in \LTL$,
    is satisfiable (\resp, unsatisfiable) \iff Eloise (\resp,
    Abelard) wins $\Game[\qntElm][\psi]$.
  }

The proof of this result is split into the following three steps.
First, for an arbitrary behavioral quantifier prefix $\qntElm$, we provide two
transformations, $\CSym[\QEA](\qntElm)$ and $\CSym[\QAE](\qntElm)$, called
\emph{canonizations}, which allow one to reduce a behavioral \GFG-\QPTL sentence
$\varphi = \qntElm \psi$ to the sentences $\CSym[\QEA](\qntElm) \psi$ and
$\CSym[\QAE](\qntElm) \psi$ featuring at most one quantifier alternation.
Second, in Theorem~\ref{thm:qntgami}, we connect the winner of the game $\Game[\qntElm][\psi]$ with the
satisfiability of one of the normal forms $\CSym[\QEA](\qntElm) \psi$ and
$\CSym[\QAE](\qntElm) \psi$, showing also that $\CSym[\QEA](\qntElm) \psi$ implies $\CSym[\QAE](\qntElm) \psi$.
Finally, in Theorem~\ref{thm:sntcanfor}, we prove that the original sentence
$\varphi$ is equisatisfiable with the two normal forms.

Let us start with the definition of the two prefix canonizations based on the
following syntactic quantifier-swap operations.
Consider, \eg, the formula $\forall^{\BSym} \apElm \ldotp \exists^{\BSym}
\qapElm \ldotp \phi$.
A na\"ive quantifier-swap operator would simply swap the two quantifiers that, in game-theoretic
terms, corresponds to a swap in the choices of the two players, which allows
Abelard to see Eloise's move at the current round.
To balance this additional power, we simply restrict the universal quantifier
to be strictly behavioral, thus preventing Abelard from reading Eloise's choice.
This leads to the formula $\exists^{\BSym} \qapElm \ldotp \allowbreak
\forall^{\dep{\APSet}{\qapElm}} \apElm \ldotp \psi$.
A symmetric swap operation would transform the formula $\exists^{\BSym} \qapElm
\ldotp \forall^{\BSym} \apElm \ldotp \phi$ into $\forall^{\BSym} \apElm \ldotp
\exists^{\dep{\APSet}{\apElm}} \qapElm \ldotp \phi$.
Essentially, the swap operation exchange the positions of two adjacent dual
behavioral quantifiers and restrict one of the two to be strongly behavioral
\wrt the proposition of the other one.
By iteratively swapping adjacent quantifiers and adjusting the quantifier
specification accordingly, we can reduce the quantifier alternation to at most
one.

For technical convenience we use a vector notation for the quantifier prefixes:
$\Qnt^{\vec{\spcElm}} \vec{\apElm} \ldotp \phi \defeq \Qnt^{(\vec{\spcElm})_{0}}
(\vec{\apElm})_{0} \ldotp \cdots \Qnt^{(\vec{\spcElm})_{k}} (\vec{\apElm})_{k}
\ldotp \phi$, where $\card{\vec{\apElm}} = \card{\vec{\spcElm}} = k + 1$.
We omit the vector symbol in $\vec{\spcElm}$ if this is just a sequence of
$\BSym$ or $\SSym$ specifications and consider $\vec{\apElm}$ as sets of
propositions when convenient.
We also define in a natural way the union of two quantifier specifications as
follows:
\begin{center}
  $\dep{ \PSet[\BSym 1] }{ \PSet[\SSym 1] } \cup \dep{ \PSet[\BSym 2] }{
  \PSet[\SSym 2] } \defeq \dep{ \PSet[\BSym 1] \cup \PSet[\BSym 2] }{
  \PSet[\SSym 1] \cup \PSet[\SSym 2] }$.
\end{center}

We now introduce the two syntactic transformations $\CSym[\QEA](\cdot)$ and
$\CSym[\QAE](\cdot)$ that, given a behavioral quantifier prefix
$\qntElm \in \QntSet[\BSym]$, return the single-alternation ones
$\CSym[\QEA](\qntElm)$ and $\CSym[\QAE](\qntElm)$, by applying all the
quantifier swap operations at once.
More specifically, the function $\CSym[\QEA](\cdot)$ provides an
\emph{$\QEA$-prefix}, \ie, all existential quantifier precede the universal
ones, while $\CSym[\QAE](\cdot)$ gives us the the dual \emph{$\QAE$-prefix}.

For the definition of $\CSym[\QEA](\cdot)$, we observe that every behavioral
quantifier prefix $\qntElm$ can be uniquely rewritten as $\exists^{\BSym}
\vec{\qapElm}[0] \ldotp (\forall^{\BSym} \vec{\apElm}[i] \ldotp \allowbreak
\exists^{\BSym} \vec{\qapElm}[i])_{i = 1}^{k} \ldotp \forall^{\BSym}
\vec{\apElm}[k + 1]$, for some $k \!\in\! \SetN$ and vectors $\vec{\qapElm}[i]$,
with $i \!\in\! \numcc{0}{k}$, and $\vec{\apElm}[i]$, with $i \in \numcc{1}{k +
1}$, where $\card{\vec{\qapElm}[i]}, \card{\vec{\apElm}[i]} \geq 1$, for all $i
\in \numcc{1}{k}$.
For a quantifier prefix $\qntElm$ in such a form, we then define
\vspace{-0.5em}
\[
  \CSym[\exists\forall](\qntElm) \defeq (\exists^{\BSym} \vec{\qapElm}[i])_{i =
  0}^{k} \ldotp (\forall^{\vec{\spcElm}[i]} \vec{\apElm}[i])_{i = 1}^{k + 1},
  \vspace{-0.5em}
\]
where $(\vec{\spcElm}[i])_{j} \defeq \BSym \cup \sdep{ \vec{\qapElm}[i] \cdots
\vec{\qapElm}[k] }$, for all $j \in \numco{0}{\card{\vec{\apElm}[i]}}$.

The definition of $\CSym[\QAE](\cdot)$ is analogous.
First, we rewrite a
prefix $\qntElm$ as $\forall^{\BSym} \vec{\apElm}[0] \ldotp (\exists^{\BSym}
\vec{\qapElm}[i] \ldotp \forall^{\BSym} \vec{\apElm}[i])_{i = 1}^{k} \ldotp
\exists^{\BSym} \vec{\qapElm}[k + 1]$, for some $k \!\in\! \SetN$ and vectors
$\vec{\apElm}[i]$, with $i \!\in\! \numcc{0}{k}$, and $\vec{\qapElm}[i]$, with
$i \in \allowbreak \numcc{1}{k + 1}$, where $\card{\vec{\apElm}[i]},
\card{\vec{\qapElm}[i]} \!\geq\! 1$, for all $i \!\in\! \numcc{1}{k}$.
Then, we define
\[
  \CSym[\forall\exists](\qntElm) \defeq (\forall^{\BSym} \vec{\apElm}[i])_{i =
  0}^{k} \ldotp (\exists^{\vec{\spcElm}[i]} \vec{\qapElm}[i])_{i = 1}^{k + 1},
\]
where $(\vec{\spcElm}[i])_{j} \defeq \BSym \cup \sdep{ \vec{\apElm}[i] \cdots
\vec{\apElm}[k] }$, for all $j \in \numco{0}{\card{\vec{\qapElm}[i]}}$.

\begin{example}
  \label{exm:ceacae}
  Consider the behavioral quantifier prefix $\qntElm = \forall^{\BSym} \apElm
  \ldotp  \allowbreak \exists^{\BSym} \qapElm\, \rapElm \ldotp \forall^{\BSym}
  \sapElm \ldotp \exists^{\BSym} \tapElm$.
  The corresponding $\QEA$ canonical-form is $\CSym[\QEA](\qntElm) \allowbreak
  \!=\! \exists^{\BSym} \qapElm\, \rapElm\, \tapElm \ldotp\!
  \forall^{\spcElm[][\apElm]}\! \apElm \ldotp\! \forall^{\spcElm[][\sapElm]}\!
  \sapElm$, where $\spcElm[][\apElm] \defeq \dep{\APSet}{\qapElm\, \rapElm\,
  \tapElm}$ and $\spcElm[][\sapElm] \defeq \dep{\APSet}{\tapElm}$.
  The $\QAE$ canonical-form prefix is, instead, $\CSym[\QAE](\qntElm) \!=\!
  \forall^{\BSym} \apElm\, \sapElm \ldotp\! \exists^{\spcElm} \qapElm\, \rapElm
  \ldotp\! \exists^{\BSym} \tapElm$, where $\spcElm \defeq
  \dep{\APSet}{\sapElm}$.
\end{example}

For the second part of the proof of Theorem~\ref{thm:gamthrsemi}, we need to
connect the winner of $\Game[\qntElm][\psi]$ with the satisfiability of (one among)
$\CSym[\QEA](\qntElm) \psi$ and $\CSym[\QAE](\qntElm) \psi$.
This also corresponds to showing that $\CSym[\QEA](\qntElm) \psi \implies
\CSym[\QAE](\qntElm) \psi$.
To this end, we exploit the $\omega$-regularity of \LTL languages, which ensures
that the game is Borelian.

\defenv{theorem}[Quantification Game I][Thm][QntGamI]
  {
  For each behavioral quantification prefix $\qntElm \!\in\! \QntSet[\BSym]$ and
  Borelian property $\Psi \!\subseteq\! \AsgSet(\ap{\qntElm})$, the game
  $\Game[\qntElm][\Psi]$\! satisfies the following two properties:
  \begin{enumerate}[1)]
    \item\labelx{elo}
      if Eloise wins then $\ESet \subseteq \Psi$, for some $\ESet \in
      \evlFun[\QEA](\CSym[\QAE](\qntElm))$;
    \item\labelx{abe}
      if Abelard wins then $\ESet \not\subseteq \Psi$, for all $\ESet \in
      \evlFun[\QEA](\CSym[\QEA](\qntElm))$.
  \end{enumerate}
  }

The idea of the proof is to extract, from a winning strategy of Eloise (\resp,
Abelard), a vector $\vec{\FFun}$ of functors, one for each proposition
associated with that player, witnessing the existence (\resp, non-existence) of
a set $\ESet$ of assignments satisfying the property $\Psi$.
More precisely, assume Eloise has a strategy $\strElm$ to win the game and let
$\forall^{\BSym} \vec{\apElm} \ldotp \exists^{\vec{\spcElm}} \vec{\qapElm} =
\CSym[\QAE](\qntElm)$ be the $\QAE$ canonical-form of $\qntElm$.
Then, thanks to the bijection between plays $\playElm$ and assignments
$\asgElm$, we can operate as follows, for every round $k$ and existential
proposition $\qapElm[i]$ in $\vec{\qapElm}$: given Abelard's choices up to round
$k$ in $\playElm$, we can extract, from Eloise's response for $\qapElm[i]$ in
$\strElm$, the response to $\asgElm$ at time $k$ of the functor $\FFun[i]$ in
$\vec{\FFun}$.
As a consequence, for all $\asgElm \in \AsgSet(\vec{\apElm})$ chosen by Abelard,
Eloise's response corresponding to the extension of $\asgElm$ with $\vec{\FFun}$
on $\vec{\qapElm}$ satisfies, \ie, belongs to, the property $\Psi$.
The witness $\ESet$ is precisely the set of all those extensions.
An analogous argument applies to Abelard for the $\QEA$ canonical-form.
Notice that $\vec{\FFun}$ meets the specification $\vec{\spcElm}$ thanks to the
alternation of the players prescribed by $\qntElm$ in each round of
$\Game[\qntElm][\Psi]$.

The final step establishes the equisatisfiability of a behavioral \GFG-\QPTL
sentence with its canonical forms.

\defenv{theorem}[Sentence Canonical Forms][Thm][SntCanFor]
  {
  For every behavioral \GFG-\QPTL sentence $\qntElm \psi$, with $\psi \in \LTL$,
  it holds that $\qntElm \psi$, $\CSym[\QEA](\qntElm) \psi$, and
  $\CSym[\QAE](\qntElm) \psi$ are equisatisfiable.
  }

Towards the proof, we can derive the chain of implications $\CSym[\QAE](\qntElm)
\psi \implies \qntElm \psi \implies \CSym[\QEA](\qntElm) \psi$ by exploiting
the following property of the evolution function.
Specifically, this asserts a total ordering \wrt the preorder $\sqsubseteq$
between a behavioral quantifier prefix $\qntElm$ and its two canonical forms
$\CSym[\QEA](\qntElm)$ and $\CSym[\QAE](\qntElm)$ that can be proved by
induction on the structure of $\qntElm$.


\defenv{proposition}[][Prp][QntSwpInd]
  {
  $\!\evlFun[\alpha](\AsgFam, \CSym[\dual{\alpha}](\qntElm)) \!\sqsubseteq\!
  \evlFun[\alpha](\AsgFam, \qntElm) \!\sqsubseteq\! \evlFun[\alpha](\AsgFam,
  \CSym[\alpha](\qntElm))$, for $\AsgFam \in \HAsgSet$ and $\qntElm \in
  \QntSet[\BSym]$, with $\ap{\qntElm} \cap \ap{\AsgFam} = \emptyset$.
  }

From this result, Lemma~\ref{lmm:prfevl}, and Theorem~\ref{thm:hypref}, the above
implications immediately follow.
To complete the proof, we need to show that $\CSym[\QEA](\qntElm) \psi \implies
\CSym[\QAE](\qntElm) \psi$ holds.
Thanks to Lemma~\ref{lmm:prfevl}, if $\{\!\{ \emptyfun \}\!\} \cmodels[][\QEA]
\CSym[\QEA](\qntElm) \psi$, then $\ESet \subseteq \Psi \defeq \LangFun(\psi)$, for some $\ESet \in
\evlFun[\QEA](\CSym[\QEA](\qntElm))$.
Thus, by Item~\ref{thm:qntgami(abe)} of Theorem~\ref{thm:qntgami}, it follows
that Abelard loses the game $\Game[\qntElm][\Psi]$, which means, by determinacy,
that Eloise wins.
As a consequence of Item~\ref{thm:qntgami(elo)} of the same theorem, there
exists $\ESet \in \evlFun[\QEA](\CSym[\QAE](\qntElm))$ such that $\ESet
\subseteq \Psi$.
Hence, $\{\!\{ \emptyfun \}\!\} \cmodels[][\QEA] \CSym[\QAE](\qntElm) \psi$,
again by Lemma~\ref{lmm:prfevl}.


\subsection{Quantification Game for Formulae}

The game defined in the previous section can easily be adapted to deal with
the satisfiability problem for behavioral \GFG-\QPTL as shown in the
next section.
Solving the model-checking problem requires, however, a generalization of
Theorem~\ref{thm:gamthrsemi}, connecting a suitable game with satisfiability of
arbitrary behavioral formulae \wrt a hyperassignment $\AsgFam$.
We can prove such a property under the assumption that $\AsgFam$ is
\emph{well-behaved}, \ie, if
\begin{inparaenum}
  \item $\AsgFam$ is the evolution of a set of assignments $\XSet$ \wrt some
    behavioral prefix $\trn{\qntElm}$ and
  \item
    $\AsgFam$ is Borelian.
\end{inparaenum}
The Borelian requirement is again connected to determinacy of the underlying
game.
The behavioral requirement, instead, allows for a simple proof that leverages
the quantification game for sentences directly.
At this stage, it is not clear whether the property actually holds for arbitrary
Borelian hyperassignments.

%
To formalize the above assumption, we introduce the notion of \emph{generator}
for a hyperassignment $\AsgFam \in \HAsgSet$ as a pair $\tuple {\trn{\qntElm}}
{\XSet}$ of
\begin{inparaenum}
  \item
    a behavioral quantification prefix $\trn{\qntElm} \in \QntSet$ and
  \item
    a Borelian set of assignments $\emptyset \neq \XSet \subseteq
    \AsgSet(\ap{\AsgFam} \setminus \ap{\trn{\qntElm}})$
\end{inparaenum}
such that $\AsgFam = \evlFun[\QEA](\{ \XSet \}, \trn{\qntElm})$.
A hyperassignment $\AsgFam \in \HAsgSet$ is \emph{Borelian behavioral} if there
is a generator for it.
%
%
%
A \emph{quantification-game schema} is a tuple $\QStr \defeq \tuple {\AsgFam}
{\qntElm} {\Psi}$ where
\begin{inparaenum}
  \item
    $\AsgFam {\in} \HAsgSet$ is Borelian behavioral,
  \item
    $\qntElm {\in} \QntSet$ is behavioral,
  \item
    $\Psi \!\subseteq\! \AsgSet(\ap{\qntElm} \cup \ap{\AsgFam})$ is Borelian,
    and
  \item
    $\ap{\qntElm} \cap \ap{\AsgFam} = \emptyset$.
\end{inparaenum}

The idea behind the game-theoretic construction reported below is quite simple.
Given a generator $\tuple {\trn{\qntElm}} {\XSet}$ for a behavioral
hyperassignment $\AsgFam$, we force the two players to simulate the given
$\AsgFam$ by playing according to the prefix $\trn{\qntElm}$, once Abelard has
arbitrarily chosen the values of the atomic propositions $\vec{\apElm}$ over
which the set of assignments $\XSet$ is defined.
Since $\evlFun[\QEA](\forall \vec{\apElm}\,) = \{ \AsgSet(\vec{\apElm}) \}$ and
$\XSet \subseteq \AsgSet(\vec{\apElm}\,)$, it is clear that $\AsgFam \sqsubseteq
\evlFun[\QEA](\forall \vec{\apElm} \ldotp \trn{\qntElm})$.
Thus, if Eloise wins the game, she can ensure a given temporal property.
Notice, however, that we gave Abelard the freedom to cheat and choose arbitrary
values for $\vec{\apElm}$.
Thus, in principle, Eloise could be able to satisfy the property while loosing
the game, since Abelard can choose assignments over $\vec{\apElm}$ that do not
belong to $\XSet$.
To remedy this, we add all those assignments to Eloise's winning set, thus
deterring Abelard from cheating.

\begin{construction}[Quantification Game II] \label{cns:qntgamii}
  For a quantification-game schema $\QStr \defeq \tuple {\AsgFam} {\qntElm}
  {\Psi}$, we say that $\Game$ is a $\QStr$-game if there is a generator $\tuple
  {\trn{\qntElm}} {\XSet}$ for $\AsgFam$ such that $\Game \defeq
  \Game[\der{\qntElm}][\der{\Psi}]$, where \\
  \begin{minipage}{0.5\textwidth}
  \begin{itemize}
    \item
      $\der{\qntElm} \defeq \forall \vec{\apElm} \ldotp \trn{\qntElm} \ldotp
      \qntElm$ and
  \end{itemize}
  \end{minipage}
  \begin{minipage}{0.5\textwidth}
  \begin{itemize}
    \item
      $\der{\Psi} \defeq \Psi \cup \set{ \asgElm \in \AsgSet(\PSet) }{
      \asgElm \rst {\vec{\apElm}} \,\not\in \XSet }$,
  \end{itemize}
  \end{minipage}
  with $\vec{\apElm} \defeq \ap{\AsgFam} \setminus \ap{\trn{\qntElm}}$ and
  $\PSet \defeq \ap{\qntElm} \cup \ap{\AsgFam}$.
\end{construction}

The quantification-game schema for
$\qntElm \psi$, with $\psi \in \LTL$, and a hyperassignment $\AsgFam$ is the
tuple $\QStr[\qntElm \psi][\AsgFam] \defeq \tuple {\AsgFam} {\qntElm}
{\LangFun(\psi)}$.

\defenv{theorem}[Game-Theoretic Semantics II][Thm][GamThrSemII]
  {
  $\AsgFam \cmodels[][\QEA] \qntElm \psi$ \iff Eloise wins every
  $\QStr[\qntElm \psi][\AsgFam]$-game, for all behavioral \GFG-\QPTL formulae
  $\qntElm \psi$, with $\psi \in \LTL$, and Borelian behavioral hyperassignments $\AsgFam \in
  \HAsgSet(\free{\qntElm \psi})$.
  }

The proof is similar to the one of
Theorem~\ref{thm:gamthrsemi} and
uses the following result, which generalizes
Theorem~\ref{thm:qntgami}.


\defenv{theorem}[Quantification Game II][Thm][QntGamII]
  {
  Every $\QStr$-game $\Game$, for some quantification-game
  schema $\QStr \defeq \tuple {\qntElm} {\AsgFam} {\Psi}$, satisfies the
  following two properties:
  \begin{enumerate}[1)]
    \item\labelx{elo}
      if Eloise wins then $\ESet \subseteq \Psi$, for some $\ESet \in
      \evlFun[\QEA](\AsgFam, \CSym[\QAE](\qntElm))$;
    \item\labelx{abe}
      if Abelard wins then $\ESet \not\subseteq \Psi$, for all $\ESet \in
      \evlFun[\QEA](\AsgFam, \CSym[\QEA](\qntElm))$.
  \end{enumerate}
  }

%
  Theorem~\ref{thm:qntgamii}, together with Proposition~\ref{prp:qntswpind},
  lifts Theorem~\ref{thm:sntcanfor} to formulae, and allows us to obtain
  Theorem~\ref{thm:gamthrsemii}.

\defenv{theorem}[Formula Canonical Forms][Thm][FrmCanFor]
  {
  For every behavioral \GFG-\QPTL formula $\qntElm \psi$, with $\psi \in \LTL$,
  it holds that $\AsgFam \cmodels[][\alpha] \qntElm \psi$ \iff $\AsgFam
  \cmodels[][\alpha] \CSym[\QEA](\qntElm) \psi$ \iff $\AsgFam \cmodels[][\alpha]
  \CSym[\QAE](\qntElm) \psi$, for all Borelian behavioral hyperassignments
  $\AsgFam \in \HAsgSet(\free{\qntElm \psi})$.
  }







\section{Decision Problems \& Expressiveness}
\label{sec:decprb}

The results of the previous section can be exploited to solve optimally the
decision problems for behavioral \GFG-\QPTL.
More specifically, we can use the game of Constructions~\ref{cns:qntgami} for
the satisfiability problem, and the game of Constructions~\ref{cns:qntgamii} for
the model-checking one.
We also discuss the expressiveness relationship between \QPTL and behavioral
\GFG-\QPTL, showing, by means of a classic encoding of automata into logic, that
they have the same expressive power, though \QPTL is non-elementary more
succinct then \GFG-\QPTL.

\subsection{Decision Procedures}

The first step in deciding the satisfiability problem is to derive from a
behavioral sentence $\varphi = \qntElm \psi$ a \emph{parity
game}~\cite{Mos91,EJ91} that is won by Eloise \iff $\varphi$ is satisfiable.
To do that, we first construct a deterministic parity automaton $\DName[\psi]$
for the \LTL formula $\psi$, by combining the Vardi-Wolper
construction~\cite{VW86a} with the Safra-like translation from B\"uchi to parity
acceptance condition~\cite{Pit06}.
We then compute the synchronous product of the arena $\Arena[\qntElm]$ of
Construction~\ref{cns:qntgami} with $\DName[\psi]$, where the automaton
component changes its state only when Abelard moves from observable positions
containing a full valuation of the propositions.
This valuation is, then, read by the transition function of $\DName[\psi]$ to
determine its successor state.
The resulting game simulates both the quantification game and the automaton, so
that Eloise wins \iff the play satisfies $\psi$.

\defenv{theorem}[Satisfiability Game][Thm][SatGam]
  {
  For every behavioral \GFG-\QPTL sentence $\varphi$ there is a parity game,
  with $\AOmicron{\pow{\pow{\card{\varphi}}}}$ positions and
  $\AOmicron{\pow{\card{\varphi}}}$ priorities, won by Eloise \iff $\varphi$ is
  satisfiable.
  }

We can then obtain an upper bound on the complexity of the problem from the fact
that parity games can be solved in time polynomial in the number of positions
and exponential in that of the priorities~\cite{EJ88,EJS93,Zie98}.
For the lower bound, instead, we observe that the \emph{reactive synthesis
problem}~\cite{PR89} of an \LTL formula $\psi$ can be reduced to the
satisfiability of a sentence of the form $\forall^{\BSym} \vec{\apElm} \ldotp
\exists^{\BSym} \vec{\qapElm} \ldotp \psi$, where $\vec{\apElm}$ and
$\vec{\qapElm}$ denote, respectively, the input and output signals of the
desired system.

\defenv{theorem}[Satisfiability][Thm][SatCmp]
  {
  The satisfiability problem for behavioral \GFG-\QPTL sentences is 2\ExpTimeC.
  }


For the universal (\resp, existential) model-checking problem, given a Kripke
structure $\KName$, we ask whether $\KName \models \varphi$, in the sense that
$\AsgFam[\KName] \cmodels[][\QEA] \varphi$ (\resp, $\AsgFam[\KName]
\cmodels[][\QAE] \varphi$) holds, where $\AsgFam[\KName] \defeq \allowbreak \{
\wrdFun[][-1](\LangFun(\KName)) \}$ is the hyperassignment obtained by collecting all
those assignments $\asgElm \!\in\! \AsgSet(\ap{\KName})$ over the propositions
of $\KName$ for which the infinite word $\wrdFun(\asgElm)$ belongs to the
$\omega$-regular language $\LangFun(\KName)$ generated by $\KName$.
Obviously, $\AsgFam[\KName]$ is a Borelian behavioral hyperassignment.
As a consequence, Construction~\ref{cns:qntgamii} applies.
Thus, we can adopt the same synchronous product described above between the
arena of the game and the union of the two automata $\DName[\psi]$ and
$\NName[\dual{\KName}]$, where $\DName[\psi]$ is obtained from the formula
$\psi$, while $\NName[\dual{\KName}]$ is a co-safety automaton of size linear in
$\card{\KName}$, recognizing the complement of $\LangFun(\KName)$.


\defenv{theorem}[Model-Checking Game][Thm][MCGam]
  {
  For every Kripke structure $\KName$ and behavioral \GFG-\QPTL formula
  $\varphi$, with $\free{\varphi} \!\subseteq\! \ap{\KName}$, there is a parity
  game, with $\AOmicron{\pow{\pow{\card{\varphi}}} \card{\KName}}$ positions and
  $\AOmicron{\pow{\card{\varphi}}}$ priorities, won by Eloise \iff $\KName
  \!\models\! \varphi$.
  }


Upper bounds \wrt both formula and model complexity, and the lower bound \wrt
formula complexity, are proved as in the case of the satisfiability problem.
The lower bound \wrt model complexity is proved by reducing from reachability
games~\cite{Imm81}.

\defenv{theorem}[Model-Checking][Thm][MCCmp]
  {
  The model-checking problem for behavioral \GFG-\QPTL has 2\ExpTimeC formula
  complexity and \PTimeC model complexity.
  }


\subsection{Expressive Power}

We conclude the work by discussing the expressive power of the behavioral
fragment of \GFG-\QPTL, showing that it precisely corresponds to the
$\omega$-regular languages.
Similarly to Example~\ref{exm:gfgaut}, consider an arbitrary deterministic
parity automaton $\DName$ with $k$ states over an alphabet $\pow{\PSet}$, with
$\PSet \subseteq \APSet$.
Via the standard  technique of encoding the existence of an accepting run, we
can construct an \LTL formula $\psi$, over the set of propositions $\PSet \cup
\{ \sapElm[1], \ldots, \sapElm[k] \}$, such that the existential projection on
$\PSet$ of the language $\LangFun(\psi)$ coincides with the language
$\LangFun(\DName)$ recognized by $\DName$.
Since $\DName$ is deterministic, this projection is clearly behavioral.
Hence, the behavioral \GFG-\QPTL formula $\exists^{\BSym} \sapElm[1], \ldots,
\exists^{\BSym} \sapElm[k] \ldotp \psi$ is satisfied by an hyperassignment $\{
\{ \asgElm \} \}$ \iff $\wrdFun(\asgElm) \!\in\! \LangFun(\DName)$.
Since every \QPTL formula can be translated into an equivalent nondeterministic
B\"uchi automaton~\cite{SVW87}, which in turn can be determinized into a parity
one~\cite{Pit06}, we obtain that for every \QPTL formula, there is an
equivalent behavioral \GFG-\QPTL one.
The converse holds as well.
Indeed, the satisfiability game $\Game[\varphi]$ can be transformed into a
isomorphic alternating parity word automaton $\AName[\varphi]$, in the usual
way, which can then be reduced to a nondeterministic parity automaton
$\NName[\varphi]$~\cite{MS95}.
The emptiness of $\NName[\varphi]$ can then be encoded into a \QPTL sentence.
A similar reasoning applies also to formulae.



\begin{theorem}[Expressiveness]
  \label{thm:eqv}
  \QPTL and behavioral \GFG-\QPTL are equi-expressive.
\end{theorem}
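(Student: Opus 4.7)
The plan is to prove both inclusions of the equi-expressiveness separately, along the lines of the roadmap sketched in the paragraphs preceding the statement. Throughout, I would consider two formulae (from either logic) equivalent when they define the same family of satisfying hyperassignments over their free propositions, which for sentences reduces to the equality of the associated $\omega$-languages of satisfying models.

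For the direction from \QPTL to behavioral \GFG-\QPTL, I would proceed as in Example~\ref{exm:gfgaut}, but starting from a \emph{deterministic} automaton so that the state-encoding Skolem witnesses are automatically behavioral. Given $\varphi \in \QPTL$, I would first translate it into an equivalent nondeterministic B\"uchi word automaton over $\pow{\free{\varphi}}$ via the classical construction of~\cite{SVW87}, and then determinize it via Piterman's construction~\cite{Pit06} into a parity automaton with state set $\{1,\ldots,k\}$. I would then introduce $k$ fresh mutually exclusive state propositions $\sapElm[1],\ldots,\sapElm[k]$ and an \LTL formula $\psi$ encoding the initial state, consistency with the transition function, and the parity acceptance condition of the unique run. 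The equivalent behavioral sentence is $\exists^{\BSym}\sapElm[1]\ldots\exists^{\BSym}\sapElm[k]\ldotp\psi$: the decisive point is that, because the underlying automaton is deterministic, the witnessing functors for each $\sapElm[i]$ are uniquely determined by the prefix of input read so far and thus genuinely satisfy the behavioral constraint $\BSym$.

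For the converse direction, I would leverage the game machinery of Sections~\ref{sec:canqnt} and~\ref{sec:decprb}. By Theorem~\ref{thm:frmcanfor}, I may assume the given behavioral \GFG-\QPTL formula $\varphi$ is in \pnf with a single quantifier alternation. The parity game constructed in Theorem~\ref{thm:satgam} (or Theorem~\ref{thm:mcgam} applied to the universal Kripke structure over $\free{\varphi}$) can be read as an alternating parity word automaton over the alphabet $\pow{\free{\varphi}}$. Dealternating this by the Muller--Schupp style translation~\cite{MS95} yields a nondeterministic parity automaton whose language coincides with the set of assignments satisfying $\varphi$. Since \QPTL captures every $\omega$-regular language, this set is in turn definable by a \QPTL formula via the standard encoding of an accepting run using propositional quantifications over fresh state propositions.

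The main obstacle lies in the second direction, where care is needed to confirm that the constructed automaton captures not merely sentence-level satisfiability but also satisfaction over arbitrary Borelian behavioral hyperassignments, so that the two logics indeed define the same class of hyperassignments. I plan to address this by combining Theorem~\ref{thm:gamthrsemii} with Theorem~\ref{thm:semadqi}: these jointly reduce satisfaction of $\varphi$ in a hyperassignment $\AsgFam$ to a uniform pointwise condition on its component assignments, which is precisely what the nondeterministic automaton recognizes. Once this reduction to pointwise satisfaction is secured, the classical equivalence between \QPTL and $\omega$-regular languages closes the argument.
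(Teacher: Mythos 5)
Your proposal follows essentially the same route as the paper: for one direction, determinize the automaton of the \QPTL formula and behaviorally quantify the fresh state propositions, with determinism being precisely what makes the witnessing functors behavioral; for the other, read the satisfiability game as an alternating parity word automaton, dealternate it via~\cite{MS95}, and re-encode the resulting $\omega$-regular language in \QPTL. One caveat on your final paragraph: Theorem~\ref{thm:semadqi} is stated only for \QPTL formulae and does not extend to behavioral \GFG-\QPTL, whose satisfaction over a hyperassignment is genuinely not pointwise (see Examples~\ref{exm:unrprp} and~\ref{exm:bhcsat}), so the stronger hyperassignment-level equivalence you reach for there is neither available by that route nor what the theorem asserts --- the paper's claim is at the level of the $\omega$-languages defined on singleton hyperassignments, for which your core construction suffices.
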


Clearly, \QPTL is also non-elementary more succinct than behavioral \GFG-\QPTL.
Indeed, the satisfiability problem for \QPTL sentences with alternation of
quantifiers $k$ is $(k - 1)$-\ExpSpaceC~\cite{SVW87}, while behavioral
\GFG-\QPTL is decidable in 2\ExpTime, so no elementary reduction exists.

\begin{theorem}[Succinctness]
  \label{thm:suc}
  \QPTL is non-elementary more succinct than behavioral \GFG-\QPTL.
\end{theorem}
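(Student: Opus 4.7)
My plan is to argue by contradiction, exploiting the complexity gap between the satisfiability problems of the two logics that was already highlighted in the discussion preceding the statement. Specifically, I will assume the existence of a computable, satisfiability-preserving translation $t$ from \QPTL into behavioral \GFG-\QPTL---the existence of some such translation follows from Theorem~\ref{thm:eqv}---whose size blowup is bounded by a fixed elementary function $f$, \ie, $\card{t(\varphi)} \leq f(\card{\varphi})$ for every \QPTL formula $\varphi$, where $f$ is a tower of exponentials of fixed height $h$.

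Next, I would compose this hypothetical translation with the 2\ExpTime decision procedure for behavioral \GFG-\QPTL satisfiability provided by Theorem~\ref{thm:satcmp}. The resulting algorithm decides satisfiability of an arbitrary \QPTL sentence $\varphi$ in time $2^{2^{\AOmicron{\card{t(\varphi)}}}} \leq 2^{2^{\AOmicron{f(\card{\varphi})}}}$, which remains elementary in $\card{\varphi}$, namely bounded by a tower of exponentials of height $h + 2$.

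To derive the contradiction, I would invoke the non-elementary lower bound for \QPTL satisfiability of~\cite{SVW87}: for every $k \in \SetN$, there is a family of \QPTL sentences with $k$ quantifier alternations whose satisfiability is $(k-1)$-\ExpSpaceH, and the standard hardness reductions produce, for each $k$, witness formulas whose size is polynomial in the parameter of the hard problem. Since $(k-1)$-\ExpSpace grows beyond any fixed tower of exponentials as $k$ increases, no single elementary algorithm can decide all such instances, contradicting the procedure derived above. Hence, no satisfiability-preserving translation of elementary blowup from \QPTL to behavioral \GFG-\QPTL can exist, which is exactly the desired succinctness gap.

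The main subtlety, I expect, lies in formalizing what \emph{non-elementary more succinct} means: one must fix both a notion of translation (a computable function from formulas to formulas) and a notion of equivalence. Satisfiability preservation, as used above, already suffices for the lower bound, while insisting on full semantic equivalence would yield a stronger but unnecessary statement. A more constructive alternative would directly exhibit, for each elementary $f$, a \QPTL formula family whose minimal behavioral \GFG-\QPTL equivalents have size exceeding $f$, \eg, via the automata-theoretic round-trip (\QPTL to nondeterministic B\"uchi to deterministic parity to behavioral \GFG-\QPTL) whose blowup is already a tower of exponentials in the alternation depth; however, the complexity-theoretic \emph{reductio} above yields the same conclusion with considerably less combinatorial work.
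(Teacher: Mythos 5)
Your proposal is correct and takes essentially the same approach as the paper: the paper justifies the succinctness gap by exactly this complexity argument, observing that \QPTL satisfiability with $k$ quantifier alternations is $(k-1)$-\ExpSpaceC while behavioral \GFG-\QPTL satisfiability is decidable in 2\ExpTime, so no elementary satisfiability-preserving translation can exist. Your added care about what counts as a translation (satisfiability preservation suffices) is a sensible elaboration but does not change the argument.
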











\begin{section}{Discussion}

  We have introduced a novel semantics for \QPTL extending Hodges' team
  semantics for Hintikka and Sandu's logic of imperfect information \IF in a
  non-trivial way.
  On the one hand, the new semantic setting can express games with both
  symmetric and asymmetric restrictions on the players.
  On the other hand, it allows for encoding behavioral constraints on the
  quantified propositions, connecting the underlying logic with the
  game-theoretic notion of behavioral strategies.
  Based on this semantics, the extension of \QPTL with constraints on the
  functional dependencies among propositions, called \GFG-\QPTL, has
  surprisingly interesting properties.
  For one, its behavioral fragment enables reducing the solution of two-player
  zero-sum games to the decision problems for the logic.
  Indeed, the deep connection with behavioral strategies ensures that
  satisfiable formulae of the logic express linear time properties that can
  always be realized by means of actual strategies.
  This fragment also enjoys good computational properties, being 2\ExpTimeC for
  both satisfiability and model-checking.
  It is also very expressive, being equivalent to, though less succinct than,
  \QPTL, hence able to describe all $\omega$-regular properties.
  Second, the behavioral semantics also bears a connection to good-for-game
  automata, allowing to naturally express the property of being a GFG automata,
  the significance of which is probably worth investigating further.


  To the best of our knowledge, this is the first attempt to provide a
  compositional account of behavioral constraints.
  We believe the generality and flexibility of the semantic settings opens up
  the possibility of a systematic investigation of the impact of this type of
  constraints in quantified temporal logics, such as \QCTL~\cite{Fre11,LM14},
  Hyper\LTL/\CTLS~\cite{CFKMRS14,FRS15,FH16,FZ17,CFHH19}, Coordination
  Logic~\cite{FS10}, and Strategy Logic~\cite{CHP10,MMPV14}.

\end{section}





  \bibliographystyle{IEEEtranS}
  \bibliography{References}

  \clearpage
  \appendix



\subsection{Proofs for Section~\ref{sec:althodsem}}
\label{app:althodsem}


\recenv{PrpDltInv}
\begin{proof}
  To begin with, we show that $\AsgFam \subseteq
  \flipFun{\flipFun{\AsgFam}}$.
  By definition of $\dual{\AsgFam}$, for every $\dual{\XSet} \in
  \dual{\AsgFam}$ there is a function $\chcFun[\dual{\XSet}] \in
  \ChcSet{\AsgFam}$ such that $\dual{\XSet} = \set{
  \chcFun[\dual{\XSet}](\XSet)}{
  \XSet \in \AsgFam }$.
  Now, consider an arbitrary $\XSet \in \AsgFam$ and define $\chcFun$ as:
  $\chcFun(\dual{\XSet}) = \chcFun[\dual{\XSet}](\XSet)$ for every
  $\dual{\XSet} \in \dual{\AsgFam}$.
  Notice that $\chcFun(\dual{\XSet}) \in \dual{\XSet}$, for every
  $\dual{\XSet} \in \dual{\AsgFam}$, and thus $\chcFun \in
  \ChcSet{\dual{\AsgFam}}$.
  %
  %
  Therefore, we have that $\set{ \chcFun(\dual{\XSet}) }{ \dual{\XSet} \in
  \dual{\AsgFam} } \in \dual{\dual{\AsgFam}}$.
  To conclude the proof, we are left to show that $\set{ \chcFun(\dual{\XSet}) }{
  \dual{\XSet} \in \dual{\AsgFam} } = \XSet$ holds as well.
  First, observe that $\chcFun(\dual{\XSet}) = \chcFun[\dual{\XSet}](\XSet)
  \in \XSet$ holds for every $\dual{\XSet} \in \dual{\AsgFam}$, implying $\set{
  \chcFun(\dual{\XSet}) }{ \dual{\XSet} \in \dual{\AsgFam} } \subseteq
  \XSet$.
  In order to show the converse inclusion ($\set{ \chcFun(\dual{\XSet}) }{
  \dual{\XSet} \in \dual{\AsgFam} } \supseteq \XSet$), consider an
  arbitrary
  $\asgElm \in \XSet$ and a function $\chcFun[\XSet_{\asgElm}] \in
  \ChcSet{\AsgFam}$ such that $\chcFun[\XSet_{\asgElm}](\XSet) = \asgElm$.
  Let $\XSet_{\asgElm} \defeq \set{ \chcFun[\XSet_{\asgElm}](\XSet) }{ \XSet
  \in \AsgFam }$.
  It holds $\XSet_{\asgElm} \in \dual{\AsgFam}$.
  Since $\chcFun(\XSet_{\asgElm}) = \chcFun[\XSet_{\asgElm}](\XSet) =
  \asgElm$, we have that $\asgElm \in \set{ \chcFun(\dual{\XSet}) }{ \dual{\XSet}
  \in
  \dual{\AsgFam} }$ and, since $\asgElm$ was chosen arbitrarily, we
  conclude $\set{
  \chcFun(\dual{\XSet}) }{ \dual{\XSet} \in \dual{\AsgFam} } \supseteq
  \XSet$.

  Observe that, straightforwardly, $\AsgFam \subseteq
  \flipFun{\flipFun{\AsgFam}}$ implies $\AsgFam \sqsubseteq
  \flipFun{\flipFun{\AsgFam}}$.

  Let us turn now to proving $\flipFun{\flipFun{\AsgFam}} \sqsubseteq
  \AsgFam$.
  Let $\dual{\dual{\XSet}} \in \dual{\dual{\AsgFam}}$.
  By definition of $\dual{\dual{\AsgFam}}$, there is a function
  $\chcFun[\dual{\dual{\XSet}}] \in \ChcSet{\dual{\AsgFam}}$ such
  that
  $\dual{\dual{\XSet}} = \set{ \chcFun[\dual{\dual{\XSet}}](\dual{\XSet}) }{
  \dual{\XSet} \in \dual{\AsgFam} }$.
  Towards a contradiction, assume that for every $\XSet \in \AsgFam$ there
  is $\asgElm_{\XSet} \in \XSet \setminus \dual{\dual{\XSet}}$.
  Let us define $\chcFun$ as: $\chcFun(\XSet) = \asgElm_{\XSet}$ for every
  $\XSet \in \AsgFam$.
  Notice that $\chcFun \in \ChcSet{\AsgFam}$
  Thus, $\dual{\XSet} \defeq \set{ \asgElm_{\XSet} }{ \XSet \in \AsgFam } \in
  \dual{\AsgFam}$ and $\dual{\XSet} \cap \dual{\dual{\XSet}} = \emptyset$.
  However, $\chcFun[\dual{\dual{\XSet}}](\dual{\XSet}) \in
  \dual{\dual{\XSet}} \cap \dual{\XSet}$, thus rising a contradiction.
\end{proof}

\recenv{LmmFlp}
\begin{proof}
  \begin{itemize}
  \item[($\ref{lmm:flp(ea:org)} \Rightarrow \ref{lmm:flp(ea:dlt)}$)]
  By~$\ref{lmm:flp(ea:org)}$, there is $\XSet \in \AsgFam$ such that
  $\asgElm
  \models \varphi$ holds for every $\asgElm \in \XSet$.
  By definition of $\dual{\AsgFam}$, for every $\dual{\XSet} \in
  \dual{\AsgFam}$ there is $\chcFun[\dual{\XSet}]$ such that
  $\chcFun[\dual{\XSet}](\XSet) \in \XSet$ and $\dual{\XSet} = \{
  \chcFun[\dual{\XSet}](\XSet) : \XSet \in \AsgFam \}$;
  by~$\ref{lmm:flp(ea:org)}$, $\chcFun[\dual{\XSet}](\XSet) \models
  \varphi$;
  since, in addition, $\chcFun[\dual{\XSet}](\XSet) \in \dual{\XSet}$, the
  thesis
  holds.

  \item[($\ref{lmm:flp(ea:dlt)} \Rightarrow \ref{lmm:flp(ea:org)}$)]
  By~$\ref{lmm:flp(ea:dlt)}$, for every $\dual{\XSet} \in \dual{\AsgFam}$
  there is
  $\asgElm_{\dual{\XSet}} \in \dual{\XSet}$ such that
  $\asgElm_{\dual{\XSet}}
  \models \varphi$.
  Consider function $\chcFun \in \ChcSet{\dual{\AsgFam}}$ defined
  as: $\chcFun(\dual{\XSet}) = \asgElm_{\dual{\XSet}}$, for every
  $\dual{\XSet} \in
  \dual{\AsgFam}$.
  By definition of $\dual{\dual{\AsgFam}}$, we have
  that $\{ \chcFun(\dual{\XSet}) : \dual{\XSet} \in \dual{\AsgFam} \} \in
  \dual{\dual{\AsgFam}}$.
  By Proposition~\ref{lmm:flp}, it holds $\dual{\dual{\AsgFam}}
  \sqsubseteq \AsgFam$, which means that there is $\XSet \in \AsgFam$,
  with $\XSet
  \subseteq \{ \chcFun(\dual{\XSet}) : \dual{\XSet} \in \dual{\AsgFam} \}$.
  Since, by construction, $\chcFun(\dual{\XSet}) \models \varphi$ for every
  $\dual{\XSet} \in \dual{\AsgFam}$, the thesis holds.

  \item[($\ref{lmm:blncon(ae:org)} \Leftrightarrow \ref{lmm:blncon(ae:par)}$)]
By
  statement~\ref{lmm:flp(ea)} of this lemma, we have that
  $\ref{lmm:blncon(ea:org)}$ is false if and only if $\ref{lmm:blncon(ea:par)}$
  is false
  ($\mathit{not} \ \ref{lmm:blncon(ea:org)} \Leftrightarrow \mathit{not} \
  \ref{lmm:blncon(ea:par)}$, for short).
  By instantiating, in this last equivalence, $\varphi$ with
  $\neg\varphi$, we have
  $\ref{lmm:blncon(ea:org)}' \Leftrightarrow \ref{lmm:blncon(ea:par)}'$, where
  $\ref{lmm:blncon(ea:org)}'$ and $\ref{lmm:blncon(ea:par)}'$ are abbreviations
  for,
  respectively:

  \begin{itemize}
  \item for all sets of assignments $\XSet
  \in \AsgFam$, there exists an assignment $\asgElm \in \XSet$ such that
  $\asgElm \not\models \neg\varphi$;
  \item there exists a set of assignments $\XSet \in \dual{\AsgFam}$ such
  that, for all assignments $\asgElm \in \XSet$, it holds that $\asgElm
  \not\models
  \neg \varphi$.
  \end{itemize}

  By applying semantics of negation, it is straightforward to see that
  $\ref{lmm:blncon(ea:org)}'$ and $\ref{lmm:blncon(ea:par)}'$ correspond to
  $\ref{lmm:blncon(ae:org)}$ and $\ref{lmm:blncon(ae:par)}$, respectively, hence
  the
  thesis.
    \qedhere
  \end{itemize}

\end{proof}

\recenv{LmmBlnCon}
\begin{proof}

  \begin{itemize}
  \item[($\ref{lmm:blncon(ea:org)} \Rightarrow \ref{lmm:blncon(ea:par)}$)]
  %
  %
  Let $\XSet \in \AsgFam$ be such that $\asgElm \models \varphi_{1}
  \wedge \varphi_{2}$ holds for every $\asgElm \in \XSet$ and consider an
  arbitrary pair $(\AsgFam[1], \AsgFam[2]) \in \parFun[]{\AsgFam}$.
  Since $(\AsgFam[1], \AsgFam[2])$ is a partition of $\AsgFam$, either
  $\XSet \in \AsgFam[1]$ or $\XSet \in \AsgFam[2]$: in the former case,
  let $i=1$;
  in the latter, let $i = 2$.
  Since $\XSet \in \AsgFam[i]$ and $\asgElm \models \varphi_{i} $ holds
  for every $\asgElm \in \XSet$, the thesis holds.

  \item[($\ref{lmm:blncon(ea:par)} \Rightarrow \ref{lmm:blncon(ea:org)}$)]
  Consider the hyperassignment $\dual{\AsgFam[1]} = \{ \XSet
  \in
  \AsgFam : \forall \asgElm \in \XSet \ . \ \asgElm \models \varphi_1 \}$
  and the
  pair $(\AsgFam[1] \defeq \AsgFam \setminus \dual{\AsgFam[1]}, \AsgFam[2]
  \defeq
  \dual{\AsgFam[1]}) \in \parFun[]{\AsgFam}$.
  Observe that, by definition of $\AsgFam[1]$, there is no $\XSet \in
  \AsgFam[1]$ such that $\asgElm \models \varphi_{1}$ holds for every
  $\asgElm \in
  \XSet$.
  Thus, by~$\ref{lmm:blncon(ea:par)}$, there must exist $\XSet \in
  \AsgFam[2]$ such that $\asgElm \models \varphi_{2}$ holds for every
  $\asgElm \in
  \XSet$.
  By definition of $\AsgFam[2]$, it also holds that $\asgElm \models
  \varphi_{1}$ for every $\asgElm \in \XSet$, hence the thesis.

  \item[($\ref{lmm:blncon(ae:org)} \Leftrightarrow \ref{lmm:blncon(ae:par)}$)]
By
  statement~\ref{lmm:flp(ea)} of this lemma, we have that
  $\ref{lmm:blncon(ea:org)}$ is false if and only if $\ref{lmm:blncon(ea:par)}$
  is false
  ($\mathit{not} \ \ref{lmm:blncon(ea:org)} \Leftrightarrow \mathit{not} \
  \ref{lmm:blncon(ea:par)}$, for short).
  By instantiating, in this last equivalence, $\varphi_1$ with
  $\neg\varphi_1$ and $\varphi_2$ with $\neg\varphi_2$, we have
  $\ref{lmm:blncon(ea:org)}' \Leftrightarrow \ref{lmm:blncon(ea:par)}'$, where
  $\ref{lmm:blncon(ea:org)}'$ and $\ref{lmm:blncon(ea:par)}'$ are abbreviations
  for,
  respectively:

  \begin{itemize}
  \item for all sets of assignments $\XSet
    \in \AsgFam$, there exists an assignment $\asgElm \in \XSet$ such that
    $\asgElm \not\models \neg\varphi_{1} \wedge \neg\varphi_{2}$;
  \item there exists a pair of hyperassignments $(\AsgFam[1],
    \AsgFam[2]) \in \parFun[]{\AsgFam}$ such that, for all indexes $i \in
  \{ 1, 2
    \}$ and sets of assignments $\XSet \in \AsgFam[i]$, there exists an
  assignment
    $\asgElm \in \XSet$ for which it holds that $\asgElm \not\models
    \neg\varphi_{i}$.
  \end{itemize}

  By applying semantics of negation and De Morgan's laws, it is
  straightforward to see that $\ref{lmm:blncon(ea:org)}'$ and
  $\ref{lmm:blncon(ea:par)}'$ correspond to $\ref{lmm:blncon(ae:org)}$ and
  $\ref{lmm:blncon(ae:par)}$, respectively, hence the thesis. \qedhere
  \end{itemize}

\end{proof}

\recenv{LmmHypExt}
\begin{proof}

  \begin{itemize}
  \item[($\ref{lmm:blncon(ea:org)} \Rightarrow \ref{lmm:blncon(ea:par)}$)] Let
  $\XSet \in \AsgFam$ be such that $\asgElm \models \exists \apElm .
  \varphi$
  holds for every $\asgElm \in \XSet$. By semantics
  (Def.~\ref{def:tarsem}, item
  \ref{def:tarsem(qnt:exs)}), for every $\asgElm \in \XSet$, there is a
  temporal
  function $\fFun_{\asgElm} \in \SetN \to \SetB$ such that
  ${\asgElm}[\apElm
  \mapsto \fFun_{\asgElm}] \models \varphi$.
  Let $\FFun \in \FncSet(\ap{\AsgFam})$ be such that
  $\FFun(\asgElm) = \fFun_{\asgElm}$ for every $\asgElm \in \XSet$ and let
  $\XSet_{\FFun} = \{ {\asgElm}[\apElm \mapsto \FFun(\asgElm)] : \asgElm
  \in \XSet
  \}$.
  Since $\XSet_{\FFun} \in \extFun[]{\AsgFam, \apElm}$ and $\asgElm
  \models \varphi$ holds for every $\asgElm \in \XSet_{\FFun}$, the thesis
  holds.

  \item[($\ref{lmm:blncon(ea:par)} \Rightarrow \ref{lmm:blncon(ea:org)}$)] Let
  $\XSet_{\FFun} \in \extFun[]{\AsgFam, \apElm}$ be such that $\asgElm
  \models
  \varphi$ holds for every $\asgElm \in \XSet_{\FFun}$.
  By definition of $\extFun[]{\AsgFam, \apElm}$, there are $\XSet \in
  \AsgFam$ and $\FFun \in \FncSet(\ap{\AsgFam})$ such that $\XSet_{\FFun} = \{
  {\asgElm}[\apElm
  \mapsto \FFun(\asgElm)] : \asgElm \in \XSet \}$.
  Clearly, by semantics (Def.~\ref{def:tarsem}, item
  \ref{def:tarsem(qnt:exs)}), $\asgElm \models \exists \apElm . \varphi$
  holds for
  every $\asgElm \in \XSet$, hence the thesis.

  \item[($\ref{lmm:blncon(ae:org)} \Leftrightarrow \ref{lmm:blncon(ae:par)}$)]
By
  statement~\ref{lmm:flp(ea)} of this lemma, we have that
  $\ref{lmm:blncon(ea:org)}$ is false if and only if $\ref{lmm:blncon(ea:par)}$
  is false
  ($\mathit{not} \ \ref{lmm:blncon(ea:org)} \Leftrightarrow \mathit{not} \
  \ref{lmm:blncon(ea:par)}$, for short).
  By instantiating, in this last equivalence, $\varphi$ with
  $\neg\varphi$, we have $\ref{lmm:blncon(ea:org)}' \Leftrightarrow
  \ref{lmm:blncon(ea:par)}'$, where $\ref{lmm:blncon(ea:org)}'$ and
  $\ref{lmm:blncon(ea:par)}'$ are abbreviations for, respectively:

  \begin{itemize}
  \item for all sets of assignments $\XSet \in \AsgFam$, there exists an
    assignment $\asgElm \in \XSet$ such that $\asgElm \not\models \exists
  \apElm
    . \neg \varphi$;
  \item for all sets of assignments $\XSet \in \extFun[]{\AsgFam,
      \apElm}$, there exists an assignment $\asgElm \in \XSet$ such that
  $\asgElm
    \not\models \neg \varphi$.
  \end{itemize}

  By applying semantics of negation and duality of $\exists$ and
  $\forall$, it is straightforward to see that $\ref{lmm:blncon(ea:org)}'$
  and
  $\ref{lmm:blncon(ea:par)}'$ correspond to $\ref{lmm:blncon(ae:org)}$ and
  $\ref{lmm:blncon(ae:par)}$, respectively, hence the thesis. \qedhere
  \end{itemize}

\end{proof}

\recenv{ThmSemAdqI}
\begin{proof}

  Both claims 1 and 2 are proved together, by induction on the structure of
  the formula.
  \begin{itemize}
  \item[(base case)] If $\varphi \in \LTL$, then the
  claims immediately follows from the semantics
  (Definition~\ref{def:althodsem},
  item~\ref{def:althodsem(ltl)}).

  \item[(inductive step)] If $\varphi = \neg \psi$, then we have, by
  semantics,
  $\AsgFam \cmodels[][\alpha]\: \varphi$ if and only if $\AsgFam
  \notcmodels[][\dual{\alpha}]\: \psi$.
  If $\alpha = \exists\forall$, then, by inductive hypothesis, it is not
  the case that for every $\XSet \in \AsgFam$ there is $\asgElm \in \XSet$
  such
  that $\asgElm \cmodels[] \psi$, which amounts to say that there is
  $\XSet \in \AsgFam$ such that for every $\asgElm \in \XSet$ it holds
  $\asgElm
  \notcmodels[] \psi$, from which the thesis follows.
  If, instead, $\alpha = \forall\exists$, then, by inductive hypothesis,
  there is no $\XSet \in \AsgFam$ such that for every $\asgElm \in \XSet$
  it holds
  $\asgElm \cmodels[] \psi$, which amounts to say that for every
  $\XSet
  \in \AsgFam$ there is $\asgElm \in \XSet$ such that $\asgElm
  \notcmodels[]
  \psi$, from which the thesis follows.

  If $\varphi = \varphi_1 \wedge\varphi_2$ and $\alpha = \exists\forall$,
  then we have, by semantics, $\AsgFam \cmodels[][\alpha]\: \varphi$ if
  and only
  if for every $(\AsgFam[1], \AsgFam[2]) \in \parFun[]{\AsgFam}$
  %
  %
  it holds true that $\AsgFam[1] \neq \emptyset$ and $\AsgFam[1]
  \cmodels[][\alpha] \varphi_{1}$ or it holds true that $\AsgFam[2] \neq
  \emptyset$ and $\AsgFam[2] \cmodels[][\alpha] \varphi_{2}$.
  By inductive hypothesis, this amounts to say that for every
  $(\AsgFam[1], \AsgFam[2]) \in \parFun[]{\AsgFam}$ there is $i \in \{ 1,2
  \}$ and
  $\XSet \in \AsgFam[i]$ such that for every $\asgElm \in \XSet$ it holds
  $\asgElm
  \cmodels[]\varphi_{i}$.
  The thesis follows from Lemma~\ref{lmm:blncon},
  item~\ref{lmm:blncon(ea)}.

  If $\varphi = \varphi_1 \wedge\varphi_2$ and $\alpha = \forall\exists$,
  then we have, by semantics, $\AsgFam \cmodels[][\alpha]\: \varphi$ if
  and only
  if $\dual{\AsgFam} \cmodels[][\dual{\alpha}]\: \varphi$.
  By proceeding as before, i.e., by applying semantics, inductive
  hypothesis, and Lemma~\ref{lmm:blncon}, item~\ref{lmm:blncon(ea)}, we
  have that
  there is $\dual{\XSet} \in \dual{\AsgFam}$ such that for every
  $\dual{\asgElm}
  \in \dual{\XSet}$ it holds $\dual{\asgElm} \cmodels[]\varphi$.
  The thesis follows from Lemma~\ref{lmm:flp},
  item~\ref{lmm:flp(ae)}.

  If $\varphi = \varphi_1 \vee\varphi_2$ and $\alpha = \forall\exists$,
  then we have, by semantics, $\AsgFam \cmodels[][\alpha]\: \varphi$ if
  and only
  if there is $(\AsgFam[1], \AsgFam[2]) \in \parFun[]{\AsgFam}$ such that
  $\AsgFam[1] \neq \emptyset$ implies $\AsgFam[1] \cmodels[][\alpha]
  \varphi_{1}$
  and $\AsgFam[2] \neq \emptyset$ implies $\AsgFam[2] \cmodels[][\alpha]
  \varphi_{2}$.
  By inductive hypothesis, this amounts to say that there is $(\AsgFam[1],
  \AsgFam[2]) \in \parFun[]{\AsgFam}$ such that for every $i \in \{ 1,2
  \}$ and
  $\XSet \in \AsgFam[i]$ there is $\asgElm \in \XSet$ for which it holds
  $\asgElm
  \cmodels[]\varphi_{i}$.
  The thesis follows from Lemma~\ref{lmm:blncon},
  item~\ref{lmm:blncon(ae)}.

  If $\varphi = \varphi_1 \vee\varphi_2$ and $\alpha = \exists\forall$,
  then we have, by semantics, $\AsgFam \cmodels[][\alpha]\: \varphi$ if
  and only
  if $\dual{\AsgFam} \cmodels[][\dual{\alpha}]\: \varphi$.
  By proceeding as before, i.e., by applying semantics, inductive
  hypothesis, and Lemma~\ref{lmm:blncon}, item~\ref{lmm:blncon(ae)}, we
  have that
  for every $\dual{\XSet} \in \dual{\AsgFam}$ there is $\dual{\asgElm}
  \in \dual{\XSet}$ such that $\dual{\asgElm} \cmodels[]\varphi$.
  The thesis follows from Lemma~\ref{lmm:flp},
  item~\ref{lmm:flp(ea)}.

  If $\varphi = \exists \apElm . \psi$ and $\alpha = \exists\forall$, then
  we have, by semantics, $\AsgFam \cmodels[][\alpha]\: \varphi$ if and
  only if
  $\extFun[]{\AsgFam, \apElm} \cmodels[][\alpha] \psi$.
  By inductive hypothesis, this amounts to say that there is $\XSet \in
  \extFun[]{\AsgFam, \apElm}$ such that for every $\asgElm \in \XSet$ it
  holds
  $\asgElm \cmodels[]\psi$.
  The thesis follows from Lemma~\ref{lmm:hypext},
  item~\ref{lmm:hypext(ea)}.

  If $\varphi = \exists \apElm . \psi$ and $\alpha = \forall\exists$,
  then we have, by semantics, $\AsgFam \cmodels[][\alpha]\: \varphi$ if
  and only
  if $\dual{\AsgFam} \cmodels[][\dual{\alpha}]\: \varphi$.
  By proceeding as before, i.e., by applying semantics, inductive
  hypothesis, and Lemma~\ref{lmm:hypext}, item~\ref{lmm:hypext(ea)}, we
  have that
  there is $\dual{\XSet} \in \dual{\AsgFam}$ such that for every
  $\dual{\asgElm}
  \in \dual{\XSet}$ it holds $\dual{\asgElm} \cmodels[]\varphi$.
  The thesis follows from Lemma~\ref{lmm:flp},
  item~\ref{lmm:flp(ae)}.

  If $\varphi = \forall \apElm . \psi$ and $\alpha = \forall\exists$, then
  we have, by semantics, $\AsgFam \cmodels[][\alpha]\: \varphi$ if and
  only if
  $\extFun[]{\AsgFam, \apElm} \cmodels[][\alpha] \psi$.
  By inductive hypothesis, this amounts to say that for every $\XSet \in
  \extFun[]{\AsgFam, \apElm}$ there is $\asgElm \in \XSet$ such that
  $\asgElm \cmodels[]\psi$.
  The thesis follows from Lemma~\ref{lmm:hypext},
  item~\ref{lmm:hypext(ae)}.

  If $\varphi = \forall \apElm . \psi$ and $\alpha = \exists\forall$,
  then we have, by semantics, $\AsgFam \cmodels[][\alpha]\: \varphi$ if
  and only
  if $\dual{\AsgFam} \cmodels[][\dual{\alpha}]\: \varphi$.
  By proceeding as before, i.e., by applying semantics, inductive
  hypothesis, and Lemma~\ref{lmm:hypext}, item~\ref{lmm:hypext(ae)}, we
  have that for every $\dual{\XSet} \in \dual{\AsgFam}$ there is
  $\dual{\asgElm}
  \in \dual{\XSet}$ such that $\dual{\asgElm} \cmodels[]\varphi$.
  The thesis follows from Lemma~\ref{lmm:flp},
  item~\ref{lmm:flp(ea)}. \qedhere
  \end{itemize}
\end{proof}





\subsection{Proofs for Section~\ref{sec:behdep}}
\label{app:behdep}

\recenv{PrpAsgInd}
\begin{proof}
  Assume $\asgElm[1] \approx_{\spcElm}^{k} \asgElm[2]$, i.e., $\asgElm[1] =
  \asgElm[][(1)] \sim_{\spcElm}^{k} \asgElm[][(2)] \sim_{\spcElm}^{k} \ldots
  \sim_{\spcElm}^{k} \asgElm[][(r)] = \asgElm[2]$, for some $\asgElm[][(1)],
  \ldots, \asgElm[][(r)]$, with $r \in \mathbb{N} \setminus \{ 0 \}$ (observe that
  $\asgElm[1] = \asgElm[2]$ if $r=1$).

  We prove, by induction on $r$, that
  items~\ref{prp:asgind(var)}--\ref{prp:asgind(pivii)} hold.
  If $r=1$, then the claim follows trivially.
  Let $r>1$.
  Since $\asgElm[][(1)] \sim_{\spcElm}^{k} \asgElm[][(2)]$, we have that
  \ref{prp:asgind(var)}--\ref{prp:asgind(pivii)} hold when instantiated with
  $\asgElm[][(1)]$ and $\asgElm[][(2)]$, by Definition~\ref{def:asgind}.
  Moreover, by inductive hypothesis,
  \ref{prp:asgind(var)}--\ref{prp:asgind(pivii)} hold when instantiated with
  $\asgElm[][(2)]$ and $\asgElm[][(r)]$.
  The claim follows by transitivity of
  \ref{prp:asgind(var)}--\ref{prp:asgind(pivii)}.

  Now, in order to prove the converse direction, assume that
  items~\ref{prp:asgind(var)}--\ref{prp:asgind(pivii)} hold.
  Let $\set{}{\apElm[1], \ldots, \apElm[r]}$ be an enumeration of $\PSet[\BSym]
  \cup \PSet[\SSym]$ and define $\asgElm[][(1)] \defeq \asgElm[1]$ and
  $\asgElm[][(i+1)] \defeq \asgElm[][(i)][][\apElm[i] \mapsto
  \asgElm[2](\apElm[i])]$ for $i \in [1, \ldots, r]$.
  It is not difficult to convince oneself that $\asgElm[1] = \asgElm[][(1)]
  \sim_{\spcElm}^{k} \asgElm[][(2)] \sim_{\spcElm}^{k} \ldots \sim_{\spcElm}^{k}
  \asgElm[][(r+1)] = \asgElm[2]$ holds, hence $\asgElm[1] \approx_{\spcElm}^{k}
  \asgElm[2]$.
\end{proof}

\recenv{PrpBhvFnc}
\begin{proof}
  Assume $\asgElm[1] \approx_{\spcElm}^{k} \asgElm[2]$, i.e., $\asgElm[1] =
  \asgElm[][(1)] \sim_{\spcElm}^{k} \asgElm[][(2)] \sim_{\spcElm}^{k} \ldots
  \sim_{\spcElm}^{k} \asgElm[][(r)] = \asgElm[2]$, for some $\asgElm[][(1)],
  \ldots, \asgElm[][(r)]$, with $r \in \mathbb{N} \setminus \{ 0 \}$ (observe that
  $\asgElm[1] = \asgElm[2]$ if $r=1$).

  We prove, by induction on $r$, that $\FFun(\asgElm[1])(k) =
  \FFun(\asgElm[2])(k)$.
  If $r=1$, then the claim follows trivially.
  Let $r>1$.
  Since $\asgElm[][(1)] \sim_{\spcElm}^{k} \asgElm[][(2)]$ and $\FFun \in
  \FncSet[\spcElm](\PSet)$, we have that $\FFun(\asgElm[][(1)])(k) =
  \FFun(\asgElm[][(2)])(k)$.
  Moreover, by inductive hypothesis, $\FFun(\asgElm[][(2)])(k) =
  \FFun(\asgElm[][(r)])(k)$.
  The claim follows by transitivity.
\end{proof}

\recenv{PrpOprMon}
\begin{proof}
  Proof of point $1)$. Assume $\AsgFam[1] \!\sqsubseteq\! \AsgFam[2]\!$ and let $\dual{\XSet[2]} \in
  \dual{\AsgFam[2]}$.
  We have to show that there exists $\dual{\XSet[1]} \in \dual{\AsgFam[1]}$ such
  that $\dual{\XSet[1]} \subseteq \dual{\XSet[2]}$.
  By $\AsgFam[1] \sqsubseteq \AsgFam[2]$, there is a function $f :
  \AsgFam[1] \rightarrow \AsgFam[2]$, such that $f(\XSet[1]) \subseteq \XSet[1]$.
  By definition of $\dual{\AsgFam[2]}$, we have that $\dual{\XSet[2]} =
  \img{\chcFun[2]}$ for some $\chcFun[2] \in \ChcSet{\AsgFam[2]}$.

  Now, define $\chcFun[1]$ as $\chcFun[1](\XSet[1]) \defeq
  \chcFun[2](f(\XSet[1]))$ for every $\XSet[1] \in \AsgFam[1]$.
  Clearly, $\chcFun[1] \in \ChcSet{\AsgFam[1]}$, as $\chcFun[1](\XSet[1]) =
  \chcFun[2](f(\XSet[1])) \in f(\XSet[1]) \subseteq \XSet[1]$, and thus
  $\img{\chcFun[1]} \in \dual{\AsgFam[1]}$.
  The thesis follows from the fact that $\img{\chcFun[1]} \subseteq
  \img{\chcFun[2]} = \dual{\XSet[2]}$.

  Proof of point $2)$. Assume $\AsgFam[1] \!\sqsubseteq\! \AsgFam[2]\!$ and let $\XSet[1]['] \in
  \extFun[\spcElm]{\AsgFam[1], \apElm}$.
  We have to show that there exists $\XSet[2]['] \in
  \extFun[\spcElm]{\AsgFam[2], \apElm}$ such that $\XSet[2]['] \subseteq \XSet[1][']$.
  By definition of $\extFun[\spcElm]{\AsgFam[1], \apElm}$, we have that
  $\XSet[1]['] = \extFun{\XSet[1], \FFun, \apElm}$ for some $\XSet[1] \in
  \AsgFam[1]$ and $\FFun \in \FncSet[\spcElm](\ap{\AsgFam[1]})$.
  By $\AsgFam[1] \sqsubseteq \AsgFam[2]$ and $\XSet[1] \in \AsgFam[1]$, we have
  that there is $\XSet[2] \in \AsgFam[2]$ such that $\XSet[2] \subseteq \XSet[1]$.

  It clearly holds that $\extFun{\XSet[2], \FFun, \apElm} \subseteq
  \extFun{\XSet[1], \FFun, \apElm}$.
  The thesis follows, since $\extFun{\XSet[2], \FFun, \apElm} \in
  \extFun[\spcElm]{\AsgFam[2], \apElm}$.
\end{proof}

\recenv{ThmHypRef}
\begin{proof}
  Assume $\AsgFam[1] \!\sqsubseteq\! \AsgFam[2]$.
  Thus, there is a function $f : \AsgFam[1] \rightarrow \AsgFam[2]$, such that
  $f(\XSet[1]) \subseteq \XSet[1]$.
  The claim is proved by induction on the structure of the formula and the
  alternation flags.
  \begin{itemize}
  \item[(base case)] If $\varphi \in \LTL$, then the
  claims immediately follows from the semantics
  (Definition~\ref{def:althodsem},
  item~\ref{def:althodsem(ltl)}).

\item[(inductive step)]
  \begin{itemize}
  \item ($\varphi = \neg \psi$) We have, by semantics, $\AsgFam[1]
    \cmodels[][\exists\forall]\: \varphi$ if and only if $\AsgFam[1]
    \notcmodels[][\forall\exists]\: \psi$.
    By inductive hypothesis, this implies $\AsgFam[2]
    \notcmodels[][\forall\exists]\: \psi$, which amounts to $\AsgFam[2]
    \cmodels[][\exists\forall]\: \varphi$.

    On the other hand, we also have, by semantics, $\AsgFam[2]
    \cmodels[][\forall\exists]\: \varphi$ if and only if $\AsgFam[2]
    \notcmodels[][\exists\forall]\: \psi$.
    By inductive hypothesis, this implies $\AsgFam[1]
    \notcmodels[][\exists\forall]\: \psi$, which amounts to $\AsgFam[1]
    \cmodels[][\forall\exists]\: \varphi$.

  \item ($\varphi = \varphi_1 \wedge\varphi_2$)
    We have, by semantics, $\AsgFam[1] \cmodels[][\exists\forall]\:
    \varphi$ if and only if for every $(\AsgFam[1]['], \AsgFam[1]['']) \in
    \parFun[]{\AsgFam[1]}$
  %
  %
    it holds true that $\AsgFam[1]['] \neq \emptyset$ and $\AsgFam[1][']
    \cmodels[][\exists\forall] \varphi_{1}$ or it holds true that $\AsgFam[1]['']
    \neq \emptyset$ and $\AsgFam[1][''] \cmodels[][\exists\forall] \varphi_{2}$.
    Now, consider $(\AsgFam[2]['], \AsgFam[2]['']) \in \parFun[]{\AsgFam[2]}$
    and $(\AsgFam[1]['], \AsgFam[1]['']) \in \parFun[]{\AsgFam[1]}$, where
    $\AsgFam[1]['] \defeq \set{\XSet \in \AsgFam[1]}{f(\XSet) \in \AsgFam[2][']}$
    and $\AsgFam[1][''] \defeq \set{\XSet \in \AsgFam[1]}{f(\XSet) \in
      \AsgFam[2]['']}$.
    We have that both $\AsgFam[1]['] \sqsubseteq \AsgFam[2][']$ and
    $\AsgFam[1][''] \sqsubseteq \AsgFam[2]['']$ hold.
    Thus, by inductive hypothesis, for every $(\AsgFam[2]['], \AsgFam[2][''])
    \in \parFun[]{\AsgFam[2]}$ it holds true that $\AsgFam[2]['] \neq \emptyset$ and
    $\AsgFam[2]['] \cmodels[][\exists\forall] \varphi_{1}$ or it holds true that
    $\AsgFam[2][''] \neq \emptyset$ and $\AsgFam[2][''] \cmodels[][\exists\forall]
    \varphi_{2}$, which amounts to $\AsgFam[2] \cmodels[][\exists\forall]\:
    \varphi$.

    On the other hand, we also have, by semantics, $\AsgFam[2]
    \cmodels[][\forall\exists]\: \varphi$ if and only if $\dual{\AsgFam[2]}
    \cmodels[][\exists\forall]\: \varphi$.
    By inductive hypothesis and Proposition~\ref{prp:oprmon}, this implies
    $\dual{\AsgFam[1]} \cmodels[][\exists\forall]\: \varphi$, which amounts to
    $\AsgFam[1] \cmodels[][\forall\exists]\: \varphi$.

  \item ($\varphi = \varphi_1 \vee\varphi_2$)
    We have, by semantics, $\AsgFam[2] \cmodels[][\forall\exists]\: \varphi$ if
    and only if there is $(\AsgFam[2]['], \AsgFam[2]['']) \in \parFun[]{\AsgFam[2]}$
    such that $\AsgFam[2]['] \neq \emptyset$ implies $\AsgFam[2][']
    \cmodels[][\forall\exists] \varphi_{1}$ and $\AsgFam[2][''] \neq \emptyset$
    implies $\AsgFam[2][''] \cmodels[][\forall\exists] \varphi_{2}$.
    Now, let $(\AsgFam[1]['], \AsgFam[1]['']) \in \parFun[]{\AsgFam[1]}$, where
    $\AsgFam[1]['] \defeq \set{\XSet \in \AsgFam[1]}{f(\XSet) \in \AsgFam[2][']}$
    and $\AsgFam[1][''] \defeq \set{\XSet \in \AsgFam[1]}{f(\XSet) \in
      \AsgFam[2]['']}$.
    We have that both $\AsgFam[1]['] \sqsubseteq \AsgFam[2][']$ and
    $\AsgFam[1][''] \sqsubseteq \AsgFam[2]['']$ hold.
    Moreover, we have that $\AsgFam[2]['] = \emptyset$ implies $\AsgFam[1]['] =
    \emptyset$ and $\AsgFam[2][''] = \emptyset$ implies $\AsgFam[1][''] =
    \emptyset$.
    In addition, by inductive hypothesis, $\AsgFam[2][']
    \cmodels[][\forall\exists] \varphi_{1}$ implies $\AsgFam[1][']
    \cmodels[][\forall\exists] \varphi_{1}$ and $\AsgFam[2]['']
    \cmodels[][\forall\exists] \varphi_{2}$ implies $\AsgFam[1]['']
    \cmodels[][\forall\exists] \varphi_{2}$.
    Therefore, $\AsgFam[1] \cmodels[][\forall\exists]\: \varphi$ holds.

    On the other hand, we also have, by semantics, $\AsgFam[1]
    \cmodels[][\exists\forall]\: \varphi$ if and only if $\dual{\AsgFam[1]}
    \cmodels[][\forall\exists]\: \varphi$.
    By inductive hypothesis and Proposition~\ref{prp:oprmon}, this implies
    $\dual{\AsgFam[2]} \cmodels[][\forall\exists]\: \varphi$, which amounts to
    $\AsgFam[2] \cmodels[][\exists\forall]\: \varphi$.

  \item ($\varphi = \exists \apElm \colon\! \spcElm \ldotp \psi$)
    We have, by semantics, $\AsgFam[1] \cmodels[][\exists\forall]\: \varphi$ if
    and only if $\extFun[\spcElm]{\AsgFam[1], \apElm} \cmodels[][\exists\forall]
    \psi$.
    By inductive hypothesis and Proposition~\ref{prp:oprmon}, this implies
    $\extFun[\spcElm]{\AsgFam[2], \apElm} \cmodels[][\exists\forall] \psi$, which
    amounts to $\AsgFam[2] \cmodels[][\exists\forall]\: \varphi$.

    On the other hand, we also have, by semantics, $\AsgFam[2]
    \cmodels[][\forall\exists]\: \varphi$ if and only if $\dual{\AsgFam[2]}
    \cmodels[][\exists\forall]\: \varphi$.
    By inductive hypothesis and Proposition~\ref{prp:oprmon}, this implies
    $\dual{\AsgFam[1]} \cmodels[][\exists\forall]\: \varphi$, which amounts to
    $\AsgFam[1] \cmodels[][\forall\exists]\: \varphi$.

  \item ($\varphi = \forall \apElm \colon\! \spcElm \ldotp \psi$)
    We have, by semantics, $\AsgFam[2] \cmodels[][\forall\exists]\: \varphi$ if
    and only if $\extFun[\spcElm]{\AsgFam[2], \apElm} \cmodels[][\forall\exists] \psi$.
    By inductive hypothesis and Proposition~\ref{prp:oprmon}, this implies
    $\extFun[\spcElm]{\AsgFam[1], \apElm} \cmodels[][\forall\exists] \psi$, which
    amounts to $\AsgFam[1] \cmodels[][\forall\exists]\: \varphi$.

    On the other hand, we also have, by semantics, $\AsgFam[1]
    \cmodels[][\exists\forall]\: \varphi$ if and only if $\dual{\AsgFam[1]}
    \cmodels[][\forall\exists]\: \varphi$.
    By inductive hypothesis and Proposition~\ref{prp:oprmon}, this implies
    $\dual{\AsgFam[2]} \cmodels[][\forall\exists]\: \varphi$, which amounts to
    $\AsgFam[2] \cmodels[][\exists\forall]\: \varphi$.\qedhere
  \end{itemize}
\end{itemize}
\end{proof}

\recenv{ThmDlt}
\begin{proof}
  The fact that $\AsgFam \cmodels[][\alpha] \varphi$ \iff $\dual{\dual{\AsgFam}}
  \cmodels[][\alpha] \varphi$ immediately follows from $\AsgFam \equiv
  \dual{\dual{\AsgFam}}$ (Proposition~\ref{prp:dltinv}) and
  Corollary~\ref{cor:eqv}.

  We turn now on proving that $\AsgFam \cmodels[][\alpha] \varphi$ \iff
  $\dual{\AsgFam} \cmodels[][\dual{\alpha}] \varphi$, for all $\AsgFam \in
  \HAsgSet[\subseteq](\free{\varphi})$.
  The proof is by induction on the structure of the formula.
  \begin{itemize}
  \item[(base case)]
    If $\varphi \in \LTL$, then the claim follows immediately from the
    semantics and Lemma~\ref{lmm:flp}.

  \item[(inductive step)]
    If $\varphi = \neg \psi$, then we have: $\AsgFam \cmodels[][\alpha]\:
    \varphi \iffExpl{sem.}
    \AsgFam {\notcmodels[][\dual{\alpha}]}\: \psi \iffExpl{ind.hp.}
    \flipFun{\AsgFam} \notcmodels[][\alpha]\: \psi \iffExpl{sem.}
    \flipFun{\AsgFam} \cmodels[][\dual{\alpha}]\: \varphi$.

    If $\varphi = \varphi_1 \wedge \varphi_2$, then we have:
    \begin{itemize}
    \item $\AsgFam \cmodels[][\exists\forall]\: \varphi
      \iffExpl{Thm.~\ref{thm:dlt}}
\flipFun{\flipFun{\AsgFam}}
      \cmodels[][\exists\forall]\: \varphi \iffExpl{sem.}
      \flipFun{\AsgFam} \cmodels[][\forall\exists]\: \varphi$; and

    \item $\AsgFam \cmodels[][\forall\exists]\: \varphi \iffExpl{sem.}
      \flipFun{\AsgFam} \cmodels[][\exists\forall]\: \varphi$.

    \end{itemize}

    If $\varphi = \varphi_1 \vee \varphi_2$, then we have:
    \begin{itemize}
    \item $\AsgFam \cmodels[][\exists\forall]\: \varphi \iffExpl{sem.}
      \flipFun{\AsgFam} \cmodels[][\forall\exists]\: \varphi$; and

    \item $\AsgFam \cmodels[][\forall\exists]\: \varphi
      \iffExpl{Thm.~\ref{thm:dlt}}
\flipFun{\flipFun{\AsgFam}}
      \cmodels[][\forall\exists]\: \varphi \iffExpl{sem.}
      \flipFun{\AsgFam} \cmodels[][\exists\forall]\: \varphi$.

    \end{itemize}

    If $\varphi = \exists \apElm \colon\! \spcElm \ldotp \psi$, then we have:
    \begin{itemize}
    \item $ \AsgFam \cmodels[][\exists\forall]\: \varphi
      \iffExpl{Thm.~\ref{thm:dlt}}
\flipFun{\flipFun{\AsgFam}}
      \cmodels[][\exists\forall]\: \varphi \iffExpl{sem.}
      \flipFun{\AsgFam} \cmodels[][\forall\exists]\: \varphi$; and

    \item $ \AsgFam \cmodels[][\forall\exists]\: \varphi \iffExpl{sem.}
\flipFun{\AsgFam}
      \cmodels[][\exists\forall]\: \varphi$.
    \end{itemize}

    If $\varphi = \forall \apElm \colon\! \spcElm \ldotp \psi$, then we have:
    \begin{itemize}
    \item $ \AsgFam \cmodels[][\exists\forall]\: \varphi \iffExpl{sem.}
\flipFun{\AsgFam}
      \cmodels[][\forall\exists]\: \varphi$;

    \item $ \AsgFam \cmodels[][\forall\exists]\: \varphi
      \iffExpl{Thm.~\ref{thm:dlt}}
\flipFun{\flipFun{\AsgFam}}
      \cmodels[][\forall\exists]\: \varphi \iffExpl{sem.}
      \flipFun{\AsgFam} \cmodels[][\exists\forall]\: \varphi$. \qedhere
    \end{itemize}
  \end{itemize}

\end{proof}

\recenv{LmmDML}
\begin{proof}
  Thanks to Corollary~\ref{cor:intinv}, it suffices to prove the
  equivalence for $\equiv^{\alpha}$ for some $\alpha \in \{ \exists\forall,
  \forall\exists \}$.
  Let $\varphi$ be a \QPTL formula and $\AsgFam \in
  \HAsgSet(\free{\varphi})$ a hyperassignment.
  \begin{enumerate}
  \item[\ref{lmm:dml(neg)})]
    $\AsgFam \cmodels[][\exists\forall]\: \neg \neg
    \varphi \iffExpl{sem.}  \AsgFam
    \notcmodels[][\forall\exists] \neg \varphi \iffExpl{sem.}
    \AsgFam \cmodels[][\exists\forall]\: \varphi$;
  \item[\ref{lmm:dml(crem)}-\ref{lmm:dml(dass)})]
    Trivial and omitted.
  \item[\ref{lmm:dml(con)})]
     $ \AsgFam
    \cmodels[][\exists\forall]\: \varphi_1 \wedge \varphi_2 \iffExpl{sem.}
    \forall (\AsgFam[1], \AsgFam[2]) \in \parFun[]{\AsgFam}$ it holds
$(\AsgFam[1] \neq \emptyset$ and $\AsgFam[1] \cmodels[][\exists\forall]
\varphi_{1})$ \ or $(\AsgFam[2] \neq \emptyset$ and $
    \AsgFam[2] \cmodels[][\exists\forall] \varphi_{2})
    \iffExpl{1.,sem.}
    \forall (\AsgFam[1], \AsgFam[2]) \in \parFun[]{\AsgFam}$ it holds $
(\AsgFam[1] \neq \emptyset$ and $\AsgFam[1] \notcmodels[][\forall\exists] \neg
\varphi_{1})$ \
    or $(\AsgFam[2] \neq \emptyset$ and $ \AsgFam[2]
\notcmodels[][\forall\exists] \neg \varphi_{2}) \iffExpl{sem.}
    \AsgFam \notcmodels[][\forall\exists]\: \neg \varphi_1 \vee \neg
    \varphi_2 \iffExpl{sem.}  \AsgFam \cmodels[][\exists\forall]\: \neg(\neg
    \varphi_1 \vee \neg \varphi_2)$;


  \item[\ref{lmm:dml(dis)})]
    $ \AsgFam \cmodels[][\forall\exists]\:
    \varphi_1 \vee \varphi_2 \iffExpl{sem.}
    \exists (\AsgFam[1], \AsgFam[2]) \in \parFun[]{\AsgFam}$ s.t.
    $(\AsgFam[1] \neq \emptyset$ implies $\AsgFam[1] \cmodels[][\forall\exists]
    \varphi_{1})$ \ and $(\AsgFam[2] \neq \emptyset$ implies $\AsgFam[2]
    \cmodels[][\forall\exists] \varphi_{2}) \iffExpl{1.,sem.}
    \exists (\AsgFam[1], \AsgFam[2])  \in \parFun[]{\AsgFam}$ s.t.
    $(\AsgFam[1] \neq \emptyset$ implies $\AsgFam[1]
    \notcmodels[][\exists\forall] \neg \varphi_{1})$ \ and $(\AsgFam[2] \neq
    \emptyset$ implies $ \AsgFam[2] \notcmodels[][\exists\forall] \neg \varphi_{2})
    \iffExpl{sem.}
    \AsgFam \notcmodels[][\exists\forall]\: \neg \varphi_1 \wedge \neg
    \varphi_2 \iffExpl{sem.}  \AsgFam \cmodels[][\forall\exists]\: \neg(\neg
    \varphi_1 \wedge \neg \varphi_2)$;


  \item[\ref{lmm:dml(exs)})]
  $ \AsgFam
  \cmodels[][\exists\forall]\: \exists \apElm \colon\! \spcElm \ldotp \psi \iffExpl{sem.}
    \extFun[\spcElm]{\AsgFam, \apElm}
  \cmodels[][\exists\forall]\: \psi
  \iffExpl{1.,sem.}
    \extFun[\spcElm]{\AsgFam, \apElm}
  \notcmodels[][\forall\exists]\: \neg \psi
  \iffExpl{sem.}
    \AsgFam \notcmodels[][\forall\exists]\: \forall
  \apElm \colon\! \spcElm \ldotp \neg \psi
  \iffExpl{sem.}
    \AsgFam \cmodels[][\exists\forall]\: \neg \forall \apElm
  \colon\! \spcElm \ldotp \neg \psi$;


  \item[\ref{lmm:dml(all)})]
  $\AsgFam
  \cmodels[][\forall\exists]\: \forall \apElm \colon\! \spcElm \ldotp \psi \iffExpl{sem.}
    \extFun[\spcElm]{\AsgFam, \apElm}
  \cmodels[][\forall\exists]\: \psi
  \iffExpl{1.,sem.}
    \extFun[\spcElm]{\AsgFam, \apElm}
  \notcmodels[][\exists\forall]\: \neg \psi
  \iffExpl{sem.}
    \AsgFam \notcmodels[][\exists\forall]\: \exists
  \apElm \colon\! \spcElm \ldotp \neg \psi
  \iffExpl{sem.}
  \AsgFam \cmodels[][\forall\exists]\: \neg \exists \apElm \colon\!
  \spcElm \ldotp \neg \psi$.
%
%
  \qedhere
  \end{enumerate}

\end{proof}

\recenv{PrpEvlMon}
\begin{proof}
  The proof proceeds by induction on the length of the quantification prefix
  \qntElm.

  If $\qntElm = \varepsilon$ (base case), then we have
  $\evlFun[\alpha](\AsgFam[1], \qntElm) = \AsgFam[1] \sqsubseteq \AsgFam[2] =
  \evlFun[\alpha](\AsgFam[2], \qntElm)$.

  If $\qntElm = \Qnt^{\spcElm} \apElm . \qntElm'$ (inductive step), then we
  distinguish two cases.
  \begin{itemize}
  \item If $\alpha$ and $\Qnt$ are coherent, then we have
    $\evlFun[\alpha](\AsgFam[1], \qntElm) =
    \evlFun[\alpha](\extFun[\spcElm]{\AsgFam[1], \apElm}, \qntElm')$ and
    $\evlFun[\alpha](\AsgFam[2], \qntElm) =
    \evlFun[\alpha](\extFun[\spcElm]{\AsgFam[2], \apElm}, \qntElm')$.
    By Proposition~\ref{prp:oprmon}, $\AsgFam[1] \sqsubseteq \AsgFam[2]$ implies
    $\extFun[\spcElm]{\AsgFam[1], \apElm} \sqsubseteq \extFun[\spcElm]{\AsgFam[2],
      \apElm}$ and, by inductive hypothesis,
    $\evlFun[\alpha](\extFun[\spcElm]{\AsgFam[1], \apElm}, \qntElm') \sqsubseteq
    \evlFun[\alpha](\extFun[\spcElm]{\AsgFam[2], \apElm}, \qntElm')$, hence the
    thesis.

  \item If $\alpha$ and $\Qnt$ are not coherent, then we have
    $\evlFun[\alpha](\AsgFam[1], \qntElm) =
    \evlFun[\alpha](\dual{\extFun[\spcElm]{\dual{\AsgFam[1]}, \apElm}}, \qntElm')$
    and $\evlFun[\alpha](\AsgFam[2], \qntElm) =
    \evlFun[\alpha](\dual{\extFun[\spcElm]{\dual{\AsgFam[2]}, \apElm}}, \qntElm')$.
    By Proposition~\ref{prp:oprmon}, $\AsgFam[1]
    \sqsubseteq \AsgFam[2]$ implies $\dual{\extFun[\spcElm]{\dual{\AsgFam[1]},
        \apElm}} \sqsubseteq \dual{\extFun[\spcElm]{\dual{\AsgFam[2]}, \apElm}}$, and, by
    inductive hypothesis, $\evlFun[\alpha](\dual{\extFun[\spcElm]{\dual{\AsgFam[1]},
        \apElm}}, \qntElm') \sqsubseteq
    \evlFun[\alpha](\dual{\extFun[\spcElm]{\dual{\AsgFam[2]}, \apElm}}, \qntElm')$,
    hence the thesis. \qedhere
  \end{itemize}
\end{proof}

\recenv{LmmPrfEvl}
\begin{proof}

  \begin{itemize}
  \item[(base case)] When $\qntElm = \varepsilon$ the claim is trivial.

  \item[(inductive step)]

  \begin{itemize}
  \item $\AsgFam \cmodels[][\exists\forall] \exists^{\spcElm} \apElm \ldotp
    \qntElm \ldotp \psi \Leftrightarrow \extFun[\spcElm]{\AsgFam, \apElm}
    \cmodels[][\exists\forall] \qntElm \ldotp \psi \Leftrightarrow
    \evlFun[\exists\forall](\extFun[\spcElm]{\AsgFam, \apElm}, \qntElm)
    \cmodels[][\exists\forall] \psi \Leftrightarrow \evlFun[\exists\forall](\AsgFam,
    \exists^{\spcElm} \apElm \ldotp \qntElm) \cmodels[][\exists\forall] \psi$.

  \item $\AsgFam \cmodels[][\forall\exists] \exists^{\spcElm} \apElm \ldotp
    \qntElm \ldotp \psi \Leftrightarrow \dual{\AsgFam} \cmodels[][\exists\forall]
    \exists^{\spcElm} \apElm \ldotp \qntElm \ldotp \psi \Leftrightarrow
    \extFun[\spcElm]{\dual{\AsgFam}, \apElm} \cmodels[][\exists\forall] \qntElm
    \ldotp \psi \Leftrightarrow \dual{\extFun[\spcElm]{\dual{\AsgFam}, \apElm}}
    \cmodels[][\forall\exists] \qntElm \ldotp \psi \Leftrightarrow
    \evlFun[\forall\exists](\dual{\extFun[\spcElm]{\dual{\AsgFam}, \apElm}}, \qntElm)
    \cmodels[][\forall\exists] \psi \Leftrightarrow
    \evlFun[\forall\exists](\AsgFam, \exists^{\spcElm} \apElm \ldotp \qntElm)
    \cmodels[][\forall\exists] \psi$.

  \item $\AsgFam \cmodels[][\exists\forall] \forall^{\spcElm} \apElm \ldotp
    \qntElm \ldotp \psi \Leftrightarrow \dual{\AsgFam} \cmodels[][\forall\exists]
    \forall^{\spcElm} \apElm \ldotp \qntElm \ldotp \psi \Leftrightarrow
    \extFun[\spcElm]{\dual{\AsgFam}, \apElm} \cmodels[][\forall\exists] \qntElm
    \ldotp \psi \Leftrightarrow \dual{\extFun[\spcElm]{\dual{\AsgFam}, \apElm}}
    \cmodels[][\exists\forall] \qntElm \ldotp \psi \Leftrightarrow
    \evlFun[\exists\forall](\dual{\extFun[\spcElm]{\dual{\AsgFam}, \apElm}}, \qntElm)
    \cmodels[][\exists\forall] \psi \Leftrightarrow \evlFun[\exists\forall](\AsgFam,
    \forall^{\spcElm} \apElm \ldotp \qntElm) \cmodels[][\exists\forall] \psi$.

  \item $\AsgFam \cmodels[][\forall\exists] \forall^{\spcElm} \apElm \ldotp
    \qntElm \ldotp \psi \Leftrightarrow \extFun[\spcElm]{\AsgFam, \apElm}
    \cmodels[][\forall\exists] \qntElm \ldotp \psi \Leftrightarrow
    \evlFun[\forall\exists](\extFun[\spcElm]{\AsgFam, \apElm}, \qntElm)
    \cmodels[][\forall\exists] \psi \Leftrightarrow \evlFun[\forall\exists](\AsgFam,
    \forall^{\spcElm} \apElm \ldotp \qntElm) \cmodels[][\forall\exists] \psi$.
    \qedhere
  \end{itemize}

  \end{itemize}
\end{proof}

We now introduce an alternative to $\evlFun$ that we use to prove theorems presented in this paper. To use this alternative, we prove propositions that have not been expressed in the main article.

\addtocounter{proposition}{1}
\defenv{proposition}[][Prp][Evl]
  {
  Let $\AsgFam \in \HAsgSet(\PSet)$ with $\PSet \!\subseteq\! \APSet$, $\spcElm
  \in \SpcSet$, $\apElm \in \APSet \!\setminus\! \PSet$, and $\Psi \subseteq
  \AsgSet(\PSet \cup \{ \apElm \})$.
  There exists $\WSet \in \evlFun[\alpha](\AsgFam, \Qnt^{\spcElm} \apElm)$ such
  that $\WSet \subseteq \Psi$ \iff the following conditions hold true:
  \begin{enumerate}
    \item\labelx{coh}
      there exist $\FFun \in \FncSet[\spcElm](\PSet)$ and $\XSet \in \AsgFam$
      such that $\extFun{\XSet, \FFun, \apElm} \allowbreak \subseteq \Psi$,
      whenever $\alpha$ and $\Qnt$ are coherent;
    \item\labelx{nch}
      for all $\FFun \in \FncSet[\spcElm](\PSet)$, there is $\XSet \in \AsgFam$
      such that $\extFun{\XSet, \FFun, \apElm} \allowbreak \subseteq \Psi$,
      whenever $\alpha$ and $\Qnt$ are not coherent.
  \end{enumerate}
  }
\begin{proof}

  If $\alpha$ and $\Qnt$ are coherent, then $ \evlFun[\alpha](\AsgFam,
  \Qnt^{\spcElm} \apElm) = \extFun[\spcElm]{\AsgFam, \apElm} = {\set{
\extFun{\XSet,
        \FFun, \apElm} }{ \XSet \in \AsgFam, \FFun \in
\FncSet[\spcElm](\ap{\AsgFam}) }}
  $, and we have: $\exists \WSet \in \evlFun[\alpha](\AsgFam, \Qnt^{\spcElm}
  \apElm) \ . \ \WSet \subseteq \Psi \iffExpl{}\exists \FFun \in
  \FncSet[\spcElm](\PSet) \ \exists \XSet \in \AsgFam \ . \ \extFun{\XSet, \FFun,
    \apElm} \subseteq \Psi$.

  If $\alpha$ and $\Qnt$ are not coherent, then we proceed as follows: $\exists
  \WSet \in \evlFun[\alpha](\AsgFam, \Qnt^{\spcElm} \apElm) \ . \ \WSet \subseteq
  \Psi \iffExpl{} \exists \WSet \in \dual{\extFun[\spcElm]{\dual{\AsgFam},
\apElm}}
  \ . \ \WSet \subseteq \Psi \iffExpl{} \exists \chcFun \in
  \ChcSet{\extFun[\spcElm]{\dual{\AsgFam}, \apElm}} \ . \ \img{\chcFun} =
  \set{\chcFun(\ZSet)}{\ZSet \in \extFun[\spcElm]{\dual{\AsgFam}, \apElm}}
  \subseteq \Psi \iffExpl{} \exists \chcFun \in
  \ChcSet{\extFun[\spcElm]{\dual{\AsgFam}, \apElm}} \ . \ \forall \ZSet \in
  \extFun[\spcElm]{\dual{\AsgFam}, \apElm} \ . \ \chcFun(\ZSet) \in \Psi
\iffExpl{}
  \forall \ZSet \in \extFun[\spcElm]{\dual{\AsgFam}, \apElm} \ . \ \exists
  \asgElm[\ZSet] \in \ZSet \ . \ \asgElm[\ZSet] \in \Psi \iffExpl{} \forall
\GFun
  \in \FncSet[\spcElm](\PSet) \ . \ \forall \YSet \in \dual{\AsgFam} \ . \
\exists
  \asgElm[\GFun,\YSet] \in \extFun{\YSet, \GFun, \apElm} \ . \
  \asgElm[\GFun,\YSet] \in \Psi \iffExpl{} \forall \GFun \in
  \FncSet[\spcElm](\PSet) \ . \ \forall \Lambda \in \ChcSet{\AsgFam} \ . \
\exists
  \asgElm[\GFun,\Lambda] \in \extFun{\img{\Lambda}, \GFun, \apElm} \ . \
  \asgElm[\GFun,\Lambda] \in \Psi \iffExpl{} \forall \GFun \in
  \FncSet[\spcElm](\PSet) \ . \ \forall \Lambda \in \ChcSet{\AsgFam} \ . \
\exists
  \asgElm[\GFun,\Lambda] \in \extFun{\set{\Lambda(\XSet)}{\XSet \in \AsgFam},
    \GFun, \apElm} \ . \ \asgElm[\GFun,\Lambda] \in \Psi \iffExpl{} \forall
\GFun
  \in \FncSet[\spcElm](\PSet) \ . \ \forall \Lambda \in \ChcSet{\AsgFam} \ . \
  \exists \XSet \in \AsgFam \ . \ \extFun{\Lambda(\XSet), \GFun, \apElm} \in
\Psi
  \iffExpl{} \forall \GFun \in \FncSet[\spcElm](\PSet) \ . \ \exists \XSet \in
  \AsgFam \ . \ \forall \asgElm \in \XSet \ . \ \extFun{\asgElm, \GFun, \apElm}
  \in \Psi \iffExpl{} \forall \GFun \in \FncSet[\spcElm](\PSet) \ . \ \exists
\XSet
  \in \AsgFam \ . \ \extFun{\XSet, \GFun, \apElm} \subseteq \Psi$.
\end{proof}

We now introduce $\nevlFun$. First, we define it only on $\Qnt^{\spcElm} \apElm$.
 $\nevlFun[\alpha](\AsgFam, \Qnt^{\spcElm} \apElm) \defeq
 \extFun[\spcElm]{\AsgFam,
 \apElm}$, if $\alpha$ and $\Qnt$ are coherent, and $\nevlFun[\alpha](\AsgFam,
 \Qnt^{\spcElm} \apElm) \defeq \set{ \extFun{\eth, \apElm} }{ \eth \in
 \FncSet[\spcElm](\ap{\AsgFam}) \to \AsgFam }$, otherwise, where $\extFun{\eth,
 \apElm} \defeq \bigcup_{\FFun \in \dom{\eth}} \extFun{\eth(\FFun), \FFun,
 \apElm}$.

 \defenv{proposition}[][Prp][NrmEvlBas]
   {
   If $\AsgFam[1] \!\equiv\! \AsgFam[2]$ then $\nevlFun[\alpha](\AsgFam[1],
 \Qnt^{\spcElm}
   \apElm) \equiv \evlFun[\alpha](\AsgFam[2], \Qnt^{\spcElm} \apElm)$, for all
   $\AsgFam[1], \AsgFam[2] \in \HAsgSet$, $\Qnt \in \{ \exists, \forall \}$, $\spcElm
   \in \SpcSet$, and $\apElm \in \APSet \setminus \ap{\AsgFam}$.
   }
\begin{proof}
  Assume $\alpha$ and $\Qnt$ to be coherent.
  Then, $\nevlFun[\alpha](\AsgFam[1], \Qnt^{\spcElm} \apElm) =
  \extFun[\spcElm]{\AsgFam[1], \apElm} \equiv \extFun[\spcElm]{\AsgFam[2], \apElm} =
  \evlFun[\alpha](\AsgFam[2], \Qnt^{\spcElm} \apElm)$.

  Assume, now, $\alpha$ and $\Qnt$ not to be coherent.
  Then, $\nevlFun[\alpha](\AsgFam[1], \Qnt^{\spcElm} \apElm) = \set{
    \bigcup_{\FFun \in \FncSet[\spcElm](\ap{\AsgFam[1]})} \extFun{\eth(\FFun), \FFun,
      \apElm} }{ \eth \in \FncSet[\spcElm](\ap{\AsgFam[1]}) \to \AsgFam[1] }$.

  In order to prove that $\nevlFun[\alpha](\AsgFam[1], \Qnt^{\spcElm} \apElm)
  \sqsubseteq \evlFun[\alpha](\AsgFam[2], \Qnt^{\spcElm} \apElm)$, we let $\Psi \in
  \nevlFun[\alpha](\AsgFam[1], \Qnt^{\spcElm} \apElm)$ and we show that there is $W
  \in \evlFun[\alpha](\AsgFam[2], \Qnt^{\spcElm} \apElm)$ with $W \subseteq \Psi$.
  By the definition of $\nevlFun[\alpha](\AsgFam[1], \Qnt^{\spcElm} \apElm)$, we
  have that $\Psi = \bigcup_{\FFun \in \FncSet[\spcElm](\ap{\AsgFam[1]})}
  \extFun{\eth(\FFun), \FFun, \apElm}$ for some $\eth \in
  \FncSet[\spcElm](\ap{\AsgFam[1]}) \to \AsgFam[1]$.
  This means that for every $\FFun \in \FncSet[\spcElm](\ap{\AsgFam[1]})$ there
  is $\XSet_{\FFun} \defeq \eth(\FFun) \in \AsgFam[1]$ such that
  $\extFun{\XSet_{\FFun}, \FFun, \apElm} \subseteq \Psi$.
  By Proposition~\ref{prp:evl}, there exists $\WSet' \in
  \evlFun[\alpha](\AsgFam[1], \Qnt^{\spcElm} \apElm)$ such that $\WSet' \subseteq
  \Psi$.
  Since, due to Proposition~\ref{prp:evlmon}, $\AsgFam[1] \equiv \AsgFam[2]$
  implies $\evlFun[\alpha](\AsgFam[1], \Qnt^{\spcElm} \apElm) \equiv
  \evlFun[\alpha](\AsgFam[2], \Qnt^{\spcElm} \apElm)$, we have that there is $\WSet
  \in \evlFun[\alpha](\AsgFam[2], \Qnt^{\spcElm} \apElm)$ such that $\WSet
  \subseteq \WSet' \subseteq \Psi$.

  In order to prove the converse, \ie, $\evlFun[\alpha](\AsgFam[2],
  \Qnt^{\spcElm} \apElm) \sqsubseteq \nevlFun[\alpha](\AsgFam[1], \Qnt^{\spcElm}
  \apElm)$, let $\Psi \in \evlFun[\alpha](\AsgFam[2], \Qnt^{\spcElm} \apElm)$.
  By instantiating $\WSet$ with $\Psi$ in Proposition~\ref{prp:evl}, we have
  that for all $\FFun \in \FncSet[\spcElm](\ap{\AsgFam[2]})$, there is
  $\XSet'_{\FFun} \in \AsgFam[2]$ such that $\extFun{\XSet'_{\FFun}, \FFun,
    \apElm} \allowbreak \subseteq \Psi$.
  By $\AsgFam[1] \equiv \AsgFam[2]$, we have that there is $\XSet_{\FFun} \in
  \AsgFam[1]$ such that $\XSet_{\FFun} \subseteq \XSet'_{\FFun}$, which, in turn,
  implies $\extFun{\XSet_{\FFun}, \FFun, \apElm} \allowbreak \subseteq
  \extFun{\XSet'_{\FFun}, \FFun, \apElm} \allowbreak \subseteq \Psi$.
  Now, define $\eth$ as $\eth(\FFun) \defeq \XSet_{\FFun} \in \AsgFam[1]$ for
  every $\FFun \in \FncSet[\spcElm](\ap{\AsgFam[1]}) =
  \FncSet[\spcElm](\ap{\AsgFam[2]})$.
  Clearly, both of the following hold: $\bigcup_{\FFun \in
    \FncSet[\spcElm](\ap{\AsgFam[1]})} \extFun{\eth(\FFun), \FFun, \apElm} \in
  \nevlFun[\alpha](\AsgFam[1], \Qnt^{\spcElm} \apElm)$ and $\bigcup_{\FFun \in
    \FncSet[\spcElm](\ap{\AsgFam[1]})} \extFun{\eth(\FFun), \FFun, \apElm} \subseteq
  \Psi$; hence the thesis.
\end{proof}

We extend the definition of $\nevlFun$ to any quantifier prefix.
 \begin{itemize}
   \item
     $\nevlFun[\alpha](\AsgFam, \epsilon) \defeq \AsgFam$;
   \item
     $\nevlFun[\alpha](\AsgFam, \Qnt^{\spcElm} \apElm \ldotp \qntElm) \defeq
     \nevlFun[\alpha](\nevlFun[\alpha](\AsgFam, \Qnt^{\spcElm} \apElm),
 \qntElm)$.
 \end{itemize}

 \[
   \nevlFun[\alpha](\qntElm) \defeq \nevlFun[\alpha](\{ \{ \emptyfun \} \},
   \qntElm)
 \]

 Functions $\evlFun$ and $\nevlFun$ preserve equivalence between
 hyperassignments, as stated in the next proposition.
 Thus, in what follows, we will use the two functions interchangeably, as they
 are only involved in functions and relations that are invariant with respect to
 equivalence between hyperassignments.

 \defenv{proposition}[][Prp][NrmEvlInd]
   {
   If $\AsgFam[1] \!\equiv\! \AsgFam[2]$ then $\nevlFun[\alpha](\AsgFam[1], \qntElm)
   \!\equiv\! \evlFun[\alpha](\AsgFam[2], \qntElm)$, for all $\AsgFam[1], \AsgFam[2] \in
   \HAsgSet$ and $\qntElm \in \QntSet$, with $\ap{\AsgFam} \cap \ap{\qntElm} =
   \emptyset$.
   }
\begin{proof}
  The proof proceeds by induction on the length of the quantification prefix
  \qntElm.

  (base case) If $\qntElm = \varepsilon$, then we have
  $\nevlFun[\alpha](\AsgFam[1], \qntElm) = \AsgFam[1] \equiv \AsgFam[2] = \evlFun[\alpha](\AsgFam[2],
  \qntElm)$.
%

  (inductive step) If $\qntElm = \Qnt^{\spcElm} \apElm . \qntElm'$, then
  $\nevlFun[\alpha](\AsgFam[1], \qntElm) = \nevlFun[\alpha](\nevlFun[\alpha](\AsgFam[1],
  \Qnt^{\spcElm} \apElm), \qntElm')$.
  The thesis follows by a straightforward application of
  Proposition~\ref{prp:nrmevlbas} and the inductive hypothesis.
\end{proof}


 Let $\PSet, \PSet' \subseteq \APSet$, $\asgElm \in \AsgSet(\PSet)$, and $\XSet
 \subseteq \AsgSet(\PSet)$.
 We define $(\asgElm \setminus \PSet') \defeq \asgElm \rst\, (\PSet \setminus
 \PSet')$ and $\XSet \setminus \PSet' \defeq \set{ \asgElm \setminus \PSet' }{
 \asgElm \in \XSet }$.
 Notice that $\XSet \subseteq \YSet$ implies $\XSet \setminus \{ \apElm \}
 \subseteq \YSet \setminus \{ \apElm \}$, for all $\XSet, \YSet \subseteq
 \AsgSet(\PSet)$, $\PSet \subseteq \APSet$, and $\apElm \in \PSet$, and
 $(\bigcup_{\XSet \in \AsgFam} \XSet) \setminus \{ \apElm \} = \bigcup_{\XSet \in
   \AsgFam} (\XSet \setminus \{ \apElm \})$ for all $\AsgFam \in
 \HAsgSet(\PSet)$,
 $\PSet \subseteq \APSet$, and $\apElm \in \PSet$.

 \defenv{proposition}[][Prp][NrmEvlRst]
   {
   $\YSet \setminus \ap{\qntElm} \subseteq \bigcup \AsgFam$, for all $\YSet \in
   \nevlFun[\alpha](\AsgFam, \qntElm)$, $\AsgFam \in \HAsgSet(\PSet)$, and
   $\qntElm \in \QntSet$, with $\PSet \subseteq \APSet$ and $\ap{\qntElm} \cap
   \PSet = \emptyset$.
 }

\begin{proof}
  The proof proceeds by induction on the length of the quantification prefix
  \qntElm.

  (base case) If $\qntElm = \varepsilon$, then we have
  $\nevlFun[\alpha](\AsgFam, \qntElm) = \AsgFam$, and the claim follows trivially.

  If $\qntElm = \Qnt^{\spcElm} \apElm$, let $\YSet \in \nevlFun[\alpha](\AsgFam,
  \qntElm)$.
  If $\alpha$ and $\Qnt$ are coherent, then $\nevlFun[\alpha](\AsgFam, \qntElm)
  = \extFun[\spcElm]{\AsgFam, \apElm}$.
  Thus, $\YSet = \extFun{\XSet, \FFun, \apElm}$ for some $\XSet \in \AsgFam$ and
  $\FFun \in \FncSet[\spcElm](\ap{\AsgFam})$.
  Trivially, $\YSet \setminus \set{p}{} = \XSet$, and we are done.
  If, on the other hand, $\alpha$ and $\Qnt$ are not coherent, then
  $\nevlFun[\alpha](\AsgFam, \qntElm) = \set{ \bigcup_{\FFun \in
      \FncSet[\spcElm](\ap{\AsgFam})} \extFun{\eth(\FFun), \FFun, \apElm} }{ \eth \in
    \FncSet[\spcElm](\ap{\AsgFam}) \to \AsgFam }$.
  Thus, $\YSet = \bigcup_{\FFun \in \FncSet[\spcElm](\ap{\AsgFam})}
  \extFun{\eth(\FFun), \FFun, \apElm} $ for some $\eth \in
  \FncSet[\spcElm](\ap{\AsgFam}) \to \AsgFam$.
  Clearly, $\YSet \setminus \set{\apElm}{} = \bigcup_{\FFun \in
    \FncSet[\spcElm](\ap{\AsgFam})} \eth(\FFun)$.
  Since $\eth(\FFun) \in \AsgFam$ for every $\FFun \in
  \FncSet[\spcElm](\ap{\AsgFam})$, it holds that $\bigcup_{\FFun \in
    \FncSet[\spcElm](\ap{\AsgFam})} \eth(\FFun) \subseteq \bigcup \AsgFam$, hence the
  thesis.

  (inductive step) If $\qntElm = \Qnt^{\spcElm} \apElm . \qntElm'$ (with
  $\qntElm' \neq \varepsilon$), then $\nevlFun[\alpha](\AsgFam, \qntElm) =
  \nevlFun[\alpha](\nevlFun[\alpha](\AsgFam, \Qnt^{\spcElm} \apElm), \qntElm')$.
  Let $\YSet \in \nevlFun[\alpha](\AsgFam, \qntElm)$.
  By inductive hypothesis, $\YSet \setminus \ap{\qntElm'} \subseteq
  \bigcup_{\XSet \in \nevlFun[\alpha](\AsgFam, \Qnt^{\spcElm} \apElm)} \XSet$.
  Since by inductive hypothesis it also holds that $\XSet \setminus \{ \apElm \}
  \subseteq \bigcup \AsgFam$ for every $\XSet \in \nevlFun[\alpha](\AsgFam,
  \Qnt^{\spcElm} \apElm)$, we have that $\bigcup_{\XSet \in
    \nevlFun[\alpha](\AsgFam, \Qnt^{\spcElm} \apElm)}(\XSet \setminus \{ \apElm \})
  \subseteq \bigcup \AsgFam$.
  Therefore, it holds $\YSet \setminus \ap{\qntElm} = (\YSet \setminus
  \ap{\qntElm'}) \setminus \{ \apElm \} \subseteq (\bigcup_{\XSet \in
    \nevlFun[\alpha](\AsgFam, \Qnt^{\spcElm} \apElm)} \XSet) \setminus \{ \apElm \} =
  \bigcup_{\XSet \in \nevlFun[\alpha](\AsgFam, \Qnt^{\spcElm} \apElm)} (\XSet
  \setminus \{ \apElm \}) \subseteq \bigcup \AsgFam$.
\end{proof}





\subsection{Proofs for Section~\ref{sec:canqnt}}
\label{app:canqnt}

A \emph{path} $\pthElm \!\in\! \PthSet \!\subseteq\! \PosSet[][\infty]$ is a
finite or infinite sequence of positions compatible with the move relation,
\ie, $((\pthElm)_{i}, (\pthElm)_{i + 1}) \in \MovRel$, for all $i \in
\numco{0}{\card{\pthElm} - 1}$; it is \emph{initial} if $\card{\pthElm} > 0$
and $\fst{\pthElm} = \iposElm$.
A \emph{history} for player $\alpha \in \{ \PlrSym, \OppSym \}$ is a finite
initial path $\hstElm \in \HstSet[\alpha] \subseteq \PthSet \cap (\PosSet[][*]
\cdot \PosSet[\alpha])$ terminating in an $\alpha$-position.
A \emph{play} $\playElm \in \PlaySet \subseteq \PthSet \cap \PosSet[][\omega]$
is just an infinite initial path.

A \emph{strategy} for player $\alpha \in \{ \PlrSym, \OppSym \}$ is a function
$\strElm[\alpha] \in \StrSet[\alpha] \subseteq \HstSet[\alpha] \to \PosSet$
mapping each $\alpha$-history $\hstElm \in \HstSet[\alpha]$ to a position
$\strElm[\alpha](\hstElm) \in \PosSet$ compatible with the move relation, \ie,
$(\lst{\hstElm}, \allowbreak \strElm[\alpha](\hstElm)) \in \MovRel$.
A path $\pthElm \in \PthSet$ is \emph{compatible} with a pair of strategies
$(\pstrElm, \ostrElm) \in \PStrSet \times \OStrSet$ if, for all $i \in
\numco{0}{\card{\pthElm} - 1}$, it holds that $(\pthElm)_{i + 1} =
\pstrElm((\pthElm)_{\leq i})$, if $(\pthElm)_{i} \in \PPosSet$, and
$(\pthElm)_{i + 1} = \ostrElm((\pthElm)_{\leq i})$, otherwise.
The \emph{play function} $\playFun \colon \PStrSet \times \OStrSet \to
\PlaySet$ returns, for each pair of strategies $(\pstrElm, \ostrElm) \in
\PStrSet \times \OStrSet$, the unique play $\playFun(\pstrElm, \ostrElm) \in
\PlaySet$ compatible with them.

The \emph{observation function} $\obsFun \colon \PthSet \to \ObsSet[][\infty]$
associates with each path $\pthElm \in \PthSet$ the ordered sequence $\wElm
\defeq \obsFun(\pthElm) \in \ObsSet[][\infty]$ of all observable positions
occurring in it; formally, there exists a monotone bijection $\fFun \colon
\numco{0}{\card{\wElm}} \!\to\! \set{\! j \!\in\! \numco{0}{\card{\pthElm}} }{
(\pthElm)_{j} \!\in\! \ObsSet \!}$ satisfying the equality $(\wElm)_{i} =
(\pthElm)_{\fFun(i)}$, for all $i \in \numco{0}{\card{\wElm}}$.

Eloise \emph{wins} the game if there exists a strategy $\pstrElm \in \PStrSet$
such that $\obsFun(\playFun(\pstrElm, \ostrElm)) \in \WinSet$, for all
$\ostrElm \in \OStrSet$.
Similarly, Abelard \emph{wins} the game if there exists a strategy $\ostrElm
\in \OStrSet$ such that $\obsFun(\playFun(\pstrElm, \ostrElm)) \in
\dual{\WinSet}$, for all $\pstrElm \in \PStrSet$.

\recenv{ThmQntGamI}
\begin{proof}
  Let $\Game[\qntElm][\Psi]$ be the game defined as prescribed in
  Construction~\ref{cns:qntgami}.
  Obviously, this is a Borelian game, due to the hypothesis on the property
  $\Psi$.

  Before continuing, observe that, due to the specific structure of the game,
  every history $\hstElm \cdot \posElm \in \HstSet[\alpha]$ is bijectively
  correlated with the sequence of positions $\obsFun(\hstElm) \cdot \posElm \in
  \ObsSet^{*} \cdot \PosSet[\alpha]$, for any player $\alpha \in \{ \PlrSym,
  \OppSym \}$.
  In other words, the functions $\shrFun[\alpha] \colon \HstSet[\alpha] \to
  \ObsSet^{*} \cdot \PosSet[\alpha]$ defined as $\shrFun[\alpha](\hstElm \cdot
  \posElm) \defeq \obsFun(\hstElm) \cdot \posElm$ are bijective.

  Thanks to this observation, it is thus immediate to show that, for each
  strategy $\pstrElm \in \PStrSet$, there is a unique function $\der{\pstrElm}
  \colon \ObsSet^{*} \cdot \PPosSet \to \PosSet$ and, \viceversa, for each
  function $\der{\pstrElm} \colon \ObsSet^{*} \cdot \PPosSet \to \PosSet$, there
  is a unique strategy $\pstrElm \in \PStrSet$ such that
  \[
    \der{\pstrElm}(\shrFun[\PlrSym](\hstElm)) = \pstrElm(\hstElm), \text{ for
    all histories } \hstElm \in \PHstSet.
  \]
  Similarly, for each strategy $\ostrElm \in \OStrSet$, there is a unique
  function $\der{\ostrElm} \colon \ObsSet^{*} \cdot \OPosSet \to \PosSet$ and,
  \viceversa, for each function $\der{\ostrElm} \colon \ObsSet^{*} \cdot
  \OPosSet \to \PosSet$, there is a unique strategy $\ostrElm \in \OStrSet$
  satisfying the equality
    \[
    \der{\ostrElm}(\shrFun[\OppSym](\hstElm)) = \ostrElm(\hstElm), \text{ for
    all histories } \hstElm \in \OHstSet.
  \]

  We can now proceed with the proof of the two properties.
  \begin{itemize}
    \item\textbf{[\ref{thm:qntgami(elo)}]}
      Since Eloise wins the game, she has a winning strategy, \ie, there is
      $\pstrElm \in \PStrSet$ such that $\obsFun(\playFun(\pstrElm, \ostrElm))
      \in \WinSet$, for all $\ostrElm \in \OStrSet$.
      We want to prove that there exists $\ESet \in
      \nevlFun[\exists\forall](\CSym[\forall\exists](\qntElm))$ such that
      $\ESet \subseteq \Psi$.

      First, recall that $\CSym[\forall\exists](\qntElm) = \forall^{\BSym}
      \vec{\apElm} \ldotp \exists^{\vec{\spcElm}} \vec{\qapElm}$, for some
      vectors of atomic propositions $\vec{\apElm}, \vec{\qapElm} \in
      \APSet[][*]$ and quantifier specifications $\vec{\spcElm} \in
      \SpcSet^{\card{\vec{\qapElm}}}$.
      Moreover, thanks to Propositions~\ref{prp:evl} and~\ref{prp:nrmevlind},
      the following claim can be proved.
      \begin{claim}
        $\ESet \subseteq \Psi$, for some $\ESet \in
        \nevlFun[\exists\forall](\CSym[\forall\exists](\qntElm))$, \iff there
        exists $\vec{\FFun} \in \FncSet[\vec{\spcElm}](\vec{\apElm})$ such that
        $\extFun{\asgElm, \vec{\FFun}, \vec{\qapElm}} \in \Psi$, for all
        $\asgElm \in \AsgSet(\vec{\apElm})$.
      \end{claim}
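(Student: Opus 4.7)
The plan is to prove the biconditional by successively unfolding $\nevlFun[\exists\forall]$ applied to the canonical prefix $\CSym[\forall\exists](\qntElm) = \forall^{\BSym}\vec{\apElm} \ldotp \exists^{\vec{\spcElm}}\vec{\qapElm}$, starting from the seed $\{\{\emptyfun\}\}$. Since the universal and existential blocks interact differently with the alternation flag $\exists\forall$, I would treat them in two phases, invoking the coherent/non-coherent case split in the definition of $\nevlFun$.

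In the first phase, I would show that $\nevlFun[\exists\forall](\{\{\emptyfun\}\}, \forall^{\BSym}\vec{\apElm})$ is $\equiv$-equivalent to the singleton $\{\AsgSet(\vec{\apElm})\}$. This proceeds by induction on $\card{\vec{\apElm}}$ using the non-coherent clause of $\nevlFun$ — since $\forall$ is not $\exists\forall$-coherent — together with the observation that every constant temporal valuation is realized by some behavioural functor in $\FncSet[\BSym](\cdot)$. Each step therefore sweeps out every possible value of the freshly quantified proposition, so the hyperassignment collapses, up to $\equiv$, to the single set of all possible assignments over the propositions quantified so far.

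In the second phase, starting from $\{\AsgSet(\vec{\apElm})\}$, I would unfold the existential block $\exists^{\vec{\spcElm}}\vec{\qapElm}$. Each $\exists$ is $\exists\forall$-coherent, so the coherent clause of $\nevlFun$ applies: at each step a set in the current hyperassignment is extended by a behavioural functor meeting the prescribed specification. Iterating through $\vec{\qapElm}$ shows that the members $\ESet$ of $\nevlFun[\exists\forall](\{\AsgSet(\vec{\apElm})\}, \exists^{\vec{\spcElm}}\vec{\qapElm})$ are precisely the sets of the form $\{\extFun{\asgElm, \vec{\FFun}, \vec{\qapElm}} : \asgElm \in \AsgSet(\vec{\apElm})\}$, indexed by tuples $\vec{\FFun}$ whose $i$-th component meets the corresponding $\spcElm[i]$. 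The inclusion $\ESet \subseteq \Psi$ is then immediately equivalent to the stated condition $\extFun{\asgElm, \vec{\FFun}, \vec{\qapElm}} \in \Psi$ for every $\asgElm \in \AsgSet(\vec{\apElm})$, closing the biconditional in both directions.

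The delicate step will be reconciling the tuples of step-functors $\FFun[i] \in \FncSet[\spcElm[i]](\vec{\apElm} \cup \{\qapElm[j] : j < i\})$ that arise naturally from the stepwise application of $\nevlFun$ with the joint object $\vec{\FFun} \in \FncSet[\vec{\spcElm}](\vec{\apElm})$ appearing in the statement: the vector specification $\vec{\spcElm}$ produced by $\CSym[\forall\exists]$ must faithfully encode the dependencies on the earlier existential choices. Packaging individual functors into a joint one and, conversely, slicing a given joint functor into components compatible with each stage, is routine given the definition of $\FncSet[\vec{\spcElm}](\cdot)$ and Proposition~\ref{prp:bhvfnc}, but requires careful bookkeeping through the induction.
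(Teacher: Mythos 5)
Your proposal is correct and follows the route the paper itself gestures at: the paper leaves this claim unproved, deferring to Propositions~\ref{prp:evl} and~\ref{prp:nrmevlind}, and your quantifier-by-quantifier unfolding of $\nevlFun[\exists\forall]$ --- collapsing the universal block to the singleton $\{ \AsgSet(\vec{\apElm}\,) \}$ via the non-coherent clause (constant functors are behavioral, so every assignment is swept out) and then reading off the members of the existential block via the coherent clause --- is exactly the argument those propositions encode. The functor repackaging you flag at the end is indeed the only genuinely delicate point, and Propositions~\ref{prp:asgind} and~\ref{prp:bhvfnc} (to propagate $\approx_{\spcElm}^{k}$-equivalence of the intermediate extended assignments through each stage) are the right tools for it.
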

      Due to the above characterization of the existence of a set $\ESet \in
      \nevlFun[\exists\forall](\CSym[\forall\exists](\qntElm))$ such that $\ESet
      \subseteq \Psi$, the thesis can be proved by defining a suitable vector of
      functors $\vec{\FFun} \in \FncSet[\vec{\spcElm}](\vec{\apElm})$.

      Consider an arbitrary assignment $\asgElm \in \AsgSet(\vec{\apElm})$ and
      define the function $\der{\ostrElm}[][\asgElm] \colon \ObsSet^{*} \cdot
      \OPosSet \to \PosSet$ as follows, for all finite sequences of observable
      positions $\wrdElm \in \ObsSet^{*}$ and Abelard's positions $\valElm \in
      \OPosSet$:
      \[
        \der{\ostrElm}[][\asgElm](\wrdElm \cdot \valElm) \defeq
        \begin{cases}
          \emptyfun,
            & \text{if } \valElm \in \ObsSet; \\
          {\valElm}[\xapElm \mapsto \asgElm(\xapElm)(\card{\wrdElm})],
            & \text{otherwise};
        \end{cases}
      \]
      where $\xapElm \in \vec{\apElm}$ is the atomic proposition at position
      $\#(\valElm)$ in the prefix $\qntElm$, \ie, $(\qntElm)_{\#(\valElm)} =
      \forall^{\BSym} \xapElm$.
      Due to the bijective correspondence previously described, there is a
      unique strategy $\ostrElm[][\asgElm] \!\in\! \OStrSet$ such that
      $\ostrElm[][\asgElm](\hstElm) =
      \der{\ostrElm}[][\asgElm](\shrFun[\OppSym](\hstElm))$, for all histories
      $\hstElm \in \OHstSet$.
      Obviously, the induced play $\playElm[][\asgElm] \defeq \playFun(\pstrElm,
      \ostrElm[][\asgElm])$ is won by Eloise, \ie, $\wrdElm[][\asgElm] \defeq
      \obsFun(\playElm[][\asgElm]) \in \WinSet$.

      Thanks to all the infinite sequences $\wrdElm[][\asgElm]$, one for each
      assignment $\asgElm \in \AsgSet(\vec{\apElm})$, we can define every
      component $(\vec{\FFun})_{i}$ of the vector of functors $\vec{\FFun} \in
      (\FncSet(\vec{\apElm}))^{\card{\vec{\qapElm}}}$ as follows, for all
      instants of time $t \in \SetN$, where $i \in
      \numco{0}{\card{\vec{\qapElm}}}$:
      \[
        (\vec{\FFun})_{i}(\asgElm)(t) \defeq\,
        (\wrdElm[][\asgElm])_{t}((\vec{\qapElm})_{i}).
      \]
      It is not too hard to show that this functor complies with the vector of
      quantifier specifications $\vec{\spcElm}$.
      \begin{claim}
        $\vec{\FFun} \in \FncSet[\vec{\spcElm}](\vec{\apElm})$.
      \end{claim}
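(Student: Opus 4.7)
The plan is to fix an index $i \in \numco{0}{\card{\vec{\qapElm}}}$ together with the unique block index $i'$ such that $(\vec{\qapElm})_i$ lies within $\vec{\qapElm}[i']$ in the decomposition of $\CSym[\forall\exists](\qntElm)$, and then verify directly the two halves of the specification $(\vec{\spcElm}[i'])_j = \BSym \cup \sdep{\vec{\apElm}[i'] \cdots \vec{\apElm}[k]}$: that $(\vec{\FFun})_i$ is behavioral with respect to every $p \in \vec{\apElm}$ and strongly behavioral with respect to every $p \in \vec{\apElm}[i'] \cup \cdots \cup \vec{\apElm}[k]$. In each case, the task reduces to showing that two assignments $\asgElm, \asgElm' \in \AsgSet(\vec{\apElm})$ satisfying the relevant indistinguishability at time $t$ induce plays whose histories coincide up to the position at which Eloise chooses a value for $(\vec{\qapElm})_i$ during round $t$; the determinism of the fixed winning strategy $\pstrElm$ then yields the equality $(\vec{\FFun})_i(\asgElm)(t) = (\vec{\FFun})_i(\asgElm')(t)$.

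The central observation is an immediate consequence of the definition of $\ostrElm[][\asgElm]$: at every Abelard sub-position within round $s$, this strategy outputs exactly $\asgElm(\yapElm)(s)$, where $\yapElm$ is the universally quantified proposition treated at that sub-position of $\qntElm$. A routine induction on intra-round sub-position indices then shows that the value $(\vec{\FFun})_i(\asgElm)(t)$ depends on $\asgElm$ only through the values $\asgElm(\yapElm)(s)$ with $\yapElm \in \vec{\apElm}$ and $s < t$ and, additionally, the values $\asgElm(\yapElm)(t)$ with $\yapElm \in \vec{\apElm}[0] \cup \cdots \cup \vec{\apElm}[i'-1]$, since the intra-round order of $\qntElm$ places $\vec{\qapElm}[i']$ strictly before the blocks $\vec{\apElm}[i'], \ldots, \vec{\apElm}[k]$.

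From this dependency profile, both conditions follow directly. For behavioralness with respect to an arbitrary $p \in \vec{\apElm}$, if $\asgElm \approx_{p}^{> t} \asgElm'$ then every value that the profile consults agrees between the two assignments, since the only propositional discrepancy occurs on $p$ at times strictly greater than $t$. For strong behavioralness with respect to $p \in \vec{\apElm}[i'] \cup \cdots \cup \vec{\apElm}[k]$, if $\asgElm \approx_{p}^{\geq t} \asgElm'$ the discrepancy on $p$ may now start already at $t$, but $p$ is absent from the second component of the profile, so again only coinciding values are consulted. In both cases the histories up to Eloise's choice coincide and the required equality follows.

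The main obstacle is essentially bookkeeping: one needs to track the block structure of $\qntElm$ and the intra-round ordering carefully, so that the Abelard sub-positions preceding Eloise's choice for $(\vec{\qapElm})_i$ at round $t$ are correctly enumerated. Once this order is made explicit, the strongly-behavioral restriction exactly absorbs the one-round offset between the blocks $\vec{\apElm}[i'], \ldots, \vec{\apElm}[k]$ and the response of $(\vec{\FFun})_i$, so no further game-theoretic argument beyond the determinism of $\pstrElm$ is required.
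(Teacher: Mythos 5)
Your argument is correct and is precisely the direct verification the paper intends here (the paper omits the proof of this claim, asserting only that it is ``not too hard to show''): the definition of $\ostrElm[][\asgElm]$ forces Abelard's moves to reproduce $\asgElm$ round by round, so the history preceding Eloise's choice of $(\vec{\qapElm})_{i}$ at round $t$ is a function of $\asgElm$ restricted to times $<t$ together with the round-$t$ values of the universal blocks that precede $(\vec{\qapElm})_{i}$ in $\qntElm$, and both the $\approx^{>t}$ and $\approx^{\geq t}$ hypotheses guarantee agreement on exactly that data. The one cosmetic point is that the induction establishing this dependency profile must run over the entire history --- all completed rounds $0,\dots,t-1$ plus the partial round $t$ --- rather than only over intra-round sub-positions, since Eloise's responses recorded in the observable positions of earlier rounds are themselves recursively determined by $\pstrElm$ from Abelard's values there; with that understanding the argument goes through exactly as you describe.
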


      At this point, for all assignments $\asgElm \in \AsgSet(\vec{\apElm})$,
      let $\asgElm[\vec{\FFun}] \defeq \allowbreak \extFun{\asgElm, \vec{\FFun},
      \vec{\qapElm}}$.
      We can easily argue that $\asgElm[\vec{\FFun}] \in \Psi$.
      Indeed, by construction of the strategy $\ostrElm[][\asgElm]$ and the
      vector of functors $\vec{\FFun}$, it holds that
      $\asgElm[\vec{\FFun}](\xapElm)(t) = (\wrdElm[][\asgElm])_{t}(\xapElm)$,
      for all instants of time $t \in \SetN$ and atomic propositions $\xapElm
      \in \vec{\apElm} \cdot \vec{\qapElm}$.
      Hence, $\wrdFun(\asgElm[\vec{\FFun}]) = \wrdElm[][\asgElm]$, which
      implies $\asgElm[\vec{\FFun}] \in \Psi$, since $\wrdElm[][\asgElm] \in
      \WinSet$.

    \item\textbf{[\ref{thm:qntgami(abe)}]}
      Since Abelard wins the game, he has a winning strategy, \ie, there is
      $\ostrElm \in \OStrSet$ such that $\obsFun(\playFun(\pstrElm, \ostrElm))
      \not\in \WinSet$, for all $\pstrElm \in \PStrSet$.
      We want to prove that, for all $\ESet \in
      \nevlFun[\exists\forall](\CSym[\exists\forall](\qntElm))$, it holds that
      $\ESet \not\subseteq \Psi$.

      First, recall that $\CSym[\exists\forall](\qntElm) = \exists^{\BSym}
      \vec{\qapElm} \ldotp \forall^{\vec{\spcElm}} \vec{\apElm}$, for some
      vectors of atomic propositions $\vec{\apElm}, \vec{\qapElm} \in
      \APSet[][*]$ and quantifier specifications $\vec{\spcElm} \in
      \SpcSet^{\card{\vec{\apElm}}}$.
      Moreover, thanks to Propositions~\ref{prp:evl} and~\ref{prp:nrmevlind},
      the following claim can be proved.
      \begin{claim}
        $\ESet \not\subseteq \Psi$, for all $\ESet \in
        \nevlFun[\exists\forall](\CSym[\exists\forall](\qntElm))$, \iff there
        exists $\vec{\GFun} \in \FncSet[\vec{\spcElm}](\vec{\qapElm})$ such that
        $\extFun{\asgElm, \vec{\GFun}, \vec{\apElm}} \not\in \Psi$, for all
        $\asgElm \in \AsgSet(\vec{\qapElm})$.
      \end{claim}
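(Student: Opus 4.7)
The plan is to prove the claim by an argument dual to the one used for the first (Eloise-side) claim in the same proof, relying on the iterative use of Proposition~\ref{prp:evl} combined with Proposition~\ref{prp:nrmevlind}. By the latter, $\nevlFun[\QEA](\CSym[\QEA](\qntElm)) \equiv \evlFun[\QEA](\CSym[\QEA](\qntElm))$, and the condition ``$\ESet \not\subseteq \Psi$ for all $\ESet$'' is invariant under $\sqsubseteq$-equivalence of hyperassignments (using Proposition~\ref{prp:evlmon}-style reasoning), so I may freely work with $\evlFun$ throughout. I shall prove the equivalent contrapositive statement: \emph{there exists} $\ESet \in \evlFun[\QEA](\CSym[\QEA](\qntElm))$ with $\ESet \subseteq \Psi$ iff \emph{for all} $\vec{\GFun} \in \FncSet[\vec{\spcElm}](\vec{\qapElm})$ there exists $\asgElm \in \AsgSet(\vec{\qapElm})$ with $\extFun{\asgElm, \vec{\GFun}, \vec{\apElm}} \in \Psi$; negating both sides then yields the claim.

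The proof unfolds the canonical prefix $\CSym[\QEA](\qntElm) = \exists^{\BSym} \vec{\qapElm} \ldotp \forall^{\vec{\spcElm}} \vec{\apElm}$ in two stages corresponding to its two quantifier blocks. First, I let $\AsgFam_{1} \defeq \evlFun[\QEA](\{\{\emptyfun\}\}, \exists^{\BSym} \vec{\qapElm})$ and observe, by a straightforward induction on $|\vec{\qapElm}|$ using the \emph{coherent} case of Proposition~\ref{prp:evl} at each step, that $\AsgFam_{1}$ is $\equiv$-equivalent to the hyperassignment consisting exactly of all singletons $\{\asgElm\}$ with $\asgElm \in \AsgSet(\vec{\qapElm})$. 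Intuitively, each behavioral existential extension with a fresh proposition coherent with $\QEA$ just extends each currently-available singleton set by every possible temporal valuation of the new proposition.

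Second, I apply the \emph{non-coherent} case of Proposition~\ref{prp:evl} iteratively across the universal block $\forall^{\vec{\spcElm}} \vec{\apElm}$. A single application yields that $\exists \ESet \subseteq \Psi$ in the evaluation of the outermost $\forall^{(\vec{\spcElm})_{0}} (\vec{\apElm})_{0}$ is equivalent to $\forall F_{0}\!:\! \exists \XSet_{1} \in \AsgFam_{1}\!:\! \extFun{\XSet_{1}, F_{0}, (\vec{\apElm})_{0}} \subseteq \Psi$; repeating this for the remaining universal propositions produces a nested alternation $\forall F_{0} \cdots F_{k}\!:\! \exists \XSet_{1} \in \AsgFam_{1}\!:\! \extFun{\XSet_{1}, (F_{0},\ldots,F_{k}), \vec{\apElm}} \subseteq \Psi$. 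Here the single innermost existential $\exists \XSet_{1}$ (rather than one per level) is legitimate because, at each step, applying Proposition~\ref{prp:evl} to the preimage set $\Psi[F_{i}, (\vec{\apElm})_{i}] \defeq \set{\asgElm}{\extFun{\asgElm, F_{i}, (\vec{\apElm})_{i}} \in \Psi}$ regenerates the same ``$\exists \XSet$'' and we carry it through to $\AsgFam_{1}$. Since $\AsgFam_{1}$ consists of singletons, ``$\exists \XSet_{1} \in \AsgFam_{1}\!:\! \XSet_{1} \subseteq \Psi'$'' collapses to ``$\exists \asgElm \in \AsgSet(\vec{\qapElm})\!:\! \asgElm \in \Psi'$'', and the combined vector of functors $(F_{0},\ldots,F_{k})$ is precisely an element $\vec{\GFun}$ of $\FncSet[\vec{\spcElm}](\vec{\qapElm})$ in the iterative sense already adopted in the proof of the first claim.

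The main obstacle is the careful bookkeeping in the second stage: justifying that the nested alternation of functor quantifiers collapses faithfully into the flat ``$\forall \vec{\GFun} \in \FncSet[\vec{\spcElm}](\vec{\qapElm})$'' prefix, while keeping track of the growing domains $\vec{\qapElm} \cup \{(\vec{\apElm})_{0}, \ldots, (\vec{\apElm})_{i-1}\}$ of the successive functors $F_{i}$ and matching them against the intended specifications $(\vec{\spcElm})_{i}$ built by $\CSym[\QEA](\cdot)$. This is the exact dual of the packaging performed in the proof of the first claim, so the same notational convention for $\FncSet[\vec{\spcElm}](\vec{\qapElm})$ applies verbatim, and once the collapse is justified the equivalence follows directly.
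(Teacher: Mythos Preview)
Your proposal is correct and follows essentially the same approach the paper indicates: the paper's own ``proof'' of this claim is merely the sentence ``thanks to Propositions~\ref{prp:evl} and~\ref{prp:nrmevlind}, the following claim can be proved,'' and your plan unpacks exactly those two tools in the natural way (contrapositive, existential block collapses to singletons, then iterated non-coherent applications of Proposition~\ref{prp:evl}). One small quibble: in your first stage you invoke the \emph{coherent} case of Proposition~\ref{prp:evl} to argue that $\AsgFam_{1}$ consists of singletons, but that proposition characterises ``$\exists\WSet\subseteq\Psi$'' rather than the hyperassignment itself; the singleton structure of $\AsgFam_{1}$ follows more directly from the definition of $\extFun[\BSym]{\cdot,\cdot}$ applied iteratively to $\{\{\emptyfun\}\}$, though the conclusion you draw is correct either way.
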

      Due to the above characterization of non-existence of a set $\ESet \in
      \nevlFun[\exists\forall](\CSym[\exists\forall](\qntElm))$ such that $\ESet
      \subseteq \Psi$, the thesis can be proved by defining a suitable vector of
      functors $\vec{\GFun} \in \FncSet[\vec{\spcElm}](\vec{\qapElm})$.

      Consider an arbitrary assignment $\asgElm \in \AsgSet(\vec{\qapElm})$ and
      define the function $\der{\pstrElm}[][\asgElm] \colon \ObsSet^{*} \cdot
      \PPosSet \to \PosSet$ as follows, for all finite sequences of observable
      positions $\wrdElm \in \ObsSet^{*}$ and Eloise's positions $\valElm \in
      \PPosSet$:
      \[
        \der{\pstrElm}[][\asgElm](\wrdElm \cdot \valElm) \defeq
        {\valElm}[\xapElm \mapsto \asgElm(\xapElm)(\card{\wrdElm})],
      \]
      where $\xapElm \in \vec{\qapElm}$ is the atomic proposition at position
      $\#(\valElm)$ in the prefix $\qntElm$, \ie, $(\qntElm)_{\#(\valElm)} =
      \exists^{\BSym} \xapElm$.
      Due to the bijective correspondence previously described, there is a
      unique strategy $\pstrElm[][\asgElm] \!\in\! \PStrSet$ such that
      $\pstrElm[][\asgElm](\hstElm) =
      \der{\pstrElm}[][\asgElm](\shrFun[\PlrSym](\hstElm))$, for all histories
      $\hstElm \in \PHstSet$.
      Obviously, the induced play $\playElm[][\asgElm] \defeq
      \playFun(\pstrElm[][\asgElm], \ostrElm)$ is won by Abelard, \ie,
      $\wrdElm[][\asgElm] \defeq \obsFun(\playElm[][\asgElm]) \not\in \WinSet$.

      Thanks to all the infinite sequences $\wrdElm[][\asgElm]$, one for each
      assignment $\asgElm \in \AsgSet(\vec{\qapElm})$, we can define every
      component $(\vec{\GFun})_{i}$ of the vector of functors $\vec{\GFun} \in
      (\FncSet(\vec{\qapElm}))^{\card{\vec{\apElm}}}$ as follows, for all
      instants of time $t \in \SetN$, where $i \in
      \numco{0}{\card{\vec{\apElm}}}$:
      \[
        (\vec{\GFun})_{i}(\asgElm)(t) \defeq\,
        (\wrdElm[][\asgElm])_{t}((\vec{\apElm})_{i}).
      \]
      It is not too hard to show that this functor complies with the vector of
      quantifier specifications $\vec{\spcElm}$.
      \begin{claim}
        $\vec{\GFun} \in \FncSet[\vec{\spcElm}](\vec{\qapElm})$.
      \end{claim}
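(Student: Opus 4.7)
My plan is to verify directly, for every index $i \in \numco{0}{\card{\vec{\apElm}}}$, that the functor $(\vec{\GFun})_{i}$ meets both the behavioralness and strong-behavioralness conditions imposed by the specification $(\vec{\spcElm})_{i}$. First I would unpack the specification. Recall that the $\QEA$-canonization of $\qntElm$ has the block form $(\exists^{\BSym} \vec{\qapElm}[l])_{l=0}^{k} \ldotp (\forall^{\vec{\spcElm}[l]} \vec{\apElm}[l])_{l=1}^{k+1}$, with $(\vec{\spcElm}[l])_{j} \defeq \BSym \cup \sdep{\vec{\qapElm}[l] \cup \cdots \cup \vec{\qapElm}[k]}$. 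So, letting $l \in \numcc{1}{k+1}$ be the unique block-index with $(\vec{\apElm})_{i} \in \vec{\apElm}[l]$, the requirement on $(\vec{\GFun})_{i}$ is that it be behavioral w.r.t. every $\qapElm \in \vec{\qapElm}$ and strongly behavioral w.r.t. every $\qapElm \in \vec{\qapElm}[l] \cup \cdots \cup \vec{\qapElm}[k]$.

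The key observation is that Eloise's mimicking strategy $\pstrElm[][\asgElm]$ encodes $\asgElm$ via her round-$t$ moves: whenever she is called to play a proposition $\qapElm \in \vec{\qapElm}$ during round $t$, she picks the Boolean $\asgElm(\qapElm)(t)$. Consequently, if two assignments $\asgElm[1], \asgElm[2] \in \AsgSet(\vec{\qapElm})$ agree on every proposition that Eloise plays during all moves preceding Abelard's choice of $(\vec{\apElm})_{i}$ at round $t$, then $\pstrElm[][\asgElm[1]]$ and $\pstrElm[][\asgElm[2]]$ produce exactly the same history up to that moment. Since Abelard's strategy $\ostrElm$ is fixed, his response at that history is identical in the two plays; therefore $(\wrdElm[][\asgElm[1]])_{t}((\vec{\apElm})_{i}) = (\wrdElm[][\asgElm[2]])_{t}((\vec{\apElm})_{i})$, which by the very definition of $\vec{\GFun}$ yields $(\vec{\GFun})_{i}(\asgElm[1])(t) = (\vec{\GFun})_{i}(\asgElm[2])(t)$.

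With this observation in hand, I would dispatch the two cases. For behavioralness w.r.t. $\qapElm \in \vec{\qapElm}$: if $\asgElm[1] \approx_{\qapElm}^{> t} \asgElm[2]$, the two assignments coincide on every proposition of $\vec{\qapElm}$ at all times $\leq t$, so Eloise's moves agree throughout rounds $0, 1, \ldots, t$ and the key observation applies. For strong behavioralness w.r.t. $\qapElm \in \vec{\qapElm}[l'] \cup \cdots \cup \vec{\qapElm}[k]$ with $l' \geq l$: if $\asgElm[1] \approx_{\qapElm}^{\geq t} \asgElm[2]$, the assignments coincide on every proposition at all times $< t$ and on every proposition other than $\qapElm$ at time $t$. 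Hence rounds $0$ through $t-1$ are identical in the two plays, and at round $t$ Eloise's moves preceding $(\vec{\apElm})_{i}$ concern exactly the blocks $\vec{\qapElm}[0], \ldots, \vec{\qapElm}[l-1]$; none of these contains $\qapElm$ (because $l' \geq l$ and the blocks are disjoint), so Eloise's partial round-$t$ moves before $(\vec{\apElm})_{i}$ coincide, and the key observation again applies.

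The main obstacle is keeping meticulous track of the intra-round ordering of moves: in the strong-behavioralness case, the inequality $l' \geq l$ is precisely what guarantees that Eloise has not yet played $\qapElm$ at round $t$ at the instant Abelard commits to his value for $(\vec{\apElm})_{i}$. This is exactly why the $\QEA$-canonization injects $\sdep{\vec{\qapElm}[l] \cup \cdots \cup \vec{\qapElm}[k]}$, rather than anything involving the earlier blocks $\vec{\qapElm}[0], \ldots, \vec{\qapElm}[l-1]$, into the $l$-th universal specification, and is where the whole canonization machinery directly pays off.
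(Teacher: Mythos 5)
Your argument is correct and is exactly the reasoning the paper intends: the paper leaves this claim unproved, remarking only that the specification is met ``thanks to the alternation of the players prescribed by $\qntElm$ in each round,'' and your case analysis --- agreement of the two plays up to Abelard's move for $(\vec{\apElm})_{i}$ at round $t$, with the intra-round ordering ensuring that the blocks $\vec{\qapElm}[l],\ldots,\vec{\qapElm}[k]$ have not yet been played --- is precisely the fleshed-out version of that remark. The only implicit step is the routine induction along the play showing the two histories coincide move by move, which is standard and does not constitute a gap.
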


      At this point, for all assignments $\asgElm \in \AsgSet(\vec{\qapElm})$,
      let $\asgElm[\vec{\GFun}] \defeq \allowbreak \extFun{\asgElm, \vec{\GFun},
      \vec{\apElm}}$.
      We can easily argue that $\asgElm[\vec{\GFun}] \not\in \Psi$.
      Indeed, by construction of the strategy $\pstrElm[][\asgElm]$ and the
      vector of functors $\vec{\GFun}$, it holds that
      $\asgElm[\vec{\GFun}](\xapElm)(t) = (\wrdElm[][\asgElm])_{t}(\xapElm)$,
      for all instants of time $t \in \SetN$ and atomic propositions $\xapElm
      \in \vec{\qapElm} \cdot \vec{\apElm}$.
      Hence, $\wrdFun(\asgElm[\vec{\GFun}]) = \wrdElm[][\asgElm]$, which
      implies $\asgElm[\vec{\GFun}] \not\in \Psi$, since $\wrdElm[][\asgElm]
      \not\in \WinSet$.
      \qedhere
  \end{itemize}
\end{proof}

$\SpcSet[\BSym]$ is the set of behavioral quantifier specifications, \ie,
quantifier specifications of the form $\BSym \cup \sdep{\PSet[\SSym]}$ for some
$\PSet[\SSym] \subseteq \APSet$.

\defenv{proposition}[][Prp][QntSwpBas]
  {
  $\evlFun[\QEA](\AsgFam, \forall^{\BSym} \apElm \ldotp\!
  \exists^{\spcElm \cup \sdep{\apElm}} \qapElm)  \!\sqsubseteq\!
  \evlFun[\QEA](\AsgFam, \exists^{\spcElm} \qapElm \ldotp\!
  \allowbreak \forall^{\BSym} \apElm)$ and $\evlFun[\QAE](\AsgFam,
  \exists^{\BSym} \apElm \ldotp\!
  \forall^{\spcElm \cup \sdep{\apElm}} \qapElm) \!\sqsubseteq\!
  \evlFun[\QAE](\AsgFam, \forall^{\spcElm} \qapElm \ldotp\!
  \allowbreak \exists^{\BSym} \apElm)$, for all $\AsgFam \!\in\!
  \HAsgSet(\PSet)$ with $\PSet \!\subseteq\! \APSet$, $\apElm, \qapElm \in
  \APSet \setminus \PSet$, and $\spcElm \in \SpcSet[\BSym]$.
  }
\begin{proof}
  Due to the specific definition of the normal evaluation function
  $\nevlFun[\exists\forall](\AsgFam, \qntElm)$, and by exploiting
  Propositions~\ref{prp:evl} and~\ref{prp:nrmevlind}, the following claim can
  be proved.
  \begin{claim}
    $\nevlFun[\exists\forall](\AsgFam, \forall^{\BSym} \apElm \ldotp\!
    \exists^{\spcElm \cup \sdep{\apElm}} \qapElm)  \!\sqsubseteq\!
    \nevlFun[\exists\forall](\AsgFam, \exists^{\spcElm} \qapElm \ldotp\!
    \forall^{\BSym} \apElm)$ \iff, for all $\JFun \!\in\! \FncSet[\spcElm \cup
    \sdep{\apElm}](\ap{\AsgFam} \cup \{ \apElm \})$, $\eth \!\in\!
    \FncSet[\BSym](\ap{\AsgFam}) \!\to\! \AsgFam$, and $\GFun \!\in\!
    \FncSet[\BSym](\ap{\AsgFam} \cup \{ \qapElm \})$, there exists $\FFun
    \!\in\! \FncSet[\spcElm](\ap{\AsgFam})$ and $\XSet \!\in\! \AsgFam$ such that
    $\extFun{\extFun{\XSet, \FFun, \qapElm}, \GFun, \apElm} \subseteq
    \extFun{\extFun{\eth, \apElm}, \JFun, \qapElm}$.
  \end{claim}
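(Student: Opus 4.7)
The plan is to unwind both inclusions to a combinatorial statement about functors via Propositions~\ref{prp:evl}, \ref{prp:nrmevlbas}, and \ref{prp:nrmevlind}, and then construct explicit witnesses by a time-indexed recursion. We focus on the first inclusion, namely $\evlFun[\QEA](\AsgFam, \forall^{\BSym} \apElm \ldotp \exists^{\spcElm \cup \sdep{\apElm}} \qapElm)  \sqsubseteq \evlFun[\QEA](\AsgFam, \exists^{\spcElm} \qapElm \ldotp \forall^{\BSym} \apElm)$; the second one is entirely analogous, and in fact follows by applying Theorem~\ref{thm:dlt} (double dualization) together with the fact that $\alpha$ and $\Qnt$ play symmetric roles in the definition of $\evlFun$ once one swaps which quantifier is $\alpha$-coherent.

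First, I would replace $\evlFun$ with $\nevlFun$ throughout (legitimate by Proposition~\ref{prp:nrmevlind}) and, exactly as in the Claim stated at the end of the excerpt, reduce the inclusion to the following: for every $\eth \in \FncSet[\BSym](\ap{\AsgFam}) \to \AsgFam$, every $\JFun \in \FncSet[\spcElm \cup \sdep{\apElm}](\ap{\AsgFam} \cup \{\apElm\})$, and every $\GFun \in \FncSet[\BSym](\ap{\AsgFam} \cup \{\qapElm\})$, there exist $\FFun \in \FncSet[\spcElm](\ap{\AsgFam})$ and $\XSet \in \AsgFam$ with $\extFun{\extFun{\XSet, \FFun, \qapElm}, \GFun, \apElm} \subseteq \extFun{\extFun{\eth, \apElm}, \JFun, \qapElm}$. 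This reduction uses only the unfolding of $\nevlFun$ and, on the right, the observation that picking a single $\FFun, \XSet$ per $\GFun$ is enough to define the required choice map $\eth'$ appearing implicitly in the outer universal extension.

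Next comes the core construction. Given $\GFun$, I would build $\FFun \in \FncSet[\spcElm](\ap{\AsgFam})$ and an auxiliary $\FFun[\apElm]^{*} \in \FncSet[\BSym](\ap{\AsgFam})$ simultaneously by induction on time $k \in \SetN$, for every $\asgElm \in \AsgSet(\ap{\AsgFam})$:
\begin{align*}
  \FFun(\asgElm)(k)       &\defeq \JFun\bigl(\asgElm[\apElm \mapsto \FFun[\apElm]^{*}(\asgElm)]\bigr)(k),\\
  \FFun[\apElm]^{*}(\asgElm)(k) &\defeq \GFun\bigl(\asgElm[\qapElm \mapsto \FFun(\asgElm)]\bigr)(k).
\end{align*}
The recursion is well-posed because $\JFun$ is strongly behavioral \wrt $\apElm$, so $\JFun(\cdot)(k)$ depends on $\apElm$ only at times $<k$, where $\FFun[\apElm]^{*}$ has already been defined; symmetrically $\GFun$ is behavioral, so $\GFun(\cdot)(k)$ uses $\qapElm$ only up to time $k$, where $\FFun$ has already been defined. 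I would then verify that $\FFun \in \FncSet[\spcElm](\ap{\AsgFam})$ and $\FFun[\apElm]^{*} \in \FncSet[\BSym](\ap{\AsgFam})$ by checking, via Proposition~\ref{prp:asgind}, that the composed dependencies respect the required equivalences $\approx_{\cdot}^{>k}$ and $\approx_{\cdot}^{\geq k}$: the behavioral specifications of $\JFun$ and $\GFun$ on $\ap{\AsgFam}$ (from $\spcElm$ and $\BSym$) transfer to $\FFun$ and $\FFun[\apElm]^{*}$ through Proposition~\ref{prp:bhvfnc}. Having fixed $\FFun[\apElm]^{*}$, I take $\XSet \defeq \eth(\FFun[\apElm]^{*}) \in \AsgFam$. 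For any $\asgElm \in \XSet$, setting $\asgElm'_{L} \defeq \asgElm[\apElm \mapsto \FFun[\apElm]^{*}(\asgElm)]$ and $\asgElm'_{R} \defeq \asgElm[\qapElm \mapsto \FFun(\asgElm)]$, the defining recurrences give $\FFun(\asgElm) = \JFun(\asgElm'_{L})$ and $\FFun[\apElm]^{*}(\asgElm) = \GFun(\asgElm'_{R})$, so $\asgElm'_{L}[\qapElm \mapsto \JFun(\asgElm'_{L})] = \asgElm'_{R}[\apElm \mapsto \GFun(\asgElm'_{R})]$, witnessing the desired containment.

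The main obstacle is the simultaneous time-indexed recursion: one must carefully orchestrate the behavioralness degrees so that the two recurrences never refer to yet-undefined values and so that the resulting functors land in the prescribed classes. The strict-behavioralness $\sdep{\apElm}$ on $\JFun$ is essential here, as it creates the one-step delay that breaks the apparent circularity between $\FFun$ and $\FFun[\apElm]^{*}$; without it the construction would be ill-defined, reflecting the fact that a plain $\exists\forall$ to $\forall\exists$ swap fails in general. For the dual inclusion, the same scheme applies with the roles of $\exists$/$\forall$ and of $\apElm$/$\qapElm$ exchanged and with $\alpha = \QAE$, or equivalently it is obtained by dualizing the first inclusion via Proposition~\ref{prp:dltinv} and Theorem~\ref{thm:dlt}.
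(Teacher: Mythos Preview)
You have conflated the Claim with the surrounding Proposition. The statement you were asked to prove is the \emph{equivalence} (the ``iff'') between the preorder inclusion and the combinatorial condition on functors; it is purely a definitional unfolding, not the existence of the witnesses $\FFun,\XSet$. In the paper this Claim is dispatched in one line: it follows from the explicit form of $\nevlFun[\QEA]$ on a two-quantifier prefix together with Propositions~\ref{prp:evl} and~\ref{prp:nrmevlind}. Your paragraph ``This reduction uses only the unfolding of $\nevlFun$ and, on the right, the observation that picking a single $\FFun,\XSet$ per $\GFun$ is enough to define the required choice map $\eth'$'' is exactly that argument, and is correct: unfolding the non-coherent step gives $\{\extFun{\extFun{\eth,\apElm},\JFun,\qapElm}\}_{\eth,\JFun}$ on the left and $\{\bigcup_{\GFun}\extFun{\eth'(\GFun),\GFun,\apElm}\}_{\eth'}$ on the right, and choosing $\eth'$ is the same as choosing, for each $\GFun$, a pair $(\XSet,\FFun)$ with $\eth'(\GFun)=\extFun{\XSet,\FFun,\qapElm}$.

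Everything from ``Next comes the core construction'' onward is not a proof of the Claim but of Proposition~\ref{prp:qntswpbas} itself. It is correct and matches the paper's argument almost verbatim: your mutually recursive pair $(\FFun,\FFun[\apElm]^{*})$ is precisely the paper's pair $(\FFun,\HFun)$, where the paper first builds $\FFun$ via the auxiliary families $\{\aasgElm[t][\asgElm]\},\{\vElm[t][\asgElm]\},\{\basgElm[t][\asgElm]\}$ and then sets $\HFun(\asgElm)\defeq\extFun{\extFun{\asgElm,\FFun,\qapElm},\GFun,\apElm}(\apElm)$; your choice $\XSet\defeq\eth(\FFun[\apElm]^{*})$ is the paper's $\XSet\defeq\eth(\HFun)$, and your final equality of the two extended assignments is the paper's containment $\extFun{\extFun{\XSet,\FFun,\qapElm},\GFun,\apElm}\subseteq\extFun{\extFun{\XSet,\HFun,\apElm},\JFun,\qapElm}$. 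So the substantive mathematics is right and on the same track as the paper; you just answered a bigger question than the one posed.
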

%
  An analogous claim can be proved stating that the same characterization also
  holds for $\nevlFun[\QAE](\AsgFam, \exists^{\BSym} \apElm \ldotp\!
    \forall^{\spcElm \cup \sdep{\apElm}} \qapElm)  \!\sqsubseteq\!
    \nevlFun[\QAE](\AsgFam, \forall^{\spcElm} \qapElm \ldotp\!
    \exists^{\BSym} \apElm)$.
  Thanks to such characterizations, the thesis can be shown by choosing a
  suitable functor $\FFun$ and set of assignments $\XSet$ in dependence of the
  functors $\JFun$ and $\GFun$ and the selection map $\eth$.

  In order to define the functor $\FFun$, let us inductively construct, for
  every given assignment $\asgElm \in \AsgSet(\ap{\AsgFam})$, the following
  infinite families of assignments $\{ \aasgElm[t][\asgElm] \in
  \AsgSet(\ap{\AsgFam} \cup \{ \apElm \}) \}_{t \in \SetN}$, Boolean values $\{
  \vElm[t][\asgElm] \!\in\! \SetB \}_{t \in \SetN}$, and assignments $\{
  \basgElm[t][\asgElm] \!\in\! \AsgSet(\ap{\AsgFam} \allowbreak \cup \{ \qapElm
  \}) \}_{t \in \SetN}$:
  \begin{itemize}
    \item
      as base step $t \!=\! 0$, we choose $\aasgElm[0][\asgElm] \in
      \AsgSet(\ap{\AsgFam} \cup \{ \apElm \})$ as an arbitrary assignment for
      which the equality $\aasgElm[0][\asgElm] \rst \ap{\AsgFam} = \asgElm$
      holds true, the Boolean value $\vElm[0][\asgElm] \in \SetB$ as
      $\JFun(\aasgElm[0][\asgElm])(0)$, \ie, $\vElm[0][\asgElm] \defeq
      \JFun(\aasgElm[0][\asgElm])(0)$, and $\basgElm[0][\asgElm] \in
      \AsgSet(\ap{\AsgFam} \cup \{ \qapElm \})$ as an arbitrary assignment with
      $\basgElm[0][\asgElm] \rst \ap{\AsgFam} = \asgElm$ such that, at time $0$
      on the variable $\qapElm$, assumes $\vElm[0][\asgElm]$ as value, \ie,
      $\basgElm[0][\asgElm](\qapElm)(0) = \vElm[0][\asgElm]$;
    \item
      as inductive step $t > 0$, we derive the assignment $\aasgElm[t][\asgElm]
      \in \AsgSet(\ap{\AsgFam} \cup \{ \apElm \})$ from $\GFun(\basgElm[t -
      1][\asgElm])$, \ie, $\aasgElm[t][\asgElm] \defeq {\asgElm}[\apElm \mapsto
      \GFun(\basgElm[t - 1][\asgElm])]$, and the Boolean value
      $\vElm[t][\asgElm] \in \SetB$ from $\JFun(\aasgElm[t][\asgElm])(t)$, \ie,
      $\vElm[t][\asgElm] \defeq \JFun(\aasgElm[t][\asgElm])(t)$; moreover, we
      choose $\basgElm[t][\asgElm] \in \AsgSet(\ap{\AsgFam} \cup \{ \qapElm \})$
      as an arbitrary assignment with $\basgElm[t][\asgElm] \rst \ap{\AsgFam} =
      \asgElm$ such that, on the variable $\qapElm$, is equal to $\basgElm[t -
      1][\asgElm]$ up to time $t$ excluded and assumes $\vElm[t][\asgElm]$ as
      value at time $t$, \ie, $\basgElm[t][\asgElm](\qapElm)(h) = \basgElm[t -
      1][\asgElm](\qapElm)(h)$, for all $h \in \numco{0}{t}$, and
      $\basgElm[t][\asgElm](\qapElm)(t) = \vElm[t][\asgElm]$.
  \end{itemize}

  Thanks to the infinite family of Boolean values $\{ \vElm[t][\asgElm] \in
  \SetB \}_{t \in \SetN}$, one for each assignment $\asgElm \in
  \AsgSet(\ap{\AsgFam})$, we can define the functor $\FFun \in
  \FncSet(\ap{\AsgFam})$ as follows, for every instant of time $t \in \SetN$:
  \[
    \FFun(\asgElm)(t) \defeq \vElm[t][\asgElm].
  \]
  It is easy to show that this functor complies with the quantifier
  specification $\spcElm$, since the functor $\JFun$, from which $\FFun$ is
  derived, is compliant with the quantifier specification $\spcElm \cup
  \sdep{\apElm}$.
  \begin{claim}
    $\FFun \in \FncSet[\spcElm](\ap{\AsgFam})$.
  \end{claim}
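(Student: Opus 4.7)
The plan is to unfold the definition of $\FncSet[\spcElm](\ap{\AsgFam})$ via Proposition~\ref{prp:bhvfnc}: it suffices to show that, for every $k \in \SetN$ and every pair $\asgElm_1, \asgElm_2 \in \AsgSet(\ap{\AsgFam})$ with $\asgElm_1 \approx_{\spcElm}^{k} \asgElm_2$, we have $\FFun(\asgElm_1)(k) = \FFun(\asgElm_2)(k)$. Since by construction $\FFun(\asgElm)(k) = \vElm[k][\asgElm] = \JFun(\aasgElm[k][\asgElm])(k)$, and $\JFun \in \FncSet[\spcElm \cup \sdep{\apElm}](\ap{\AsgFam} \cup \{\apElm\})$, it is in fact enough to verify that $\aasgElm[k][\asgElm_1] \approx_{\spcElm \cup \sdep{\apElm}}^{k} \aasgElm[k][\asgElm_2]$ and then invoke Proposition~\ref{prp:bhvfnc} again, this time applied to $\JFun$.

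By Proposition~\ref{prp:asgind}, the required equivalence $\aasgElm[k][\asgElm_1] \approx_{\spcElm \cup \sdep{\apElm}}^{k} \aasgElm[k][\asgElm_2]$ breaks into two parts: agreement on every $\qapElm' \in \ap{\AsgFam}$ at all relevant times (inherited from the hypothesis on $\asgElm_1, \asgElm_2$, since $\aasgElm[t][\asgElm]\rst\ap{\AsgFam} = \asgElm$ by construction), and agreement on $\apElm$ at all times $t < k$ (coming from the strongly-behavioral tag $\sdep{\apElm}$ added in $\spcElm \cup \sdep{\apElm}$). The first part is immediate. The second part reduces, via the recursive definition $\aasgElm[t][\asgElm](\apElm) = \GFun(\basgElm[t-1][\asgElm])$, to showing $\GFun(\basgElm[k-1][\asgElm_1])(t) = \GFun(\basgElm[k-1][\asgElm_2])(t)$ for every $t < k$.

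The natural way to establish this is by a simultaneous induction on $k$ that strengthens the statement to all three families built in the construction. Concretely, I would prove that, whenever $\asgElm_1 \approx_{\spcElm}^{k} \asgElm_2$, the assignments $\aasgElm[t][\asgElm_1]$ and $\aasgElm[t][\asgElm_2]$ coincide, for every $t \leq k$, on all propositions of $\ap{\AsgFam} \cup \{\apElm\}$ at the times prescribed by $\spcElm \cup \sdep{\apElm}$; the same holds for $\basgElm[t][\asgElm_1]$ and $\basgElm[t][\asgElm_2]$ on $\ap{\AsgFam} \cup \{\qapElm\}$; and $\vElm[t][\asgElm_1] = \vElm[t][\asgElm_2]$ for every $t < k$. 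The inductive step uses, in order: the behavioralness of $\GFun$ (giving the new value of $\apElm$ in $\aasgElm[t+1][\asgElm]$ from the agreement on $\basgElm[t][\asgElm]$ up to time $t$); the equivalence transported to $\aasgElm[t+1][\asgElm_1] \approx_{\spcElm \cup \sdep{\apElm}}^{t+1} \aasgElm[t+1][\asgElm_2]$; and, via Proposition~\ref{prp:bhvfnc} applied to $\JFun$, the agreement $\vElm[t+1][\asgElm_1] = \vElm[t+1][\asgElm_2]$, which in turn propagates into $\basgElm[t+1][\asgElm]$.

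The main obstacle will be bookkeeping the strict-versus-non-strict time bounds: the $\sdep{\apElm}$ tag in $\spcElm \cup \sdep{\apElm}$ requires agreement on $\apElm$ only strictly before the current instant, and this is exactly what the one-step delay built into the construction of $\aasgElm[t][\asgElm]$ from $\basgElm[t-1][\asgElm]$ supplies. Keeping the inductive invariants tight enough to match this delay, while also covering the fact that $\spcElm$ may itself contain some strongly-behavioral propositions in $\ap{\AsgFam}$, is where the argument needs to be most careful; everything else is routine unfolding of the definitions.
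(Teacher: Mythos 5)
Your argument is correct and is precisely the elaboration the paper has in mind: the paper states this claim without a detailed proof, remarking only that $\FFun$ inherits its compliance with $\spcElm$ from the compliance of $\JFun$ with $\spcElm \cup \sdep{\apElm}$, and your strong induction on $k$ --- propagating agreement through the three constructed families via the behavioralness of $\GFun$ and the compliance of $\JFun$, with the one-step delay in the construction supplying exactly the strictness demanded by $\sdep{\apElm}$ --- is the right way to make that remark rigorous. The only cosmetic point is that Proposition~\ref{prp:bhvfnc} is stated in one direction only, so the reduction of membership in $\FncSet[\spcElm](\ap{\AsgFam})$ to the $\approx_{\spcElm}^{k}$-invariance condition should be justified by observing that each generating relation $\approx_{\qapElm}^{> k}$ (resp.\ $\approx_{\qapElm}^{\geq k}$) is a special case of $\approx_{\spcElm}^{k}$, which is immediate.
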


  Before continuing, let us first introduce the functor $\HFun \in
  \FncSet(\ap{\AsgFam})$ as follows, for every assignment $\asgElm \in
  \AsgSet(\ap{\AsgFam})$:
  \[
    \HFun(\asgElm) \defeq \extFun{\extFun{\asgElm, \FFun, \qapElm}, \GFun,
    \apElm}(\apElm).
  \]
  It is quite immediate to verify that such a functor is behavioral.
  \begin{claim}
    $\HFun \in \FncSet[\BSym](\ap{\AsgFam})$.
  \end{claim}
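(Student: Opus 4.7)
The plan is to unfold the definition of $\HFun$ and reduce the claim to the behavioralness properties of the two functors from which $\HFun$ is built. Fix a proposition $\apElm' \in \ap{\AsgFam}$, an instant $k \in \SetN$, and two assignments $\asgElm[1], \asgElm[2] \in \AsgSet(\ap{\AsgFam})$ such that $\asgElm[1] \approx_{\apElm'}^{> k} \asgElm[2]$; the goal is to show $\HFun(\asgElm[1])(k) = \HFun(\asgElm[2])(k)$. By the very definition of $\HFun$, the two sides rewrite as $\GFun(\basgElm[1])(k)$ and $\GFun(\basgElm[2])(k)$, where $\basgElm[i] \defeq \extFun{\asgElm[i], \FFun, \qapElm}$ for $i \in \{1,2\}$.

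Since $\GFun \in \FncSet[\BSym](\ap{\AsgFam} \cup \{\qapElm\})$ is behavioral with respect to every proposition in its domain, Proposition~\ref{prp:bhvfnc} (read through the characterization in Proposition~\ref{prp:asgind} for $\spcElm = \BSym$) reduces the equality $\GFun(\basgElm[1])(k) = \GFun(\basgElm[2])(k)$ to showing that $\basgElm[1] \approx_{\BSym}^{k} \basgElm[2]$, i.e., that $\basgElm[1]$ and $\basgElm[2]$ coincide on every proposition of $\ap{\AsgFam} \cup \{\qapElm\}$ at every instant $t \leq k$. I would split this verification into three cases. On any $\qapElm' \in \ap{\AsgFam} \setminus \{\apElm'\}$ the agreement is inherited, at every time, from $\asgElm[1]$ and $\asgElm[2]$. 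On $\apElm'$ itself, the hypothesis $\asgElm[1] \approx_{\apElm'}^{> k} \asgElm[2]$ supplies precisely the required agreement up to $k$.

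The only nontrivial case, and the step I expect to be the main conceptual hurdle, is the fresh proposition $\qapElm$, where $\basgElm[i](\qapElm) = \FFun(\asgElm[i])$ and the two temporal valuations are not forced to agree beyond time $k$. Here I would exploit the assumption $\spcElm \in \SpcSet[\BSym]$, which ensures that every proposition, and in particular $\apElm'$, lies in the behavioral part of $\spcElm$, so $\FFun$ is behavioral with respect to $\apElm'$. Since $\asgElm[1] \approx_{\apElm'}^{> k} \asgElm[2]$ trivially implies $\asgElm[1] \approx_{\apElm'}^{> t} \asgElm[2]$ for every $t \leq k$, Definition~\ref{def:behfun} (or Proposition~\ref{prp:bhvfnc}) delivers $\FFun(\asgElm[1])(t) = \FFun(\asgElm[2])(t)$ for each such $t$, i.e., precisely the needed agreement on $\qapElm$ up to time $k$. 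Combining the three cases yields $\basgElm[1] \approx_{\BSym}^{k} \basgElm[2]$, and the earlier reduction completes the proof. In a slogan: the $\apElm'$-behavioralness of $\FFun$ transports the $\apElm'$-agreement of the inputs into $\BSym$-agreement of their $\qapElm$-extensions up to $k$, and the fully behavioral $\GFun$ then closes the loop.
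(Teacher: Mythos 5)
Your proof is correct. The paper states this claim without proof (asserting only that it is ``quite immediate to verify''), and your argument supplies exactly the intended verification: you reduce the behavioralness of $\HFun$ to that of $\GFun$ via Proposition~\ref{prp:bhvfnc} and the characterization of Proposition~\ref{prp:asgind}, and you correctly identify and discharge the one non-routine step, namely that the $\apElm'$-behavioralness of $\FFun$ (guaranteed because $\spcElm \in \SpcSet[\BSym]$ makes every proposition behavioral for $\FFun$) propagates the agreement of $\asgElm[1]$ and $\asgElm[2]$ up to time $k$ into agreement of their $\qapElm$-extensions on $\qapElm$ up to time $k$, so that $\GFun$'s full behavioralness can close the argument.
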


  At this point, consider the set of assignments $\XSet \defeq \eth(\HFun)$.
  Thanks to the specific definitions of the two functors $\FFun$ and $\HFun$,
  the following claim can be proved.
  \begin{claim}
    $\extFun{\extFun{\XSet, \FFun, \qapElm}, \GFun, \apElm} \subseteq
    \extFun{\extFun{\XSet, \HFun, \apElm}, \JFun, \qapElm}$.
  \end{claim}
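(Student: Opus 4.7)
The plan is to unfold both sides of the inclusion element-wise and reduce the claim to the pointwise identity $\FFun(\asgElm) = \JFun(\extFun{\asgElm, \HFun, \apElm})$ for every $\asgElm \in \XSet$. Indeed, by the definition of the extension operator, a generic element of the LHS has the form $\asgElm$ enriched with $\qapElm \mapsto \FFun(\asgElm)$ and then $\apElm \mapsto \GFun(\extFun{\asgElm, \FFun, \qapElm})$, which by definition of $\HFun$ equals $\HFun(\asgElm)$. A generic element of the RHS, instead, takes the form $\asgElm$ enriched with $\apElm \mapsto \HFun(\asgElm)$ and $\qapElm \mapsto \JFun(\extFun{\asgElm, \HFun, \apElm})$. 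So both sides coincide on $\ap{\AsgFam}$ and on $\apElm$, and inclusion reduces to showing equality of the $\qapElm$-components, i.e., $\FFun(\asgElm)(t) = \JFun(\extFun{\asgElm, \HFun, \apElm})(t)$ for every $t \in \SetN$.

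By the inductive construction of $\FFun$, we have $\FFun(\asgElm)(t) = \vElm[t][\asgElm] = \JFun(\aasgElm[t][\asgElm])(t)$. So the task becomes proving that $\JFun(\aasgElm[t][\asgElm])(t) = \JFun(\extFun{\asgElm, \HFun, \apElm})(t)$. Since $\JFun \in \FncSet[\spcElm \cup \sdep{\apElm}](\ap{\AsgFam} \cup \{\apElm\})$ is strongly behavioral \wrt $\apElm$ and behavioral \wrt every other proposition in $\ap{\AsgFam}$, by Proposition~\ref{prp:bhvfnc} it suffices to show that $\aasgElm[t][\asgElm] \approx_{\spcElm \cup \sdep{\apElm}}^{t} \extFun{\asgElm, \HFun, \apElm}$. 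Both assignments agree with $\asgElm$ on $\ap{\AsgFam}$ by construction, so by Proposition~\ref{prp:asgind} only the values on $\apElm$ at times $h < t$ remain to check, namely $\GFun(\basgElm[t-1][\asgElm])(h) = \GFun(\extFun{\asgElm, \FFun, \qapElm})(h)$.

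Since $\GFun \in \FncSet[\BSym](\ap{\AsgFam} \cup \{\qapElm\})$ is behavioral \wrt $\qapElm$, applying Proposition~\ref{prp:bhvfnc} once more reduces this equality to $\basgElm[t-1][\asgElm] \approx_{\BSym}^{h} \extFun{\asgElm, \FFun, \qapElm}$ for every $h < t$, which in turn, by Proposition~\ref{prp:asgind}, reduces to the agreement on $\qapElm$ at all times $k \leq h$. The construction guarantees $\basgElm[t-1][\asgElm](\qapElm)(k) = \vElm[k][\asgElm]$ for every $k \leq t-1$, and by definition of $\FFun$ we have $\extFun{\asgElm, \FFun, \qapElm}(\qapElm)(k) = \FFun(\asgElm)(k) = \vElm[k][\asgElm]$ as well, closing the chain since $k \leq h < t$.

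The main obstacle will be handling the boundary case $t = 0$, where the auxiliary $\basgElm[t-1][\asgElm]$ is not defined: here, however, the strong behavioralness of $\JFun$ \wrt $\apElm$ imposes no constraint at time $0$ on $\apElm$, so the equivalence $\aasgElm[0][\asgElm] \approx_{\spcElm \cup \sdep{\apElm}}^{0} \extFun{\asgElm, \HFun, \apElm}$ follows directly from the agreement of both assignments with $\asgElm$ on $\ap{\AsgFam}$. Beyond that corner, the argument is a disciplined bookkeeping of the alternating definitions of the two infinite families $\{\aasgElm[t][\asgElm]\}$ and $\{\basgElm[t][\asgElm]\}$ and their interleaving with $\JFun$ and $\GFun$, so the notational care in keeping these aligned is the only real difficulty.
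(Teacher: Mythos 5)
Your proof is correct and follows precisely the route the surrounding construction of $\FFun$, $\HFun$ and the families $\{\aasgElm[t][\asgElm]\}$, $\{\basgElm[t][\asgElm]\}$ was designed for: the inclusion reduces, for each $\asgElm \in \XSet$, to the pointwise identity $\FFun(\asgElm) = \JFun(\extFun{\asgElm, \HFun, \apElm})$, which you verify at each instant by two applications of Proposition~\ref{prp:bhvfnc}, using the strong behavioralness of $\JFun$ with respect to $\apElm$ (so only instants $h < t$ matter) and the behavioralness of $\GFun$ with respect to $\qapElm$, together with the easy invariant that $\basgElm[t - 1][\asgElm]$ agrees with $\FFun(\asgElm)$ on $\qapElm$ at all times up to $t - 1$. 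The paper states this claim without proof, so your write-up supplies exactly the intended missing details, including the correct observation that the $t = 0$ case is vacuous because strong behavioralness imposes no constraint on $\apElm$ at time $0$.
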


  Now, it is obvious that $\extFun{\XSet, \HFun, \apElm} \subseteq \extFun{\eth,
  \apElm}$, due to the definition of the latter and the choice of the set
  $\XSet$, which immediately implies $\extFun{\extFun{\XSet, \HFun, \apElm},
  \JFun, \qapElm} \subseteq \extFun{\extFun{\eth, \apElm}, \JFun, \qapElm}$.
  Therefore, $\extFun{\extFun{\XSet, \FFun, \qapElm}, \GFun, \apElm} \subseteq
  \extFun{\extFun{\eth, \apElm}, \JFun, \qapElm}$, which concludes the proof.
\end{proof}

\recenv{PrpQntSwpInd}
\begin{proof}
  First, we notice that, by repeatedly applying Proposition~\ref{prp:qntswpbas},
  we are able to derive the following claim.
  \begin{claim}
    $\nevlFun[\QEA](\AsgFam, \forall^{\BSym} \vec{\apElm}
    \ldotp\!
    \exists^{\BSym \cup \sdep{\vec{\apElm}}} \vec{\qapElm})
    \!\sqsubseteq\!
    \nevlFun[\QEA](\AsgFam, \exists^{\BSym} \vec{\qapElm}
    \ldotp\!
    \allowbreak \forall^{\BSym} \vec{\apElm})$ and
    $\nevlFun[\QAE](\AsgFam, \exists^{\BSym} \vec{\apElm}
    \ldotp\!
    \forall^{\BSym \cup \sdep{\vec{\apElm}}} \vec{\qapElm})
    \!\sqsubseteq\!
    \nevlFun[\QAE](\AsgFam, \forall^{\BSym} \vec{\qapElm}
    \ldotp\!
    \allowbreak \exists^{\BSym} \vec{\apElm})$, for all $\AsgFam \!\in\!
    \HAsgSet(\PSet)$ with $\PSet \!\subseteq\! \APSet$ and $\vec{\apElm},
    \vec{\qapElm} \subseteq \APSet \setminus \PSet$.
  \end{claim}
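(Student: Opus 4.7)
The approach is to iterate Proposition~\ref{prp:qntswpbas}. I focus on the first inequality; the second is completely symmetric, exchanging the roles of $\exists$ and $\forall$.

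Write $\vec{\apElm} = \apElm_{1} \cdots \apElm_{m}$ and $\vec{\qapElm} = \qapElm_{1} \cdots \qapElm_{n}$. The strategy is to transform the left-hand prefix $\forall^{\BSym} \vec{\apElm} \ldotp \exists^{\BSym \cup \sdep{\vec{\apElm}}} \vec{\qapElm}$ into the right-hand one $\exists^{\BSym} \vec{\qapElm} \ldotp \forall^{\BSym} \vec{\apElm}$ through a sequence of $m \cdot n$ atomic swaps, each of which moves one universal past one adjacent existential. An outer induction on $k = 0, 1, \ldots, m$ maintains the invariant that after $k$ universals have been pushed across the existential block, the prefix reads
\[
\forall^{\BSym} \apElm_{k+1} \cdots \forall^{\BSym} \apElm_{m} \ldotp (\exists^{\BSym \cup \sdep{\{\apElm_{k+1}, \ldots, \apElm_{m}\}}} \qapElm_{j})_{j=1}^{n} \ldotp \forall^{\BSym} \apElm_{1} \cdots \forall^{\BSym} \apElm_{k},
\]
which at $k = 0$ is the starting prefix and at $k = m$ is the target.

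To advance the invariant from $k$ to $k+1$, an inner induction on $j = 1, \ldots, n$ shifts $\forall^{\BSym} \apElm_{k+1}$ one position to the right, past the existential $\exists^{\BSym \cup \sdep{\{\apElm_{k+1}, \ldots, \apElm_{m}\}}} \qapElm_{j}$. Setting $\spcElm \defeq \BSym \cup \sdep{\{\apElm_{k+2}, \ldots, \apElm_{m}\}}$, one has $\spcElm \cup \sdep{\apElm_{k+1}} = \BSym \cup \sdep{\{\apElm_{k+1}, \ldots, \apElm_{m}\}}$, so that $\spcElm \in \SpcSet[\BSym]$ and Proposition~\ref{prp:qntswpbas} directly applies and rewrites the two-quantifier block $\forall^{\BSym} \apElm_{k+1} \ldotp \exists^{\spcElm \cup \sdep{\apElm_{k+1}}} \qapElm_{j}$ into $\exists^{\spcElm} \qapElm_{j} \ldotp \forall^{\BSym} \apElm_{k+1}$, stripping exactly one $\sdep{\cdot}$-annotation from $\qapElm_{j}$. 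After $n$ inner steps, the invariant holds for $k+1$, and after $m$ outer steps every existential is annotated by $\BSym$ alone.

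Each local rewrite must now be lifted to a $\sqsubseteq$-comparison between full prefix evaluations applied to the fixed $\AsgFam$. Decomposing according to the inductive definition of $\nevlFun$, if the current prefix factors as $\qntElm' \cdot \pi \cdot \qntElm''$ with $\pi$ the two-quantifier block being rewritten, then $\nevlFun[\QEA](\AsgFam, \qntElm' \cdot \pi \cdot \qntElm'') = \nevlFun[\QEA](\nevlFun[\QEA](\nevlFun[\QEA](\AsgFam, \qntElm'), \pi), \qntElm'')$. Proposition~\ref{prp:qntswpbas}, applied at the hyperassignment $\nevlFun[\QEA](\AsgFam, \qntElm')$ (with Proposition~\ref{prp:nrmevlind} freely identifying $\nevlFun$ with $\evlFun$), yields the desired $\sqsubseteq$-comparison on $\nevlFun[\QEA](\cdot, \pi)$, which is then propagated through $\qntElm''$ by the monotonicity of the evaluation operator stated in Proposition~\ref{prp:evlmon}. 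Transitivity of $\sqsubseteq$ finally concatenates the $m \cdot n$ atomic comparisons into the single inequality claimed. The main obstacle is not conceptual but notational: one must carefully keep track of the shrinking $\sdep{\cdot}$-annotations on the existential block after each swap, which is precisely what the stated invariant encodes; beyond this bookkeeping no semantic ingredient is required apart from Propositions~\ref{prp:qntswpbas}, \ref{prp:evlmon}, and~\ref{prp:nrmevlind}.
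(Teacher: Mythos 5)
Your overall strategy---iterating Proposition~\ref{prp:qntswpbas} one adjacent $\forall/\exists$ pair at a time, transporting each local comparison through the surrounding prefix via Propositions~\ref{prp:nrmevlind} and~\ref{prp:evlmon}, and chaining by transitivity of $\sqsubseteq$---is exactly what the paper intends (its entire proof of this claim is the sentence ``by repeatedly applying Proposition~\ref{prp:qntswpbas}''), and the supporting machinery you invoke is the right one. However, your invariant is mis-indexed in a way that breaks the induction step. At outer stage $k$ your prefix reads $\forall^{\BSym} \apElm[k+1] \cdots \forall^{\BSym} \apElm[m] \ldotp (\cdots) \ldotp \forall^{\BSym} \apElm[1] \cdots \forall^{\BSym} \apElm[k]$, and you advance it by shifting $\forall^{\BSym} \apElm[k+1]$ past the existential block. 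But $\apElm[k+1]$ is the \emph{outermost} of the remaining universals: it is separated from $\qapElm[1]$ by $\forall^{\BSym} \apElm[k+2] \cdots \forall^{\BSym} \apElm[m]$, so the two-quantifier block $\forall^{\BSym} \apElm[k+1] \ldotp \exists^{\spcElm \cup \sdep{\apElm[k+1]}} \qapElm[1]$ you want to rewrite is not a contiguous factor of the prefix. Proposition~\ref{prp:qntswpbas} only licenses swapping a universal with an \emph{adjacent} existential; realizing your invariant would additionally require commuting two adjacent universals under $\nevlFun[\QEA]$, which is established nowhere in the paper and is not an instance of the swap lemma. (A symptom of the same slip: after pushing $\apElm[2]$ it would land \emph{before} the already-moved $\apElm[1]$, so even the trailing block $\forall^{\BSym} \apElm[1] \forall^{\BSym} \apElm[2]$ of your own invariant would not be produced.)

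The repair is purely a change of order: push the \emph{innermost} universal first. Take as invariant that after $k$ outer stages the prefix is $\forall^{\BSym} \apElm[1] \cdots \forall^{\BSym} \apElm[m-k] \ldotp (\exists^{\BSym \cup \sdep{\{ \apElm[1], \ldots, \apElm[m-k] \}}} \qapElm[j])_{j = 1}^{n} \ldotp \forall^{\BSym} \apElm[m-k+1] \cdots \forall^{\BSym} \apElm[m]$. Then the universal being moved, $\apElm[m-k]$, is adjacent to $\qapElm[1]$; each inner step rewrites a genuine contiguous factor of the form required by Proposition~\ref{prp:qntswpbas}, with $\spcElm = \BSym \cup \sdep{\{ \apElm[1], \ldots, \apElm[m-k-1] \}} \in \SpcSet[\BSym]$; and after $m$ stages the trailing universal block comes out in the original order $\forall^{\BSym} \apElm[1] \cdots \forall^{\BSym} \apElm[m]$, as the claim requires. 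With this re-indexing the rest of your argument---the decomposition of $\nevlFun$ around the rewritten factor, the use of monotonicity to propagate through the remaining suffix, and the symmetric treatment of the $\QAE$ case---goes through unchanged.
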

  Let us first focus on proving $\!\nevlFun[\alpha](\AsgFam,
  \CSym[\dual{\alpha}](\qntElm)) \!\sqsubseteq\! \nevlFun[\alpha](\AsgFam,
  \qntElm)$ and consider the case $\alpha = \exists\forall$.
  First, let us rewrite $\qntElm$ as $\forall^{\BSym} \vec{\apElm}[0] \ldotp
  (\exists^{\BSym} \vec{\qapElm}[i] \ldotp \forall^{\BSym} \vec{\apElm}[i])_{i =
    1}^{k} \ldotp \exists^{\BSym} \vec{\qapElm}[k + 1]$, where $| \vec{\apElm}[0] |,
  | \vec{\qapElm}[k+1] | \geq 0$, $k \geq 0$, and $| \vec{\qapElm}[i] |, |
  \vec{\apElm}[i] | \geq 1$ for all $i \in \{ 1, \ldots, k \}$.  The proof proceeds by induction on the
  \emph{$\exists\forall$-alternation degree} of $\qntElm$, that is, $k$.

  If $k=0$ (base case), then $\CSym[\forall\exists](\qntElm) = \qntElm$, and we
  are done.
  If $k>0$ (inductive step), then $\nevlFun[\exists\forall](\AsgFam, \qntElm) =
  \nevlFun[\exists\forall] ( \allowbreak \nevlFun[\exists\forall] (
  \nevlFun[\exists\forall] (\AsgFam, \forall^{\BSym} \vec{\apElm}[0] \ldotp
  (\exists^{\BSym} \vec{\qapElm}[i] \ldotp \forall^{\BSym} \vec{\apElm}[i])_{i =
    1}^{k-1}), \exists^{\BSym} \vec{\qapElm}[k] \ldotp \forall^{\BSym}
  \vec{\apElm}[k] ), \exists^{\BSym} \vec{\qapElm}[k + 1] )$.
  Thanks to the above claim and to the monotonicity of \nevlFun, we have
  $\nevlFun[\exists\forall](\AsgFam, \qntElm) \sqsupseteq \nevlFun[\exists\forall]
  ( \allowbreak \nevlFun[\exists\forall] ( \allowbreak \nevlFun[\exists\forall] (
  \allowbreak \AsgFam, \forall^{\BSym} \vec{\apElm}[0] \ldotp (\exists^{\BSym}
  \vec{\qapElm}[i] \ldotp \forall^{\BSym} \vec{\apElm}[i])_{i = 1}^{k-1}),
  \forall^{\BSym} \vec{\apElm}[k] \ldotp \exists^{\BSym \cup
    \sdep{\vec{\apElm}[k]}} \vec{\qapElm}[k]), \exists^{\BSym} \vec{\qapElm}[k + 1]
  ) = \nevlFun[\exists\forall]( \allowbreak \nevlFun[\exists\forall]( \allowbreak
  \AsgFam, \qntElm'), \exists^{\BSym \cup \sdep{\vec{\apElm}[k]}} \vec{\qapElm}[k]
  \ldotp \exists^{\BSym} \vec{\qapElm}[k + 1])$, where $\qntElm' = \forall^{\BSym}
  \vec{\apElm}[0] \ldotp (\exists^{\BSym} \vec{\qapElm}[i] \ldotp \forall^{\BSym}
  \vec{\apElm}[i])_{i = 1}^{k-2} \ldotp \exists^{\BSym} \vec{\qapElm}[k-1] \ldotp
  \forall^{\BSym} \vec{\apElm}[k-1] \ldotp \forall^{\BSym} \vec{\apElm}[k]$ has
  $\exists\forall$-alternation degree $k-1$.
  Thus, by inductive hypothesis, it holds that
  $\nevlFun[\exists\forall](\AsgFam, \CSym[\forall\exists](\qntElm'))
  \!\sqsubseteq\! \nevlFun[\exists\forall](\AsgFam, \qntElm')$.
  The thesis follows by observing that $\CSym[\forall\exists](\qntElm') \ldotp
  \exists^{\BSym \cup \sdep{\vec{\apElm}[k]}} \vec{\qapElm}[k] \ldotp
  \exists^{\BSym} \vec{\qapElm}[k + 1] = \CSym[\forall\exists](\qntElm)$.

  In order to prove that $\!\nevlFun[\alpha](\AsgFam,
  \CSym[\dual{\alpha}](\qntElm)) \!\sqsubseteq\! \nevlFun[\alpha](\AsgFam,
  \qntElm)$ holds when $\alpha = \QAE$, it suffices to replace quantifier
  $\forall$ with $\exists$, and vice versa, inside the quantifier prefixes, and to
  switch $\QEA$ and $\QAE$ throughout the previous proof. We omit the details.

  Finally, in order to prove that $\nevlFun[\alpha](\AsgFam, \qntElm)
  \!\sqsubseteq\! \nevlFun[\alpha](\AsgFam, \CSym[\alpha](\qntElm))$ for every
  $\alpha \in \{ \QAE, \QAE \}$, we first state two auxiliary results.
  \begin{claim}\label{clm:nevldualization}
    $\nevlFun[\alpha](\dual{\AsgFam}, \dual{\qntElm}) \equiv
    \dual{\nevlFun[\alpha](\AsgFam, \qntElm)}$, for all $\AsgFam \in \HAsgSet$ and
    $\qntElm \in \QntSet$.
  \end{claim}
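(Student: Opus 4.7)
The plan is to proceed by induction on the length of the quantifier prefix $\qntElm$, reducing the general statement to a single-quantifier equivalence that constitutes the real core of the argument. For the base case $\qntElm = \epsilon$, we have $\dual{\qntElm} = \epsilon$ and both sides collapse to $\dual{\AsgFam}$, so the equivalence is immediate. For the inductive step on $\qntElm = \Qnt^{\spcElm} \apElm \ldotp \qntElm'$, we unfold the definition of $\nevlFun$ to get
$\nevlFun[\alpha](\dual{\AsgFam}, \dual{\qntElm}) = \nevlFun[\alpha](\nevlFun[\alpha](\dual{\AsgFam}, \dual{\Qnt}^{\spcElm} \apElm), \dual{\qntElm'})$
and analogously $\dual{\nevlFun[\alpha](\AsgFam, \qntElm)} = \dual{\nevlFun[\alpha](\nevlFun[\alpha](\AsgFam, \Qnt^{\spcElm} \apElm), \qntElm')}$. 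Applying the inductive hypothesis to $\qntElm'$ (with the hyperassignment $\nevlFun[\alpha](\AsgFam, \Qnt^{\spcElm} \apElm)$), combined with Proposition~\ref{prp:nrmevlind}, which lets us replace the inner hyperassignment by an equivalent one, reduces the inductive step to establishing the single-quantifier claim $\nevlFun[\alpha](\dual{\AsgFam}, \dual{\Qnt}^{\spcElm} \apElm) \equiv \dual{\nevlFun[\alpha](\AsgFam, \Qnt^{\spcElm} \apElm)}$.

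To prove this single-quantifier equivalence, I would split into the two coherence cases. When $\alpha$ and $\Qnt$ are coherent (so $\alpha$ and $\dual{\Qnt}$ are not), the left-hand side unfolds to $\set{\extFun{\eth, \apElm}}{\eth \in \FncSet[\spcElm](\ap{\AsgFam}) \to \dual{\AsgFam}}$ while the right-hand side unfolds to $\dual{\extFun[\spcElm]{\AsgFam, \apElm}}$. The non-coherent case is completely symmetric, with the roles of the two sides exchanged. Thus the proof reduces to establishing the identity (up to $\equiv$)
\[
  \dual{\extFun[\spcElm]{\AsgFam, \apElm}} \equiv \set{\extFun{\eth, \apElm}}{\eth \in \FncSet[\spcElm](\ap{\AsgFam}) \to \dual{\AsgFam}}.
\]
The two inclusions $\sqsubseteq$ are proved by explicit conversion between choice functions on the extension and selection maps $\eth$.

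For $\sqsupseteq$, given $\eth$, I would construct a choice function $\chcFun \in \ChcSet{\extFun[\spcElm]{\AsgFam, \apElm}}$ by setting, on each set $\extFun{\XSet, \FFun, \apElm}$ with $\XSet \in \AsgFam$ and $\FFun \in \FncSet[\spcElm](\ap{\AsgFam})$, the choice $\chcFun(\extFun{\XSet, \FFun, \apElm})$ to be some assignment of $\extFun{\eth(\FFun), \FFun, \apElm}$ whenever $\eth(\FFun) \subseteq \XSet$, and to be otherwise an arbitrary element of $\extFun{\XSet, \FFun, \apElm}$; the image of $\chcFun$ then contains $\extFun{\eth, \apElm} = \bigcup_{\FFun} \extFun{\eth(\FFun), \FFun, \apElm}$ as a subset, giving the required refinement. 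For $\sqsubseteq$, given a choice function $\chcFun$ on $\extFun[\spcElm]{\AsgFam, \apElm}$, I would define $\eth$ by letting $\eth(\FFun)$ be any $\XSet \in \AsgFam$ witnessing a set of $\chcFun$-selected assignments restricted to the $\FFun$-slice; then $\extFun{\eth, \apElm}$ refines $\img{\chcFun}$.

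The main obstacle will be the $\sqsubseteq$ direction: a choice function on $\extFun[\spcElm]{\AsgFam, \apElm}$ picks one assignment per pair $(\XSet, \FFun)$, so its image mixes contributions from \emph{different} $\XSet$'s for the same $\FFun$, whereas a selection map $\eth$ fixes a single $\XSet$ per $\FFun$. Bridging this gap will require an application of the axiom of choice to pick, uniformly across $\FFun$, a representative $\XSet_\FFun \in \AsgFam$ whose $\FFun$-extension fits inside the image, leveraging that the preorder $\sqsubseteq$ only requires a refining set rather than equality. The monotonicity results (Propositions~\ref{prp:oprmon} and~\ref{prp:evlmon}) will then propagate the single-quantifier equivalence through the inductive step cleanly.
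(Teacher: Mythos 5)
Your skeleton matches the paper's proof of this claim: induction on the length of $\qntElm$, a case split on the $\alpha$-coherence of the leading quantifier, and Proposition~\ref{prp:nrmevlind} to push the resulting equivalence through the tail $\qntElm'$. The single-quantifier equivalence you isolate as the crux,
\[
  \dual{\extFun[\spcElm]{\AsgFam, \apElm}} \equiv \set{ \extFun{\eth, \apElm} }{ \eth \in \FncSet[\spcElm](\ap{\AsgFam}) \to \dual{\AsgFam} },
\]
is correct, but it is exactly Proposition~\ref{prp:nrmevlbas} instantiated at $\dual{\AsgFam}$ (combined with $\dual{\dual{\AsgFam}} \equiv \AsgFam$ from Propositions~\ref{prp:dltinv} and~\ref{prp:oprmon}); the paper simply cites that proposition rather than re-deriving it, so the value of your direct argument rests entirely on whether the two inclusions go through --- and as sketched they do not.

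Both of your constructions misplace $\eth(\FFun)$. An element of $\dual{\AsgFam}$ is the image of a choice function on $\AsgFam$, i.e., a \emph{transversal} meeting every $\XSet \in \AsgFam$, not a subset of any single $\XSet$. Consequently: (i) in your first direction the guard ``$\eth(\FFun) \subseteq \XSet$'' essentially never holds, the ``otherwise arbitrary'' clause then places chosen assignments outside $\extFun{\eth, \apElm}$, and the containment you conclude with, $\extFun{\eth, \apElm} \subseteq \img{\chcFun}$, is the reverse of what $\sqsubseteq$ requires, namely $\img{\chcFun} \subseteq \extFun{\eth, \apElm}$. The working construction picks, for each set $\extFun{\XSet, \FFun, \apElm}$, some $\asgElm \in \eth(\FFun) \cap \XSet$ (nonempty precisely because $\eth(\FFun)$ is a transversal of $\AsgFam$) and lets $\chcFun$ choose $\extFun{\asgElm, \FFun, \apElm} \in \extFun{\eth(\FFun), \FFun, \apElm} \subseteq \extFun{\eth, \apElm}$. (ii) In the converse direction you set $\eth(\FFun)$ to ``some $\XSet \in \AsgFam$,'' which is a type error since $\eth$ must map into $\dual{\AsgFam}$; moreover no $\XSet_{\FFun} \in \AsgFam$ can satisfy $\extFun{\XSet_{\FFun}, \FFun, \apElm} \subseteq \img{\chcFun}$, because $\img{\chcFun}$ contains exactly one assignment from each extended set. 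Instead, for each fixed $\FFun$, the map sending $\XSet \in \AsgFam$ to a preimage in $\XSet$ of $\chcFun(\extFun{\XSet, \FFun, \apElm})$ is itself a choice function on $\AsgFam$; its image is the required $\eth(\FFun) \in \dual{\AsgFam}$, yielding $\extFun{\eth, \apElm} = \bigcup_{\FFun} \extFun{\eth(\FFun), \FFun, \apElm} \subseteq \img{\chcFun}$. Finally, the non-coherent case is not literally symmetric: it additionally needs $\dual{\dual{\extFun[\spcElm]{\dual{\AsgFam}, \apElm}}} \equiv \extFun[\spcElm]{\dual{\AsgFam}, \apElm}$, again via Propositions~\ref{prp:dltinv} and~\ref{prp:oprmon}.
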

  \begin{proof}
    The proof is by induction on the length of $\qntElm$.
    If $\qntElm = \varepsilon$, then the claim follows immediately.
    Let $\qntElm = \Qnt^{\spcElm} \apElm \ldotp \qntElm'$, then we distinguish
    two cases.

    If $\alpha$ and $\Qnt$ are coherent, then $\nevlFun[\alpha](\dual{\AsgFam},
    \dual{\qntElm}) \equiv
    \nevlFun[\alpha](\dual{\extFun[\spcElm]{\dual{\dual{\AsgFam}}, \apElm}},
    \dual{\qntElm'})$ (recall that $\nevlFun[\alpha](\AsgFam, \Qnt^{\spcElm}\apElm) \equiv
    \evlFun[\alpha](\AsgFam, \Qnt^{\spcElm}\apElm)$ due to
    Proposition~\ref{prp:nrmevlbas}).
    By
    Propositions~\ref{prp:dltinv},~\ref{prp:oprmon},~\ref{prp:evlmon},
    and~\ref{prp:nrmevlind}, we have that
    $\nevlFun[\alpha](\dual{\extFun[\spcElm]{\dual{\dual{\AsgFam}}, \apElm}},
    \dual{\qntElm'}) \equiv \nevlFun[\alpha](\dual{\extFun[\spcElm]{\AsgFam,
        \apElm}}, \dual{\qntElm'})$.
    Moreover, we have that $\dual{\nevlFun[\alpha](\AsgFam, \qntElm)} =
    \dual{\nevlFun[\alpha](\extFun[\spcElm]{\AsgFam, \apElm}, \qntElm')}$.
    By inductive hypothesis, it holds that
    $\nevlFun[\alpha](\dual{\extFun[\spcElm]{\AsgFam, \apElm}}, \dual{\qntElm'})
    \equiv \dual{\nevlFun[\alpha](\extFun[\spcElm]{\AsgFam, \apElm}, \qntElm')}$,
    hence the thesis.

    If $\alpha$ and $\Qnt$ are not coherent, then
    $\nevlFun[\alpha](\dual{\AsgFam}, \dual{\qntElm}) =
    \nevlFun[\alpha](\extFun[\spcElm]{\dual{\AsgFam}, \apElm}, \dual{\qntElm'})$.
    Moreover, $\dual{\nevlFun[\alpha](\AsgFam, \qntElm)} \equiv
    \dual{\nevlFun[\alpha](\dual{\extFun[\spcElm]{\dual{\AsgFam}, \apElm}},
      \qntElm')}$.
    By inductive hypothesis,
    $\dual{\nevlFun[\alpha](\dual{\extFun[\spcElm]{\dual{\AsgFam}, \apElm}},
      \qntElm')} \equiv \nevlFun[\alpha](\dual{\dual{\extFun[\spcElm]{\dual{\AsgFam},
          \apElm}}}, \dual{\qntElm'}) \equiv
    \nevlFun[\alpha](\extFun[\spcElm]{\dual{\AsgFam}, \apElm}, \dual{\qntElm'}) =
    \nevlFun[\alpha](\dual{\AsgFam}, \dual{\qntElm})$
  \end{proof}

  It is not difficult to convince oneself that the following claim holds.
  \begin{claim}\label{clm:prefix}
    $\CSym[\dual{\alpha}](\dual{\qntElm}) = \dual{\CSym[\alpha](\qntElm)}$ for
    all $\qntElm \in \QntSet$.
  \end{claim}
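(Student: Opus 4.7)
The plan is to verify the identity by a direct syntactic computation, exploiting the uniqueness of the rewriting underlying the definitions of $\CSym[\QEA](\cdot)$ and $\CSym[\QAE](\cdot)$. First, I will spell out the meaning of $\dual{\cdot}$ on a behavioral quantifier prefix: it flips each $\exists$ with $\forall$ (and vice versa), leaving the atomic propositions and the specifications unchanged. Since $\BSym$ is itself player-agnostic, this flip preserves behavioralness, so $\dual{\qntElm} \in \QntSet[\BSym]$ whenever $\qntElm \in \QntSet[\BSym]$.

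By symmetry, it suffices to treat the case $\alpha = \QEA$; the case $\alpha = \QAE$ follows by interchanging the roles of $\exists$ and $\forall$ throughout. I will then take an arbitrary $\qntElm \in \QntSet[\BSym]$ and write it in the unique form required by the definition of $\CSym[\QEA]$, namely
\[
  \qntElm = \exists^{\BSym} \vec{\qapElm}[0] \ldotp (\forall^{\BSym} \vec{\apElm}[i] \ldotp \exists^{\BSym} \vec{\qapElm}[i])_{i = 1}^{k} \ldotp \forall^{\BSym} \vec{\apElm}[k + 1],
\]
and flip its quantifiers to obtain
\[
  \dual{\qntElm} = \forall^{\BSym} \vec{\qapElm}[0] \ldotp (\exists^{\BSym} \vec{\apElm}[i] \ldotp \forall^{\BSym} \vec{\qapElm}[i])_{i = 1}^{k} \ldotp \exists^{\BSym} \vec{\apElm}[k + 1].
\]
The key observation is that $\dual{\qntElm}$ is already in the unique form required by the definition of $\CSym[\QAE]$, with the vectors $\vec{\qapElm}[i]$ occupying the positions that $\vec{\apElm}[i]$ occupied in the $\CSym[\QEA]$-decomposition of $\qntElm$, and vice versa.

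Applying the two definitions in parallel yields, on one side,
\[
  \CSym[\QEA](\qntElm) = (\exists^{\BSym} \vec{\qapElm}[i])_{i = 0}^{k} \ldotp (\forall^{\vec{\spcElm}[i]} \vec{\apElm}[i])_{i = 1}^{k + 1}
\]
with $(\vec{\spcElm}[i])_{j} = \BSym \cup \sdep{\vec{\qapElm}[i] \cdots \vec{\qapElm}[k]}$, and, on the other side,
\[
  \CSym[\QAE](\dual{\qntElm}) = (\forall^{\BSym} \vec{\qapElm}[i])_{i = 0}^{k} \ldotp (\exists^{\vec{\spcElm}[i]} \vec{\apElm}[i])_{i = 1}^{k + 1}
\]
with exactly the same vector of specifications, because the strongly-behavioral tails in $\CSym[\QAE](\dual{\qntElm})$ are governed by the $\vec{\qapElm}$-blocks of $\dual{\qntElm}$, which coincide with the $\vec{\qapElm}$-blocks of $\qntElm$. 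A quantifier-wise flip of the former prefix then gives the latter verbatim, which is precisely the claimed equality. No serious obstacle is expected here, the only point requiring a small bookkeeping effort being the check that the two vectors of specifications are identical, and that follows directly from the observation that dualization swaps the roles of the $\vec{\apElm}$- and $\vec{\qapElm}$-blocks in a consistent manner across all positions in the prefix.
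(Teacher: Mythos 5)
Your proof is correct and is precisely the direct definition-unfolding the paper intends: the paper offers no argument for this claim beyond ``it is not difficult to convince oneself,'' and your computation --- writing $\qntElm$ in the unique block decomposition for $\CSym[\alpha]$, observing that flipping quantifiers turns it into the unique block decomposition for $\CSym[\dual{\alpha}]$ with the roles of the existential and universal blocks exchanged, and checking that the resulting strongly-behavioral tails coincide --- supplies the missing verification. The only caveat is that you restrict to $\qntElm \in \QntSet[\BSym]$ whereas the claim is stated for all $\qntElm \in \QntSet$; since the paper defines $\CSym[\QEA]$ and $\CSym[\QAE]$ only on behavioral prefixes (remarking elsewhere that generalized definitions would be needed for arbitrary specifications) and invokes the claim only on behavioral prefixes in the proof of Proposition~\ref{prp:qntswpind}, this restriction is harmless.
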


  Now, by instantiating $\AsgFam$ and $\qntElm$ with $\dual{\AsgFam}$ and
  $\dual{\qntElm}$, respectively, in the first ``squared inclusion'' of this
  proposition, proved in the first part of this proof, we know that
  $\nevlFun[\alpha](\dual{\AsgFam}, \CSym[\dual{\alpha}](\dual{\qntElm}))
  \sqsubseteq \nevlFun[\alpha](\dual{\AsgFam}, \dual{\qntElm})$.
  By Claims~\ref{clm:nevldualization} and~\ref{clm:prefix}, we have
  $\dual{\nevlFun[\alpha](\AsgFam, \CSym[\alpha](\qntElm))} \sqsubseteq
  \dual{\nevlFun[\alpha](\AsgFam, \qntElm)}$.
  The thesis follows from Proposition~\ref{prp:oprmon}.
\end{proof}

\addtocounter{proposition}{5}
\defenv{proposition}[][Prp][QntSwpIndI]
  {
  $\!\evlFun[\alpha](\AsgFam, \CSym[\dual{\alpha}](\qntElm[1] \ldotp
  \qntElm[2])) \!\sqsubseteq\! \YFam \!\sqsubseteq\! \evlFun[\alpha](\AsgFam,
  \qntElm[1] \ldotp \qntElm[2])$, for all $\YFam \in \{
  \evlFun[\alpha](\AsgFam, \qntElm[1] \ldotp \CSym[\dual{\alpha}](\qntElm[2])),
  \evlFun[\alpha](\AsgFam, \CSym[\dual{\alpha}](\qntElm[1]) \ldotp \qntElm[2])
  \}$, $\AsgFam \in \HAsgSet$ and $\qntElm[1], \qntElm[2] \in \QntSet[\BSym]$,
  with $\ap{\qntElm[1] \ldotp \qntElm[2]} \cap \ap{\AsgFam} = \emptyset$.
  }
\begin{proof}
  We will make use of the following result.

  \begin{claim}\label{clm:prefixswpI}
    $\CSym[\alpha](\CSym[\alpha](\qntElm[1]) \ldotp \qntElm[2]) =
    \CSym[\alpha](\qntElm[1] \ldotp \qntElm[2]) = \CSym[\alpha](\qntElm[1] \ldotp
    \CSym[\alpha](\qntElm[2]))$ for all $\qntElm[1], \qntElm[2] \in \QntSet$.
  \end{claim}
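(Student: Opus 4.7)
The plan is to treat the canonization operator $\CSym[\alpha]$, uniformly extended to arbitrary $\qntElm \in \QntSet$ by the same formula given in the main text, as a deterministic ``sort-and-augment'' operation: $\CSym[\alpha](\qntElm)$ lists first all $\alpha$-coherent quantifiers of $\qntElm$, then all $\dual{\alpha}$-coherent ones, each group preserving its relative order in $\qntElm$, and enlarges the specification of every $\dual{\alpha}$-coherent proposition $\apElm$ by $\sdep{E_{\qntElm}(\apElm)}$, where $E_{\qntElm}(\apElm)$ denotes the set of $\alpha$-coherent propositions occurring strictly after $\apElm$ in $\qntElm$. The claim is purely syntactic, so I would verify it by matching the two sides position by position rather than by induction.

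For $\CSym[\alpha](\CSym[\alpha](\qntElm[1]) \ldotp \qntElm[2]) = \CSym[\alpha](\qntElm[1] \ldotp \qntElm[2])$, the first observation is that $\CSym[\alpha]$ preserves both the multiset and the relative order of the $\alpha$-coherent and of the $\dual{\alpha}$-coherent propositions of $\qntElm[1]$. Hence, after applying the outer $\CSym[\alpha]$ to both sides, the resulting sequence of $\alpha$-coherent propositions and that of $\dual{\alpha}$-coherent propositions agree. For the specifications, let $E_1(\apElm)$ and $E_2$ denote the $\alpha$-coherent propositions after $\apElm$ in $\qntElm[1]$ and all the $\alpha$-coherent propositions of $\qntElm[2]$, respectively. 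A $\dual{\alpha}$-coherent $\apElm$ from $\qntElm[1]$ is augmented on the right-hand side directly by $\sdep{E_1(\apElm) \cup E_2}$; on the left-hand side, the inner $\CSym[\alpha]$ first augments by $\sdep{E_1(\apElm)}$ and the outer one augments further by $\sdep{E_2}$, since no $\alpha$-coherent proposition follows $\apElm$ inside $\CSym[\alpha](\qntElm[1])$. The union of these two increments coincides with the right-hand side. A $\dual{\alpha}$-coherent proposition from $\qntElm[2]$ has the same ``after'' set on both sides, so its specification is identical.

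The second equality $\CSym[\alpha](\qntElm[1] \ldotp \qntElm[2]) = \CSym[\alpha](\qntElm[1] \ldotp \CSym[\alpha](\qntElm[2]))$ is handled dually: a $\dual{\alpha}$-coherent $\apElm$ in $\qntElm[1]$ sees the same ``after'' set in both concatenations, because $\CSym[\alpha](\qntElm[2])$ regroups $\qntElm[2]$ without altering its multiset of $\alpha$-coherent propositions, while a $\dual{\alpha}$-coherent $\apElm$ from $\qntElm[2]$ receives the augmentation $\sdep{E_2(\apElm)}$ on both sides, directly on the left and via the inner $\CSym[\alpha]$ on the right, with no further increment from the outer one since no $\alpha$-coherent proposition follows $\apElm$ inside $\CSym[\alpha](\qntElm[2])$. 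The main obstacle is that the formula defining $\CSym[\alpha]$ in the main text is stated only for behavioral prefixes, whereas the claim ranges over $\QntSet$; I would address this at the outset by explicitly giving the uniform sort-and-augment extension on all of $\QntSet$, so that the two bookkeeping checks above reduce to matching explicit prefixes position by position.
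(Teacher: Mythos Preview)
The paper does not actually prove this claim: it merely states it and immediately follows with a remark observing that, since $\CSym[\alpha](\qntElm[i]) \notin \QntSet[\BSym]$, the canonization functions must first be generalized to arbitrary quantifier specifications for the claim even to be well-posed. No such generalization is spelled out, and no argument is given.

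Your proposal therefore goes beyond the paper. You supply precisely the missing ingredient the remark asks for---the uniform ``sort-and-augment'' extension of $\CSym[\alpha]$ to all of $\QntSet$, where each $\dual{\alpha}$-coherent proposition $\apElm$ has its specification enlarged by $\sdep{E_{\qntElm}(\apElm)}$---and then verify both equalities by a direct bookkeeping check on the order of propositions and on the accumulated specifications. The argument is correct: the stable-sort nature of $\CSym[\alpha]$ guarantees the proposition orders coincide, and the additivity of the $\sdep{\cdot}$ augmentation together with the fact that no $\alpha$-coherent proposition follows a $\dual{\alpha}$-coherent one inside $\CSym[\alpha](\qntElm[i])$ yields the matching specifications. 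This is exactly the kind of direct verification the paper presumably had in mind but omitted.
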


  \begin{remark}
    Notice that due to the fact that $\CSym[\alpha](\qntElm[1]) \notin
    \QntSet[\BSym]$ and $\CSym[\alpha](\qntElm[2]) \notin \QntSet[\BSym]$, the two
    expressions $\CSym[\alpha](\CSym[\alpha](\qntElm[1]) \ldotp \qntElm[2])$ and
    $\CSym[\alpha](\qntElm[1] \ldotp \CSym[\alpha](\qntElm[2]))$ are not well
    defined.
    Thus, in order for the above claim to make sense, we need generalized
    definitions for functions $\CSym[\QEA]$ and $\CSym[\QAE]$ to cope with
    quantifier prefixes featuring arbitrary quantifier specifications.
  \end{remark}

  \begin{enumerate}
  \item In order to show that $\nevlFun[\alpha](\AsgFam, \qntElm[1] \ldotp
    \CSym[\dual{\alpha}](\qntElm[2])) \sqsubseteq \nevlFun[\alpha](\AsgFam,
    \qntElm[1] \ldotp \qntElm[2])$, we proceed as follows:
    $\nevlFun[\alpha](\AsgFam, \qntElm[1] \ldotp \qntElm[2]) =
    \nevlFun[\alpha](\nevlFun[\alpha](\AsgFam, \qntElm[1]), \qntElm[2]) \sqsupseteq
    \nevlFun[\alpha](\nevlFun[\alpha](\AsgFam, \qntElm[1]), \CSym[\dual{\alpha}]
    (\qntElm[2])) = \nevlFun[\alpha](\AsgFam, \qntElm[1] \ldotp \CSym[\dual{\alpha}]
    (\qntElm[2]))$.

  \item In order to show that $\nevlFun[\alpha](\AsgFam,
    \CSym[\dual{\alpha}](\qntElm[1]) \ldotp \qntElm[2]) \sqsubseteq
    \nevlFun[\alpha](\AsgFam, \qntElm[1] \ldotp \qntElm[2])$, we proceed as follows:
    $\nevlFun[\alpha](\AsgFam, \qntElm[1] \ldotp \qntElm[2]) =
    \nevlFun[\alpha](\nevlFun[\alpha](\AsgFam, \qntElm[1]), \qntElm[2]) \sqsupseteq
    \nevlFun[\alpha](\nevlFun[\alpha](\AsgFam, \CSym[\dual{\alpha}] (\qntElm[1])),
    \qntElm[2]) = \nevlFun[\alpha](\AsgFam, \CSym[\dual{\alpha}] (\qntElm[1]) \ldotp
    \qntElm[2])$.

  \item In order to show that $\nevlFun[\alpha](\AsgFam,
    \CSym[\dual{\alpha}](\qntElm[1] \ldotp \qntElm[2])) \sqsubseteq
    \nevlFun[\alpha](\AsgFam, \qntElm[1] \ldotp \CSym[\dual{\alpha}](\qntElm[2]))$,
    we proceed as follows: $\nevlFun[\alpha](\AsgFam,
    \CSym[\dual{\alpha}](\qntElm[1] \ldotp \qntElm[2])) = \nevlFun[\alpha](\AsgFam,
    \CSym[\dual{\alpha}](\qntElm[1] \ldotp \CSym[\dual{\alpha}](\qntElm[2])))
    \sqsubseteq \nevlFun[\alpha](\AsgFam, \qntElm[1] \ldotp
    \CSym[\dual{\alpha}](\qntElm[2]))$.

  \item In order to show that $\nevlFun[\alpha](\AsgFam,
    \CSym[\dual{\alpha}](\qntElm[1] \ldotp \qntElm[2])) \sqsubseteq
    \nevlFun[\alpha](\AsgFam, \CSym[\dual{\alpha}](\qntElm[1]) \ldotp \qntElm[2])$,
    we proceed as follows: $\nevlFun[\alpha](\AsgFam,
    \CSym[\dual{\alpha}](\qntElm[1] \ldotp \qntElm[2])) = \nevlFun[\alpha](\AsgFam,
    \CSym[\dual{\alpha}](\CSym[\dual{\alpha}](\qntElm[1]) \ldotp \qntElm[2]))
    \sqsubseteq \nevlFun[\alpha](\AsgFam, \CSym[\dual{\alpha}](\qntElm[1]) \ldotp
    \qntElm[2])$.\qedhere
  \end{enumerate}
\end{proof}

\defenv{proposition}[][Prp][QntSwpIndII]
  {
  $\!\evlFun[\alpha](\AsgFam, \qntElm[1] \ldotp \qntElm[2]) \!\sqsubseteq\!
  \YFam \!\sqsubseteq\! \evlFun[\alpha](\AsgFam, \CSym[\alpha](\qntElm[1]
  \ldotp \qntElm[2]))$, for all $\YFam \in \{ \evlFun[\alpha](\AsgFam,
  \qntElm[1] \ldotp \CSym[\alpha](\qntElm[2])), \evlFun[\alpha](\AsgFam,
  \CSym[\alpha](\qntElm[1]) \ldotp \qntElm[2]) \}$, $\AsgFam \in \HAsgSet$ and
  $\qntElm[1], \qntElm[2] \in \QntSet[\BSym]$, with $\ap{\qntElm[1] \ldotp
  \qntElm[2]} \cap \ap{\AsgFam} = \emptyset$.
  }
\begin{proof}
  \begin{enumerate}
  \item In order to show that $\nevlFun[\alpha](\AsgFam, \qntElm[1] \ldotp
    \qntElm[2]) \sqsubseteq \nevlFun[\alpha](\AsgFam, \qntElm[1] \ldotp
    \CSym[\alpha](\qntElm[2]))$, we proceed as follows: $\nevlFun[\alpha](\AsgFam,
    \qntElm[1] \ldotp \qntElm[2]) = \nevlFun[\alpha](\nevlFun[\alpha](\AsgFam,
    \qntElm[1]), \qntElm[2]) \sqsubseteq \nevlFun[\alpha](\nevlFun[\alpha](\AsgFam,
    \qntElm[1]), \CSym[\alpha] (\qntElm[2])) = \nevlFun[\alpha](\AsgFam,
    \qntElm[1] \ldotp \CSym[\alpha] (\qntElm[2]))$.

  \item In order to show that $\nevlFun[\alpha](\AsgFam, \qntElm[1] \ldotp
    \qntElm[2]) \sqsubseteq \nevlFun[\alpha](\AsgFam,
    \CSym[\alpha](\qntElm[1]) \ldotp \qntElm[2])$, we proceed as follows:
    $\nevlFun[\alpha](\AsgFam, \qntElm[1] \ldotp \qntElm[2]) =
    \nevlFun[\alpha](\nevlFun[\alpha](\AsgFam, \qntElm[1]), \qntElm[2]) \sqsubseteq
    \nevlFun[\alpha](\nevlFun[\alpha](\AsgFam, \CSym[\alpha] (\qntElm[1])),
    \qntElm[2]) = \nevlFun[\alpha](\AsgFam, \CSym[\alpha] (\qntElm[1]) \ldotp
    \qntElm[2])$.

  \item In order to show that $\nevlFun[\alpha](\AsgFam, \qntElm[1] \ldotp
    \CSym[\alpha](\qntElm[2])) \sqsubseteq \nevlFun[\alpha](\AsgFam,
    \CSym[\alpha](\qntElm[1] \ldotp \qntElm[2]))$, we proceed as follows: $
    \nevlFun[\alpha](\AsgFam, \qntElm[1] \ldotp \CSym[\alpha](\qntElm[2]))
    \sqsubseteq \nevlFun[\alpha](\AsgFam, \CSym[\alpha](\qntElm[1] \ldotp
    \CSym[\alpha](\qntElm[2]))) = \nevlFun[\alpha](\AsgFam,
    \CSym[\alpha](\qntElm[1] \ldotp \qntElm[2]))$.

  \item In order to show that $\nevlFun[\alpha](\AsgFam,
    \CSym[\alpha](\qntElm[1]) \ldotp \qntElm[2]) \sqsubseteq
    \nevlFun[\alpha](\AsgFam, \CSym[\alpha](\qntElm[1] \ldotp \qntElm[2])) $, we
    proceed as follows: $\nevlFun[\alpha](\AsgFam, \CSym[\alpha](\qntElm[1]) \ldotp
    \qntElm[2]) \sqsubseteq \nevlFun[\alpha](\AsgFam,
    \CSym[\alpha](\CSym[\alpha](\qntElm[1]) \ldotp \qntElm[2])) =
    \nevlFun[\alpha](\AsgFam, \CSym[\alpha](\qntElm[1] \ldotp \qntElm[2]))$.\qedhere
  \end{enumerate}
\end{proof}

\recenv{ThmQntGamII}
\begin{proof}
  First of all, recall that the game $\Game[\QStr]$ of
  Construction~\ref{cns:qntgamii} is obtained from the game
  $\Game[\der{\qntElm}][\der{\Psi}]$ given in Construction~\ref{cns:qntgami},
  where
  \begin{itemize}
    \item
      $\der{\qntElm} \defeq \forall \vec{\apElm} \ldotp \qntElm[][\bullet]
      \ldotp \qntElm$ and
    \item
      $\der{\Psi} \defeq \Psi \cup \set{ \asgElm \in \AsgSet(\PSet) }{
      \asgElm \rst {\vec{\apElm}} \,\not\in \XSet }$,
  \end{itemize}
  with $\vec{\apElm} \defeq \ap{\AsgFam} \setminus \ap{\qntElm[][\bullet]}$ and
  $\PSet \defeq \ap{\qntElm} \cup \ap{\AsgFam}$ and
  being $\tuple {\qntElm[][\bullet]} {\XSet}$ a generator for
  $\AsgFam$.

  We can now proceed with the proof of the two properties.
  \begin{itemize}
    \item\textbf{[\ref{thm:qntgamii(elo)}]}
      If Eloise wins the game, by Theorem~\ref{thm:qntgami}, there exists a set
      of assignments $\der{\ESet} \in
      \nevlFun[\exists\forall](\CSym[\forall\exists](\der{\qntElm}))$ such that
      $\der{\ESet} \subseteq \der{\Psi}$.
      Thanks to Propositions~\ref{prp:evlmon} and~\ref{prp:qntswpindi}, we can
      easily prove the following inclusion between normal evaluations.
      \begin{claim}
        $\nevlFun[\exists\forall](\CSym[\forall\exists](\der{\qntElm}))
        \sqsubseteq \nevlFun[\exists\forall](\AsgFam,
        \CSym[\forall\exists](\qntElm))$.
      \end{claim}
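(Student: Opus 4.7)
The plan is to obtain the claim by chaining one canonization step with two monotonicity steps. I would first split the prefix $\der{\qntElm} = \forall \vec{\apElm} \ldotp \trn{\qntElm} \ldotp \qntElm$ at the boundary between $\trn{\qntElm}$ and $\qntElm$ and apply Proposition~\ref{prp:qntswpindi} with $\qntElm[1] \defeq \forall \vec{\apElm} \ldotp \trn{\qntElm}$ and $\qntElm[2] \defeq \qntElm$, which pushes the canonization onto the trailing segment and yields
\[
  \nevlFun[\QEA](\CSym[\QAE](\der{\qntElm})) \sqsubseteq \nevlFun[\QEA](\{ \{ \emptyfun \} \}, \forall \vec{\apElm} \ldotp \trn{\qntElm} \ldotp \CSym[\QAE](\qntElm)).
\]
Both $\qntElm[1]$ and $\qntElm[2]$ are behavioral (as required by Proposition~\ref{prp:qntswpindi}) since $\qntElm$ is behavioral by the schema hypothesis and $\trn{\qntElm}$ is behavioral by the definition of generator, while the preamble $\forall \vec{\apElm}$ carries no specification.

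Next I would unfold the right-hand side along the prefix composition as $\nevlFun[\QEA](\nevlFun[\QEA](\{ \{ \emptyfun \} \}, \forall \vec{\apElm} \ldotp \trn{\qntElm}), \CSym[\QAE](\qntElm))$ and separately evaluate the inner $\forall \vec{\apElm}$-block. A direct computation on the base hyperassignment shows that $\nevlFun[\QEA](\{ \{ \emptyfun \} \}, \forall \vec{\apElm}) \equiv \{ \AsgSet(\vec{\apElm}) \}$: dualizing $\{ \{ \emptyfun \} \}$ returns itself, the behavioral extension over each Boolean functor produces the singletons $\{ \asgElm \}$ with $\asgElm \in \AsgSet(\vec{\apElm})$, and dualizing back collects those singletons into the unique image set $\AsgSet(\vec{\apElm})$. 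Since by the generator hypothesis $\XSet \subseteq \AsgSet(\vec{\apElm})$, we have $\{ \AsgSet(\vec{\apElm}) \} \sqsubseteq \{ \XSet \}$, so two applications of Proposition~\ref{prp:evlmon} deliver
\[
  \nevlFun[\QEA](\{ \AsgSet(\vec{\apElm}) \}, \trn{\qntElm} \ldotp \CSym[\QAE](\qntElm)) \sqsubseteq \nevlFun[\QEA](\{ \XSet \}, \trn{\qntElm} \ldotp \CSym[\QAE](\qntElm)) = \nevlFun[\QEA](\AsgFam, \CSym[\QAE](\qntElm)),
\]
using the generator identity $\AsgFam = \nevlFun[\QEA](\{ \XSet \}, \trn{\qntElm})$ in the final equality.

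Chaining the two sqsubseteq-steps yields exactly the desired inclusion. The only slightly delicate point, and the step I would write out most carefully, is the explicit evaluation $\nevlFun[\QEA](\{ \{ \emptyfun \} \}, \forall \vec{\apElm}) \equiv \{ \AsgSet(\vec{\apElm}) \}$, since this is what routes the universal preamble in $\der{\qntElm}$ through the generator identity and ultimately makes the inclusion $\XSet \subseteq \AsgSet(\vec{\apElm})$ usable; everything else is a formal application of the already-proven canonization (Proposition~\ref{prp:qntswpindi}) and monotonicity (Proposition~\ref{prp:evlmon}) results, together with the associativity of $\nevlFun$ along prefix concatenation.
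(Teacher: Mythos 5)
Your proof is correct and is essentially the argument the paper intends for this claim: reduce it to Proposition~\ref{prp:qntswpindi} (pushing $\CSym[\QAE]$ onto the trailing block $\qntElm$ of $\der{\qntElm}$), then apply Proposition~\ref{prp:evlmon} to $\{\AsgSet(\vec{\apElm})\} \sqsubseteq \{\XSet\}$ together with the generator identity $\AsgFam = \evlFun[\QEA](\{\XSet\}, \trn{\qntElm})$. Note only that your orientation $\evlFun[\QEA](\forall \vec{\apElm} \ldotp \trn{\qntElm}) \sqsubseteq \AsgFam$ is the one the chain actually requires (and the one monotonicity delivers from $\XSet \subseteq \AsgSet(\vec{\apElm})$), even though the remark preceding Construction~\ref{cns:qntgamii} states this inclusion with the two sides swapped.
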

      Due to the specific definition of the ordering $\sqsubseteq$ between
      hyperassignments, it follows that the above inclusion necessarily implies
      the existence of a set of assignments $\ESet \in
      \nevlFun[\exists\forall](\AsgFam, \CSym[\forall\exists](\qntElm))$ such
      that $\ESet \!\subseteq\! \der{\ESet}$.
      Therefore, $\ESet \!\subseteq\! \der{\Psi}$.

      At this point, we can immediately prove that $\ESet \subseteq \Psi$, being
      $\der{\Psi} = \Psi \cup \set{ \asgElm \in \AsgSet(\PSet) }{ \asgElm \rst
      {\vec{\apElm}} \,\not\in \XSet }$, thanks to the following claim, which
      can be derived from Proposition~\ref{prp:nrmevlrst}.
      \begin{claim}
        $\ESet \cap \set{ \asgElm \in \AsgSet(\PSet) }{ \asgElm \rst
        {\vec{\apElm}} \,\not\in \XSet } = \emptyset$, for all $\ESet \in
        \nevlFun[\exists\forall](\AsgFam, \CSym[\forall\exists](\qntElm))$.
      \end{claim}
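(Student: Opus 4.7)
The plan is to establish the claim by two successive applications of Proposition~\ref{prp:nrmevlrst}, chained through the generator structure of the Borelian behavioral hyperassignment $\AsgFam$. Recall that, by assumption, $\tuple{\trn{\qntElm}}{\XSet}$ is a generator for $\AsgFam$, so $\AsgFam = \nevlFun[\QEA](\{\XSet\}, \trn{\qntElm})$, and that the schema condition gives $\ap{\qntElm} \cap \ap{\AsgFam} = \emptyset$ together with $\vec{\apElm} = \ap{\AsgFam} \setminus \ap{\trn{\qntElm}}$.

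First, I would apply Proposition~\ref{prp:nrmevlrst} directly to $\ESet \in \nevlFun[\QEA](\AsgFam, \CSym[\QAE](\qntElm))$. Since the canonization operator only permutes quantifiers and tightens specifications, it preserves the set of quantified atomic propositions, \ie, $\ap{\CSym[\QAE](\qntElm)} = \ap{\qntElm}$. The proposition then yields $\ESet \setminus \ap{\qntElm} \subseteq \bigcup \AsgFam$. Because every $\asgElm \in \ESet$ is defined over $\PSet = \ap{\qntElm} \cup \ap{\AsgFam}$ and these two sets are disjoint, the restriction $\asgElm \setminus \ap{\qntElm}$ coincides with $\asgElm \rst \ap{\AsgFam}$. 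Hence, for every $\asgElm \in \ESet$, we obtain $\asgElm \rst \ap{\AsgFam} \in \bigcup \AsgFam$.

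Second, I would apply Proposition~\ref{prp:nrmevlrst} to the generator identity $\AsgFam = \nevlFun[\QEA](\{\XSet\}, \trn{\qntElm})$. This gives $\YSet \setminus \ap{\trn{\qntElm}} \subseteq \XSet$ for every $\YSet \in \AsgFam$, and consequently every $\basgElm \in \bigcup \AsgFam$ satisfies $\basgElm \rst (\ap{\AsgFam} \setminus \ap{\trn{\qntElm}}) = \basgElm \rst \vec{\apElm} \in \XSet$. Combining this with the previous step, and using $\vec{\apElm} \subseteq \ap{\AsgFam}$ to commute the two restrictions, I conclude that for every $\asgElm \in \ESet$ we have $\asgElm \rst \vec{\apElm} = (\asgElm \rst \ap{\AsgFam}) \rst \vec{\apElm} \in \XSet$. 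Therefore no element of $\ESet$ belongs to $\set{\asgElm \in \AsgSet(\PSet)}{\asgElm \rst \vec{\apElm} \not\in \XSet}$, yielding the desired empty intersection.

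I do not anticipate substantial obstacles here: the argument is essentially bookkeeping on domains of assignments, with Proposition~\ref{prp:nrmevlrst} doing the semantic heavy lifting in both passes. The only step requiring minor care is the commutation of the two restrictions $\rst \ap{\AsgFam}$ and $\rst \vec{\apElm}$, which is immediate from $\vec{\apElm} \subseteq \ap{\AsgFam}$, and the observation that canonization preserves quantified propositions, which is an obvious syntactic consequence of the definitions of $\CSym[\QEA](\cdot)$ and $\CSym[\QAE](\cdot)$.
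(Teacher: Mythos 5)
Your derivation is correct and is exactly the route the paper intends: the paper leaves this claim with only the pointer to Proposition~\ref{prp:nrmevlrst}, and your two chained applications of it --- first stripping $\CSym[\QAE](\qntElm)$ to land in $\bigcup\AsgFam$, then stripping the generator prefix $\trn{\qntElm}$ to land in $\XSet$, with the domain bookkeeping you describe --- are the natural way to cash that pointer in. The one spot needing a word of care is your second application: the generator identity is stated via $\evlFun$ whereas Proposition~\ref{prp:nrmevlrst} is stated for $\nevlFun$, and the property ``every assignment of every set of $\AsgFam$ restricts into $\XSet$'' is \emph{not} invariant under $\equiv$ (unions of equivalent hyperassignments can differ), so you should either note that the $\evlFun$-analogue of the proposition holds by the same induction (dualization only ever selects assignments already present in the sets being dualized, so extensions and dualizations both keep $\bigcup$ under control) or read the generator as given w.r.t.\ $\nevlFun$, in line with the paper's stated convention of using the two evolutions interchangeably.
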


    \item\textbf{[\ref{thm:qntgamii(abe)}]}
      If Abelard wins the game, by Theorem~\ref{thm:qntgami}, for all sets of
      assignments $\der{\ESet} \in
      \nevlFun[\exists\forall](\CSym[\exists\forall](\der{\qntElm}))$ it holds
      that $\der{\ESet} \not\subseteq \der{\Psi}$.

      First, given $\AsgFam[1], \AsgFam[2] \in \HAsgSet$,
      with $\ap{\AsgFam[1]} = \ap{\AsgFam[2]}$, and $\XSet \subseteq
      \AsgSet(\PSet)$ for some $\PSet \subseteq \APSet$, we define $\AsgFam[1]
      \sqsubseteq_{\XSet} \AsgFam[2]$ if and only if for every $\XSet[1] \in
      \AsgFam[1]$ there is $\XSet[2] \in \AsgFam[2]$ such that $\XSet[2] \setminus
      \set{ \asgElm \in \AsgSet } { \asgElm \rst {\PSet} \in \XSet } \subseteq
      \XSet[1]$.
      The following claim holds.

      \begin{claim}
        Let $\AsgFam[1], \AsgFam[2] \in \HAsgSet$ and $\XSet \subseteq
        \AsgSet(\PSet)$ for some $\PSet \subseteq \APSet$.
        Then, $\AsgFam[1] \sqsubseteq_{\XSet} \AsgFam[2]$ implies
        $\evlFun[\alpha](\AsgFam[1], \qntElm) \sqsubseteq_{\XSet}
        \evlFun[\alpha](\AsgFam[2], \qntElm)$, for all $\qntElm \in \QntSet$
        with $\ap{\AsgFam[1]} \cap \ap{\qntElm} = \emptyset$.
      \end{claim}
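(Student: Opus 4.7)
The plan is to proceed by structural induction on the quantifier prefix $\qntElm$, mirroring the pattern used in Proposition~\ref{prp:evlmon}, but carefully tracking the $\XSet$-relaxation in the new preorder $\sqsubseteq_{\XSet}$. The base case $\qntElm = \varepsilon$ is immediate, since $\evlFun[\alpha](\AsgFam[i], \varepsilon) = \AsgFam[i]$ by definition. For the inductive step, with $\qntElm = \Qnt^{\spcElm} \apElm \ldotp \qntElm'$, it suffices to establish monotonicity for the two atomic building blocks of $\evlFun[\alpha]$ applied to a single quantifier, namely the dualization operator and the $\spcElm$-extension operator, and then apply the induction hypothesis to $\qntElm'$.

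For the extension operator, the argument is direct. Assume $\AsgFam[1] \sqsubseteq_{\XSet} \AsgFam[2]$ and fix $\extFun{\XSet[1], \FFun, \apElm} \in \extFun[\spcElm]{\AsgFam[1], \apElm}$, with $\XSet[1] \in \AsgFam[1]$ and $\FFun \in \FncSet[\spcElm](\ap{\AsgFam[1]})$. Let $\XSet[2] \in \AsgFam[2]$ be the witness, so that $\XSet[2] \setminus \set{\asgElm}{\asgElm \rst \PSet \in \XSet} \subseteq \XSet[1]$. I claim $\extFun{\XSet[2], \FFun, \apElm} \in \extFun[\spcElm]{\AsgFam[2], \apElm}$ is a witness for $\extFun{\XSet[1], \FFun, \apElm}$. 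The key point is that $\apElm \not\in \PSet$ (which follows from $\PSet \subseteq \ap{\AsgFam[1]}$ and $\ap{\AsgFam[1]} \cap \ap{\qntElm} = \emptyset$), so extending an assignment with $\apElm$ does not change its restriction to $\PSet$; hence $\extFun{\asgElm, \FFun, \apElm} \rst \PSet \in \XSet$ iff $\asgElm \rst \PSet \in \XSet$, and the set-difference commutes with the extension, giving the required inclusion.

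The main obstacle is proving the analogous monotonicity for dualization, namely $\AsgFam[1] \sqsubseteq_{\XSet} \AsgFam[2]$ implies $\dual{\AsgFam[2]} \sqsubseteq_{\XSet} \dual{\AsgFam[1]}$. This generalizes Proposition~\ref{prp:oprmon}(1), but the choice-function construction must now absorb the ``bad'' assignments (those whose restriction to $\PSet$ lies in $\XSet$) that the new preorder lets us discard. Fix $\img{\chcFun[2]} \in \dual{\AsgFam[2]}$ with $\chcFun[2] \in \ChcSet{\AsgFam[2]}$. Let $f \colon \AsgFam[1] \to \AsgFam[2]$ be a function selecting, for each $\XSet[1] \in \AsgFam[1]$, a witness $f(\XSet[1]) \in \AsgFam[2]$ such that $f(\XSet[1]) \setminus \set{\asgElm}{\asgElm \rst \PSet \in \XSet} \subseteq \XSet[1]$. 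Define $\chcFun[1] \in \ChcSet{\AsgFam[1]}$ by the following case distinction: if $\chcFun[2](f(\XSet[1])) \not\in \set{\asgElm}{\asgElm \rst \PSet \in \XSet}$, then set $\chcFun[1](\XSet[1]) \defeq \chcFun[2](f(\XSet[1]))$, which lies in $\XSet[1]$ by the witness property; otherwise pick $\chcFun[1](\XSet[1])$ arbitrarily in $\XSet[1]$. Then $\img{\chcFun[1]} \in \dual{\AsgFam[1]}$ and one verifies $\img{\chcFun[2]} \setminus \set{\asgElm}{\asgElm \rst \PSet \in \XSet} \subseteq \img{\chcFun[1]}$: any $\asgElm \in \img{\chcFun[2]}$ with $\asgElm \rst \PSet \not\in \XSet$ is of the form $\chcFun[2](\XSet[2])$ for some $\XSet[2] \in \AsgFam[2]$, and by picking any $\XSet[1] \in \AsgFam[1]$ with $f(\XSet[1]) = \XSet[2]$ (handling the case where no such $\XSet[1]$ exists by restricting attention to the image of $f$) one gets $\asgElm = \chcFun[1](\XSet[1]) \in \img{\chcFun[1]}$.

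With monotonicity for both atomic operators established, the inductive step is routine: unfold $\evlFun[\alpha](\AsgFam[i], \Qnt^{\spcElm} \apElm \ldotp \qntElm')$ according to whether $\alpha$ and $\Qnt$ are coherent (yielding $\evlFun[\alpha](\extFun[\spcElm]{\AsgFam[i], \apElm}, \qntElm')$) or not (yielding $\evlFun[\alpha](\dual{\extFun[\spcElm]{\dual{\AsgFam[i]}, \apElm}}, \qntElm')$), apply the atomic monotonicity results to propagate $\sqsubseteq_{\XSet}$ through the combination of dualization and extension, and finally invoke the induction hypothesis on the strictly shorter prefix $\qntElm'$, noting that the side condition $\ap{\AsgFam[1]} \cap \ap{\qntElm} = \emptyset$ still holds after the extension because the domain grows only by $\apElm \in \ap{\qntElm}$ which is disjoint from both the prior $\ap{\AsgFam[1]}$ and $\ap{\qntElm'}$.
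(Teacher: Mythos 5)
Your base case and your handling of the extension operator are fine and coincide with the paper's argument: the witness function $f$ for $\AsgFam[1] \sqsubseteq_{\XSet} \AsgFam[2]$ is pushed through the extension, using $\apElm \notin \PSet$ so that restriction to $\PSet$ commutes with extending by $\apElm$. The gap is in the non-coherent case. The intermediate lemma you rely on, namely that $\AsgFam[1] \sqsubseteq_{\XSet} \AsgFam[2]$ implies $\dual{\AsgFam[2]} \sqsubseteq_{\XSet} \dual{\AsgFam[1]}$, is simply false. Take $\AsgFam[1] = \{\{\asgElm[1]\}\}$ and $\AsgFam[2] = \{\{\asgElm[2]\}\}$ with $\asgElm[1] \neq \asgElm[2]$, $\asgElm[2] \rst \PSet \in \XSet$ and $\asgElm[1] \rst \PSet \notin \XSet$. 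Then $\AsgFam[1] \sqsubseteq_{\XSet} \AsgFam[2]$ holds, because $\{\asgElm[2]\}$ minus the bad set $\set{\asgElm \in \AsgSet}{\asgElm \rst \PSet \in \XSet}$ is empty; yet $\dual{\AsgFam[i]} = \AsgFam[i]$, and $\{\asgElm[1]\}$ minus the bad set equals $\{\asgElm[1]\} \not\subseteq \{\asgElm[2]\}$, so $\dual{\AsgFam[2]} \not\sqsubseteq_{\XSet} \dual{\AsgFam[1]}$. Your construction shows exactly where this breaks: when $\chcFun[2](f(\XSet[1]))$ is bad you fall back to an arbitrary element of $\XSet[1]$, which may be neither bad nor in $\img{\chcFun[2]}$, so the inclusion actually required for $\dual{\AsgFam[2]} \sqsubseteq_{\XSet} \dual{\AsgFam[1]}$, namely $\img{\chcFun[1]} \setminus \set{\asgElm \in \AsgSet}{\asgElm \rst \PSet \in \XSet} \subseteq \img{\chcFun[2]}$, fails. (Note also that the inclusion you state and verify, $\img{\chcFun[2]} \setminus \{\cdots\} \subseteq \img{\chcFun[1]}$, is the wrong way round for that relation, given the paper's definition of $\sqsubseteq_{\XSet}$.)

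The composite $\dual{\extFun[\spcElm]{\dual{\cdot}, \apElm}}$ is $\sqsubseteq_{\XSet}$-monotone even though its dualization factors are not, and this is precisely why the paper does not factor it. Instead, for the non-coherent case it replaces the one-step evolution by the equivalent normal form $\nevlFun[\alpha](\AsgFam, \Qnt^{\spcElm}\apElm) = \set{\extFun{\eth, \apElm}}{\eth \in \FncSet[\spcElm](\ap{\AsgFam}) \to \AsgFam}$, whose members are unions $\bigcup_{\FFun} \extFun{\eth(\FFun), \FFun, \apElm}$ over selection maps $\eth$. The witness is then $g(\eth_{1})(\FFun) \defeq f(\eth_{1}(\FFun))$, and since the bad-set subtraction distributes over the union, $\extFun{g(\eth_{1}), \apElm}$ minus the bad set is contained in $\extFun{\eth_{1}, \apElm}$, which gives $\sqsubseteq_{\XSet}$-monotonicity of the whole non-coherent step in one go. To repair your proof you would need to adopt this (or an equivalent) unfactored treatment of the non-coherent case rather than arguing through the two dualizations separately.
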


      \begin{proof}
        The proof is by induction on the length of $\qntElm$.

        If $\qntElm = \varepsilon$ (base case), then the claim follows
        trivially.

        Let $\qntElm = \Qnt^{\spcElm} \apElm \ldotp \qntElm'$.
        If $\alpha$ and $\Qnt$ are coherent, then $\evlFun[\alpha](\AsgFam[i],
        \qntElm) = \evlFun[\alpha](\extFun[\spcElm]{\AsgFam[i], \apElm}, \qntElm')$, for
        every $i \in \{ 1,2 \}$.
        We show that $\extFun[\spcElm]{\AsgFam[1], \apElm} \sqsubseteq_{\XSet}
        \extFun[\spcElm]{\AsgFam[2], \apElm} $ holds, and the thesis follows by applying
        the inductive hypothesis.
        Since $\extFun[\spcElm]{\AsgFam[i], \apElm} = \set{ \extFun{\XSet[i],
            \FFun[i], \apElm} }{ \XSet[i] \in \AsgFam[1], \FFun_i \in
          \FncSet[\spcElm](\ap{\AsgFam[i]}) }$ ($i \in \{ 1,2 \}$), we have to show that
        for every $\XSet[1] \in \AsgFam[1]$ and $\FFun[1] \in
        \FncSet[\spcElm](\ap{\AsgFam[1]})$ ($= \FncSet[\spcElm](\ap{\AsgFam[2]})$) there
        are $\XSet[2] \in \AsgFam[2]$ and $\FFun[2] \in
        \FncSet[\spcElm](\ap{\AsgFam[1]})$ such that $\extFun{\XSet[2], \FFun[2],
          \apElm} \setminus \set{ \asgElm \in \AsgSet } { \asgElm \rst {\PSet} \in \XSet }
        \subseteq \extFun{\XSet[1], \FFun[1], \apElm}$.
        It is easy to see that for every $\XSet[1] \in \AsgFam[1]$ and $\FFun[1]
        \in \FncSet[\spcElm](\ap{\AsgFam[1]})$, it holds that $\extFun{f(\XSet[1]),
          \FFun[1], \apElm} \setminus \set{ \asgElm \in \AsgSet } { \asgElm \rst {\PSet}
          \in \XSet } \subseteq \extFun{\XSet[1], \FFun[1], \apElm}$, where $f :
        \AsgFam[1] \rightarrow \AsgFam[2]$ is a witness for $\AsgFam[1]
        \sqsubseteq_{\XSet} \AsgFam[2]$.

        Instead, if $\alpha$ and $\Qnt$ are not coherent, then
        $\evlFun[\alpha](\AsgFam[i], \qntElm) \equiv
        \evlFun[\alpha](\nevlFun[\alpha](\AsgFam[i], \Qnt^{\spcElm} \apElm), \qntElm')$,
        for every $i \in \{ 1,2 \}$.
        We show that $\nevlFun[\alpha](\AsgFam[1], \Qnt^{\spcElm} \apElm)
        \sqsubseteq_{\XSet} \nevlFun[\alpha](\AsgFam[2], \Qnt^{\spcElm} \apElm)$ holds,
        and the thesis follows by applying the inductive hypothesis.
        Since $\nevlFun[\alpha](\AsgFam[i], \Qnt^{\spcElm} \apElm) = \set{
          \extFun{\eth_i, \apElm} }{ \eth_i \in \FncSet[\spcElm](\ap{\AsgFam[i]}) \to
          \AsgFam[i] }$, where $\extFun{\eth_i, \apElm} = \bigcup_{\FFun \in \dom{\eth_i}}
        \extFun{\eth_i(\FFun), \FFun, \apElm}$ ($i \in \{ 1,2 \}$), we have to show that
        for every $\eth_1 \in \FncSet[\spcElm](\ap{\AsgFam[1]}) \to \AsgFam[1]$ there is
        $\eth_2 \in \FncSet[\spcElm](\ap{\AsgFam[1]}) \to \AsgFam[2]$ (recall that
        $\ap{\AsgFam[1]} = \ap{\AsgFam[2]}$) such that $\extFun{\eth_2, \apElm}
        \setminus \set{ \asgElm \in \AsgSet } { \asgElm \rst {\PSet} \in \XSet }
        \subseteq \extFun{\eth_1, \apElm}$.
        To this end, we define a function $g: (\FncSet[\spcElm](\ap{\AsgFam[1]})
        \to \AsgFam[1]) \rightarrow (\FncSet[\spcElm](\ap{\AsgFam[1]}) \to \AsgFam[2])$
        as follows.
        For every $\eth_1 \in \FncSet[\spcElm](\ap{\AsgFam[1]}) \to \AsgFam[1]$
        and $\FFun \in \FncSet[\spcElm](\ap{\AsgFam[1]})$, we define $g(\eth_1)(\FFun) =
        f(\eth_1(\FFun))$.
        Clearly, since $f$ is a witness for $\AsgFam[1] \sqsubseteq_{\XSet}
        \AsgFam[2]$, it holds that $g(\eth_1)(\FFun) \setminus \set{ \asgElm \in \AsgSet
        } { \asgElm \rst {\PSet} \in \XSet } \subseteq \eth_1(\FFun)$, for every $\FFun
        \in \FncSet[\spcElm](\ap{\AsgFam[1]})$.
        Thus, for every $\eth_1 \in \FncSet[\spcElm](\ap{\AsgFam[1]}) \to
        \AsgFam[1]$, the following holds: $\extFun{g(\eth_1), \apElm} \setminus \set{
          \asgElm \in \AsgSet } { \asgElm \rst {\PSet} \in \XSet } = \left( \bigcup_{\FFun
            \in \dom{g(\eth_1)}} \extFun{g(\eth_1)(\FFun), \FFun, \apElm} \right) \setminus
        \set{ \asgElm \in \AsgSet } { \asgElm \rst {\PSet} \in \XSet } = \bigcup_{\FFun
          \in \dom{g(\eth_1)}} (\extFun{g(\eth_1)(\FFun), \FFun, \apElm} \setminus \set{
          \asgElm \in \AsgSet } { \asgElm \rst {\PSet} \in \XSet } )\subseteq \bigcup_{\FFun \in
          \dom{g(\eth_1)}} \extFun{g(\eth_1)(\FFun) \setminus \set{ \asgElm \in \AsgSet }
          { \asgElm \rst {\PSet} \in \XSet }, \FFun, \apElm}$

        $ \subseteq \bigcup_{\FFun \in \dom{\eth_1}} \extFun{\eth_1(\FFun),
          \FFun, \apElm} = \extFun{\eth_1, \apElm}$ (notice that $\dom{g(\eth_1)} =
        \dom{\eth_1}$).
      \end{proof}

      Now, notice that, by definition of generator, $\XSet \subseteq
      \AsgSet(\vec{\apElm})$.
      Let $\dual{\XSet} \defeq \AsgSet(\vec{\apElm}) \setminus \XSet$.
      Thanks to Propositions~\ref{prp:evlmon},
%
%
      it is possible to prove the following claim, by observing that, clearly,
      $\{ \XSet\} \sqsubseteq_{\dual \XSet} \{ \AsgSet(\vec{\apElm}) \}$ holds, as
      $\AsgSet(\vec{\apElm}) \setminus \set{ \asgElm \in \AsgSet } { \asgElm \rst
        {\vec{\apElm}} \in \dual {\XSet} } = \XSet$.
      \begin{claim}
        $\evlFun[\exists\forall](\AsgFam, \CSym[\exists\forall](\qntElm))
        \sqsubseteq_{\dual{\XSet}}
        \evlFun[\exists\forall](\CSym[\exists\forall](\der{\qntElm}))$.
      \end{claim}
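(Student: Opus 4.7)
The plan is to combine the decomposition of $\AsgFam$ via its generator with two applications of Proposition~\ref{prp:qntswpindii} (canonization commutes with evolution up to $\sqsubseteq$), punctuated by a single step exploiting the $\sqsubseteq_{\dual{\XSet}}$-monotonicity of $\evlFun$ just established in the previous claim. All intermediate inclusions will be either $\sqsubseteq$ or $\sqsubseteq_{\dual{\XSet}}$; since $\sqsubseteq$ trivially refines $\sqsubseteq_{\dual{\XSet}}$ (a shrinking witness is \afortiori a shrinking-up-to-$\dual{\XSet}$ witness) and $\sqsubseteq_{\dual{\XSet}}$ is transitive, the resulting chain will collapse to a single $\sqsubseteq_{\dual{\XSet}}$-inclusion, which is the claim.

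The first step invokes the generator. Since $\AsgFam = \evlFun[\exists\forall](\{\XSet\}, \trn{\qntElm})$, associativity of the evolution operator gives
\[
  \evlFun[\exists\forall](\AsgFam, \CSym[\exists\forall](\qntElm)) = \evlFun[\exists\forall](\{\XSet\}, \trn{\qntElm} \ldotp \CSym[\exists\forall](\qntElm)).
\]
Instantiating Proposition~\ref{prp:qntswpindii} with $\qntElm[1] \defeq \trn{\qntElm}$ and $\qntElm[2] \defeq \qntElm$ (choosing the middle $\YFam$ of the form $\evlFun[\exists\forall](\{\XSet\}, \qntElm[1] \ldotp \CSym[\exists\forall](\qntElm[2]))$), this in turn satisfies
\[
  \evlFun[\exists\forall](\{\XSet\}, \trn{\qntElm} \ldotp \CSym[\exists\forall](\qntElm)) \sqsubseteq \evlFun[\exists\forall](\{\XSet\}, \CSym[\exists\forall](\trn{\qntElm} \ldotp \qntElm)).
\]

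The second step combines the $\sqsubseteq_{\dual{\XSet}}$-monotonicity of $\evlFun$ (from the previous claim) with the observation $\{\XSet\} \sqsubseteq_{\dual{\XSet}} \{\AsgSet(\vec{\apElm})\}$, which holds precisely because $\AsgSet(\vec{\apElm}) \setminus \dual{\XSet} = \XSet$. It yields
\[
  \evlFun[\exists\forall](\{\XSet\}, \CSym[\exists\forall](\trn{\qntElm} \ldotp \qntElm)) \sqsubseteq_{\dual{\XSet}} \evlFun[\exists\forall](\{\AsgSet(\vec{\apElm})\}, \CSym[\exists\forall](\trn{\qntElm} \ldotp \qntElm)).
\]

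The final step rewrites $\{\AsgSet(\vec{\apElm})\}$ as $\evlFun[\exists\forall](\{\{\emptyfun\}\}, \forall \vec{\apElm})$ (a direct computation already used elsewhere in the paper), so that the right-hand side above equals $\evlFun[\exists\forall](\{\{\emptyfun\}\}, \forall \vec{\apElm} \ldotp \CSym[\exists\forall](\trn{\qntElm} \ldotp \qntElm))$. A second application of Proposition~\ref{prp:qntswpindii}, now with $\qntElm[1] \defeq \forall \vec{\apElm}$ and $\qntElm[2] \defeq \trn{\qntElm} \ldotp \qntElm$, gives
\[
  \evlFun[\exists\forall](\{\{\emptyfun\}\}, \forall \vec{\apElm} \ldotp \CSym[\exists\forall](\trn{\qntElm} \ldotp \qntElm)) \sqsubseteq \evlFun[\exists\forall](\{\{\emptyfun\}\}, \CSym[\exists\forall](\forall \vec{\apElm} \ldotp \trn{\qntElm} \ldotp \qntElm)),
\]
whose right-hand side is $\evlFun[\exists\forall](\CSym[\exists\forall](\der{\qntElm}))$ by the definition $\der{\qntElm} \defeq \forall \vec{\apElm} \ldotp \trn{\qntElm} \ldotp \qntElm$. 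Chaining all three inclusions establishes the claim. The main, and essentially only, delicate point is the bookkeeping required to legitimately mix $\sqsubseteq$- and $\sqsubseteq_{\dual{\XSet}}$-steps, which reduces to the two trivial properties (refinement and transitivity) noted at the outset.
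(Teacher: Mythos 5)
Your proof is correct and takes essentially the same route as the paper's, which justifies this claim by citing exactly the two ingredients your chain is organized around: the observation that $\{\XSet\} \sqsubseteq_{\dual{\XSet}} \{\AsgSet(\vec{\apElm})\}$ (since removing from $\AsgSet(\vec{\apElm})$ the assignments whose restriction lies in $\dual{\XSet}$ leaves exactly $\XSet$) and the monotonicity of the evolution operator, with the prefix alignment between $\CSym[\QEA](\qntElm)$ and $\CSym[\QEA](\der{\qntElm})$ supplied by the canonization-swap propositions. The only difference is a cosmetic ordering of steps --- you canonize $\trn{\qntElm} \ldotp \qntElm$ before the $\sqsubseteq_{\dual{\XSet}}$-monotonicity step and thus invoke Proposition~\ref{prp:qntswpindii} twice, whereas the paper's sketch suggests applying monotonicity to the prefix $\trn{\qntElm} \ldotp \CSym[\QEA](\qntElm)$ first --- and your explicit remarks that $\sqsubseteq$ refines $\sqsubseteq_{\dual{\XSet}}$ and that the latter is transitive make precise the bookkeeping the paper leaves implicit.
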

%
       %
      Due to the specific definition of the ordering $\sqsubseteq$ between
      hyperassignments, it follows that the above inclusion necessarily implies
      the non existence of a set of assignments $\ESet \in
      \evlFun[\exists\forall](\AsgFam, \CSym[\exists\forall](\qntElm))$ such
      that $\ESet \subseteq \der{\Psi}$.
      Indeed, assume, towards a contradiction that there is $\ESet \in
      \evlFun[\exists\forall](\AsgFam, \CSym[\exists\forall](\qntElm))$ such that
      $\ESet \subseteq \der{\Psi}$.
      By the above claim, there is $\der{\ESet} \in
      \evlFun[\exists\forall](\CSym[\exists\forall](\der{\qntElm}))$ such that
      $\der{\ESet} \setminus \set{ \asgElm \in \AsgSet } { \asgElm \rst {\vec{\apElm}}
        \in \dual{\XSet} } \subseteq \ESet \subseteq \der{\Psi} $.
      Since $\der{\ESet} \cap \set{ \asgElm \in \AsgSet } { \asgElm \rst
        {\vec{\apElm}} \in \dual{\XSet} } \subseteq \set{ \asgElm \in \AsgSet(\PSet) }{
        \asgElm \rst {\vec{\apElm}} \,\not\in \XSet } \subseteq \der{\Psi}$, we have
      that $\der{\ESet} \subseteq \der{\Psi}$, which is in contradiction with Abelard
      winning the game.
      Hence, $\ESet \not\subseteq \der{\Psi}$ holds for all $\ESet \in
      \evlFun[\exists\forall](\AsgFam, \CSym[\exists\forall](\qntElm))$, which implies
      $\ESet \not\subseteq \Psi$, for all $\ESet \in \evlFun[\exists\forall](\AsgFam,
      \CSym[\exists\forall](\qntElm))$, being $\Psi \subseteq \der{\Psi}$.
      \qedhere
  \end{itemize}
\end{proof}

\recenv{ThmFrmCanFor}
\begin{proof}
  By Proposition~\ref{prp:nrmevlind} and Lemma~\ref{lmm:prfevl}, we have that
  $\AsgFam \cmodels[][\alpha] \qntElm \psi$ \iff
  $\nevlFun[\alpha](\AsgFam,\qntElm) \cmodels[][\alpha] \psi$, $\AsgFam
  \cmodels[][\alpha] \CSym[\forall\exists](\qntElm) \psi$ \iff
  $\nevlFun[\alpha](\AsgFam,\CSym[\forall\exists](\qntElm)) \cmodels[][\alpha]
  \psi$, and $\AsgFam \cmodels[][\alpha] \CSym[\exists\forall](\qntElm) \psi$ \iff
  $\nevlFun[\alpha](\AsgFam,\CSym[\exists\forall](\qntElm)) \cmodels[][\alpha]
  \psi$.

  If $\alpha = \QEA$, then, by Proposition~\ref{prp:qntswpind}, it holds
  that $\nevlFun[\alpha](\AsgFam,\CSym[\forall\exists](\qntElm)) \sqsubseteq
  \nevlFun[\alpha](\AsgFam,\qntElm) \sqsubseteq
  \nevlFun[\alpha](\AsgFam,\CSym[\exists\forall](\qntElm))$.
  Therefore, by Theorem~\ref{thm:hypref}, we have that $\AsgFam
  \cmodels[][\alpha] \CSym[\forall\exists](\qntElm) \psi$ implies $\AsgFam
  \cmodels[][\alpha] \qntElm \psi$, which, in turn, implies $\AsgFam
  \cmodels[][\alpha] \CSym[\exists\forall](\qntElm) \psi$.

  If $\alpha = \QAE$, then, by Proposition~\ref{prp:qntswpind}, it holds
  that $\nevlFun[\alpha](\AsgFam,\CSym[\exists\forall](\qntElm)) \sqsubseteq
  \nevlFun[\alpha](\AsgFam,\qntElm) \sqsubseteq
  \nevlFun[\alpha](\AsgFam,\CSym[\forall\exists](\qntElm))$.
  Therefore, by Theorem~\ref{thm:hypref}, we have that $\AsgFam
  \cmodels[][\alpha] \CSym[\exists\forall](\qntElm) \psi$ implies $\AsgFam
  \cmodels[][\alpha] \qntElm \psi$, which, in turn, implies $\AsgFam
  \cmodels[][\alpha] \CSym[\forall\exists](\qntElm) \psi$.

  We are now left to show that $\AsgFam \cmodels[][\QEA]
  \CSym[\QEA](\qntElm) \psi$ implies $\AsgFam \cmodels[][\QEA]
  \CSym[\QAE](\qntElm) \psi$ and that $\AsgFam \cmodels[][\QAE]
  \CSym[\QAE](\qntElm) \psi$ implies $\AsgFam \cmodels[][\QAE]
  \CSym[\QEA](\qntElm) \psi$.

  In order to prove the former, we proceed as follows.
  Let $\Psi \defeq \set{\asgElm \in \AsgSet(\free{\psi})}{ \asgElm
    \cmodels[\LTL] \psi }$ and notice that $\Psi$ is Borelian~\cite{PP04}.
  Observe that $\QStr = \tuple {\AsgFam} {\qntElm}
  {\Psi}$ is a quantification-game schema.
  Since $\AsgFam$ is Borelian behavioral, then there is a generator $\tuple
  {\tilde{\qntElm}} {\XSet}$ for it.
  Therefore, the game $\Game[\der{\qntElm}][\der{\Psi}]$ is a $\QStr$-game,
  where $\vec{p} \defeq \ap{\AsgFam} \setminus \ap{\tilde{\qntElm}}$,
  $\der{\qntElm} \defeq \forall \vec{\apElm} \ldotp \tilde{\qntElm} \ldotp
  \qntElm$, and $\der{\Psi} \defeq \Psi \cup \set{ \asgElm \in
    \AsgSet(\free{\psi}) }{ \asgElm \rst {\vec{\apElm}} \,\not\in \XSet }$.
  By Proposition~\ref{prp:nrmevlind} and Item~\ref{thm:semadqi(ea)} of
  Theorem~\ref{thm:semadqi}, $\AsgFam \cmodels[][\QEA]
  \CSym[\exists\forall](\qntElm) \psi$ implies that there exists $\XSet \in
  \nevlFun[\QEA](\AsgFam,\CSym[\QEA](\qntElm))$ such that $\XSet \subseteq \Psi$.
  By Item~\ref{thm:qntgami(abe)} of Theorem~\ref{thm:qntgamii}, it holds that
  Abelard does not win the game $\Game[\der{\qntElm}][\der{\Psi}]$.
  Now, notice that $\Game[\der{\qntElm}][\der{\Psi}]$ is Borelian; thus, as a
  consequence of Martin's determinacy theorem~\cite{Mar75,Mar85},
  $\Game[\der{\qntElm}][\der{\Psi}]$ is won by Eloise.
  Hence, from Item~\ref{thm:qntgami(elo)} of Theorem~\ref{thm:qntgamii}, it
  follows that there exists $\XSet \in
  \nevlFun[\QEA](\AsgFam,\CSym[\QAE](\qntElm))$ such that $\asgElm \cmodels[\LTL]
  \psi$ for all $\asgElm \in \XSet$, which, by Proposition~\ref{prp:nrmevlind} and
  Item~\ref{thm:semadqi(ea)} of Theorem~\ref{thm:semadqi}, implies $\AsgFam
  \cmodels[][\QEA] \CSym[\QAE](\qntElm) \psi$.

  Finally, we turn to showing that $\AsgFam \cmodels[][\QAE]
  \CSym[\QAE](\qntElm) \psi$ implies $\AsgFam \cmodels[][\QAE]
  \CSym[\QEA](\qntElm) \psi$.
  By~Proposition~\ref{prp:qntswpind}, it holds that
  $\nevlFun[\QAE](\AsgFam, \CSym[\QEA](\qntElm)) \sqsubseteq
  \nevlFun[\QAE](\AsgFam, \CSym[\QAE](\qntElm))$.
  By Theorem~\ref{thm:hypref}, we have that $\nevlFun[\QAE](\AsgFam,
  \CSym[\QAE](\qntElm)) \cmodels[][\QAE] \psi$ implies $\nevlFun[\QAE](\AsgFam,
  \CSym[\QEA](\qntElm)) \cmodels[][\QAE] \psi$.
  The thesis follows from Lemma~\ref{lmm:prfevl} and
  Proposition~\ref{prp:nrmevlind}.
%
%
%
%
%
%
%
%
\end{proof}





\subsection{Proofs for Section~\ref{sec:decprb}}
\label{app:decprb}

	\recenv{ThmSatGam}

\begin{proof}



Let $\varphi = \quantificationPrefix \LTLformula$ be a behavioral \QPTL
prenex sentence with $\quantificationPrefix$ a quantification prefix and $\LTLformula$
an \LTL formula. The idea of the proof is to construct a parity game from the
automaton recognizing models of $\LTLformula$ and the quantification prefix that
will be equivalent to the game $\Game[\qntElm][\psi]$ defined in
Construction~\ref{cns:qntgami}.

	From $\LTLformula$, we construct a non-deterministic B\"uchi automaton
$A_\LTLformula$ recognizing models of $\LTLformula$ using the
Vardi-Wolper construction~\cite{VW86a}. We will consider a complete determinized
parity Automaton $\Automaton=\tuple{\Astates}{\Ainitialstate}{\Aalphabet}{
\Atransition}{\Aacceptingcondition}$ equivalent to  $A_\LTLformula$ (that can be
obtained via a Safra-like determinization procedure~\cite{Pit06}) with
	\begin{itemize}
		\item $\Astates$ is the set of states of $\Automaton$,
		\item $\Ainitialstate \in \Astates$ is the initial state,
		\item $\Aalphabet = \ValSet$ is the alphabet of $\Automaton$,
		\item $\Atransition: \Astates\times\Aalphabet\rightarrow\Astates$ is the transition function,
		\item $\Aacceptingcondition$ is the parity condition.
	\end{itemize}

	The parity game associated is defined as $\Game[\varphi]\defeq\tuple{\Arena}{\ObsSet}{\WinSet}$ with arena $\Arena \defeq \tuple {\PPosSet} {\OPosSet} {\iposElm} {\MovRel}$ and is constructed as follows:
	\begin{itemize}
		\item
		the set of positions $\APosSet \subseteq \Astates\times\QPosSet$ contains exactly the pairs of one state of the automaton $\Automaton$ and a valuation $\valElm \in \ValSet$ that is a position of $\Game[\quantificationPrefix][\LTLformula]$;
		\item
		the set of Eloise's positions $\PPosSet \subseteq \APosSet$ only contains
		the positions $(\Astate,\valElm) \in \APosSet$ for $\valElm$ is an Eloise's position in $\Game[\quantificationPrefix][\LTLformula]$;
		\item
		the initial position $\iposElm \defeq (\Ainitialstate,\emptyfun)$ is just the initial state of $\Automaton$ paired with the initial state of $\Game[\quantificationPrefix][\LTLformula]$;
		\item
		the move relation $\MovRel \subseteq \APosSet \times \APosSet$ contains
		exactly those pairs of positions $((\Astate_1,\valElm[1]), (\Astate_2,\valElm[2])) \in \APosSet
		\times \APosSet$ such that:
		\begin{itemize}
			\item $(\valElm[1],\valElm[2])$ is a move in $\Game[\qntElm][\psi]$;
			\item if $\valElm[2] = \emptyfun$ then $\Astate_2=\Atransition(\Astate_1,\valElm[1])$, otherwise, $\Astate_1=\Astate_2$;
		\end{itemize}
		\item
		the set of observable positions $\ObsSet \defeq \Astates\times\set{\emptyfun}{}$;
		\item
		the winning condition is deduced from the accepting condition of the automaton $\Automaton$. More precisely, the priority of a position $(\Astate,\emptyfun)$ is defined as the priority of $\Astate$.
	\end{itemize}

	We first show that the arena of the game $\Game[\varphi]$ is very similar to the arena of $\Game[\quantificationPrefix][\LTLformula]$ trough the definition of a bijection between initial paths on $\Game[\varphi]$ (denoted $\iQPaths$) and initial paths on $\Game[\quantificationPrefix][\LTLformula]$ (denoted $\iAPaths$) $\PathToPath:\iQPaths\rightarrow\iAPaths$ .

	We consider the morphism $\PathToPathmorph\colon\APosSet\to\QPosSet$ defined on $\Apos$ by the following.
	\[\PathToPathmorph(\Apos)\defeq\Qpos\]

	We now define $\PathToPath$ on $\Apath\in\iAPaths$ by the following.
	\[\PathToPath(\Apath)\defeq\PathToPathmorph(\Apath)\]
	\begin{claim}
		\label{claimPathToPath}
		The function $\PathToPath$ between $\iQPaths$ and $\iAPaths$ is a bijection.
	\end{claim}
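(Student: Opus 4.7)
The plan is to establish injectivity and surjectivity of $\PathToPath$ by exploiting the determinism of $\Automaton$, which forces the automaton component of any position in $\Game[\varphi]$ to be uniquely reconstructible from the sequence of valuations traversed so far.

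First I would fix some notation: for an initial path $\Apath = \Apos[0] \Apos[1] \cdots \in \iAPaths$, observe that by the move relation of $\Game[\varphi]$ the automaton component $\Astate_{i}$ can change only on a transition into an observable position $(\Astate_{i+1}, \emptyfun)$, in which case $\Astate_{i+1} = \Atransition(\Astate_{i}, \valElm_{i})$ with $\valElm_{i}$ a full valuation; on all other transitions $\Astate_{i+1} = \Astate_{i}$. Hence, starting from $\Ainitialstate$, the entire sequence $(\Astate_{i})_{i}$ is determined by the sequence $(\valElm_{i})_{i}$ via iterated applications of $\Atransition$.

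For \emph{injectivity}, suppose $\Apath, \Apath' \in \iAPaths$ satisfy $\PathToPath(\Apath) = \PathToPath(\Apath')$, so they agree component-wise on the valuations. By the observation above, applied to both paths starting from the common initial state $\Ainitialstate$, the automaton components also coincide, yielding $\Apath = \Apath'$.

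For \emph{surjectivity}, take any $\Qpath = \valElm_{0} \valElm_{1} \cdots \in \iQPaths$ with $\valElm_{0} = \emptyfun$. I inductively define $\Astate_{0} \defeq \Ainitialstate$ and, for $i \geq 0$, set $\Astate_{i+1} \defeq \Atransition(\Astate_{i}, \valElm_{i})$ whenever $\valElm_{i+1} = \emptyfun$ (necessarily with $\valElm_{i} \in \ValSet(\ap{\qntElm})$ by the moves of $\Game[\qntElm][\psi]$, so $\Atransition$ is applicable) and $\Astate_{i+1} \defeq \Astate_{i}$ otherwise. The resulting sequence $\Apath \defeq (\Astate_{i}, \valElm_{i})_{i}$ is an initial path of $\Game[\varphi]$: each step $((\Astate_{i}, \valElm_{i}), (\Astate_{i+1}, \valElm_{i+1}))$ lies in $\MovRel$ because $(\valElm_{i}, \valElm_{i+1})$ is a move of $\Game[\qntElm][\psi]$ by assumption, and the conditions on $\Astate_{i+1}$ match exactly the two cases in the definition of $\MovRel$. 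Moreover each $(\Astate_{i}, \valElm_{i})$ belongs to the appropriate player's positions since ownership in $\Game[\varphi]$ depends only on the valuation component. By construction $\PathToPath(\Apath) = \Qpath$.

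The only delicate point, and the place to be careful, is the case analysis in the surjectivity argument: one must verify that whenever $\valElm_{i+1} = \emptyfun$ the preceding $\valElm_{i}$ is a full valuation (so that $\Atransition$ receives a legal input letter from $\Aalphabet = \ValSet$), and whenever $\valElm_{i+1} \neq \emptyfun$ the move relation of $\Game[\qntElm][\psi]$ forces $\valElm_{i} \subsetneq \valElm_{i+1}$, matching the requirement $\Astate_{i+1} = \Astate_{i}$. Both follow directly from Construction~\ref{cns:qntgami}, so the argument closes.
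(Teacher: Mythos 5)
Your proof is correct and follows essentially the same route as the paper: both arguments reduce the claim to the observation that, since $\Automaton$ is deterministic and complete and paths are initial, the automaton component of every position is uniquely determined by the valuation sequence read so far, which the paper uses to define $\PathToPath^{-1}$ inductively and you use to establish injectivity and surjectivity separately. Your version is slightly more explicit about verifying that the reconstructed sequence is a legal initial path of $\Game[\varphi]$ (including player ownership and the applicability of $\Atransition$), but the underlying idea is identical.
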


	\begin{proof}
		First, we show that $\PathToPathmorph$ is indeed a morphism:
		\begin{itemize}
			\item there is a move $\Aposq1\Aposq2$ for some $\Astate\in\Astates$ in $\Game[\varphi]$ \iff there is a move $\Qpos[1]\Qpos[2]$ in $\Game[\quantificationPrefix][\LTLformula]$,
			\item there is a move $\Apos\AposA'$ for some $\valElm\in\ValSet(\ap{\qntElm})$ in $\Game[\varphi]$ \iff there is a move $\Qpos\emptyfun$ in $\Game[\quantificationPrefix][\LTLformula]$.
		\end{itemize}
		We conclude that if $\Apos[1]\Apos[2]$ is a move in $\Game[\varphi]$, then $\PathToPathmorph(\Apos[1]\Apos[2])$ is a move in $\Game[\quantificationPrefix][\LTLformula]$. Then $\PathToPath$ is well defined.

		The inverse function is also well defined because we consider only initial paths and the automaton $\Automaton$ is deterministic and complete. Thus, from an history $\aQhistory$ of $\Game[\quantificationPrefix][\LTLformula]$, there is only one state $\Astate$ reached by $\Automaton$ by reading $\obsFun(\aQhistory)$. Formally $\PathToPath^{-1}$ can be defined inductively on a $\alpha$-history $\aQhistory=\aQhistory'\valElm$ as follows:
		\begin{itemize}
			\item $\QgameToAgame(\emptyfun)=(\Ainitialstate,\emptyfun')$
			\item if $\last{\QgameToAgame(\aQhistory'})=\Apos$ with $\dom{\valElm}= \ap{(\quantificationPrefix)_{< \#(\valElm)}}$, then $\QgameToAgame(\aQhistory)=\QgameToAgame(\aQhistory')(\Atransition(\Astate,\valElm),\emptyfun)$,
			\item otherwise, $\last{\QgameToAgame(\aQhistory')}=(\Astate,\valElm')$ for some $\Astate\in\Astates$ thus $\QgameToAgame(\aQhistory)=\QgameToAgame(\aQhistory')\Apos$.
		\end{itemize}
	with $\last{\wrdElm}$ being the last character of the finite word $\wrdElm$.

		We just showed that $\PathToPath$ is a bijection between initial paths $\iAPaths$ and initial paths $\iQPaths$.
	\end{proof}

	We want to state that a play in $\Game[\varphi]$ is won by Eloise \iff its image by $\PathToPath$ is also won by Eloise. To do so, we first characterize the plays in $\Game[\varphi]$ that Eloise wins by linking them with an execution of $\Automaton$. To precise this link, we associate an assignment $\asgElm$ to a play $\Aplay$ of $\Game[\varphi]$.

	We define a function $\PlayToAsg:\APlays\to\ValSet(\ap{\quantificationPrefix})^{\omega}$ as follow where $\Aplay=(\Aplay)_{i\in\SetN}$ is a play on $\Game[\varphi]$.
	\[\begin{array}{l}
	\PlayToAsg(\Aplay) \in \ValSet(\ap{\quantificationPrefix})^{\omega}\\
	\PlayToAsg(\Aplay)_t \defeq \snd(\Aplay_{(t+1)\times(\card{\ap{\quantificationPrefix}}+1)-1})
	\end{array}
	\]



	\begin{claim}
		\label{claimPlayToAsg}
		The function $\PlayToAsg$ is a bijection and a play $\Aplay$ is won by Eloise \iff $\PlayToAsg(\Aplay)$ is recognized by the automaton $\Automaton[\psi]$.
	\end{claim}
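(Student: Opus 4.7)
The plan is to split the statement into two parts: first establish that $\PlayToAsg$ is a bijection by reducing to Claim~\ref{claimPathToPath}, and then show that Eloise's winning condition in $\Game[\varphi]$ coincides with acceptance by $\Automaton[\psi]$ of the sampled word. Both parts hinge on the observation that the product construction builds $\Game[\varphi]$ so that its arena projects, on the second component, exactly onto the arena of $\Game[\qntElm][\psi]$, while the first component deterministically tracks the run of $\Automaton[\psi]$ on the total valuations visited at the end of every round.

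For the bijectivity of $\PlayToAsg$, I would proceed as follows. A play $\Aplay \in \APlays$ is, by definition, an infinite initial path, so Claim~\ref{claimPathToPath} restricts to a bijection between $\APlays$ and $\QPlays$. Through the morphism $\PathToPathmorph$, the position $\Aplay_{(t+1)(\card{\ap{\qntElm}}+1)-1}$ has second component equal to $\PathToPath(\Aplay)_{(t+1)(\card{\ap{\qntElm}}+1)-1}$, which by Construction~\ref{cns:qntgami} is the unique total valuation of $\ap{\qntElm}$ reached by $\PathToPath(\Aplay)$ during the $(t+1)$-th round. Hence $\PlayToAsg(\Aplay)$ coincides with the sequence of total valuations observed along $\PathToPath(\Aplay)$, which is in bijection with $\ValSet(\ap{\qntElm})^{\omega}$ since every play in $\Game[\qntElm][\psi]$ visits exactly one total valuation per round and conversely any such sequence of valuations determines a unique play (each intermediate position being forced as the shortest prefix of the next total valuation). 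Composing these two bijections yields that $\PlayToAsg$ itself is a bijection from $\APlays$ to $\ValSet(\ap{\qntElm})^{\omega}$.

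For the winning-condition equivalence, recall that the only positions of $\Game[\varphi]$ carrying a non-trivial priority are the observable ones $(\Astate, \emptyfun)$, whose priority is defined as the priority of $\Astate$ in $\Automaton[\psi]$. Along a play $\Aplay$, the observable positions occur exactly at indices $k(\card{\ap{\qntElm}}+1)$ for $k \geq 0$, and by the move relation of $\Game[\varphi]$ the automaton component at the $(k+1)$-th such index equals $\Atransition(\Astate_k, \valElm_k)$, where $\Astate_k$ is the state at the $k$-th observable position and $\valElm_k = \PlayToAsg(\Aplay)_k$ is the total valuation visited between them. Hence the sequence of states $(\Astate_k)_{k \in \SetN}$ is precisely the (unique, since $\Automaton[\psi]$ is deterministic) run of $\Automaton[\psi]$ on the input word $\PlayToAsg(\Aplay)$. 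Therefore the sequence of priorities encountered by $\Aplay$ satisfies the parity condition $\Aacceptingcondition$ \iff $\Automaton[\psi]$ accepts $\PlayToAsg(\Aplay)$, which is the desired equivalence.

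The main obstacle, as far as I can see, is purely bookkeeping: carefully matching the round-length $\card{\ap{\qntElm}}+1$ to the sampling index in the definition of $\PlayToAsg$, and checking that the initial position $(\Ainitialstate, \emptyfun)$ is treated consistently so that the inductive run of $\Automaton[\psi]$ starts at $\Ainitialstate$ and reads the valuations in the correct order. Everything else follows from determinism and completeness of $\Automaton[\psi]$ together with the product structure of the arena already validated in Claim~\ref{claimPathToPath}.
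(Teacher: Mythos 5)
Your proof is correct and rests on the same two underlying observations as the paper's, but it is organized differently enough to be worth comparing. For bijectivity, the paper constructs the inverse of $\PlayToAsg$ directly: given a word in $\ValSet(\ap{\qntElm})^{\omega}$ it rebuilds the unique play of $\Game[\varphi]$ round by round, each intermediate position being forced as a restriction of the next total valuation and the automaton component being forced by determinism and completeness. You instead factor $\PlayToAsg$ through the bijection $\PathToPath$ of Claim~\ref{claimPathToPath} (restricted to plays) composed with the bijection between plays of $\Game[\qntElm][\psi]$ and valuation sequences; this cleanly reuses the previous claim, and in doing so you also absorb the content of the paper's separate Claim~\ref{claimPlayToAsgAndPathToPath} (that $\PlayToAsg(\Aplay)_{t}$ is the $t$-th observable valuation of $\PathToPath(\Aplay)$), so nothing is lost --- the forced-intermediate-positions argument you invoke inside $\Game[\qntElm][\psi]$ is exactly the paper's argument transported to the other arena. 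For the second half, the paper merely asserts that the equivalence between winning and acceptance is clear from the parity condition; your argument --- that the automaton components at the observable positions form precisely the unique run of $\Automaton$ on $\PlayToAsg(\Aplay)$, starting at $(\Ainitialstate,\emptyfun)$, so the priority sequence seen by the play is the priority sequence of that run --- supplies the justification the paper omits, and your index bookkeeping (observable positions at multiples of $\card{\ap{\qntElm}}+1$, sampled valuations one step earlier) checks out. What the paper's route buys is self-containedness of the inverse construction; what yours buys is economy, by leaning on the already-established isomorphism of arenas and by making the acceptance argument explicit.
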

	\begin{proof}
		We first prove that $\PlayToAsg$ is a bijection by proving it is well defined and it is a surjection and an injection. By definition of the game $\Game[\varphi]$, for every $t\in\SetN$, it holds that $\Aplay_{(t+1)\times(\card{\ap{\quantificationPrefix}}+1)}$ has the form $(\Astate,\emptyfun)$. Thus, the previous positions necessarily has the form $(\Astate', \valElm)$ with $\valElm\in\ValSet(\ap{\qntElm})$. Then $\PlayToAsg$ is well defined.

		To show that $\PlayToAsg$ is a bijection, we show that every word $\wordAsg$ is the image of exactly one play. Consider a word $\wordAsg\in\ValSet(\ap{\quantificationPrefix})^{\omega}$. For every natural $t$ and every position $(\Astate,\emptyfun)$ there exist a unique finite path of $\card{\ap{\quantificationPrefix}}$ moves reaching the position $(\Astate, \wordAsg_t)$, by definition of the game $\Game[\varphi]$. This path $\Apath$ is defined as $\Apath_0=(\Astate,\emptyfun)$ and $\Apath_{i+1}=(\Astate,\valElm_{i+1})$ with $\valElm_{i+1}=(\wordAsg_t)_{<i+1}$ for every $i<\card{\ap{\quantificationPrefix}}$. Thanks to this construction, we can build step by step the full play, starting at $(\Ainitialstate,\emptyfun)$ with $t=0$.
		\begin{enumerate}
			\item From a state of the form $(\Astate,\emptyfun)$, goes to $(\Astate,\wordAsg_t)$ in $\card{\ap{\quantificationPrefix}}$ moves.
			\item The next move in necessarily from $(\Astate,\wordAsg_t)$ to $(\Atransition(\Astate,\wordAsg_t),\emptyfun)$.
			\item Go to $1)$ with $t:=t+1$
		\end{enumerate}
		Steps $1)$ and $2)$ use $\card{\ap{\quantificationPrefix}}+1$ moves. Positions of the form $(\Astate,\wordAsg_t)$ are reached at the end of the first step. Thus we have $\Aplay_{(t+1)\times(\card{\ap{\quantificationPrefix}}+1)-1}=(\Astate,\wordAsg_t)$.

		Because of the parity condition of the game $\Game[\varphi]$, it is clear that a play $\Aplay$ is won by Eloise if and only if $\PlayToAsg(\Aplay)$ is accepted by the automaton $\Automaton$.
	\end{proof}

	We now explicit the link between $\PathToPath$ and $\PlayToAsg$: the word associated with a play in $\Game[\varphi]$ is the sequence of observable position in the associated play in $\Game[\quantificationPrefix][\LTLformula]$.

	\begin{claim}
		\label{claimPlayToAsgAndPathToPath}
		Given a play $\Aplay$ in the game $\Game[\varphi]$, we have for every $t$ natural number, $\PlayToAsg(\Aplay)_t=\obsFun(\PathToPath(\Aplay))_t$.
	\end{claim}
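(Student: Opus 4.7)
The plan is to unwind the definitions of $\PlayToAsg$, $\PathToPath$, and $\obsFun$, and observe that the positions at which each of them ``samples'' the play align exactly, by the periodic structure of the arena.

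First I would establish the index alignment on the arena $\Arena[\qntElm]$ of Construction~\ref{cns:qntgami}. By the definition of $\MovRel$, every initial path alternates rounds of length $\card{\ap{\qntElm}}+1$: starting from the empty valuation, the players extend the current valuation one proposition at a time for $\card{\ap{\qntElm}}$ moves, and the only available move from a total valuation is back to $\emptyfun$. Consequently, for any play $\Apath \in \PlaySet(\Game[\qntElm][\psi])$, the position $\Apath_i$ lies in $\ObsSet = \ValSet(\ap{\qntElm})$ \iff $i = (t+1)(\card{\ap{\qntElm}}+1) - 1$ for some $t \in \SetN$. Lifting to $\Game[\varphi]$, the move relation was defined so that when the second component is a total valuation, the automaton state is updated and the next position has second component $\emptyfun$; otherwise the state is frozen. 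Hence $\snd(\Aplay_i) \in \ValSet(\ap{\qntElm})$ \iff $i = (t+1)(\card{\ap{\qntElm}}+1) - 1$ for some $t \in \SetN$, reproducing the same periodic pattern.

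Next I would use the fact that $\PathToPath$ is defined by pointwise application of the morphism $\PathToPathmorph(\Astate, \valElm) = \valElm = \snd(\Astate, \valElm)$. In particular, for every index $i \in \SetN$, $\PathToPath(\Aplay)_i = \snd(\Aplay_i)$, and $\PathToPath(\Aplay)_i$ is observable in $\Game[\qntElm][\psi]$ exactly when $\snd(\Aplay_i)$ is a total valuation, i.e.\ exactly at indices of the form $(t+1)(\card{\ap{\qntElm}}+1) - 1$.

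Putting these two observations together, the observation function $\obsFun$ on $\PathToPath(\Aplay)$ samples at the same indices that $\PlayToAsg$ uses to sample $\Aplay$ and then projects via $\snd$. Formally, for every $t \in \SetN$,
\[
  \obsFun(\PathToPath(\Aplay))_t \;=\; \PathToPath(\Aplay)_{(t+1)(\card{\ap{\qntElm}}+1)-1} \;=\; \snd\!\bigl(\Aplay_{(t+1)(\card{\ap{\qntElm}}+1)-1}\bigr) \;=\; \PlayToAsg(\Aplay)_t,
\]
which is exactly the claimed equality. There is no real obstacle here; the only point that deserves explicit verification is the index-alignment in the first paragraph, i.e.\ that the synchronous product with $\Automaton[\psi]$ preserves the round structure of $\Arena[\qntElm]$, and this is immediate from the definition of $\MovRel$ in the construction of $\Game[\varphi]$.
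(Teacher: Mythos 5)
Your proof is correct and follows the same route as the paper, which simply asserts that the claim "derives directly from the definitions"; you have merely made explicit the index alignment (observable positions occur exactly at indices $(t+1)(\card{\ap{\qntElm}}+1)-1$) and the fact that $\PathToPath$ acts as the pointwise second-component projection, which is all that is needed.
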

	\begin{proof}
		This claim derive directly from definitions of $\PathToPath$, $\PlayToAsg$ and $\Game[\quantificationPrefix][\LTLformula]$.
	\end{proof}

	We now show that Eloise wins a play in $\Game[\varphi]$ \iff she wins the corresponding play in $\Game[\quantificationPrefix][\LTLformula]$.

	\begin{claim}
		The bijection $\PathToPath$ preserves the winner: $\Aplay$ is won by Eloise
in $\Game[\varphi]$ \iff $\PathToPath(\Aplay)$ is won by Eloise in $\Game[\quantificationPrefix][\LTLformula]$.
	\end{claim}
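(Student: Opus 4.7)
The plan is to compose the two preceding claims to chain through the equivalence. First I would unfold the meaning of "Eloise wins the play $\Aplay$" in the parity game $\Game[\varphi]$: it means that the sequence of priorities occurring along $\Aplay$ satisfies the parity condition $\Aacceptingcondition$ inherited from the deterministic parity automaton $\Automaton[\psi]$. By the product construction, this parity sequence is precisely the one produced by running $\Automaton[\psi]$ on the word $\PlayToAsg(\Aplay) \in \ValSet(\ap{\quantificationPrefix})^{\omega}$, since the automaton's state is updated only at positions of the form $(\Astate, \emptyfun)$, feeding it exactly the valuation reached just before the reset. Thus by Claim~\ref{claimPlayToAsg}, Eloise wins $\Aplay$ in $\Game[\varphi]$ if and only if $\PlayToAsg(\Aplay)$ is accepted by $\Automaton[\psi]$, which by construction of $\Automaton[\psi]$ as an automaton recognizing models of $\psi$ is equivalent to $\wrdInvFun(\PlayToAsg(\Aplay)) \cmodels[\LTL] \psi$, i.e., to $\PlayToAsg(\Aplay) \in \wrdFun(\LangFun(\psi))$.

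Next I would invoke Claim~\ref{claimPlayToAsgAndPathToPath}, which identifies the word produced by $\PlayToAsg$ with the observation sequence along the image path: $\PlayToAsg(\Aplay) = \obsFun(\PathToPath(\Aplay))$ as elements of $\ValSet(\ap{\quantificationPrefix})^{\omega}$. Substituting this identity into the equivalence just obtained yields that Eloise wins $\Aplay$ in $\Game[\varphi]$ if and only if $\obsFun(\PathToPath(\Aplay)) \in \wrdFun(\LangFun(\psi))$.

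Finally, I would unfold the winning condition of $\Game[\quantificationPrefix][\LTLformula]$ as defined in Construction~\ref{cns:qntgami}: $\WinSet \defeq \wrdFun(\LangFun(\psi))$, so that Eloise wins the play $\PathToPath(\Aplay)$ in $\Game[\quantificationPrefix][\LTLformula]$ exactly when $\obsFun(\PathToPath(\Aplay)) \in \WinSet$. Combining this with the previous equivalence closes the chain and delivers the thesis.

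There is no real obstacle here: the claim is a bookkeeping consequence of (i) Claim~\ref{claimPlayToAsg}, which ties winning in $\Game[\varphi]$ to acceptance by $\Automaton[\psi]$ via the word $\PlayToAsg(\Aplay)$, (ii) Claim~\ref{claimPlayToAsgAndPathToPath}, which identifies this word with the observation sequence of the corresponding path under $\PathToPath$, and (iii) the definition $\WinSet = \wrdFun(\LangFun(\psi))$ built into Construction~\ref{cns:qntgami}. The only point requiring a moment of care is making explicit that the parity acceptance on $\PlayToAsg(\Aplay)$ coincides with the parity condition lifted to positions of $\Game[\varphi]$, which is immediate from the fact that the priority of $(\Astate, \emptyfun)$ is defined to be the priority of $\Astate$ and the automaton component evolves deterministically along the play.
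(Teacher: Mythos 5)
Your proposal is correct and follows essentially the same route as the paper's own proof: both chain Claim~\ref{claimPlayToAsg} (winning in $\Game[\varphi]$ equals acceptance of $\PlayToAsg(\Aplay)$ by the deterministic parity automaton) with Claim~\ref{claimPlayToAsgAndPathToPath} (identifying that word with $\obsFun(\PathToPath(\Aplay))$) and then unfold the winning condition $\WinSet = \wrdFun(\LangFun(\psi))$ of Construction~\ref{cns:qntgami}. If anything, your version is slightly cleaner, since you phrase the whole argument as a chain of equivalences rather than proving one direction and appealing separately to bijectivity for the converse as the paper does.
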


	\begin{proof}
		Consider a play $\Aplay$ in the game $\Game[\varphi]$. Thanks to Claim~\ref{claimPlayToAsg}, we know that $\Aplay$ is won by Eloise \iff $\PlayToAsg(\Aplay)$ is accepted by $\Automaton$ which means that $\wrdInvFun(\PlayToAsg(\Aplay))\models\psi$. Claim~\ref{claimPlayToAsgAndPathToPath} assure that $\PlayToAsg(\Aplay)(n)=\obsFun(\PathToPath(\Aplay))_n$. We can deduce that $\wrdInvFun(\obsFun(\PathToPath(\Aplay)))$ define an assignment that satisfy $\psi$. Then $\PathToPath(\Aplay)$ is won by Eloise.

		The reciprocity is ensured by the bijective property of $\PathToPath$ (Claim~\ref{claimPathToPath}.)
	\end{proof}

Because $\PathToPath$ is a bijection, we can derive a bijection between strategies on $\Game[\varphi]$ and strategies on $\Game[\quantificationPrefix][\LTLformula]$. It is not hard to see that a strategy for Eloise is winning \iff its associated strategy is also winning. Thus Eloise wins $\Game[\varphi]$ \iff she wins $\Game[\quantificationPrefix][\LTLformula]$.

The automaton $A_\LTLformula$ has a size exponential in the size of $\LTLformula$. The procedure to determinize in a parity automaton adds one exponential; thus $\card{\Automaton}=\AOmicron{\pow{\pow{\card{\LTLformula}}}}$. The quantification game have a size in $\AOmicron{\pow{\card{\qntElm}}}$. We conclude that the game constructed has a size in $\AOmicron{\pow{\card{\qntElm}}\pow{\pow{\card{\LTLformula}}}}=\AOmicron{\pow{\pow{\card{\varphi}}}}$. The game has the same number of priorities as the automaton $\Automaton$ which is in $\AOmicron{\pow{\card{\LTLformula}}}$.
\end{proof}




\end{document}
